\pgfplotsset{compat=newest}
\definecolor{beaublue}{rgb}{0.74, 0.83, 0.9}
\definecolor{emerald}{rgb}{0.31, 0.78, 0.47}
\newtheorem{corollary}{Corollary}
\newtheorem{proposition}{Proposition}
\newtheorem{definition}{Definition}
\newtheorem{lemma}{Lemma}
\newtheorem{example}{Example}
\newtheorem{remark}{Remark}
\newtheorem{theorem}{Theorem}[section]
\numberwithin{equation}{section}
\def\@maketitle{%
  \newpage
  \null
  \vskip 2em%
  \begin{center}%
  \let \footnote \thanks
    {\Large\bfseries \@title \par}%
    \vskip 1.5em%
    {\normalsize
      \lineskip .5em%
      \begin{tabular}[t]{c}%
        \@author
      \end{tabular}\par}%
    \vskip 1em%
    {\normalsize \@date}%
  \end{center}%
  \par
  \vskip 1.5em}
\title{\textbf{The space of light rays: Causality and $L$--boundary}}
\author{A. BAUTISTA \thanks{E--mail: \texttt{alfredo.bautista@uam.es}}}
\affil{Depto. de An\'alisis Econ\'omico: Econom\'{\i}a Cuantitativa, Univ. Aut\'onoma de Madrid \protect\\ C/ Francisco Tom\'as y Valiente 5, 28049 Madrid, Spain.} 
\author{A. IBORT \thanks{E--mail: \texttt{albertoi@math.uc3m.es}}}
\affil{Depto. de Matem\'aticas, Univ. Carlos III de Madrid \protect\\ Avda. de la Universidad 30, 28911 Legan\'es, Madrid, Spain, and \protect\\ ICMAT, Instituto de Ciencias Matem\'{a}ticas (CSIC-UAM-UC3M-UCM) \protect\\ C/ Nicol\'as Cabrera, 13-15, 28049, Madrid, Spain. }
\author{J. LAFUENTE \thanks{E--mail: \texttt{jlafuente@mat.ucm.es}}}
\affil{Depto. de Geometr\'{\i}a y Topolog\'{\i}a, Univ. Complutense de Madrid \protect\\ Avda. Complutense s/n, 28040 Madrid, Spain.} 
\date{May, 2022}
\begin{document}

\maketitle

\begin{abstract}
The space of light rays $\mathcal{N}$ of a conformal Lorentz manifold $(M,\mathcal{C})$ is, under some topological conditions, a manifold whose basic elements are unparametrized null geodesics. This manifold $\mathcal{N}$, strongly inspired on R. Penrose's twistor theory, keeps all information of $M$ and it could be used as a space complementing the spacetime model. In the present review, the geometry and related structures of $\mathcal{N}$, such as the space of skies $\Sigma$ and the contact structure $\mathcal{H}$, are introduced. The causal structure of $M$ is characterized as part of the geometry of $\mathcal{N}$. A new causal boundary for spacetimes $M$ prompted by R. Low, the $L$-boundary, is constructed in the case of $3$--dimensional manifolds $M$ and proposed as a model of its construction for general dimension. Its definition only depends on the geometry of $\mathcal{N}$ and not on the geometry of the spacetime $M$. The properties satisfied by the $L$--boundary $\partial M$ permit to characterize the obtained extension $\overline{M}=M\cup \partial M$ and this characterization is also proposed for general dimension. 

\end{abstract}

\hrulefill

\tableofcontents

\hrulefill

\section{Introduction}\label{sec:Intro}

The majority of the scientific community dedicated to mathematical physics is aware of the importance of the magnificent Roger Penrose's scientific contribution in the various fields in which he has been working. 
One of these fields is the study of the formation of black holes within the framework of the General Theory of Relativity, for which he was awarded the Nobel Prize in Physics in 2020. 
As part of that goal, Penrose has pioneered various theories that have been remarkably successful for their achievements and originality, for instance, the study of causal relations \cite{GKP68}, \cite{Pe72}, in the sense of describing the global properties of spacetime depending on what events influence to (or are influenced by) others. 
Most of the tools used in this field are quite simple, but the idea is powerful enough so that it has been very fruitful and still remains active today \cite{Ak21}, \cite{He21}, \cite{Ci22}. 
Non--spacelike curves are one of these tools, so if we would want to characterize the causality in any geometrical new model of the spacetime, then it will be necessary to describe such curves. Once the causal relations are determined, then it is possible to look for the $c$--boundary, which is a conformal boundary of the spacetime defined by adding the chronological past and future of inextensible causal curves as ideal points \cite{GKP68}, \cite{Pe72}.  
The construction of boundaries of the spacetimes is motivated, among other reasons, to extend the spacetime in order to study the singularities which could appear ``outside" the model, or to compactify ``infinite" models to study properties  at a finite range \cite{Pe64}.

Another theory developed by Penrose is \emph{twistor theory} \cite{Pe77}, \cite{Pe84}, \cite{Pe88}. 
This theory takes advantage of the geometry of the $4$--dimensional Minkowski spacetime $\mathbb{M}^4$ to establish a complementary framework. In  this new geometry for the spacetime, the basic elements are the paths of massless particles instead of the events of the spacetime. Calling $\mathbb{M}_{\mathbb{C}}\simeq \mathbb{C}^4$ the complexification of $\mathbb{M}^4$, then there exists a \emph{double fibration} 
\begin{equation}\label{double-fibration-1}
\begin{tikzpicture}[every node/.style={midway}]
\matrix[column sep={6em,between origins},
        row sep={2em}] at (0,0)
{ ; &  \node(F)   { $\mathbb{F}_{12}(\mathbb{C}^4)$}  ; &  ; \\
 \node(PT)   { $\mathrm{Gr}_{1}\left(\mathbb{C}^{4}\right)$}; &    ; & \node(M)   { $\mathbb{M}_{\mathbb{C}}$} ;  \\} ; 
\draw[->] (F) -- (PT) node[anchor=south east]  {$\pi_1$};
\draw[->] (F) -- (M) node[anchor=south west]  {$\pi_2$};
\end{tikzpicture}
\end{equation}
where $\mathbb{F}_{12}(\mathbb{C}^4)$ is the \emph{flag manifold} of (complex) $1$-- and $2$--dimensional vector subspaces of $\mathbb{C}^{4}$ and $\mathrm{Gr}_{1}\left(\mathbb{C}^{4}\right)\simeq \mathbb{P}(\mathbb{C}^4)$ is the projective space of $\mathbb{C}^4$.
The spaces $\mathrm{Gr}_{1}\left(\mathbb{C}^{4}\right)$ and $\mathbb{M}_{\mathbb{C}}$ contain complementary information (all contained in the flag manifold) that can be transferred from one to the other by the double fibration. 

The beauty of twistor theory rests on the specific geometry of the complex projective spaces involved in the double fibration (\ref{double-fibration-1}), so this geometry is not applicable to more general spacetimes. 

A new attempt to apply the ideas of twistor theory for non--flat spacetimes arose when R. Low, a PhD student of Penrose at the time, proposed to use real geometry instead of complex one (see, for instance, \cite{Lo88}--\cite{Lo06}). 
Low laid the foundations for building a new framework allowing the study of the conformal properties of general spacetimes. 
This new geometry is the \emph{space of light rays} $\mathcal{N}$. Its elements are the images of all null geodesics in the spacetime $M$ (called light rays). 
This space pretends to be a complement (or even a substitute) of the conformal class of the given spacetime in the study of, for instance, causality or other conformal aspects on physical models (see also \cite{Ch08}, \cite{Ch10}, \cite{Ch18}, \cite{Ba14}, \cite{Ba15}).
It is proven in \cite{Ba14} that the conformal manifold $M$ is encoded in $\mathcal{N}$ as a family $\Sigma$ of submanifolds in $\mathcal{N}$, called the space of skies, such that the conformal manifold $M$ can be univocally recovered knowing how the submanifolds of $\Sigma$ lie in $\mathcal{N}$. These submanifolds are called skies and consist of all light rays passing through a given point in $M$.

In \cite{Lo06}, Low introduces the idea of how to construct a possible new boundary for the spacetime $M$.  This can be done from $\mathcal{N}$ and $\Sigma$ by adding the congruence of light rays ``arriving at" or ``going out from" the ``same" endpoint of null geodesics. 
Some preliminary results have been obtained in \cite{Ba17} by Low et al. showing that the proposed boundary can be consistent and coinciding, in some simple cases, with the part of the $c$--boundary which is accessible by null geodesics. In \cite{Ba18}, this new boundary, called the \emph{light boundary} or the \emph{$L$--boundary}, is studied deeper and characterized whenever it exists. At this point, there are still technical problems to solve, such as, for example, its differentiability.

The present article is motivated by a talk given at the meeting \emph{Singularity theorems, causality, and all that. A tribute to Roger Penrose} on 16th of June of 2021 to pay public tribute to R. Penrose for his influence in Mathematics and Physics of the late mid-twentieth and early twenty-first centuries. All results in this review are strongly inspired by his work on causality, conformal boundaries and twistor theory. Roger Penrose has paved the way and their original ideas will be an inspiration for new generations of researchers for a long time.

This review pretends to introduce the geometry of the space of light rays $\mathcal{N}$ of a conformal manifold $M$ to a wider audience summarizing results which are disperse in the literature. The topics emphasized here are the geometric and topological structures of $\mathcal{N}$, the causal relations of the spacetime $M$ studied from the structures of $\mathcal{N}$ and the construction of the $L$--boundary.  
References are given along this review to balance brevity and a detailed exposition of the subject.

The outline of this article is as follows. After a slight brush-stroke on causality, section \ref{sec:lightrays} is devoted to introduce the space of light rays $\mathcal{N}$ of a conformal manifold $M$ as well as its structures. The hypotheses that $M$ must verify for $\mathcal{N}$ to have  nice properties are justified. 
A characterization of tangent vectors of $\mathcal{N}$ in terms of elements of $M$ is given in section \ref{sec:lightrays-tangent}. They can be seen as a class of Jacobi fields along null geodesics describing light rays. Another canonical structure in $\mathcal{N}$, the contact structure $\mathcal{H}\subset T\mathcal{N}$, is described briefly in section \ref{sec:lightrays-contact}.

Section \ref{sec:lightrays-skies} is addressed to describe the \emph{space of skies} $\Sigma$ which will be diffeomorphic to the spacetime $M$ but its points consist of submanifolds in $\mathcal{N}$. The elements of $\Sigma$ are called \emph{skies} and they consist of the congruence of light rays passing through a given event of $M$. 
There is a topology and a differentiable structure in $\Sigma$ inherited from $\mathcal{N}$, which makes $\Sigma$ diffeomorphic to the conformal manifold $M$. 
Moreover, the way in which the skies of $\Sigma$ are embedded in $\mathcal{N}$ allows to obtain all the information of the conformal structure of $M$ as the Reconstruction theorem states, it means, under suitable hypotheses, the conformal manifold $M$ can be recovered from the geometry of $\mathcal{N}$. 

In section \ref{sec:Causality}, we characterize causal curves in $M$ as a class of Legendrian isotopies of skies in $\mathcal{N}$. This permits to recover all information about causality in $M$ from the ``movement" of skies across the space of light rays $\mathcal{N}$. In the section \ref{sec:Twisted}, we show the existence of curves in $\mathcal{N}$ tangent to skies everywhere in such a way they define null curves in $M$, not geodesic at any point, called \emph{twisted null curves}, connecting chronological related points.  Finally, the property of sky--linking corresponding to pairs of skies of causally related points in $M$ is briefly described in section \ref{sec:Linking}.

The construction of the $L$--boundary for $\dim M = 3$ is done in section \ref{sec:LBoundary}. The original idea, motivated by twistor geometry and due to R. Low is described. Moreover, the hypotheses under which this construction can be done are analysed in sections \ref{sec:LBoundary-idea} and \ref{sec:LBoundary-3D}. 
Then, in section \ref{sec:LBoundary-Ntilde}, an open submanifold $\widetilde{\mathcal{N}}$ of the bundle $\mathbb{P}\left(\mathcal{H}\right)$ of $1$--dimensional vector subspaces of the contact structure $\mathcal{H}$ which are tangent to skies is described. 
Three different distributions are defined in $\mathbb{P}\left(\mathcal{H}\right)$ in section \ref{sec:LBoundary-distrib}, one in $\widetilde{\mathcal{N}}$ and two more in the boundary $\partial\widetilde{\mathcal{N}}$. 
Sections \ref{sec:Lboundary-Smoothness} and \ref{sec:Lboundary-canonical} study the conditions under which the union of these distributions forms a unique and smooth distribution $\overline{\mathcal{D}^{\sim}}$ such that the quotient of the closure $\overline{\widetilde{\mathcal{N}}}$ over $\overline{\mathcal{D}^{\sim}}$ is, whenever regularity holds, a Hausdorff manifold with boundary $\overline{M}$ such that $M\subset \overline{M}$ and $\partial M=\overline{M}-M$.  

Finally, in section \ref{sec:Lextensions}, a characterization of the $L$--boundary constructed in the previous section is done in a dimensional independent way. Some examples illustrating the properties satisfied by the $L$--extensions are given.

\section{The space of light rays}\label{sec:lightrays}

We will start with a $m$--dimensional \emph{Lorentz manifold} $\left(M,\mathbf{g}\right)$ with $m\geq 3$, that is a $m$--dimensional Hausdorff smooth manifold equipped with a non--degenerate metric $\mathbf{g}$ with signature $\left(-+\cdots +\right)$ such that $\left(M,\mathbf{g}\right)$ is time--oriented, that means that there exists a global vector field $T\in \mathfrak{X}\left(M\right)$ such that $\mathbf{g}\left(T,T\right)<0$ at any point of $M$, where $\mathfrak{X}\left(M\right)$ denotes the set of all smooth vector fields in $M$. For brevity, sometimes we will call \emph{spacetime} to $\left(M,\mathbf{g}\right)$.

For a given $\left(M,\mathbf{g}\right)$, we can define its \emph{conformal (Lorentz) structure}
\[
\mathcal{C}_{\mathbf{g}}=\left\{ \overline{\mathbf{g}}=e^{f}\mathbf{g}: f\in \mathfrak{F}\left(M\right) \right\}
\]
where $\mathfrak{F}\left(M\right)$ denotes the ring of smooth function in $M$. 
By definition, the conformal structure $\mathcal{C}_{\mathbf{g}}$ is the set of all Lorentz metrics proportional to the given $\mathbf{g}$ with positive function of proportionality.
So, the pair $\left(M,\mathcal{C}_{\mathbf{g}}\right)$ is called a \emph{conformal (Lorentz) manifold} and it can be denoted by $\left(M,\mathcal{C}\right)$, or even $M$, when the mention to the metric $\mathbf{g}$ is not necessary. 

We will use the standard notation for the tangent bundle of some smooth manifold $N$ and its tangent spaces at $p\in N$, that is $TN$ and $T_p N$ respectively. 

Next, we will construct the space of light rays and all its additional structures for a given $m$--dimensional conformal Lorentz manifold $\left(M,\mathcal{C}\right)$ with $m\geq 3$.
The case $m=2$ can also be constructed, but some of its associated spaces have discrete or trivial structures. 
Anyway, we can use this case for illustrative purpose.

\subsection{Construction of the space of light rays}\label{sec:lightrays-causal}

Let us consider a conformal structure $\mathcal{C}$ in $M$. Fixed a metric $\mathbf{g}\in\mathcal{C}$, it is possible to classify any tangent vector $v\in TM$ depending on the sign of $\mathbf{g}\left(v,v\right)\in \mathbb{R}$.
So, a tangent vector $v\in T_p M$ is called 
\begin{itemize}
\item  \emph{timelike} $\Longleftrightarrow$ $\mathbf{g}\left(v,v\right)<0$
\item  \emph{null} $\Longleftrightarrow$ $\mathbf{g}\left(v,v\right)=0$
\item  \emph{spacelike} $\Longleftrightarrow$ $\mathbf{g}\left(v,v\right)>0$ .
\end{itemize}
We will also say that $v\in T_pM$ is \emph{lightlike} if it is null and $v\neq \mathbf{0}$, and we will say that $v\in T_pM$ is \emph{causal} if it is timelike or null.

Observe that for any other $\overline{\mathbf{g}}=e^f \mathbf{g}\in \mathcal{C}$ with $f\in\mathfrak{F}\left(M\right)$, trivially we have 
\[
\mathrm{sign}\left(\overline{\mathbf{g}}\left(v,v\right) \right)=\mathrm{sign}\left(\mathbf{g}\left(v,v\right) \right) \text{ for all }v\in TM
\]
then, this classification, named \emph{causal character} of tangent vectors is defined at $\left(M,\mathcal{C}\right)$ because it does not depend on the representative metric $\mathbf{g}\in\mathcal{C}$, so the \emph{causality} or \emph{causal character} is a conformal property.

Since $M$ is time--oriented, the global timelike vector field $T\in \mathfrak{X}(M)$ defines the time--orientation of vectors saying that $T$ is \emph{future--directed} and $-T$ \emph{past--directed}. All non--zero causal vectors $u$ can be classified into these two categories of future or past--directed vectors depending on the sign of $\mathbf{g}(u,T)$. 
In particular, if we denote by $\mathbb{N}_p\subset T_p M$ the set of all lightlike vectors at $p\in M$ then it splits in two connected components $\mathbb{N}_p = \mathbb{N}_p^+ \cup \mathbb{N}_p^-$ where
\[
\mathbb{N}^{+}_{p}=\{u\in \mathbb{N}_{p}: \mathbf{g}(u,T)<0\} \quad \text{ and } \quad \mathbb{N}^{-}_{p}=\{u\in \mathbb{N}_{p}: \mathbf{g}(u,T)>0\} .
\]
We will call $\mathbb{N}_p^+$ the set of \emph{future--directed} lightlike vectors at $p$ and $\mathbb{N}_p^-$ the set of the \emph{past--directed} ones.
Hence the disjoint union
\[
\mathbb{N}^{+}=\bigcup_{p\in M}\mathbb{N}_p^{+}\subset TM
\]
denotes the sub--bundle of $TM$ given by the future--directed lightlike vectors whose fibres are $\mathbb{N}_p^{+}$.

The causal character is extended automatically to smooth curves $\alpha:I\rightarrow M$ by the classification of its tangent vector $\alpha'\left(t\right)\in T_{\alpha\left(t\right)}M$ for $t\in I$. 
Then we can talk about timelike, null, spacelike, lightlike or causal curves whenever the corresponding tangent vectors lie in such categories.

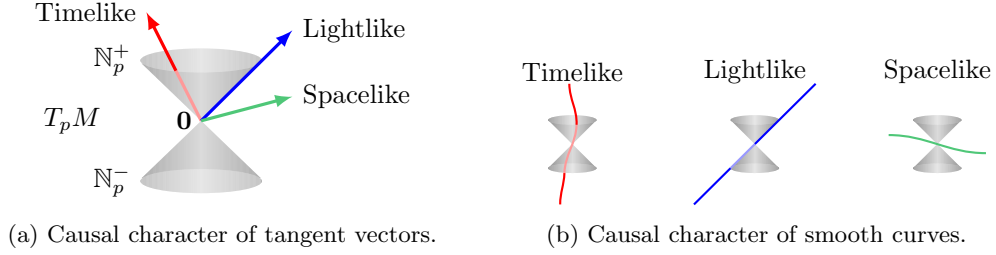
\begin{figure}[h]
\centering
\begin{subfigure}[b]{0.45\textwidth}
\centering
\begin{tikzpicture}[scale=0.8]
\fill[left color=gray!15!black, right color=gray!15!black,middle color=gray!20,shading=axis,opacity=0.15] (-0.98,0.95) -- (0,0) -- (0.98,0.95);
\fill[color=white] (1,1) arc (0:360:1cm and 0.2cm);
\fill[left color=gray!05!black, right color=gray!05!black,middle color=gray!10,shading=axis,opacity=0.1] (1,1) arc (0:360:1cm and 0.2cm);
\fill[left color=gray!15!black, right color=gray!15!black,middle color=gray!20,shading=axis,opacity=0.15] (-0.98,-0.95) -- (0,0) -- (0.98,-0.95);
\fill[color=white] (1,-1) arc (0:360:1cm and 0.2cm);
\fill[left color=gray!05!black, right color=gray!05!black,middle color=gray!10,shading=axis,opacity=0.1] (1,-1)  arc (0:360:1cm and 0.2cm);
\draw[very thick,color=red!40]  (0,0) -- (-0.415,0.83); 
\draw[-latex, very thick, color=red]  (-0.415,0.83) -- (-0.9,1.8) node[anchor=east,color=black] {Timelike};
\draw[-latex, very thick, color=blue]  (0,0) -- (1.5,1.5) node[anchor=west,color=black] {Lightlike};
\draw[-latex, very thick, color=emerald]  (0,0) -- (1.5,0.4) node[anchor=west,color=black] {Spacelike};    
\draw (-1,1) node[anchor=east] {$\mathbb{N}^{+}_p$};
\draw (-1,-1) node[anchor=east] {$\mathbb{N}^{-}_p$};
\draw (0,0) node[anchor=east] {$\mathbf{0}$};
\draw (-1.5,0) node[anchor=east] {$T_p M$};
\end{tikzpicture}
\caption{Causal character of tangent vectors.}
  \label{diapositivas1}
\end{subfigure}
\hspace{5pt}
\begin{subfigure}[b]{0.45\textwidth}
\centering
\begin{tikzpicture}[scale=0.8]

\fill[left color=gray!15!black, right color=gray!15!black,middle color=gray!20,shading=axis,opacity=0.15] (-3.4,0.4) -- (-3,0) -- (-2.6,0.4);
\fill[color=white] (-2.6,0.4) arc (0:360:0.4cm and 0.08cm);  
\fill[left color=gray!05!black, right color=gray!05!black,middle color=gray!10,shading=axis,opacity=0.1] (-2.6,0.4) arc (0:360:0.4cm and 0.08cm);
\fill[left color=gray!15!black, right color=gray!15!black,middle color=gray!20,shading=axis,opacity=0.15] (-3.4,-0.4) -- (-3,0) -- (-2.6,-0.4);
\fill[color=white] (-2.6,-0.4) arc (0:360:0.4cm and 0.08cm);
\fill[left color=gray!05!black, right color=gray!05!black,middle color=gray!10,shading=axis,opacity=0.1] (-2.6,-0.4) arc (0:360:0.4cm and 0.08cm);
\draw[thick,color=red!40] (-3.13,-0.48) to[out=88, in=-92] (-2.93,0.32);  
\draw[thick,color=red] (-2.93,0.32) to[out=88, in=-89] (-3.05,1);  
\draw[thick,color=red] (-3.2,-1) to[out=90, in=-92] (-3.13,-0.48);  

\fill[left color=gray!15!black, right color=gray!15!black,middle color=gray!20,shading=axis,opacity=0.15] (-0.4,0.4) -- (0,0) -- (0.4,0.4);
\fill[color=white] (0.4,0.4) arc (0:360:0.4cm and 0.08cm);
\fill[left color=gray!05!black, right color=gray!05!black,middle color=gray!10,shading=axis,opacity=0.1] (0.4,0.4) arc (0:360:0.4cm and 0.08cm);
\fill[left color=gray!15!black, right color=gray!15!black,middle color=gray!20,shading=axis,opacity=0.15] (-0.4,-0.4) -- (0,0) -- (0.4,-0.4);
\fill[color=white] (0.4,-0.4) arc (0:360:0.4cm and 0.08cm);
\fill[left color=gray!05!black, right color=gray!05!black,middle color=gray!10,shading=axis,opacity=0.1] (0.4,-0.4) arc (0:360:0.4cm and 0.08cm);
\draw[thick,color=blue] (0,0) -- (1,1);  
\draw[thick,color=blue!40] (-0.4,-0.4) -- (0,0);  
\draw[thick,color=blue] (-1,-1) -- (-0.4,-0.4);  
\fill[left color=gray!15!black, right color=gray!15!black,middle color=gray!20,shading=axis,opacity=0.15] (2.6,0.4) -- (3,0) -- (3.4,0.4);
\fill[color=white] (3.4,0.4) arc (0:360:0.4cm and 0.08cm);
\fill[left color=gray!05!black, right color=gray!05!black,middle color=gray!10,shading=axis,opacity=0.1] (3.4,0.4) arc (0:360:0.4cm and 0.08cm);
\fill[left color=gray!15!black, right color=gray!15!black,middle color=gray!20,shading=axis,opacity=0.15] (2.6,-0.4) -- (3,0) -- (3.4,-0.4);
\fill[color=white] (3.4,-0.4) arc (0:360:0.4cm and 0.08cm);
\fill[left color=gray!05!black, right color=gray!05!black,middle color=gray!10,shading=axis,opacity=0.1] (3.4,-0.4) arc (0:360:0.4cm and 0.08cm);
\draw[thick,color=emerald] (2.2,0.15) to[out=0, in=180] (3.8,-0.15);  
\draw (-3,1.2) node {Timelike};
\draw (0,1.2) node {Lightlike};
\draw (3,1.2) node {Spacelike};
\end{tikzpicture}
\caption{Causal character of smooth curves.}
  \label{diapositivas2}
\end{subfigure}
  \caption{Causal character.}
  \label{diapositivas1y2}
\end{figure}

It is known by \cite[Lem. 2.1]{Ku88} that any null geodesic related to the metric $\mathbf{g}\in \mathcal{C}$ is also a pregeodesic for any metric $\overline{\mathbf{g}}\in \mathcal{C}$. 
This implies that if $\gamma:I\rightarrow M$ is a null geodesic for the metric $\mathbf{g}\in \mathcal{C}$, then there exists a reparametrization $ \overline{\gamma}:I'\rightarrow M$ of $\gamma$ such that $\overline{\gamma}$ is a null geodesic for the metric $\overline{\mathbf{g}}\in \mathcal{C}$.
Since the images of $\gamma$ and $\overline{\gamma}$ coincide, then we can define a \emph{light ray} in $\left(M,\mathcal{C}\right)$ as the image of a maximal null geodesic related to some (and therefore, to any) metric $\mathbf{g}\in \mathcal{C}$. 
Hence, the \emph{set of light rays} of $M$ is defined as
\[
\mathcal{N}=\{\gamma\left(I\right)\subset M: \quad \gamma:I\rightarrow M \text{ is a maximal null geodesic } \} 
\] 
which is a conformal definition.

We will abuse of the notation and denote a light ray defined by a null geodesic $\gamma:I\rightarrow M$ related to some metric $\mathbf{g}\in \mathcal{C}$ by the same lower greek letter as an element $\gamma\in \mathcal{N}$ or as subset $\gamma\subset M$, so we can interpret a light ray as an unparametrized null geodesic.

\subsection{Causality conditions}\label{sec:causal-conditions}

The causal structure provides topological properties to the conformal manifold called \emph{causality conditions}. 
In fact, these conditions constitute a hierarchy such that the stronger conditions also verify the weaker ones, that is why this hierarchy is also called \emph{the causal ladder}.
For a complete description, see \cite{Mi08} and also \cite{Pe72}, \cite{HE}, \cite{On83}, \cite{BE96}. 
We will only mention some steps of this \emph{causal ladder}.

We say that $\left(M,\mathcal{C}\right)$ satisfies the \emph{chronological condition} if and only if there are no closed timelike curves in $M$. 
This condition permits the existence of closed causal curves (not strictly timelike, but still causal). 
To avoid this situation, we can require the next step of the causal ladder: the \emph{causal condition} which implies the non--existence of closed causal curves. 
But still, causal curves can be almost closed (see \cite[Fig. 38]{HE}). 
When there is not such kind of curves, then the conformal manifold is said to be \emph{strongly causal} (or to verify the \emph{strong causality condition}).
A physically relevant spacetime is supposed to be strongly causal because there is not physical experience of the existence of almost closed causal curves.
There are several steps before reaching the top of the hierarchy, that is the \emph{global hyperbolicity condition}. 
This is the strongest condition and it consists in the existence of a differentiable spacelike hypersurface $C\subset M$, called \emph{Cauchy surface}, such that each inextensible causal curve $\lambda$ intersects $C$ at exactly one point \cite{BS1}.
The existence of such Cauchy surface allows to fix in it initial data of Cauchy problems of differential equations in order to determine their solutions in the entire spacetime.
A globally hyperbolic conformal manifold is less general than a strongly causal one.

The study of the causality conditions is done by checking what different properties are satisfied by the sets that can be influenced by (or influence to) others by signals travelling with a speed not exceeding the speed of light.

\begin{definition}\label{def-causal-sets}
Let $A$ be a subset of a Lorentzian manifold $M$.
\begin{enumerate}
\item The \emph{chronological future of $A$} is the set $I^+\left(A\right)$ of all points in $M$ that can be connected from $A$ by a future--directed timelike curve.
The \emph{chronological past of $A$} is defined analogously and it will be denoted by $I^-\left(A\right)$.
\item The \emph{causal future of $A$}, denoted by $J^{+}(A)$,  is the union of $A$ and the set of all points in $M$ that can be connected from $A$ by a future--directed causal curve. Analogously we have $J^{-}\left(A\right)$, the \emph{causal past of $A$}.
\end{enumerate}
\end{definition}

In a equivalent way, the following notation is used in the literature.
\begin{equation}
\begin{tabular}{l}
$p\ll q  \Leftrightarrow q\in I^{+}(p)$ \\
$p<q \Leftrightarrow$ there exists a future--directed causal curve from $p$ to $q$ \\
$p\leq q \Leftrightarrow q\in J^{+}(p)$
\end{tabular}
\end{equation}

The following theorem is a basic result to study the causal structure of spacetimes and it can be found in \cite[Prop. 10.46]{On83}.

\begin{theorem}\label{1T5}
Let $M$ be a Lorentzian manifold. If $\alpha $ is a causal curve joining the points $p,q\in M$ but not a null pregeodesic, then in any neighbourhood of $\alpha $ there exists a timelike curve $\beta $ connecting the points $p$ and $q$.
\end{theorem}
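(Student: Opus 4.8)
The plan is to reduce the statement to a single local operation, \emph{rounding a corner}, and then to propagate the resulting timelike behaviour along the whole of $\alpha$. The algebraic fact underlying everything is the reverse Cauchy--Schwarz inequality for the causal cone: if $u,w\in T_xM$ are future--directed causal vectors, then $\mathbf{g}(u,w)\leq 0$, with equality if and only if both are null and proportional; consequently $u+w$ is timelike unless $u$ and $w$ are null and parallel. Geometrically, working inside a convex normal neighbourhood, if $\alpha$ has at $x=\alpha(t_0)$ two one--sided tangents $v^{\pm}=\alpha'(t_0^{\pm})$ that are future causal but \emph{not} positively proportional, then replacing the two short arcs $\alpha|_{[t_0-s,t_0]}$ and $\alpha|_{[t_0,t_0+s]}$ by the unique geodesic joining $\alpha(t_0-s)$ to $\alpha(t_0+s)$ produces, for small $s>0$, a curve whose tangent at the cut is asymptotically directed along $v^{-}+v^{+}$ and hence timelike. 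This replacement stays in any prescribed tube around $\alpha$ and creates a genuinely timelike sub--arc.

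First I would show that the hypothesis ``$\alpha$ is not a null pregeodesic'' always yields such a timelike sub--arc somewhere along $\alpha$. There are three situations. (i) If $\alpha'(t_0)$ is timelike for some $t_0$, then $\alpha$ is already timelike on a neighbourhood of $t_0$ and there is nothing to do. (ii) If $\alpha$ has a genuine corner, i.e.\ non--proportional one--sided future causal tangents, the rounding operation above supplies the timelike sub--arc. (iii) The delicate case is when $\alpha$ is smooth and null everywhere but is not a pregeodesic. Here I claim there exist parameters $t_1<t_2$ with $\alpha(t_1)$ and $\alpha(t_2)$ joined by a \emph{timelike} radial geodesic inside a convex neighbourhood. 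Indeed, if every chord $\alpha(t_i)\to\alpha(t_j)$ were null, then in a convex neighbourhood each later point would lie on the future null cone of each earlier one, and the intersection of the null cones of $\alpha(t_1)$ and $\alpha(t_2)$ forces the whole arc onto the single null geodesic generator through $\alpha(t_1),\alpha(t_2)$; then $\alpha$ would be a reparametrised null geodesic, contradicting the hypothesis. Thus some chord is timelike, giving the desired sub--arc; equivalently one exhibits a non--collinear triple and rounds the corner of the associated broken null geodesic.

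With a timelike sub--arc in hand, joining $c=\alpha(t_1)$ to $d=\alpha(t_2)$ with $c\ll d$ by a timelike curve close to $\alpha$, I would spread timelikeness to the whole curve by iterated corner rounding. I approximate $\alpha|_{[0,t_1]}$ and $\alpha|_{[t_2,1]}$ by broken causal geodesics with vertices inside the tube. At each vertex where a causal segment meets a timelike one, the two tangents are future causal with at least one timelike, so $\mathbf{g}(v^{-},v^{+})<0$ strictly and rounding yields a timelike arc; performing this finitely many times from the timelike core outward toward $p$ and $q$ converts the entire concatenation into a timelike curve $\beta$ from $p$ to $q$ lying in the given neighbourhood. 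At the level of causal relations this is the familiar push--up $p\leq c\ll d\leq q\Rightarrow p\ll q$, refined so as to keep the connecting curve near $\alpha$.

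The main obstacle I expect is case (iii) together with the regularity and tube constraints: one must guarantee that the timelike sub--arc, and each rounded corner, remains inside an arbitrarily thin neighbourhood of $\alpha$, which is handled by carrying out every construction within convex normal neighbourhoods and shrinking the parameter separations $t_0\pm s$. A secondary technical point is that $\alpha$ is a priori only piecewise smooth (or merely causal); reducing to broken causal and null geodesics with the same endpoints, and checking that the approximants stay future causal and inside the tube, is where the bulk of the careful estimates live, although each individual estimate is routine in normal coordinates. An alternative to the corner--rounding mechanism is the first variation of $\mathbf{g}(\alpha_s',\alpha_s')$ along a variation $\alpha_s$ fixing $p$ and $q$, chosen so that $\mathbf{g}(D_t V,\alpha')\le 0$ with strict inequality where $\alpha$ fails to be geodesic; this makes the nearby longitudinal curves timelike and leads to the same conclusion.
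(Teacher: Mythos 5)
The paper itself does not prove this statement; it quotes it from O'Neill [Prop.\ 10.46], where the argument is variational: one builds a fixed--endpoint variation of $\alpha$ whose variation field $V$ satisfies $\mathbf{g}(V',\alpha')<0$, so that the nearby longitudinal curves are timelike --- essentially the ``alternative'' you sketch in your final sentence. Your main route, corner--rounding inside convex normal neighbourhoods, is the other classical proof, and most of it is sound: the reverse Cauchy--Schwarz observation, the timelikeness of the short chord across a genuine corner, and the treatment of the smooth--null--but--not--pregeodesic case (iii) via the local null--cone structure are all correct.

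The gap is in the propagation step. As you define it, ``rounding'' replaces only the two short arcs $\alpha|_{[t_0-s,t_0]}$ and $\alpha|_{[t_0,t_0+s]}$ on either side of a vertex. When a \emph{null geodesic segment} from $x_j$ to $x_{j+1}$ meets a timelike curve at $x_{j+1}$, this operation produces a timelike shortcut only in an $s$--neighbourhood of the vertex; the bulk of the null segment is untouched and still null, and the new configuration is again ``null geodesic meets timelike curve,'' now at the parameter $t_0-s$. Iterating the symmetric rounding merely chases the corner backwards along the null segment without ever consuming it, so ``finitely many'' roundings do not convert the concatenation into a timelike curve. What is needed is the asymmetric replacement: show that the \emph{entire} radial geodesic from $x_j$ to a point $w_\epsilon=\exp_{x_{j+1}}(\epsilon T)$ slightly up the timelike continuation is timelike for small $\epsilon>0$. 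This follows from the Gauss lemma: writing $\sigma$ for the null radial geodesic from $x_j$ to $x_{j+1}$, one has $\frac{d}{d\epsilon}\bigl.\bigr|_{0}\,\mathbf{g}\bigl(\exp_{x_j}^{-1}(w_\epsilon),\exp_{x_j}^{-1}(w_\epsilon)\bigr)=2\,\mathbf{g}(\sigma'(1),T)<0$, since $\sigma'(1)$ is future null and $T$ future timelike. With this local push--up lemma the finite induction over the vertices closes, and the rest of your argument goes through; without it, the step as written fails.
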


As immediate consequences of definition \ref{def-causal-sets} and theorem \ref{1T5}, we get the following properties (see \cite[Cor. 14.1 \& Lem. 14.2]{On83} for proofs).

\begin{corollary}\label{1C3}
For $p\in M$ and $A\subset M$ we have 
\begin{enumerate}
\item $I^{+}\left( A\right)\subset J^{+}\left(A\right)$.
\item $I^{+}\left( A\right)=I^{+}\left(I^{+}\left( A\right)\right)$.
\item $J^{+}\left( A\right)=J^{+}\left(J^{+}\left( A\right)\right)$.
\item If $r\in J^{+}\left( q\right) $ and $q\in I^{+}\left( p\right) $, or also $r\in I^{+}\left( q\right) $ and $q\in J^{+}\left( p\right) $, then we have that $r\in I^{+}\left( p\right) $.
\item $p\in I^{+}\left( q\right)\Leftrightarrow q\in I^{-}\left( p\right)$ and moreover $p\in J^{+}\left( q\right)\Leftrightarrow q\in J^{-}\left( p\right)$.
\end{enumerate}
The statements 1--4 are also true when we consider the chronological and causal past $I^{-}$ , $J^{-}$.
\end{corollary}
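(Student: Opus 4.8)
The plan is to dispatch the easy items straight from Definition~\ref{def-causal-sets} and to isolate item~4 as the only place where Theorem~\ref{1T5} is genuinely needed. Items~1 and~5 are essentially definitional: a future--directed timelike curve is in particular causal, so any point reachable from $A$ by the former is also reachable by the latter, giving $I^{+}(A)\subset J^{+}(A)$; and the equivalences in item~5 merely restate, in the two notations $p\in I^{+}(q)$ and $q\in I^{-}(p)$ (resp.\ for $J$), that $q$ can reach $p$ by a future--directed timelike (resp.\ causal) curve.

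For items~2 and~3 I would argue by two inclusions, using concatenation of curves in one direction and subdivision in the other. For $I^{+}(I^{+}(A))\subseteq I^{+}(A)$, given $x\in I^{+}(I^{+}(A))$ I concatenate a future--directed timelike curve $a\to y$ (with $a\in A$, $y\in I^{+}(A)$) with one $y\to x$; after rounding the corner at $y$ this is again a future--directed timelike curve from $a$ to $x$, so $x\in I^{+}(A)$. Conversely, given $x\in I^{+}(A)$ with timelike curve $\gamma$ from $a\in A$ to $x$, any interior point $\gamma(t_{0})$ lies in $I^{+}(A)$ and has $x\in I^{+}(\gamma(t_{0}))$, whence $x\in I^{+}(I^{+}(A))$; this gives equality. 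Item~3 is the same computation with causal curves, the inclusion $J^{+}(A)\subseteq J^{+}(J^{+}(A))$ being immediate from $A\subseteq J^{+}(A)$ and monotonicity of $J^{+}$.

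The core is item~4. Suppose first $r\in J^{+}(q)$ and $q\in I^{+}(p)$. If $r=q$ there is nothing to prove, so assume a nontrivial future--directed causal curve $q\to r$; concatenating it after a future--directed timelike curve $p\to q$ yields a future--directed causal curve $\alpha$ from $p$ to $r$ whose initial segment is timelike. Such an $\alpha$ cannot be a null pregeodesic, so Theorem~\ref{1T5} produces a timelike curve from $p$ to $r$, i.e.\ $r\in I^{+}(p)$. The second case, $r\in I^{+}(q)$ and $q\in J^{+}(p)$, is identical with the timelike segment placed at the end rather than the start. The past statements then follow by applying the future results to the time--reverse, i.e.\ under $T\mapsto -T$, which exchanges $I^{+}\leftrightarrow I^{-}$ and $J^{+}\leftrightarrow J^{-}$.

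The main obstacle is purely technical and lives in the concatenation step: joining two smooth causal (resp.\ timelike) curves produces only a piecewise smooth curve with a possible corner, and one must check that the corner can be smoothed without leaving the causal (resp.\ timelike) cone and without destroying the future orientation. This is the standard corner--rounding argument; the only subtlety worth flagging is that in item~4 the presence of a genuinely timelike subsegment is exactly what prevents $\alpha$ from being a null pregeodesic and thus licenses the application of Theorem~\ref{1T5}.
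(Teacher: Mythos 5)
Your proof is correct and follows exactly the route the paper intends: the paper gives no proof of its own but cites \cite[Cor. 14.1 \& Lem. 14.2]{On83}, where the argument is precisely the one you give --- items 1, 2, 3 and 5 from the definitions via concatenation and corner--rounding, and item 4 from Theorem \ref{1T5} applied to a concatenated causal curve that fails to be a null pregeodesic because of its timelike subsegment. Nothing further is needed.
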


The following proposition can be found in \cite[Lems. 14.3 \& 14.6]{On83}, and it shows that the causality and the topology of the spacetime are closely related.

\begin{proposition}
For any $A\subset M$ we have $I^{+}\left( A\right)$ is an open set in $M$. Moreover, $\mathrm{int}~J^{+}\left( A\right)=I^{+}\left( A\right)$ and $J^{+}\left( A\right)\subset \overline{I^{+}\left( A\right)}$.
\end{proposition}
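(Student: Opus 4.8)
The plan is to prove the three assertions in sequence, using Theorem \ref{1T5} as the main engine.

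First I would establish that $I^+(A)$ is open. The key idea is that timelike curves are \emph{stable} under small perturbations: if $q\in I^+(A)$, there is a future-directed timelike curve $\alpha$ from some $a\in A$ to $q$. Working in a convex normal neighbourhood $U$ of $q$ (whose existence is guaranteed for any Lorentz manifold), the condition $\mathbf{g}\bigl(\alpha'(t),\alpha'(t)\bigr)<0$ is an open condition on tangent vectors, and timelike-relatedness inside a convex neighbourhood is characterized by the sign of $\mathbf{g}$ applied to the geodesic-connecting vector. Concretely, for any point $q'$ sufficiently close to $q$, I would perturb the final segment of $\alpha$ so that it terminates at $q'$ while remaining timelike; since the timelike cones vary continuously and form an open subset of each $T_pM$, there is an entire open neighbourhood $V\subset U$ of $q$ such that every $q'\in V$ is reachable from $a$ by a future-directed timelike curve. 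Hence $V\subset I^+(A)$, proving openness.

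Next I would prove $J^+(A)\subset\overline{I^+(A)}$. Take $q\in J^+(A)$, so there is a future-directed causal curve $\alpha$ from some $a\in A$ to $q$. I split into two cases. If $\alpha$ is timelike (or can be taken timelike), then $q\in I^+(A)\subset\overline{I^+(A)}$ directly. Otherwise $\alpha$ is causal but not timelike; here is exactly where Theorem \ref{1T5} enters. If $\alpha$ is \emph{not} a null pregeodesic, the theorem produces, in every neighbourhood of $\alpha$, a timelike curve $\beta$ from $a$ to $q$, again giving $q\in I^+(A)$. The remaining case is when $\alpha$ is a null pregeodesic; then I would approximate $q$ by points slightly to the future along timelike directions. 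The argument is that although $q$ itself may only be causally (not chronologically) related to $A$, every neighbourhood of $q$ contains points of $I^+(A)$: one pushes the endpoint slightly into the open future cone, and by the openness just established together with the continuous variation of cones, such points lie in $I^+(A)$. Thus $q\in\overline{I^+(A)}$ in all cases.

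Finally I would show $\mathrm{int}\,J^+(A)=I^+(A)$. The inclusion $I^+(A)\subset\mathrm{int}\,J^+(A)$ follows immediately from Corollary \ref{1C3}(1), $I^+(A)\subset J^+(A)$, together with the openness of $I^+(A)$ proved above. For the reverse inclusion, suppose $q\in\mathrm{int}\,J^+(A)$; I would use $J^+(A)\subset\overline{I^+(A)}$ to find points of $I^+(A)$ arbitrarily close to $q$, and since $q$ lies in the \emph{interior} of $J^+(A)$, I can choose such a point $r\in I^+(A)$ together with a small timelike curve from $r$ back toward the future through $q$'s neighbourhood; applying the ``push to the future'' transitivity of Corollary \ref{1C3}(4) (causal followed by chronological yields chronological) then places $q\in I^+(A)$.

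The main obstacle I anticipate is the null-pregeodesic case in the second assertion: Theorem \ref{1T5} explicitly \emph{excludes} null pregeodesics, so the tool that handles the generic causal curve fails precisely when $\alpha$ is a null geodesic reaching $q$. Handling this cleanly requires the local perturbation argument in a convex neighbourhood — showing that even a null-geodesic endpoint is a limit of chronologically-reachable points — and getting the continuity of the cone structure and the openness argument fully rigorous is the delicate technical heart of the proof.
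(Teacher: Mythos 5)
The paper gives no proof of its own here — it simply cites O'Neill [Lems.~14.3 \& 14.6] — and your argument is essentially that standard proof: openness of $I^{+}(A)$ via perturbation inside a convex normal neighbourhood, density of $I^{+}(A)$ in $J^{+}(A)$, and the transitivity properties of Corollary~\ref{1C3}. Two points need tightening, though. First, in the proof of $J^{+}(A)\subset\overline{I^{+}(A)}$, the case split on whether $\alpha$ is a null pregeodesic is unnecessary, and the justification you give in the null case (``openness \ldots together with the continuous variation of cones'') is not the right one: the reason a point $r\in I^{+}(q)$ near $q$ lies in $I^{+}(A)$ is Corollary~\ref{1C3}(4) applied to $q\in J^{+}(A)$ and $r\in I^{+}(q)$. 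That single observation handles every $q\in J^{+}(A)$ at once, with no appeal to Theorem~\ref{1T5}.

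Second, and more seriously, the last step of $\mathrm{int}\,J^{+}(A)\subset I^{+}(A)$ is muddled. Knowing that points of $I^{+}(A)$ accumulate at $q$ does not hand you a point $r\in I^{+}(A)$ with $q\in I^{+}(r)$: the nearby points of $I^{+}(A)$ could all lie to the future of $q$ or be spacelike to it, and ``a small timelike curve from $r$ back toward the future through $q$'s neighbourhood'' does not exist for an arbitrary nearby $r$. The correct move is to take a neighbourhood $V\subset J^{+}(A)$ of $q$ and choose $r\in V\cap I^{-}(q)$, which is nonempty because $I^{-}(q)$ is open and has $q$ in its closure; then $r\in J^{+}(A)$ and $q\in I^{+}(r)$, and Corollary~\ref{1C3}(4) gives $q\in I^{+}(A)$ directly, with no density argument needed for this half. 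As written, the existence of your point $r$ is not justified; with this replacement the proof closes.
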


There is a lot of literature on causality theory, but we will only present the few basic elements that we will need throughout this review.

For our purposes, we will assume that $\left(M,\mathcal{C}\right)$ is strongly causal. By \cite[Lem. 3.21 \& Rmk. 3.23 (2)]{Mi08}, we can state the following result.
\begin{proposition}\label{prop-Mingu-Sanch}
$\left(M,\mathcal{C}\right)$ is strongly causal if and only if for any $p\in M$ there exists a topological basis consisting of globally hyperbolic, causally convex and normal neighbourhoods.
\end{proposition}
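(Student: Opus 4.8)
The plan is to prove the two implications separately, the converse being essentially tautological and the direct implication carrying all the content. Throughout I read ``normal neighbourhood'' as a \emph{convex normal} neighbourhood, i.e. an open set that is a normal neighbourhood of each of its points and in which any two points are joined by a unique geodesic lying inside it.

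For the converse ($\Leftarrow$), suppose every $p\in M$ admits a topological basis of the stated type. Then in particular $p$ has arbitrarily small causally convex neighbourhoods, and this is precisely the definition of strong causality at $p$ (no causal curve can leave such a neighbourhood and return to it); since this holds at every point, $\left(M,\mathcal{C}\right)$ is strongly causal. Thus this direction requires only the remark that the three listed properties include causal convexity, so the rest is superfluous here.

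For the direct implication ($\Rightarrow$), assume strong causality and fix $p\in M$ together with an arbitrary open neighbourhood $W\ni p$; the goal is to produce an open $V$ with $p\in V\subset W$ that is simultaneously normal, causally convex and globally hyperbolic. I would begin from the classical fact that convex normal neighbourhoods form a basis of the manifold topology, picking a convex normal $C$ with $\overline{C}$ compact and $p\in C\subset W$, small enough that in normal coordinates the metric is a controlled perturbation of the flat one. Inside such a $C$ the causal relations are governed by the unique connecting geodesics, so $J^{\pm}\!\left(\cdot,C\right)$ are closed in $C$ and the chronological cones $I^{\pm}\!\left(x,C\right)$ are geodesically convex. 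Using that $I^{\pm}(p)$ are open and nonempty near $p$, I then choose $p^{-}\ll p\ll p^{+}$ with $p^{\pm}\in C$ and set $V_0=I^{+}\!\left(p^{-},C\right)\cap I^{-}\!\left(p^{+},C\right)$; as an intersection of two geodesically convex cones this $V_0$ is an open, geodesically convex (hence normal) neighbourhood of $p$ contained in $C$, and intrinsically it is globally hyperbolic exactly as a Minkowski diamond is. It remains to secure causal convexity in $M$: here strong causality enters, providing a causally convex open $A$ with $p\in A\subset C$; choosing $p^{\pm}$ close enough to $p$ that $V_0\subset A$, every causal curve of $M$ with endpoints in $V_0$ stays in $A\subset C$ and hence, by the convex-neighbourhood description of causality, in $V_0$ itself. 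Setting $V=V_0$ yields the desired neighbourhood, and letting $W$ range over a basis at $p$ produces the claimed basis.

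The main obstacle is reconciling the three properties at once: causal convexity is a nonlocal condition inherited from strong causality, whereas normality and the Minkowski-like behaviour of the diamonds are purely local, and for an arbitrary small neighbourhood these need not be compatible. The delicate points are therefore (a) verifying that for $C$ chosen small enough the cones $I^{\pm}\!\left(x,C\right)$ are genuinely geodesically convex, so that their intersection is normal, and (b) controlling the closure of $V_0$ inside the compact $\overline{C}$ so that the causal diamonds of $V_0$ are compact and global hyperbolicity holds intrinsically. Both are handled by a smallness/continuity argument comparing the metric to its value at $p$ in normal coordinates, and this is exactly the step where the quantitative estimates of \cite{Mi08} do the real work.
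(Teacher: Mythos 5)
Your two implications are the right ones, and the overall outline is sound; note however that the paper does not prove this proposition at all --- it is quoted directly from \cite[Lem.~3.21 \& Rmk.~3.23(2)]{Mi08} --- so what you have written is essentially a reconstruction of the argument in that reference rather than an alternative to anything in the text. The converse is indeed immediate from the definition of strong causality once one observes that causal convexity is among the listed properties. In the direct implication, your construction of the diamond $V_0=I^{+}(p^{-},C)\cap I^{-}(p^{+},C)$ inside a relatively compact convex normal $C$, the use of push-up ($x\in I^{+}(p^{-},C)$ and $x\leq_C z$ imply $z\in I^{+}(p^{-},C)$) to upgrade ``the curve stays in $A\subset C$'' to ``the curve stays in $V_0$'', and the compactness of $J^{+}(x,C)\cap J^{-}(y,C)$ for global hyperbolicity are all exactly the ingredients of the cited proof. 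One point where you make life harder than necessary: you read ``normal'' as \emph{convex} normal and therefore worry about the geodesic convexity of $V_0$ (your delicate point (a)), but the paper's definition of a normal neighbourhood $V$ only asks that for each $q\in V$ there be a neighbourhood $W\subset T_qM$ of $0$ with $\exp_q:W\rightarrow V$ a diffeomorphism; any open subset of a convex normal neighbourhood satisfies this automatically (take $W=\exp_q^{-1}(V)$ inside the star-shaped domain for $C$), so that step can be dispensed with entirely. The remaining deferred item --- that $\overline{C}$ should sit inside a larger convex neighbourhood so that the causal relation on $C$ is closed and the diamonds are compact --- is genuinely where the quantitative work of \cite{Mi08} lives, and your sketch correctly identifies it as such rather than claiming to have proved it.
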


We will keep in mind the meaning of properties of the neighbourhoods of proposition \ref{prop-Mingu-Sanch}.
\begin{itemize}
\item If $V\subset M$ is globally hyperbolic, then there is a Cauchy surface $C$ in $V$.
\item If $V$ is \emph{normal}, then for any $q\in V$ there exists a neighbourhood $W\subset T_q M$ of $0$ such that the exponential map $\mathrm{exp}_q:W\rightarrow V$ is a diffeomorphism. 
\item If $V$ is \emph{causally convex}, then the intersection of any causal curve $\lambda$ with $V$, if non--empty, has exactly one connected component.
\end{itemize}

\begin{remark}\label{rmk-basic-neighb}
We warn the readers that such kind of globally hyperbolic, normal and causally convex neighbourhoods $V\subset M$ will be used as a tool to achieve the local constructions along this paper. In addition, we will assume that $V$ is small enough to consider the Cauchy surface $C\subset V$ diffeomorphic to $\mathbb{R}^{m-1}$.
\end{remark}

\subsection{Topological and differentiable structure of \texorpdfstring{$\mathcal{N}$}{N}}\label{sec:lightrays-struct-N}

A detailed explanation of this section can be found in \cite[Sec. 2.2]{Ba15b}. 
Notice that we will denote by $\pi_B^A:A\rightarrow B$ the canonical projection of the bundle $A$ onto the base $B$.
Let us fix some auxiliary metric $\mathbf{g}\in \mathcal{C}$.
Observe that any future--directed lightlike vector $v\in \mathbb{N}^{+}$ defines the future--directed null geodesic $\gamma_v\subset M$ such that $\gamma_v\left(0\right)=\pi_{M}^{TM}\left(v\right)$ and $\gamma'_v\left(0\right)=v$, so $v\in \mathbb{N}^{+}$ defines the light ray $\gamma_{v}\in \mathcal{N}$ determined by the null geodesic $\gamma_v$.
Also, for any $a>0$, the vector $av\in \mathbb{N}^{+}$ defines another null geodesic but the same light ray, this means $\gamma_v =\gamma_{av}\in\mathcal{N}$. 
Then, if $\mathcal{D}_{\Delta}$ is the distribution in $\mathbb{N}^{+}$ such that its leaf passing by $v\in \mathbb{N}^{+}$ is $\left[v\right]=\mathrm{span}\{v\}\subset \mathbb{N}^{+}$, since $\mathcal{D}_{\Delta}$ is a regular distribution, then the \emph{bundle of null directions} $\mathbb{PN}=\mathbb{N}^{+} / \mathcal{D}_{\Delta}$ is a differentiable manifold such that the projection $\pi_{\mathbb{PN}}^{\mathbb{N}^{+}}:\mathbb{N}^{+}\rightarrow\mathbb{PN}$ given by $\pi_{\mathbb{PN}}^{\mathbb{N}^{+}}\left(v\right)=\left[v\right]$ is a submersion.
Now, recall that given two vectors $u,v\in T_p M$ such that $u=av$ with $a>0$, then the geodesics $\gamma_u$ and $\gamma_v$ verify $\gamma_u\left(s\right)=\gamma_{av}\left(s\right) =\gamma_v\left(as\right)$. 
Hence, the elevation of the null geodesics from $M$ to $\mathbb{PN}$ defines a distribution $\mathcal{D}_{G}$ in $\mathbb{PN}$ such that the leaves are, precisely, the unparametrized null geodesics, that is, the light rays. 
Since $M$ is assumed to be strongly causal, then $\mathcal{D}_{G}$ is regular and then $\mathcal{N}=\mathbb{PN}/ \mathcal{D}_{G}$ is a differentiable manifold such that the quotient map $\boldsymbol{\upgamma}:\mathbb{PN}\rightarrow \mathcal{N}$ given by $\boldsymbol{\upgamma}\left(\left[v\right]\right)=\gamma_{\left[v\right]}=\gamma_{v}$ is a submersion.

For any set $W\subset M$, we will use the following notation
\[
\mathbb{N}^{+}\left(W\right)=\{v\in \mathbb{N}^{+}:\pi_{M}^{TM}\left(v\right)\in W \} \quad \text{ and } \quad \mathbb{PN}\left(W\right)=\{\left[v\right]\in \mathbb{PN}:\pi_{M}^{\mathbb{PN}}\left(\left[v\right]\right)\in W \}   .
\]

In order to build a coordinate chart for $\mathcal{N}$, by proposition \ref{prop-Mingu-Sanch}, we will take a globally hyperbolic, normal, causally convex open neighbourhood $V\subset M$ at some $p\in M$. 
Given a timelike vector field $T\in\mathfrak{X}\left(M\right)$, we consider the restriction $\Omega\left(V\right)\subset\mathbb{N}^{+}\left(V\right)$ defined by
\begin{equation}\label{eq-Omega-V}
\Omega\left(V\right)=\{v\in \mathbb{N}^{+}\left(V\right):\mathbf{g}\left(v,T\right)=-1 \}
\end{equation}
which is diffeomorphic to $\mathbb{PN}\left(V\right)$, open set in $\mathbb{PN}$. 
If $C\subset V$ is a Cauchy surface in $V$, then any light ray $\gamma\in \mathcal{N}$ passing through $V$ is determined by the intersection point $c=\gamma\cap C$ and the corresponding null direction at $c\in C$. 
Then, with the abuse in the notation because $\Omega\left(V\right)\simeq\mathbb{PN}\left(V\right)$, the restricted maps $\left.\boldsymbol{\upgamma}\right|_{\mathbb{PN}\left(C\right)}$ and $\left.\boldsymbol{\upgamma}\right|_{\Omega\left( C\right)}$ are diffeomorphisms, hence we have the following diagram
\begin{equation}\label{diagram-charts}
\begin{tikzpicture} [every node/.style={midway}]
\matrix[column sep={8em,between origins},
        row sep={3em}] at (0,0)
{ \node(PN1)   {$\mathbb{PN}\left(V\right)\simeq\Omega\left( V\right)$}  ; & \node(N) {$\mathcal{N}_{V}\subset \mathcal{N}$}; \\
  \node(PN2) {$\mathbb{PN}\left(C\right)\simeq\Omega\left(C\right)$};    \\};
\draw[->] (PN1) -- (N) node[anchor=south]  {$\boldsymbol{\upgamma}$};
\draw[->] (PN2) -- (N) node[anchor=north]  {$\xi$};
\draw[<-right hook] (PN1)   -- (PN2) node[anchor=east] {$i$};
\end{tikzpicture}
\end{equation}
where $i:\Omega\left(C\right)\hookrightarrow\Omega\left(V\right)$ is the inclusion map and moreover $\xi=\left.\boldsymbol{\upgamma}\right|_{\Omega\left(C\right)}\circ i$ and $\xi=\left.\boldsymbol{\upgamma}\right|_{\mathbb{PN}\left(C\right)}\circ i$ are diffeomorphisms.
Therefore, a coordinate chart $\varphi$ for $\Omega\left(C\right)$ provides us a coordinate system $\left( \mathcal{N}_V, \psi\right)$ in $\mathcal{N}$ such that 
\begin{equation}\label{eq-chart-N}
\psi=\varphi\circ \xi^{-1}  .
\end{equation}
So, the topology of $\mathcal{N}$ is inherited from $\mathbb{PN}$ (locally represented by $\Omega\left(V\right)$) as a quotient space.
Observe that if, at some $p\in M$, we take some orthonormal basis $\{\mathbf{E}_i\}_{i=0,\ldots,m-1}$ related to some $\mathbf{g}\in\mathcal{C}$ such that $\mathbf{E}_0$ is future--directed and timelike (here we consider $T=\mathbf{E}_0$ at $p$) and $\mathbf{E}_i$ are spacelike for $i=1,\ldots,m-1$, then any null direction $[v]\in\mathbb{PN}_p$ can be defined by a null vector 
\[
v=\mathbf{E}_0 + v^1 \mathbf{E}_1 + \cdots + v^{m-1}\mathbf{E}_{m-1}\in \mathbb{N}^{+}_{p}
\]
verifying $\left(v^1\right)^2+\ldots +\left(v^{m-1}\right)^2=1$ since $\mathbf{g}(v,v)=0$. 
Therefore the fibres $\mathbb{PN}_p$ of $\mathbb{PN}$ are diffeomorphic to the sphere $\mathbb{S}^{m-2}$ and then the bundle of null directions $\mathbb{PN}(C)$ is diffeomorphic to the bundle $ST(C)$ of $(m-2)$--spheres on $C$. If $C$ is diffeomorphic to $\mathbb{R}^{m-1}$ then we can consider $\mathcal{N}\simeq C \times \mathbb{S}^{m-2}$.
\begin{remark}
If $M$ is globally hyperbolic, there exists a global smooth spacelike Cauchy surface $C\subset M$. So, the diffeomorphism $\xi$ of diagram (\ref{diagram-charts}) is global and therefore $\mathcal{N}\simeq \mathbb{PN}(C)$.  
\end{remark}

\begin{remark}\label{remark-v-en-Omega}
It is important to note that when working in coordinates $\psi:\mathcal{U}\subset \mathcal{N}\rightarrow \mathbb{R}^{2m-3}$ built by the diffeomorphism $\xi$ of diagram (\ref{diagram-charts}), we will set the parametrization of light rays as null geodesics such that their initial vectors at the local Cauchy surface $C\subset V$ given by $\gamma'(0)=v\in \Omega(C)$ verify $\mathbf{g}(v,T)=-1$ as in equation (\ref{eq-Omega-V}).   
\end{remark}

\begin{example}\label{ex-4-minkowski}
Consider the $4$--dimensional Minkowski spacetime $\mathbb{M}^4$, that is $\mathbb{R}^4$ equipped with the metric $\mathbf{g}=-dt\otimes dt + dx\otimes dx+dy\otimes dy+dz\otimes dz$ for standard coordinates $(t,x,y,z)$. The hypersurface $C\equiv\{t=0\}$ is a spacelike Cauchy surface of $M$ then any light ray intersects $C$ at exactly one point $q=(0,x,y,z)$. Using the spherical coordinates $\theta$, $\phi$, a null direction is generated by the vector
\[
v=\left(\frac{\partial}{\partial t}\right)_q + \cos \theta \sin \phi \left(\frac{\partial}{\partial x}\right)_q + \sin \theta \sin \phi  \left(\frac{\partial}{\partial y}\right)_q + \cos \phi \left(\frac{\partial}{\partial z}\right)_q   .
\]
If $\gamma\in \mathcal{N}$ is the light ray defined in $\mathbb{M}^4$ by the null geodesic $\gamma$ such that $\gamma(0)=q=(0,x,y,z)$ and $\gamma'(0)=v$, the map 
\[
\begin{tabular}{rrcl}
$\psi:$ & $\mathcal{N}$ & $\rightarrow$ & $\mathbb{R}^3 \times [0,2\pi) \times (0,\pi)$ \\
  & $\gamma$ & $\mapsto$ & $\left(x,y,z,\theta,\phi \right)$
\end{tabular}
\] 
is a coordinate system for $\mathcal{N}$ of $\mathbb{M}^4$.
\end{example}

\begin{remark}\label{remark-non-null-convex}
It is not ensured that the topology of $\mathcal{N}$ is Hausdorff. As an example, consider the $m$--dimensional Minkowski space $\mathbb{M}^{m}$, that is $\mathbb{R}^{m}$ with the metric $\mathbf{g}=-dx^1\otimes dx^1 + dx^2\otimes dx^2+\ldots+dx^m\otimes dx^m$, where $\left(x^1,\ldots,x^m\right)$ are the standard coordinates in $\mathbb{R}^m$. 
Now, if a point $p\in\mathbb{M}^{m}$ is removed obtaining another spacetime $\mathbb{M}_0$, then the resulting spacetime is still strongly causal, but its corresponding space of light rays $\mathcal{N}_0$ is not Hausdorff because there exists a sequence of light rays $\{ \gamma_n\}\subset \mathcal{N}_0$ converging to two different light rays limits $\mu,\lambda\in\mathcal{N}_0$. This example is illustrated in the figure \ref{diapositiva3}.
\end{remark}

\begin{figure}[h]
\centering
\begin{tikzpicture}[scale=1]
\foreach \n in {0.5,1,2,3,4}{
      \draw[thick] plot[samples=50, smooth, domain=-1:1.5] ({(\x)-(1/(\n))},\x ); } 
\draw[thick,color=blue] plot[samples=50, smooth, domain=0.3:1.5] (\x +0.3,\x ); 
\draw[thick,color=red] plot[samples=50, smooth, domain=-1:0.3] (\x +0.3,\x ); 
\draw (0.15,-0.3) node[anchor=east] {$\cdots$};
\draw (1.05,0.6) node[anchor=east] {$\cdots$};
\draw[thick,fill=white] (0.6,0.3) circle (2pt);
\draw (0.7,0.3) node[anchor=west] {$p$ (removed)};
\draw (1.3,0.9) node[anchor=west] {$\mu$};
\draw (-0.3,-0.7) node[anchor=west] {$\lambda$};
\draw (1.3,1.5) node[anchor=south] {$\gamma_n$};
\draw (-0.5,1.5) node[anchor=south] {$\gamma_1$};
\draw (0.5,1.5) node[anchor=south] {$\gamma_2$};
\draw (-3,0.5) node {$\mathbb{M}_0$};
\end{tikzpicture}
 \caption{The space of light rays of $\mathbb{M}_0$ is not Hausdorff. The light rays $\mu$ and $\lambda$ are limits of the sequence $\{ \gamma_n\}$.}
  \label{diapositiva3}
\end{figure}
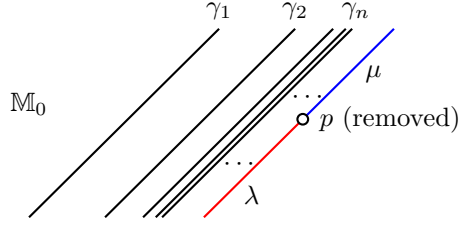

Now, we wonder what condition makes of the space of light rays of a strongly causal conformal manifold a Hausdorff manifold.
An answer to this question was given by R. Low in \cite[Prop. 3.2 et seq.]{Lo90b} and it is necessary to introduce the technical property of \emph{null pseudo--convexity} of the manifold.

\begin{definition}
A conformal manifold $M$ is said to be \emph{null pseudo--convex} if for any compact $K\subset M$ there exists a compact $K'\subset M$ such that any segment of light ray with endpoints in $K$ is contained in $K'$.
\end{definition}

In the example $\mathbb{M}_0$ of the remark \ref{remark-non-null-convex}, it is possible to consider a compact set $K=K_1 \cup K_2$, as in the figure \ref{diapositiva3-2}, such that any set $K'$ containing all segments of light rays with endpoints in $K$ can not be compact because the hole left by the removed point $p$ prevents it.

\begin{figure}[h]
\centering
\begin{tikzpicture}[scale=1]
\filldraw[draw=red,fill=red!20, opacity=0.5] (1.6,1.2) -- (2,0.8) -- (-0.2,-1.2) -- (-1.6,-1.2) -- (-2,-0.8) -- (0.2,1.2) -- cycle;
\filldraw[draw=emerald,fill=emerald!60, opacity=0.5] (0.4,0.5) -- (1.4,0.5) -- (1.4,1) -- (0.4,1) -- cycle;
\filldraw[draw=emerald,fill=emerald!60, opacity=0.5] (-0.4,-0.5) -- (-1.4,-0.5) -- (-1.4,-1) -- (-0.4,-1) -- cycle;
\draw (-1.2,-0.7) -- (0.45,0.95);
\draw (-1.1,-0.8) -- (0.6,0.9);
\draw (-0.5,-0.9) -- (1.2,0.8);
\draw (-0.4,-1) -- (1.3,0.7);
\draw (-1.2,-0.7) -- (0.45,0.95);
\draw[dashed] (-1,-1) -- (1,1);
\draw[thick,fill=white] (0,0) circle (2pt);
\draw[color=emerald] (1.95,0.8) node[anchor=east] {$K_1$};
\draw[color=emerald] (-1.95,-0.8) node[anchor=west] {$K_2$};
\draw[color=red] (1,-0.3) node[anchor=west] {$K'$};
\draw (-2,0.5) node {$\mathbb{M}_0$};
\draw (0.05,0) node[anchor=west] {$p$};
\end{tikzpicture}
 \caption{The space of light rays of $\mathbb{M}_0$ is not null pseudo--convex. Although $K=K_1 \cup K_2$ is compact, $K'$ can not be so.}
  \label{diapositiva3-2}
\end{figure}
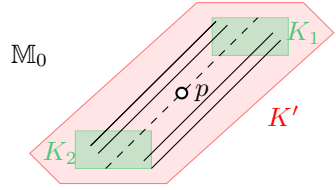

The mentioned proposition stated by Low is the following.
\begin{proposition}\label{prop-hausdorff-null}
Let $M$ be a strongly causal conformal manifold. The space of light rays $\mathcal{N}$ of $M$ is Hausdorff if and only if $M$ is null pseudo--convex.
\end{proposition}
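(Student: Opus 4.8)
The plan is to prove both implications, reducing each to a statement about sequences. Since $\mathcal{N}$ carries the coordinate charts constructed in~(\ref{eq-chart-N}), it is locally Euclidean and hence first countable; consequently $\mathcal{N}$ fails to be Hausdorff exactly when there exist two distinct light rays $\lambda\neq\mu$ and a sequence $\{\gamma_n\}\subset\mathcal{N}$ with $\gamma_n\to\lambda$ and $\gamma_n\to\mu$. In every argument I would lift such data through the submersion $\boldsymbol{\upgamma}:\mathbb{PN}\to\mathcal{N}$: convergence $\gamma_n\to\lambda$ lifts to null directions $[v_n]\to[\ell]$ with $\pi([\ell])=a\in\lambda$, and $\gamma_n\to\mu$ lifts to $[w_n]\to[m]$ with $\pi([m])=b\in\mu$, where $[v_n],[w_n]$ lie over the \emph{same} ray $\gamma_n$. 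Normalising representatives by $\mathbf{g}(\cdot,T)=-1$ as in~(\ref{eq-Omega-V}) turns these into genuine vector convergences in $\Omega$, so that continuous dependence of geodesics on initial conditions gives $C^1$ convergence $\gamma_n\to\gamma_\lambda$ on compact parameter intervals, where $\gamma_\lambda$ parametrises $\lambda$.

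For the implication that null pseudo--convexity implies the Hausdorff property, I would argue by contradiction from such a sequence with $\lambda\neq\mu$. The ray $\gamma_n$ meets a direction near $b$ at some parameter $s_n$, i.e. $\gamma_n(s_n)\to b$ and $\gamma_n'(s_n)\to m$. The proof then splits according to whether $\{s_n\}$ is bounded. If it is, passing to a subsequence $s_n\to s_0$ and using the $C^1$ convergence forces $\gamma_\lambda(s_0)=b$ and $\gamma_\lambda'(s_0)=m$, so $\lambda$ and $\mu$ share a point and a null direction and therefore coincide, contradicting $\lambda\neq\mu$. If $\{s_n\}$ is unbounded, say $s_n\to+\infty$, I would choose compact neighbourhoods of $a$ and $b$ whose union is a compact set $K$; for large $n$ the endpoints $\gamma_n(0)$ and $\gamma_n(s_n)$ lie in $K$, so null pseudo--convexity confines the whole segment $\gamma_n([0,s_n])$ to a fixed compact $K'$. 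Letting $n\to\infty$ shows that the future half of the maximal null geodesic $\gamma_\lambda$ is entirely contained in $K'$; but a maximal (hence inextensible) null geodesic imprisoned in a compact set is impossible in a strongly causal spacetime (finite affine length in a compact set would contradict maximality by the geodesic extension lemma, and infinite length would contradict the absence of imprisonment guaranteed by strong causality). This contradiction yields the Hausdorff property.

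For the converse I would prove the contrapositive: assuming $M$ is not null pseudo--convex, I produce a non--Hausdorff pair. There is then a compact $K$ such that, against any compact exhaustion $K'_1\subset K'_2\subset\cdots$ of $M$, one finds light--ray segments $\sigma_n=\gamma_n([0,b_n])$ with endpoints $p_n=\gamma_n(0),\,q_n=\gamma_n(b_n)\in K$ but $\sigma_n\not\subset K'_n$. Compactness of the sphere bundle of normalised null directions over $K$ lets me pass to a subsequence with $p_n\to p$, $\gamma_n'(0)\to v$, $q_n\to q$ and $\gamma_n'(b_n)\to w$; write $\lambda,\mu$ for the light rays determined by $(p,v)$ and $(q,w)$. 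By continuity of $\boldsymbol{\upgamma}$ we then have $\gamma_n\to\lambda$ and $\gamma_n\to\mu$ simultaneously, so it only remains to check $\lambda\neq\mu$. First one sees $b_n\to\infty$ and that the escape parameter $t_n$ (with $\gamma_n(t_n)\notin K'_n$) also tends to infinity, since otherwise continuous dependence would trap $\sigma_n$ in a fixed $K'_N$. If $\lambda=\mu$ then $q=\gamma_\lambda(s^\ast)$ for a finite $s^\ast$, so $\gamma_n(s^\ast)\to q$; choosing a causally convex neighbourhood $U$ of $q$ with compact closure (Proposition~\ref{prop-Mingu-Sanch}), the curve $\gamma_n$ would lie in $U$ near the parameters $s^\ast$ and $b_n$ but leave $U$ at the intermediate parameter $t_n$, giving two components of $\gamma_n\cap U$ and contradicting causal convexity. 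Hence $\lambda\neq\mu$ and $\mathcal{N}$ is not Hausdorff.

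The main obstacle, in both directions, is exactly the control of this ``re--entry'' behaviour: translating the topological statement about $\mathcal{N}$ into the dichotomy ``the second contact parameter is bounded or escapes to infinity'', and then extracting a contradiction from the unbounded case. This is where strong causality must be used in its sharp form---through causal convexity and the non--imprisonment of inextensible causal curves in compact sets---together with care that the limit--curve and continuous--dependence arguments remain valid up to the inextensibility of the maximal null geodesic $\gamma_\lambda$.
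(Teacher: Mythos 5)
The paper itself offers no proof of this proposition: it is quoted verbatim from Low \cite[Prop.~3.2]{Lo90b}, so there is no internal argument to compare yours with, and I will judge the proposal on its own merits. Your architecture --- the sequential reformulation via first countability, the lift through $\boldsymbol{\upgamma}$ using the local charts $\mathcal{N}_V\simeq\mathbb{PN}(C)$, the dichotomy on the second--contact parameter, non--imprisonment from strong causality, and causal convexity for the converse --- is the right one. But the forward implication contains a genuine gap.

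In your bounded case you assert that ``using the $C^1$ convergence forces $\gamma_\lambda(s_0)=b$''. Continuous dependence on initial conditions gives $C^1$ convergence only on compact intervals contained in the \emph{maximal domain} of the limit geodesic $\gamma_\lambda$, and nothing you have established guarantees that $s_0$ lies in that domain. This is not a pedantic point: in the paper's own example $\mathbb{M}_0$ of Remark \ref{remark-non-null-convex} the doubly convergent sequence has \emph{bounded} $s_n$, while the maximal geodesic $\gamma_\lambda$ terminates (at the deleted point) strictly before $s_0$; since your Case~1 never invokes null pseudo--convexity, it would ``prove'' that $\mathcal{N}_0$ is Hausdorff, which is false. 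The repair is to use pseudo--convexity in this case as well: the segments $\gamma_n([0,s_n])$ have endpoints eventually in a fixed compact $K$, hence lie in a compact $K'$; letting $[0,b_\lambda)$ be the maximal forward domain of $\gamma_\lambda$, one gets $\gamma_\lambda\bigl([0,\min(b_\lambda,s_0))\bigr)\subset K'$, and if $b_\lambda\le s_0$ the future--inextendible causal curve $\gamma_\lambda|_{[0,b_\lambda)}$ is imprisoned in $K'$, contradicting strong causality via \cite[Prop.~6.4.7]{HE}; hence $b_\lambda>s_0$ and only then is your $C^1$ step legitimate. (This also makes the bounded/unbounded dichotomy unnecessary, and it sidesteps your parenthetical appeal to a ``finite affine length in a compact set implies extendible'' lemma, which is false without causality assumptions --- Misner and Taub--NUT have incomplete null geodesics totally imprisoned in compact sets.) A milder instance of the same oversight occurs in the converse, where the claims $b_n\to\infty$ and $t_n\to\infty$ again invoke continuous dependence at parameters that may leave the domain of $\gamma_\lambda$; they are also unnecessary, since under the assumption $\lambda=\mu$ the value $s^{\ast}$ is interior to the domain of $\gamma_\lambda$, and uniform convergence on the compact parameter interval between $0$ and $s^{\ast}$ already shows that only finitely many escape parameters $t_n$ can lie there, which is all the causal--convexity argument needs.
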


A related result is that the lack of hausdorffness of $\mathcal{N}$ implies that there are naked singularities in $M$ \cite[Prop. 2.2]{Lo89}. 
Recall that a \emph{naked singularity} occurs at the future of a inextensible causal curve $\lambda$ if there exists a point $p\in M$ such that its chronological past sets verify $I^{-}(\lambda)\subset I^{-}(p)$. 

It is not difficult to find examples of spacetimes in which there exists naked singularities but the space of light rays is Hausdorff. 

\begin{example}
Consider the $3$--dimensional Minkowski spacetime $\mathbb{M}^3$ and its space of light rays $\mathcal{N}_{\mathbb{M}^3}$. Now, we restrict it to the cylinder given by
\[
M = \left\{ (t,x,y)\in \mathbb{M}^3: x^2+y^2<1 \right\}  .
\]

Clearly, the space of light rays $\mathcal{N}_{M}$ of $M$ is contained in $\mathcal{N}_{\mathbb{M}^3}$ because $M\subset \mathbb{M}^3$ is open and any light ray in $\mathcal{N}_{M}$ is a segment of light ray in $\mathcal{N}_{\mathbb{M}^3}$.
Observe that, for any inextensible light ray $\gamma\in \mathcal{N}_{M}$, there always exists a point $p=(t_0,0,0)\in M$ with $t_0\in \mathbb{R}$ big enough such that $\gamma\subset I^{-}(p)$ and, by corollary \ref{1C3}, we obtain that $I^{-}\left(\gamma\right)\subset I^{-}(p)$ and therefore $M$ is nakedly--singular (see figure \ref{figura-nakedly-sing}).

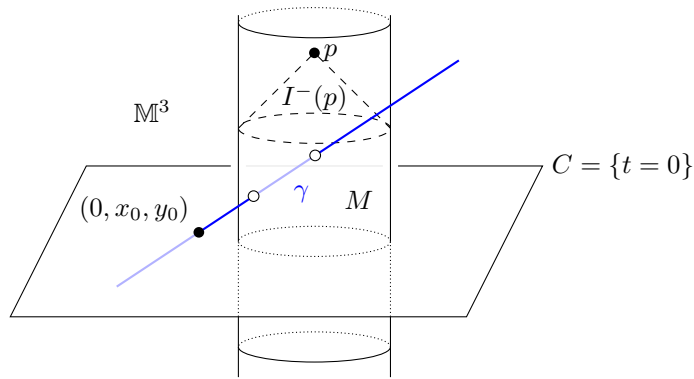
\begin{figure}[h]
\centering
\begin{tikzpicture}[scale=1]
\draw (-1.1,1) -- (-3,1) -- (-4,-1) -- (2,-1) -- (3,1) -- (1.1,1);  
\draw[color=gray!20] (-0.9,1) -- (0.9,1);
\draw (1,3) -- (1,0);        
\draw (-1,3) -- (-1,0);
\draw[densely dotted] (1,0) -- (1,-1);
\draw[densely dotted] (-1,0) -- (-1,-1);
\draw (1,-1) -- (1,-1.8);
\draw (-1,-1) -- (-1,-1.8);
\draw[densely dotted]	 (1,2.9) arc (0:180:1cm and 0.2cm);
\draw	 (1,2.9) arc (0:-180:1cm and 0.2cm);
\draw[densely dotted]	 (1,-1.4) arc (0:180:1cm and 0.2cm);
\draw	 (1,-1.4) arc (0:-180:1cm and 0.2cm);
\draw[densely dotted]	 (1,0) arc (0:360:1cm and 0.2cm);
\draw[dashed]	 (1,1.5) arc (0:360:1cm and 0.2cm);
\draw[dashed] (-1,1.5) -- (0,2.5) -- (1,1.5);
\fill (0,2.5) circle (2pt);
\draw (0,2.5) node[anchor=west] {$p$};
\draw (0,1.9) node {$I^{-}(p)$};
\draw[color=blue!30,thick] (-2.6,-0.6)--(-1.52,0.12);
\draw[color=blue,thick] (-1.52,0.12)--(-0.8,0.6) ;
\draw[color=blue!30,thick] (-0.8,0.6) -- (0.01,1.14);
\draw[color=blue,thick] (0.01,1.14) -- (1.9,2.4);
\filldraw[draw=black,fill=white] (-0.8,0.6) circle (2pt);
\filldraw[draw=black,fill=white] (0.01,1.14) circle (2pt);
\fill (-1.52,0.12) circle (2pt);
\draw[color=blue] (-0.395,0.87) node[anchor=north west] {$\gamma$};
\draw (-1.52,0.12) node[anchor=south east] {$(0,x_0,y_0)$};
\draw (0.9,0.3) node[anchor=south east] {$M$};
\draw (-2.5,2) node[anchor=north west] {$\mathbb{M}^3$};
\draw (3,1) node[anchor=west] {$C=\{t=0\}$};
\end{tikzpicture}
 \caption{$M$ is nakedly--singular but $\mathcal{N}_M$ is Hausdorff.}
  \label{figura-nakedly-sing}
\end{figure}

The light ray $\gamma\in\mathcal{N}_{M}$ can be written by $\gamma(s)=(s,x_0+s\cdot\cos\theta_0,y_0+s\cdot\sin\theta_0)$ for $s\in(a,b)\subset \mathbb{R}$ an open interval. We are using $\gamma\simeq(x_0,y_0,\theta_0)$ as coordinates for $\mathcal{N}_{\mathbb{M}^3}$ and $\mathcal{N}_{M}$. Then we have
\[
\left(x_0+s\cdot\cos\theta_0\right)^2+ \left(y_0+s\cdot\sin\theta_0\right)^2 <1
\]
and because this is an open condition, there exist $\delta>0$, $\epsilon>0$ and $\eta>0$ such that if $\vert x-x_0 \vert <\delta$, $\vert y-y_0 \vert <\epsilon$ and $\vert \theta-\theta_0 \vert <\eta$ then the light ray $\beta(s)=(s,x+s\cdot\cos\theta,y+s\cdot\sin\theta)$ intersects $M$ for $s\in(a',b')\subset \mathbb{R}$ open. Then $\mathcal{N}_{M}\subset\mathcal{N}_{\mathbb{M}^3}$ is an open set contained in a Hausdorff space, therefore $\mathcal{N}_{M}$ is also Hausdorff.
\end{example}  

The assumption of $M$ being strongly causal is not necessary to obtain that $\mathcal{N}$ is Hausdorff. In fact, it is possible to find spacetimes which do not verify the chronological condition, that is, with closed timelike curves, but its space of light rays is still Hausdorff. 
For example, Zollfrei manifolds $Z$ are compact Lorentz manifold such that all its null geodesics are closed. The compacity of $Z$ implies that  such manifolds do not satisfy the chronological condition \cite[Thm. 3.6]{Mi08} but their spaces of light rays $\mathcal{N}_Z$ are, indeed, smooth manifolds (see \cite{Gu89}, \cite{Su13}, \cite{Ma21}).

The following example shows a simple non--compact non--chronological spacetime with Hausdorff space of light rays.

\begin{example}
Let $\mathbb{M}^2$ be the $2$--dimensional Minkowski spacetime equipped with the metric $\mathbf{g}=-dt\otimes dt + dx\otimes dx$ in standard coordinates $(t,x)\in \mathbb{R}^2$. Consider the restriction
\[
M = \left\{ (t,x)\in \mathbb{M}^2: \vert t \vert \leq 1 \right\}  
\]
identifying the points $(1,x)\sim (-1,x)$. For fixed $y_0\in \mathbb{R}$, trivially, the curve $\lambda(s)=  (s,y_0)$ is a closed timelike curve, then $M$ do not satisfy any causality condition enumerated in section \ref{sec:Causalcurves}.
Observe that, any light ray can be defined by a null geodesic $\gamma_{\pm}(s)=(s,x_0 \pm s)$ where the signs $+$ or $-$ determine if the light rays travel to the right or to the left respectively (see figure \ref{figura-vicious-hausdorff}).  

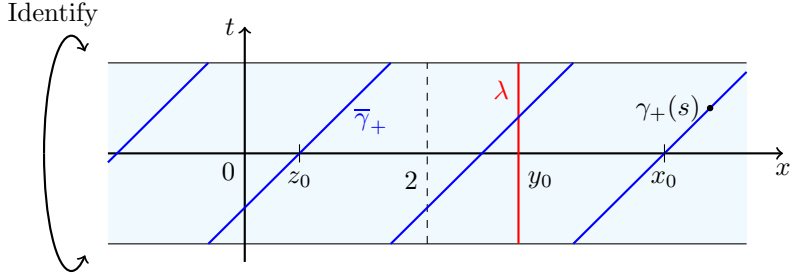
\begin{figure}[h]
\centering
\begin{tikzpicture}[scale=1.2]
\fill[color=cyan!5] (-1.5,-1) -- (5.5,-1) -- (5.5,1) -- (-1.5,1) -- cycle;  
\draw (-1.5,-1) -- (5.5,-1);
\draw (5.5,1) -- (-1.5,1);
\draw[->,thick] (0,-1.2) -- (0,1.4) node[anchor=east] {$t$};
\draw[->,thick] (-1.5,0) -- (5.9,0) node[anchor=north] {$x$};
\draw[dashed] (2,-1) -- (2,1) ;
\draw[color=red,thick] (3,-1) -- (3,1);
\draw[color=blue,thick] (-1.5,-0.1) -- (-0.4,1);
\draw[color=blue,thick] (-0.4,-1) -- (1.6,1);
\draw[color=blue,thick] (1.6,-1) -- (3.6,1);
\draw[color=blue,thick] (3.6,-1) -- (5.5,0.9);
\draw (3,-0.1) node[anchor=north west] {$y_0$};
\draw[color=red] (3,0.7) node[anchor=east] {$\lambda$};
\draw (0.6,-0.1) node[anchor=north] {$z_0$} -- (0.6,0.1);
\draw (4.6,-0.1) node[anchor=north] {$x_0$} -- (4.6,0.1);
\draw (0,0) node[anchor=north east] {$0$};
\draw (2,-0.1) node[anchor=north east] {$2$};
\draw (5.1,0.5) node[anchor=east] {$\gamma_{+}(s)$};
\draw[color=blue] (1.1,0.6) node[anchor=north west] {$\overline{\gamma}_{+}$};
\fill (5.1,0.5) circle (1pt);
\draw[->,thick]	(-2.2,0) arc (180:60:0.3cm and 1.3cm);
\draw[->,thick]	(-2.2,0) arc (-180:-60:0.3cm and 1.3cm);
\draw(-2.1,1.3) node[anchor=south] {Identify};
\end{tikzpicture}
 \caption{$M$ is not chronological but $\mathcal{N}_M$ is Hausdorff.}
  \label{figura-vicious-hausdorff}
\end{figure}

Notice that, by the identification $(1,x)\sim (-1,x)$, then $t=t+2k$ for all integer $k\in \mathbb{Z}$, and moreover, since for any $x_0\in \mathbb{R}$ there exists $k_0\in \mathbb{Z}$ such that $z_0=x_0-2k_0\in\left[0,2\right)$ then 
\[
\gamma_{+}(s)=(s,x_0+s)=(s+2k_0,x_0+s)=(s+2k_0,z_0+2k_0+s)=\overline{\gamma}_{+}(s+2k_0)=\overline{\gamma}_{+}(\tau)
\]
where $\tau=s+2k_0$ and $\gamma_{+}$ and $\overline{\gamma}_{+}$ are the same light ray but with two different parametrizations. This shows that the set $\mathcal{N}_{+}$ of light rays of $M$ travelling to the right can be globally determined by $z_0\in[0,2)\subset \mathbb{R}$. Analogously, we have $\mathcal{N}_{-} \simeq[0,2)$ for the light rays travelling to the left. 
Then $\mathcal{N}=\mathcal{N}_{+} \cup \mathcal{N}_{-}$ is diffeomorphic to two disjoint copies of the interval $[0,2)$ therefore $\mathcal{N}$ is Hausdorff. 
\end{example}

In virtue of proposition \ref{prop-hausdorff-null}, from now on, we will assume that the conformal manifold $M$ is strongly causal and null pseudo--convex.

\subsection{Tangent spaces \texorpdfstring{$T_{\gamma}\mathcal{N}$}{TN}}\label{sec:lightrays-tangent}

It is necessary to describe the tangent vectors of $\mathcal{N}$ in such a way that we can carry out calculations with them.
A detailed exposition of this description can be found in \cite[Sec. 3]{Ba15b}. 
We will only introduce to the reader a glimpse of the matter.

First, observe that a tangent vector $\mathbf{v}\in T_{\gamma}\mathcal{N}$ can be defined by the derivative of a smooth curve $\Gamma:\left(-\epsilon,\epsilon\right)\rightarrow \mathcal{U}\subset\mathcal{N}$ such that $\Gamma\left(s\right)=\gamma_s\in \mathcal{N}$ where $\gamma_0=\gamma$, $\Gamma'\left(0\right)=\mathbf{v}$ and $\mathcal{U}$ is a neighbourhood of $\gamma=\gamma_0$ diffeomorphic to $\Omega(C)\subset \mathbb{N}$ by the map $\xi$ in (\ref{diagram-charts}). 
But this curve $\Gamma$ can be got by a variation of null geodesics $\mathbf{f}:\left(-\epsilon,\epsilon\right)\times I\rightarrow M$ such that $\mathbf{f}\left(s,\tau\right)=\gamma_s\left(\tau\right)$ where each $\gamma_s$ is parametrized as a null geodesic with initial vector $\gamma'_s(0)\in \Omega(C)\subset \mathbb{N}$. 
Then $\Gamma'\left(0\right)\in T_{\gamma}\mathcal{N}$ corresponds with the variational field $J\in \mathfrak{X}_{\gamma}$ of $\mathbf{f}$ along $\gamma$, that is 
\[
\displaystyle{ J\left(\tau\right)= \frac{\partial \mathbf{f}}{\partial s}\left(0,\tau\right) \in T_{\gamma\left(\tau\right)}M } 
\]
where $\mathfrak{X}_{\gamma}$ denotes the set of all smooth vector fields along $\gamma$.

Notice that, by \cite[Lem. 8.3]{On83}, this variational vector field along $\gamma$ satisfies the equation of \emph{Jacobi fields} given by
\begin{equation}\label{eq-Jacobi-field}
J'' + \mathbf{R}\left(J,\gamma'\right)\gamma' = 0
\end{equation}
where prime $'$ over $J$ denotes the covariant derivative $\frac{D}{d\tau}$ along $\gamma$ respect to the parameter $\tau$, and $\mathbf{R}$ is the Riemann curvature tensor\footnote{The expression of the equation of Jacobi fields depends on the sign of definition of the Riemann curvature tensor $\mathbf{R}$ and on the order of the arguments. So, \[ J'' - \mathbf{R}\left(J,\gamma'\right)\gamma' = 0, \hspace{2mm}J'' + \mathbf{R}\left(\gamma',J\right)\gamma' = 0, \hspace{2mm}J'' - \mathbf{R}\left(\gamma',J\right)\gamma' = 0 \] are other expressions we can find in the literature.}.
The solutions of equation (\ref{eq-Jacobi-field}) are called \emph{Jacobi fields along the geodesic $\gamma$} and they are fully determined by fixing initial vectors $J\left(0\right)=u\in T_{\gamma\left(0\right)}M$ and $J'\left(0\right)=v\in T_{\gamma\left(0\right)}M$.

A way to construct a variation of null geodesics is using the exponential map for a fixed auxiliary metric $\mathbf{g}\in\mathcal{C}$. For a base curve $\lambda:I\rightarrow M$ and a lightlike vector field $W:I\rightarrow \mathbb{N}^{+}$ such that $W(s)\in \mathbb{N}_{\lambda(s)}^{+}$ we can define $\mathbf{f}\left( s,\tau\right) =\mathrm{exp}_{\lambda \left( s\right) }\left(\tau W\left( s\right) \right)$. If $\lambda(0)=p\in M$ and $W(0)=v\in\mathbb{N}_{p}^{+}$, then $\mathbf{f}$ is a variation of the light ray $\gamma=\gamma_{[v]}\in\mathcal{N}$.

\begin{lemma}
\label{lemmaDC92} Let $\lambda :\left( -\epsilon ,\epsilon
\right)\rightarrow M$ be a smooth curve and $W\left( s\right)\in T_{\lambda(s)}M$ a non-vanishing vector field along $\lambda$. 
Then, the Jacobi field $J$ of the geodesic variation given by
\begin{equation}
\mathbf{f}\left( s,\tau\right) =\mathrm{exp}_{\lambda \left( s\right) }\left(\tau W\left( s\right) \right)
\end{equation}%
along $\gamma(\tau)=\mathbf{f}\left( 0,\tau\right) =\mathrm{exp}_{\lambda \left( 0\right) }\left(\tau W\left( 0\right) \right)$ has initial vectors $J\left( 0\right) =\lambda'(0)$ and $J'\left( 0\right)=\frac{DW}{ds}\left( 0\right)$ .
\end{lemma}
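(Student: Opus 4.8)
The plan is to reduce both identities to two elementary facts about the exponential map of the fixed auxiliary metric $\mathbf{g}\in\mathcal{C}$, together with the symmetry lemma for the mixed covariant derivatives of a two--parameter map. Throughout I keep the notation $\mathbf{f}\left(s,\tau\right)=\mathrm{exp}_{\lambda\left(s\right)}\left(\tau W\left(s\right)\right)$, so that for each fixed $s$ the longitudinal curve $\tau\mapsto \mathbf{f}\left(s,\tau\right)$ is the geodesic with initial point $\lambda\left(s\right)$ and initial velocity $W\left(s\right)$, and $J\left(\tau\right)=\frac{\partial \mathbf{f}}{\partial s}\left(0,\tau\right)$ is the variational field along $\gamma\left(\tau\right)=\mathbf{f}\left(0,\tau\right)$.

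First I would compute $J\left(0\right)$. Since $\mathrm{exp}_{\lambda\left(s\right)}\left(0\right)=\lambda\left(s\right)$, the transverse curve $s\mapsto \mathbf{f}\left(s,0\right)$ coincides with $\lambda$ itself. Therefore $J\left(0\right)=\frac{\partial \mathbf{f}}{\partial s}\left(0,0\right)=\lambda'\left(0\right)$, which establishes the first identity with no further work.

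For $J'\left(0\right)$ I would invoke the symmetry lemma (the standard companion of the Jacobi--field result \cite[Lem. 8.3]{On83} quoted above), which asserts that the covariant derivatives of the two coordinate fields of $\mathbf{f}$ commute:
\[
\frac{D}{d\tau}\frac{\partial \mathbf{f}}{\partial s}=\frac{D}{ds}\frac{\partial \mathbf{f}}{\partial \tau}.
\]
Evaluating the left--hand side at $\tau=0$ produces exactly $J'\left(0\right)$. On the right--hand side I would first identify the longitudinal field at $\tau=0$: because $\tau\mapsto \mathbf{f}\left(s,\tau\right)$ is the geodesic starting with velocity $W\left(s\right)$, one has $\frac{\partial \mathbf{f}}{\partial \tau}\left(s,0\right)=W\left(s\right)$ for every $s$. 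Hence the right--hand side at $\tau=0$ is the covariant derivative along $s\mapsto \mathbf{f}\left(s,0\right)=\lambda\left(s\right)$ of the field $s\mapsto W\left(s\right)$, that is $\frac{DW}{ds}\left(0\right)$. Matching the two expressions gives $J'\left(0\right)=\frac{DW}{ds}\left(0\right)$.

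I do not expect a genuine obstacle here; the computation is bookkeeping. The only points that require care are making sure that the $\frac{D}{ds}$ appearing on the right of the symmetry lemma is taken along the correct transverse curve $s\mapsto\lambda\left(s\right)$ and evaluated at the correct parameter, and the tacit verification that $\mathbf{f}$ is smooth on a neighbourhood of $\left\{0\right\}\times I$ so that both mixed covariant derivatives are defined. The latter follows from the smoothness of $\lambda$, of $W$ and of $\mathrm{exp}$, once the domain is shrunk (reducing $\epsilon$ if necessary) so that $\tau W\left(s\right)$ remains within the domain of $\mathrm{exp}_{\lambda\left(s\right)}$.
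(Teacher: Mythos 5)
Your proposal is correct and follows essentially the same route as the paper: the identity $J(0)=\lambda'(0)$ comes from $\mathbf{f}(s,0)=\lambda(s)$, and $J'(0)=\frac{DW}{ds}(0)$ comes from the symmetry of mixed covariant derivatives applied to $\frac{\partial\mathbf{f}}{\partial\tau}(s,0)=W(s)$, exactly as in the paper's argument (which uses the symmetry lemma tacitly where you invoke it explicitly). No gaps.
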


\begin{proof}
Notice that $J(0)=\frac{\partial\mathbf{f}}{\partial s}\left(0,0\right)$ is the tangent vector to the curve $\mathbf{f}\left(s,0\right)$ at $s=0$, then $\mathbf{f}\left(s,0\right)=\mathrm{exp}_{\lambda\left(s\right)}\left(0\cdot W\left(s\right)\right)= \lambda\left(s\right)$, then we have
\[
J\left(0\right)=\lambda ^{\prime }\left(0\right)
\]
On the other hand, recall that $J'\left(0\right)=\frac{D}{d \tau}\frac{\partial\mathbf{f}}{\partial s}\left(0,0\right)$ and  $\frac{D}{ds}\frac{\partial\mathbf{f}}{\partial \tau}\left(0,0\right)$ is the covariant derivative of $\frac{\partial\mathbf{f}}{\partial \tau}\left(s,0\right)=W\left(s\right)$ for $s=0$ along the curve $\mathbf{f}\left(s,0\right)=\lambda\left(s\right)$. Therefore, we have
\[
J'\left(0\right)=\frac{D}{d\tau}\frac{\partial\mathbf{f}}{\partial s}\left(0,0\right)=\frac{D}{ds}\frac{\partial\mathbf{f}}{\partial \tau}\left(0,0\right)= \frac{DW}{ds}\left(0\right)    .
\]
\qed
\end{proof}

\begin{figure}[h]
\centering
\begin{subfigure}[t]{0.45\textwidth}
\centering
\begin{tikzpicture}[scale=1]
  \begin{axis}[xtick = {-180,0,180},xticklabels={$-\pi$,$0$,$\pi$},
    ytick = {0},ztick = {-180,0,180},zticklabels={$-\pi$,$0$,$\pi$},
    ticklabel style = {font = \tiny},axis equal image,
    view={150}{15}]
    \addplot3 [surf,domain=-180:180,samples=120,y domain=-180:180,samples y=2,line join=round,
       shader=interp,
      variable=\t,point meta={t}]
   ( {t+cos(t)*y},{sin(t)*y},{y} );
   \addplot3[thick] coordinates {( -180,0,0 ) (180,0,0)};
   \addplot3[color=black] ( 180,0,0 ) node[anchor=east] {$\lambda(s)$};
  \end{axis}
\end{tikzpicture}
\caption{$\gamma_{s}(\tau)=\mathbf{f}(s,\tau)=(\tau,s+\tau\cos(s),\tau\sin(s))$ is a variation of light rays in $\mathbb{M}^3$ defined by the null vectors $W(s)=(1,\cos(s),\sin(s))\in \mathbb{N}_{\lambda(s)}$ with base curve $\lambda(s)=(0,s,0)\in M$. Observe that each single colour corresponds to one value of the parameter $s$.}
  \label{figura-variacion-rayos-luz}
  \end{subfigure}
\hspace{5pt}
  \begin{subfigure}[t]{0.45\textwidth}
\centering
\begin{tikzpicture}[scale=1]
  \begin{axis}[xtick = {-180,0,180},xticklabels={$-\pi$,$0$,$\pi$},
    ytick = {0},ztick = {-180,0,180},zticklabels={$-\pi$,$0$,$\pi$},
    ticklabel style = {font = \tiny},axis equal image,colormap={custom}{rgb255(0cm)=(240,240,240);
            rgb255(1cm)=(230,230,230);rgb255(2cm)=(220,220,220);rgb255(3cm)=(210,210,210);},    
    view={150}{15}]
    \addplot3 [surf,domain=-180:180,samples=120,y domain=-180:180,samples y=2,line join=round,
       shader=interp,
      variable=\t,point meta={t}]
   ( {t+cos(t)*y},{sin(t)*y},{y} );
   \addplot3[thick] coordinates {( -180,0,-180 ) (180,0,180)};
   \addplot3[->,color=red] coordinates {( -180,0,-180 ) (100-180,-180,-180)};
   \addplot3[->,color=red] coordinates {( -150,0,-150 ) (100-150,-150,-150)};
   \addplot3[->,color=red] coordinates {( -120,0,-120 ) (100-120,-120,-120)};
   \addplot3[->,color=red] coordinates {( -90,0,-90 ) (100-90,-90,-90)};
   \addplot3[->,color=red] coordinates {( -60,0,-60 ) (100-60,-60,-60)};
   \addplot3[->,color=red] coordinates {( -30,0,-30 ) (100-30,-30,-30)};
      \addplot3[->,color=blue] coordinates {( 0,0,0 ) (100,0,0)};
   \addplot3[->,color=red] coordinates {( 180,0,180 ) (100+180,180,180)};
   \addplot3[->,color=red] coordinates {( 150,0,150 ) (100+150,150,150)};
   \addplot3[->,color=red] coordinates {( 120,0,120 ) (100+120,120,120)};
   \addplot3[->,color=red] coordinates {( 90,0,90 ) (100+90,90,90)};
   \addplot3[->,color=red] coordinates {( 60,0,60 ) (100+60,60,60)};
   \addplot3[->,color=red] coordinates {( 30,0,30 ) (100+30,30,30)};
   \addplot3[color=blue] ( 0,0,0 ) node[anchor=south west] {$\gamma_0(0)$};
   \addplot3[color=blue] (80,0,0) node[anchor=north east] {$J(0)$};
   \addplot3[color=black] ( 120,0,120 ) node[anchor=south west] {$\gamma_0$};
   \addplot3[color=red] (100+180,180,180) node[anchor=north east] {$J$};
   \addplot3[color=blue] ( 0,0,0) circle (1pt);
     \end{axis}
\end{tikzpicture}
\caption{The Jacobi field $J(\tau)=\frac{\partial\mathbf{f}}{\partial s}\left(0,\tau\right)=(0,1,\tau)$ along the light ray $\gamma_0\in\mathcal{N}$ defined by the variation $\mathbf{f}$ in figure \ref{figura-variacion-rayos-luz}.}
  \label{figura-Jacobi-rayos-luz}
  \end{subfigure}
  \caption{Variation of light rays in $\mathbb{M}^3$.}
  \label{figura-rayos-luz}
\end{figure}

Recall that a light ray can be defined by a null geodesic related to some metric $\mathbf{g}\in\mathcal{C}$, then a change of its affine parameter or of metric in $\mathcal{C}$ results on a different Jacobi field $\overline{J}$ along the corresponding null geodesic $\overline{\gamma}$, but they are related by 
\[
\overline{J}=J\left(\mathrm{mod}\hspace{1mm}\gamma'\right)  .
\]
Therefore, a tangent vector $\mathbf{v}\in T_{\gamma}\mathcal{N}$ can be represented by the \emph{class} of Jacobi fields along $\gamma$ modulo $\gamma'$ defined by the variational field $J\left(\tau\right)= \frac{\partial \mathbf{f}}{\partial s}\left(0,\tau\right)$ of a null geodesic variation $\mathbf{f}$ for any auxiliary metric $\mathbf{g}\in\mathcal{C}$ and any affine parameter.

Indeed, let $\mathcal{J}\left(\gamma\right)$ be the vector space of Jacobi fields along the null geodesic $\gamma\in \mathcal{N}$.
If $\mathbf{f}(s,\tau)=\gamma_s (\tau)$ is a variation of null geodesics where $\gamma_0 = \gamma$, then the Jacobi field $J\in\mathcal{J}\left(\gamma\right)$  along $\gamma$ corresponding to $\mathbf{f}$ verifies 
\begin{equation}\label{eq-J-gamma-constant}
\mathbf{g}\left(J\left(\tau\right),\gamma'\left(\tau\right)\right)= \text{constant for all } \tau\in I .
\end{equation}
This fact can be straightforwardly proven because $J$ is a variational field of a geodesics variation such that $\mathbf{g}\left(\gamma'_s(\tau),\gamma'_s(\tau)\right)=0$ constant for any geodesic $\gamma_s$ in the variation.
See \cite[Lem. 2.1]{Ch18} or \cite[p. 142-143]{Ba15b} for proofs.

Notice that, since $\mathbf{g}\left(J\left(\tau\right),\gamma'\left(\tau\right)\right)$ is constant for all $\tau\in I$, then 
\[
0=\frac{d}{d\tau} \mathbf{g}\left(J\left(\tau\right),\gamma'\left(\tau\right)\right)=\mathbf{g}\left(J'\left(\tau\right),\gamma'\left(\tau\right)\right)+\mathbf{g}\left(J\left(\tau\right),\gamma''\left(\tau\right)\right)
\]
and since $\gamma''\left(\tau\right)=\frac{D\gamma'}{d\tau}\left(\tau\right)=0$, therefore
\begin{equation}\label{eq-Jprima-gammaprima}
\mathbf{g}\left(J'\left(\tau\right),\gamma'\left(\tau\right)\right)=0   .
\end{equation}

Now, we denote by 
\[
\mathcal{J}_L\left(\gamma\right)=\left\{ J\in \mathcal{J}\left(\gamma\right): \mathbf{g}\left(J\left(t\right),\gamma'\left(t\right)\right)=\mathrm{constant} \right\}\simeq \mathbb{R}^{2m-1}
\]
the vector space of Jacobi fields of null geodesic variations of $\gamma$ and if 
\[
\mathcal{J}_0\left(\gamma\right)=\left\{ J\in \mathcal{J}_L\left(\gamma\right): J\left(t\right)=\left(\alpha t + \beta\right)\gamma'\left(t\right), \alpha,\beta\in \mathbb{R} \right\}\simeq \mathbb{R}^{2}
\]
is the vector subspace of $\mathcal{J}_L\left(\gamma\right)$ of Jacobi fields proportional to $\gamma'$, where $\mathbf{g}\in\mathcal{C}$ and the parametrization of $\gamma\in \mathcal{N}$ are auxiliary elements, then we have the following statement whose proof can be seen in \cite[Prop. 3.16]{Ba15b}. 
\begin{proposition}
$T_{\gamma}\mathcal{N}$ is isomorphic to $\mathcal{L}\left(\gamma\right)=\mathcal{J}_L\left(\gamma\right) / \mathcal{J}_0\left(\gamma\right)\simeq \mathbb{R}^{2m-3}$.
\end{proposition}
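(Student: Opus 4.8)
The plan is to realise $T_{\gamma}\mathcal{N}$ as a quotient of a tangent space to $\mathbb{N}^{+}$ and then transport the resulting isomorphism to $\mathcal{L}(\gamma)$ via the initial data of Jacobi fields. Fix an auxiliary metric $\mathbf{g}\in\mathcal{C}$ and a vector $v\in\mathbb{N}^{+}$ with $\gamma_{v}=\gamma$, say $v=\gamma'(0)$. As explained in Section \ref{sec:lightrays-struct-N}, the composition $\Theta=\boldsymbol{\upgamma}\circ\pi_{\mathbb{PN}}^{\mathbb{N}^{+}}:\mathbb{N}^{+}\rightarrow\mathcal{N}$, $\Theta(w)=\gamma_{w}$, is a submersion (being the composite of two submersions), so its differential gives a canonical isomorphism $T_{\gamma}\mathcal{N}\simeq T_{v}\mathbb{N}^{+}/\ker d\Theta_{v}$ with $\ker d\Theta_{v}=T_{v}\big(\Theta^{-1}(\gamma)\big)$. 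Thus it suffices to (i) identify $T_{v}\mathbb{N}^{+}$ with $\mathcal{J}_{L}(\gamma)$ and (ii) show that this identification carries $\ker d\Theta_{v}$ onto $\mathcal{J}_{0}(\gamma)$.

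For (i), represent a vector of $T_{v}\mathbb{N}^{+}$ by a curve $s\mapsto W(s)\in\mathbb{N}^{+}$ with $W(0)=v$, and let $\lambda(s)=\pi_{M}^{TM}(W(s))$ be its base curve. By Lemma \ref{lemmaDC92}, the null geodesic variation $\mathbf{f}(s,\tau)=\mathrm{exp}_{\lambda(s)}(\tau W(s))$ has, as variational field along $\gamma$, the Jacobi field $J$ with $J(0)=\lambda'(0)$ and $J'(0)=\frac{DW}{ds}(0)$. I would define $\Psi(\dot W(0))=J$; since $\big(\lambda'(0),\tfrac{DW}{ds}(0)\big)$ is precisely the horizontal--vertical splitting of $\dot W(0)\in T_{v}\mathbb{N}^{+}\subset T_{v}(TM)$ induced by the connection, $\Psi$ depends only on $\dot W(0)$ and is linear. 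Differentiating $\mathbf{g}(W,W)=0$ at $s=0$ yields $\mathbf{g}\big(\tfrac{DW}{ds}(0),v\big)=0$, i.e. $\mathbf{g}(J'(0),\gamma'(0))=0$, which is exactly the constraint cutting out $\mathcal{J}_{L}(\gamma)$: equations (\ref{eq-J-gamma-constant})--(\ref{eq-Jprima-gammaprima}) show that $\mathbf{g}(J',\gamma')$ is constant in $\tau$ for every Jacobi field, so that single initial condition is equivalent to membership in $\mathcal{J}_{L}(\gamma)$. As $J$ is determined by its initial data, $\Psi$ is injective, and since $\dim T_{v}\mathbb{N}^{+}=2m-1=\dim\mathcal{J}_{L}(\gamma)$ it is an isomorphism onto $\mathcal{J}_{L}(\gamma)$.

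For (ii), the fibre $\Theta^{-1}(\gamma)$ consists of the future lightlike vectors tangent to $\gamma$, namely $\{a\,\gamma'(\tau):a>0,\ \tau\in I\}$, a two--dimensional set whose tangent directions at $v$ are the scaling direction of $\mathcal{D}_{\Delta}$ and the geodesic direction of $\mathcal{D}_{G}$. Taking $W(s)=(1+s)v$ (base point fixed) gives $J(0)=0$, $J'(0)=v$, hence $J(\tau)=\tau\gamma'(\tau)$; taking $W(s)=\gamma'(s)$ gives $J(0)=v$, $J'(0)=0$, hence $J(\tau)=\gamma'(\tau)$. These two images span exactly $\mathcal{J}_{0}(\gamma)$, so $\Psi(\ker d\Theta_{v})=\mathcal{J}_{0}(\gamma)$. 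Combining (i) and (ii), $\Psi$ descends to the desired isomorphism $T_{\gamma}\mathcal{N}\simeq T_{v}\mathbb{N}^{+}/\ker d\Theta_{v}\xrightarrow{\ \sim\ }\mathcal{J}_{L}(\gamma)/\mathcal{J}_{0}(\gamma)=\mathcal{L}(\gamma)$, and the stated dimension is $(2m-1)-2=2m-3$.

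The delicate points — where I expect the real work to lie — are verifying that $\Psi$ is genuinely well defined on $T_{v}\mathbb{N}^{+}$ (independent of the representing curve $W$, which is where the connection splitting of $\dot W(0)$ is essential) and confirming that the image of $T_{v}\mathbb{N}^{+}$ is \emph{all} of $\mathcal{J}_{L}(\gamma)$ rather than a proper subspace; both reduce to matching the realizable initial data $\{(J(0),J'(0)):J(0)\in T_{p}M,\ \mathbf{g}(J'(0),v)=0\}$ with the $(2m-1)$--dimensional space defining $\mathcal{J}_{L}(\gamma)$. One should also note that independence of the auxiliary metric and of the affine parametrization is automatic, since any such change alters $J$ only by a multiple of $\gamma'$, i.e. by an element of $\mathcal{J}_{0}(\gamma)$, which is quotiented out in $\mathcal{L}(\gamma)$.
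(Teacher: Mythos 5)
Your argument is correct, and it follows the same route the paper sketches (the paper itself defers the detailed proof to \cite[Prop.~3.16]{Ba15b}): realise $T_{\gamma}\mathcal{N}$ through the double quotient $\mathbb{N}^{+}\rightarrow\mathbb{PN}\rightarrow\mathcal{N}$, convert tangent vectors of $\mathbb{N}^{+}$ into Jacobi fields via Lemma~\ref{lemmaDC92} and the connection splitting, and check that the two kernel directions (scaling and geodesic flow) give exactly $\tau\gamma'$ and $\gamma'$, i.e.\ $\mathcal{J}_{0}(\gamma)$. The only loose phrase is attributing the constancy of $\mathbf{g}(J',\gamma')$ for \emph{every} Jacobi field to equations (\ref{eq-J-gamma-constant})--(\ref{eq-Jprima-gammaprima}), which concern $J\in\mathcal{J}_{L}(\gamma)$; the fact itself follows immediately from the Jacobi equation and the curvature symmetries, so this is cosmetic rather than a gap.
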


We will denote the class of Jacobi fields which is identified with the corresponding vector in $T_{\gamma}\mathcal{N}$ by  $\langle J \rangle =\left\{ \overline{J}\in \mathcal{J}_L\left(\gamma\right): \overline{J}=J~(\mathrm{mod}~\gamma') \right\}\in \mathcal{L}\left(\gamma\right)$ so, by abusing of the notation, we will say $\langle J \rangle\in T_{\gamma}\mathcal{N}$.

\subsection{Contact structure in \texorpdfstring{$\mathcal{N}$}{N}}\label{sec:lightrays-contact}

The space of light rays $\mathcal{N}$ has an additional canonical structure: a \emph{contact structure} $\mathcal{H}\subset T\mathcal{N}$.
For a broad introduction about contact structures, see \cite[Appx. 4]{Ar89} or \cite[Ch. 5]{LM87}.

Consider a $\left(2k+1\right)$--dimensional smooth manifold $P$ with a smooth distribution of hyperplanes $\mathcal{H}\subset TP$. 
It is known that $\mathcal{H}$ can be locally defined as the kernel of a $1$--form $\alpha$, that is $\mathcal{H}_p = \mathrm{ker}\hspace{1mm}\alpha_p$ (see \cite[Lem. 1.1.1]{G08}).

We will say that $\mathcal{H}$ is \emph{maximally non--integrable} if there is no integral submanifold $N\subset P$ such that $T_q N = \mathcal{H}_q$ for all $q\in N$. This property can be characterized by the contact form $\alpha$ defining it, so $\mathcal{H}$ is maximally non--integrable if and only if $\alpha$ restricted to $\mathcal{H}$ is non--degenerate, but this is equivalent to $\alpha \wedge \left( d\alpha \right)^k \neq 0$ (see \cite[Prop. 10.3]{Ca01} for proof).

\begin{definition}
Given a $\left(2k+1\right)$--dimensional smooth manifold $P$, a \emph{contact structure} $\mathcal{H}=\mathrm{ker}\hspace{1mm}\alpha$ in $P$ is a smooth distribution of hyperplanes $\mathcal{H}_{p}\subset T_{p}P$ such that $\alpha \wedge \left( d\alpha \right)^k \neq 0$. Each hyperplane $\mathcal{H}_p$ is called a \emph{contact element} or \emph{contact hyperplane} and the $1$--form $\alpha$ is called a \emph{contact form}.
\end{definition}

Notice that a contact form defining a contact structure is not unique, in fact, given a contact form $\alpha$, for every non--vanishing differentiable function $f\in \mathfrak{F}\left(P\right)$ then $f\alpha$ is also a contact form defining the same contact structure $\mathcal{H}$ (see \cite[Sect. V.4.1]{LM87}).

In the case we are studying, $P=\mathcal{N}$ and $k=m-2$. Let us fix some auxiliary metric $\mathbf{g}\in \mathcal{C}$ and consider a coordinate chart $\left(\mathcal{N}_V, \psi\right)$, as in equation (\ref{eq-chart-N}). Recall that, according to remark \ref{remark-v-en-Omega}, we are considering the light rays in $\mathcal{N}_V$ as parametrized as null geodesics such that their initial vectors at the local Cauchy surface $C\in V$ are in $\Omega(C)\in \mathbb{N}$. 
Prevented from this, in \cite[Sec. 4]{Ba15b}, a contact form $\alpha$ is constructed using the quotient $\mathcal{N}\simeq \mathbb{PN}/\mathcal{D}_G$ described in section \ref{sec:lightrays-struct-N}, obtaining that 
\begin{equation}\label{eq-alpha-contact}
\alpha\left(\langle J \rangle\right)=\mathbf{g}\left(J,\gamma'\right) \text{ for } \langle J \rangle\in T_{\gamma}\mathcal{N}   .
\end{equation}
Although the expression (\ref{eq-alpha-contact}) for $\alpha$ depends on the affine parameter of $\gamma$, since the affine parameter is fixed, then $\alpha$ is well--defined as a contact form.  
Observe that, regardless of parameterization, the sign of $\alpha\left(\langle J \rangle\right)$ is unambiguously and globally determined for any $J$ in its equivalence class $\langle J \rangle$, then it induces a co--orientation in $\mathcal{N}$. The existence of this co--orientation is equivalent to the existence of contact form globally defined  \cite[Lem. 1.1.1]{G08}. See also \cite[Sect.~4]{Ba15b} for a explicit construction of a global contact form $\alpha$ in $\mathcal{N}$.
The kernel of $\alpha$ at $\gamma\in\mathcal{N}$ defines the contact hyperplane $\mathcal{H}_{\gamma}$ at $\gamma$ by
\begin{equation}\label{eq-contact-struct}
\mathcal{H}_{\gamma}= \left\{\langle J \rangle\in T_{\gamma}\mathcal{N}: \mathbf{g}\left(J , \gamma' \right)= 0  \right\}   .
\end{equation}

The construction of $\mathcal{H}$ can also be done in other ways. For example, in \cite[Sec. 5]{Ba15b} the contact structure is also obtained by co--isotropic reduction and, in \cite{Ke09}, the symplectic reduction is used. In both cases, basic elements of symplectic theory are needed. 

The calculus of the exterior derivative of $\alpha$, carried out in \cite[Sec. 4]{Ba15b}, shows that $\left.\omega\right|_{\mathcal{H}}=\left.-d\alpha\right|_{\mathcal{H}}$ is non--degenerated, and it can be written by 
\begin{equation}\label{eq-dalpha-expression}
\omega_{\gamma}\left(\langle J \rangle,\langle K \rangle\right)=\mathbf{g}\left(J\left(0\right),K'\left(0\right)\right)-\mathbf{g}\left(J'\left(0\right),K\left(0\right)\right) \qquad \text{ for } \langle J \rangle , \langle K \rangle\in T_{\gamma}\mathcal{N}
\end{equation}
where the initial vectors $J\left(0\right),K\left(0\right),J'\left(0\right),K'\left(0\right)\in T_{\gamma\left(0\right)}C$ are tangent to the local Cauchy surface $C\subset U$ in $M$ and where $\gamma$ is parametrized such that $\gamma'(0)\in \Omega(C)$ according to remark \ref{remark-v-en-Omega}.

This implies that $\left(\mathcal{H}_{\gamma},\left.\omega\right|_{\mathcal{H}_{\gamma}}\right)$ is a symplectic vector space for each $\gamma\in \mathcal{N}$.

\begin{example}\label{ex-contact-4M}
Consider the $4$--dimensional Minkowski spacetime $\mathbb{M}^{4}$ with all the notation as in example \ref{ex-4-minkowski}.  If $\gamma\in \mathcal{N}_{\mathbb{M}^4}$ is the light ray with coordinates $(x_0,y_0,z_0,\theta_0, \phi_0)$, then we can write 
\[
\gamma(s)=\left(s,x_0+s \cos\theta_0\sin\phi_0,y_0+s \sin\theta_0\sin\phi_0,z_0+s \cos\phi_0  \right)\in \mathbb{M}^{4}
\]
where $\gamma(0)\in C\equiv\{t=0\}$.  Calling $v(\theta,\phi)=\left(1,\cos \theta \sin \phi , \sin \theta \sin \phi ,  \cos \phi  \right)$, then for fixed $s$ and $\left(\theta, \phi\right)$ we can define 
\[
\mu\left(\theta, \phi, s ,\tau\right) = \gamma\left(s\right) + \tau \cdot v(\theta,\phi)  .
\]
Observe that $\mu$ defines a variation of null geodesics for the parameters $s$, $\theta$ and $\phi$, where $\tau$ is the affine parameter, that is $\mu_{\left(\theta, \phi, s \right)}(\tau)=\mu\left(\theta, \phi, s ,\tau\right)$ is the corresponding null geodesic of the variation. It verifies $\mu_{\left(\theta_0,\phi_0,s\right)}(0)=\gamma(s)$ and $\mu'_{\left(\theta_0,\phi_0,s\right)}(0)=v(\theta_0,\phi_0)$.

It is straightforward to check that 
\begin{equation}\label{eq-Jacobi-4M}
\left\{
\begin{tabular}{l}
$\frac{\partial \mu}{\partial \theta}\left(\theta_0,\phi_0,s,\tau\right)=(0,-\tau\sin\theta_0 \sin\phi_0,\tau \cos\theta_0\sin\phi_0,0)$ \\
$\frac{\partial \mu}{\partial \phi}\left(\theta_0,\phi_0,s,\tau\right)=(0,\tau\cos\theta_0 \cos\phi_0,\tau \sin\theta_0\cos\phi_0,-\sin\phi_0)$
\end{tabular}
\right.
\end{equation}
so, we have 
\[
\left\{
\begin{tabular}{l}
$\mathbf{g}\left( \frac{\partial \mu}{\partial \theta}\left(\theta_0,\phi_0,s,\tau\right),\gamma'(s)\right) =0 $ \\
$\mathbf{g}\left(\frac{\partial \mu}{\partial \phi}\left(\theta_0,\phi_0,s,\tau\right),\gamma'(s)\right) =0$
\end{tabular}
\right.
\]
for all $s\in \mathbb{R}$. Then $J_{\left(s,\theta_0,\phi_0\right)}^{\theta}(\tau)=\frac{\partial \mu}{\partial \theta}\left(\theta_0,\phi_0,s,\tau\right)$ and $J_{\left(s,\theta_0,\phi_0\right)}^{\phi}(\tau)=\frac{\partial \mu}{\partial \phi}\left(\theta_0,\phi_0,s,\tau\right)$ are Jacobi fields along $\gamma$ defining vectors in the contact hyperplane $\mathcal{H}_{\gamma}$.

We can notice that $\mu\left(\theta, \phi, s ,-s\right)\in C$, and therefore the coordinates of the light ray defined by $\mu_{\left(\theta, \phi, s \right)}\in \mathcal{N} $ are
\[
\left\{
\begin{array}{l}
x(\mu_{\left(\theta, \phi, s \right)}) =  x_0 +s\left( \cos \theta_0 \sin \phi_0 - \cos \theta \sin \phi  \right)  \\
y(\mu_{\left(\theta, \phi, s \right)}) =  y_0 +s\left( \sin \theta_0 \sin \phi_0 - \sin \theta \sin \phi  \right)  \\
z(\mu_{\left(\theta, \phi, s \right)}) =  z_0 +s\left( \cos \phi_0  - \cos \phi  \right)  \\
\theta(\mu_{\left(\theta, \phi, s \right)}) =  \theta \\
\phi(\mu_{\left(\theta, \phi, s \right)}) =  \phi 
\end{array}
\right. 
\]

If we compute the derivatives of the expressions above with respect to $\theta$ and $\phi$ at $\left(\theta_0, \phi_0 ,s\right)$, we obtain the expression in coordinates of the tangent vectors in $T_{\gamma}\mathcal{N}$ defined by $J_{\left(s,\theta_0,\phi_0\right)}^{\theta}$ and $J_{\left(s,\theta_0,\phi_0\right)}^{\phi}$. Then we have
\[
\left\{
\begin{array}{l}
\langle J_{\left(0,\theta_0,\phi_0\right)}^{\theta} \rangle  = \textstyle{ \left( \frac{\partial}{\partial \theta} \right)_{\gamma}  } \\
\langle J_{\left(s,\theta_0,\phi_0\right)}^{\theta} \rangle  = \textstyle{ s\left( \sin\theta_0 \sin \phi_0 \left( \frac{\partial}{\partial x} \right)_{\gamma} - \cos\theta_0 \sin \phi_0 \left( \frac{\partial}{\partial y} \right)_{\gamma} \right) + \left( \frac{\partial}{\partial \theta} \right)_{\gamma}  } \\
\langle J_{\left(0,\theta_0,\phi_0\right)}^{\phi} \rangle  = \textstyle{ \left( \frac{\partial}{\partial \phi} \right)_{\gamma}   } \\
\langle J_{\left(s,\theta_0,\phi_0\right)}^{\phi} \rangle  = \textstyle{ s\left( -\cos\theta_0 \cos \phi_0 \left( \frac{\partial}{\partial x} \right)_{\gamma} - \sin\theta_0 \cos \phi_0 \left( \frac{\partial}{\partial y} \right)_{\gamma}  +\sin \phi_0 \left( \frac{\partial}{\partial z} \right)_{\gamma} \right) + \left( \frac{\partial}{\partial \phi} \right)_{\gamma}   }
\end{array}
\right. 
\]
which are linearly independent for $s\neq 0$. Therefore, since $\mathrm{dim}\left(\mathcal{H}_{\gamma}\right)=4$, these four vectors have to be a basis of the hyperplane $\mathcal{H}_{\gamma}$. 
Whence the contact hyperplane at $\gamma$ can be written as
\begin{align*}
\mathcal{H}_{\gamma} & = \textstyle{ \mathrm{span}\left\{  \sin\theta_0 \left( \frac{\partial}{\partial x} \right)_{\gamma} - \cos\theta_0 \left( \frac{\partial}{\partial y} \right)_{\gamma}, \left( \frac{\partial}{\partial \theta} \right)_{\gamma}, \left( \frac{\partial}{\partial \phi} \right)_{\gamma} , \right. } \\
& \textstyle{ \left. \cos\theta_0 \cos \phi_0 \left( \frac{\partial}{\partial x} \right)_{\gamma} + \sin\theta_0 \cos \phi_0 \left( \frac{\partial}{\partial y} \right)_{\gamma}  - \sin \phi_0 \left( \frac{\partial}{\partial z} \right)_{\gamma} \right\}  }
\end{align*}
and a contact form whose kernel is $\mathcal{H}$ can be
\[
\alpha = \cos \theta \sin \phi \cdot dx + \sin \theta \sin \phi \cdot dy + \cos \phi \cdot dz \, .
\]
\end{example}

\begin{example}\label{example-contact-3M}
We can find the contact structure of the $3$--dimensional Minkowski spacetime $\mathbb{M}^3$ by restriction of $\mathbb{M}^4$ to the hyperplane $z=0$ and the angle $\phi$ to describe null directions must be fixed to $\phi=\frac{\pi}{2}$.  Then, the variation of null geodesic is written by 
\[
\mu\left(\theta, s ,\tau\right) =\left( s+\tau , x_0 +s\cos \theta_0 +\tau \cos \theta , y_0 +s \sin \theta_0 +\tau \sin \theta  \right)\in M
\] 
and whence the Jacobi fields $J_{\left(s,\theta_0\right)}=\frac{\partial \mu}{\partial \theta} \left(\theta_0, s ,\tau\right)$ satisfy
\[
J_{\left(s,\theta_0\right)}\left(\tau\right)=(0,-\tau\sin\theta_0,\tau \cos\theta_0)  \quad \text{ and } \quad J'_{\left(s,\theta_0\right)}\left(\tau\right)=(0,-\sin\theta_0, \cos\theta_0)   .
\]
Observe that for all $\theta\in\left[0,2\pi\right)$ we get $\mu\left(\theta, s ,-s\right)\in C\equiv\{t=0\}$, then its initial vectors at $C$ are
\begin{equation}\label{eq-Jacobi-3M-initial}
J_{\left(s,\theta\right)}\left(0\right)=(0,s\sin\theta,-s \cos\theta_0)   \quad \text{ and } \quad J'_{\left(s,\theta\right)}\left(0\right)=(0,-\sin\theta, \cos\theta)   
\end{equation}
for all $\theta\in\left[0,2\pi\right)$.

In an analogous way as in example \ref{ex-contact-4M}, we obtain that
\begin{equation}\label{eq-contact-3-M}
\mathcal{H}  = \textstyle{ \mathrm{span}\left\{  \sin\theta  \frac{\partial}{\partial x} - \cos\theta \frac{\partial}{\partial y} , \frac{\partial}{\partial \theta}  \right\}  }
\end{equation}
and a contact form can be written by
\[
\alpha = \cos \theta \cdot dx + \sin \theta \cdot dy   .
\] 

Since $\mathcal{N}_{\mathbb{M}^3}$ is $3$--dimensional, we can graphically represent objects from this space of light rays. For any $\gamma\simeq(x,y, \theta)$, the basis of the contact hyperplane in equation (\ref{eq-contact-3-M}) does not depend on $(x,y)$, so for any fixed $(x_0,y_0)$ the contact hyperplanes rotate as the $\theta$--coordinate varies as shown in figure \ref{figura-contact-struct}. The first vector in the basis of (\ref{eq-contact-3-M})(in red) takes a complete turn when $\theta$ moves on intervals of length $2\pi$, but observe that expression of the hyperplane repeats when $\theta$ varies in periods of $\pi$. 
 
\begin{figure}[h]
\centering
\begin{tikzpicture}[scale=1.5]
\draw[->] (0.6,-0.3) -- (0.6-0.9,-0.3-0.9) node[anchor=north west] {$x$};
\draw[->] (0.6,-0.3) -- (0.6,0.8) node[anchor=south west] {$y$};
\draw[->] (0.6,-0.3) node[anchor=north west] {$0$} -- (6.883152+0.6,-0.3) node[anchor=north] {$\theta$};
\draw[densely dotted] (0,0) -- (360*0.0174532,0);
\draw[densely dotted] (0.6,0.5485) -- (0,0) -- (0,-0.9) ;
\draw (0.65,0.5485) -- (0.55,0.5485) node[anchor=east] {$y_0$};
\draw (0.05,-0.9) -- (-0.05,-0.9) node[anchor=east] {$x_0$};
\draw (3.701576,-0.33) node[anchor=north west] {$\pi$} -- (3.781576,-0.24);
\draw (6.843152,-0.33) node[anchor=north west] {$2\pi$} -- (6.923152,-0.24);

\draw[densely dotted] (0.6,0.5485)-- (6.883152,0.5485);
\draw[densely dotted] (0,-0.9) -- (6.283152,-0.9);
\draw[densely dotted] (3.141576,0) -- (3.141576,-0.9) -- (3.741576,-0.3) -- (3.741576,0.5485)-- cycle ;
\draw[densely dotted] (6.283152,0) -- (6.283152,-0.9) -- (6.883152,-0.3) -- (6.883152,0.5485)-- cycle ;
 
\foreach \x in {0,30,...,360}
{
 \fill[color=gray!20] ({(\x*0.0174532)-(0.2) +0.2*0.7071*sin(\x)},{0.2*(cos(\x))}) -- 
          ({(\x*0.0174532)+(0.2)+0.2*0.7071*sin(\x)},{0.2*(cos(\x))}) --
          ({(\x*0.0174532)+(0.2)-0.2*0.7071*sin(\x)},{-0.2*(cos(\x))}) -- 
          ({(\x*0.0174532)-(0.2)-0.2*0.7071*sin(\x)},{-0.2*(cos(\x))}) -- cycle ;
\draw[->,thick,color=blue] ({(\x*0.0174532)},0) -- ({(\x*0.0174532)+0.2},0);
\draw[->,thick,color=red] ({(\x*0.0174532)},0) -- ({(\x*0.0174532)-0.2*0.7071*sin(\x)},{-0.2*(cos(\x))});
}
\end{tikzpicture}
 \caption{The contact structure for any fixed coordinates $(x_0,y_0)$ of $\mathcal{N}_{\mathbb{M}^3}$}
  \label{figura-contact-struct}
\end{figure}
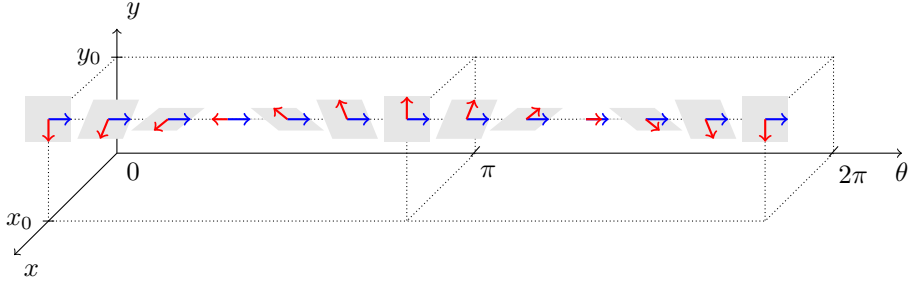

\end{example}

\section{The space of skies \texorpdfstring{$\Sigma$}{S} and the Reconstruction theorem}\label{sec:lightrays-skies}

If we want to replace the conformal manifold $M$ by the space of light rays $\mathcal{N}$ as alternative framework in the study of conformal properties, we will need to identify single points (events) in $M$ as objects in $\mathcal{N}$. 
Notice that it is reasonable to assume that the skies watched by two observers located in different points of $M$ are not exactly the same because the light rays arriving at their eyes differ. 
So, the congruence of light rays $S\left(p\right)\subset \mathcal{N}$ passing by a point $p\in M$ characterizes the point $p$. 
We will call the \emph{sky of $p$} to the set 
\[
S\left(p\right)=\left\{ \gamma\in\mathcal{N}: p\in\gamma\subset M \right\}  .
\]
We will also use capital letters to denote skies of points denoted by lower case, i.e. $S\left(x\right)=X$.

For a given $p\in M$, using the diagram (\ref{diagram-charts}), we can write $S\left(p\right)=\boldsymbol{\upgamma}\left(\mathbb{PN}_p\right)$ and since the fibre $\mathbb{PN}_p$ is diffeomorphic to the sphere $\mathbb{S}^{m-2}$ then we have that every sky $S\left(p\right)$ is a smooth submanifold of $\mathcal{N}$ diffeomorphic to $\mathbb{S}^{m-2}$.

The set of all skies is
\[
\Sigma= \left\{ S\left(x\right): x\in M  \right\}
\]
and it is called \emph{the space of skies} and it permits to define the \emph{sky map} by
\[
\begin{tabular}{rccl}
$S:$ & $M$ & $\rightarrow$ & $\Sigma$ \\
     & $x$ & $\mapsto$ & $X=S\left(x\right)$
\end{tabular}
\]
which is surjective by definition.
Since we want to identify points in $M$ with skies in $\mathcal{N}$ we require $S$ to be injective, saying that $M$ is \emph{sky--separating}.

Consider $x\in M$ and $\gamma \in X=S\left(x\right)$  such that  $\gamma \left(s_{0}\right) =x$.  By the description of $T_{\gamma}\mathcal{N}$ in section \ref{sec:lightrays-tangent}, any $\langle J \rangle\in T_{\gamma}X\subset T_{\gamma}\mathcal{N}$ can be represented by a Jacobi field $J$ defined by a null geodesic variation with $x$ as a fixed point. So, we have $J\left( s_{0}\right) =0\left( \mathrm{{mod} \gamma ^{\prime }}\right) $ and then 
\begin{equation}\label{eq-sky-tangent}
T_{\gamma }X=\{\langle J \rangle\in T_{\gamma }\mathcal{N}:J\left( s_{0}\right) =0\left(\mathrm{{mod}\gamma ^{\prime }}\right) \} .
\end{equation}
Moreover, by eq. (\ref{eq-J-gamma-constant}), we have that $\mathbf{g}\left( J,\gamma ^{\prime}\right)=0$ and therefore $T_{\gamma }X\subset \mathcal{H}_{\gamma }$.

\begin{remark}\label{remark-light-non-conjugate}
It is clear \cite[Prop. 10.10]{On83} that in a normal neighbourhood $U\subset M$ there are no conjugate points along any geodesic, then there are no $t_1,t_2\in \mathbb{R}$ and $\langle J \rangle\in T_{\gamma}\mathcal{N}$ where $\gamma\left(t_1\right), \gamma\left(t_2\right)\subset U \subset M$ with $\gamma\in \mathcal{N}$ such that $J\left(t_1\right)=0\left(\mathrm{{mod}\gamma ^{\prime }}\right)$ and $J\left(t_2\right)=0\left(\mathrm{{mod}\gamma ^{\prime }}\right)$. This implies that $T_{\gamma}S\left(\gamma\left(t_1\right)\right)\cap T_{\gamma}S\left(\gamma\left(t_2\right)\right)= \{0\}$.
\end{remark}

\begin{definition}
We will say that $V\subset M$ is \emph{light non--conjugate} if
\[
T_{\gamma}X \cap T_{\gamma}Y \neq \left\{0_{\gamma}\right\} \Longrightarrow X=Y\in \Sigma
\]
for all $x,y\in V$ and $X=S\left(x\right),Y=S\left(y\right)$.
When this property, natural for normal neighbourhoods, is extended to the entire manifold $M$, then we will say that $M$ is light non--conjugate.
\end{definition}

A consequence of the existence of non--conjugate points along a null geodesic is that the contact hyperplane $\mathcal{H}_{\gamma}$ can be constructed as a direct sum of tangent spaces of skies. Indeed, if $x,y\in \gamma$ are not conjugated points along $\gamma\in \mathcal{N}$ then we have 
\begin{equation}\label{eq-H-TX-TY}
\mathcal{H}_{\gamma}=  T_{\gamma}X \oplus T_{\gamma}Y  .
\end{equation}
In this relation, it is easy to observe that the contact structure $\mathcal{H}$ does not depend on the representative metric in the conformal structure, that is, the contact hyperplanes are given by conformal objects.

\begin{remark}\label{remark-contact-skies}
In example \ref{ex-contact-4M}, we have found the contact structure $\mathcal{H}$ of the $4$--dimensional Minkowski spacetime $\mathbb{M}^4$ computing bases of the tangent spaces of two skies as equation (\ref{eq-H-TX-TY}) suggests, because we have that 
\[
T_{\gamma}S\left(\gamma(s)\right)=\mathrm{span}\{ \langle J_{\left(s,\theta_0,\phi_0\right)}^{\theta} \rangle ,\langle J_{\left(s,\theta_0,\phi_0\right)}^{\phi} \rangle \}
\]
for all $s\in\mathbb{R}$.
\end{remark}

\subsection{Topology and differentiable structure in \texorpdfstring{$\Sigma$}{S}}\label{sec:Topology}

If we want to identify points with skies in a suitable way, we will need to give a topology and a differentiable structure to $\Sigma$ to make of the sky map $S$ a diffeomorphism.

How can we define the required topology and differentiable structure of $\Sigma$ intrinsically, that is, in terms of the geometry of $\mathcal{N}$?

The chosen topology in $\Sigma $ will be induced by the topology of $\mathcal{N}$. 
If $\mathcal{U}\subset \mathcal{N}$ is an open set, then we denote the set of all skies $X\in \Sigma$ such that $X \subset \mathcal{U}$ by 
\[
\Sigma\left(\mathcal{U}\right)= \left\{ X\in \Sigma: X\subset \mathcal{U} \right\} .
\]

\begin{definition}
The \emph{Reconstructive} or \emph{Low's topology} $\mathfrak{T}_L$ on $\Sigma$ is the topology generated by the basis $\{ \Sigma (\mathcal{U}) \mid \mathcal{U} \subset \mathcal{N} \, \, \mathrm{open}\,   \}$.
\end{definition}

As a first step, we will state the topological equivalence.

\begin{proposition}
\label{prop-sky-homeo} If $\Sigma $ is equipped with the Reconstructive topology, then the
sky map $S:M\rightarrow \Sigma $ is a homeomorphism.
\end{proposition}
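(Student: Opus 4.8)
The plan is to exploit that $S$ is already a bijection---surjective by the very definition of $\Sigma$ and injective because $M$ is assumed sky--separating---so that it remains only to prove that $S$ is continuous and open. I would treat these two halves separately, the first being essentially formal and the second carrying the geometric content.

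For continuity I would compute the preimage of a basic open set $\Sigma(\mathcal{U})$ directly. Since $S(x)=\boldsymbol{\upgamma}(\mathbb{PN}_x)$, one has $S^{-1}(\Sigma(\mathcal{U}))=\{x\in M:\mathbb{PN}_x\subset\boldsymbol{\upgamma}^{-1}(\mathcal{U})\}$. Writing $\mathcal{O}=\boldsymbol{\upgamma}^{-1}(\mathcal{U})$, which is open because $\boldsymbol{\upgamma}$ is continuous, the complement of this set is precisely $\pi_M^{\mathbb{PN}}(\mathbb{PN}\setminus\mathcal{O})$. The key observation is that $\pi_M^{\mathbb{PN}}:\mathbb{PN}\to M$ is a locally trivial fibre bundle with compact fibre $\mathbb{S}^{m-2}$, hence a proper and therefore closed map onto the locally compact Hausdorff manifold $M$; thus $\pi_M^{\mathbb{PN}}(\mathbb{PN}\setminus\mathcal{O})$ is closed and $S^{-1}(\Sigma(\mathcal{U}))$ is open. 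This settles continuity cleanly, the compactness of the celestial fibres doing all of the work.

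For openness it suffices to produce, for each $x_0\in M$ and each neighbourhood $W\ni x_0$, an open $\mathcal{U}\subset\mathcal{N}$ with $X_0\subset\mathcal{U}$ and $\Sigma(\mathcal{U})\subset S(W)$, since then $\Sigma(\mathcal{U})$ is a $\mathfrak{T}_L$--neighbourhood of $X_0$ contained in $S(W)$. I would pass to a globally hyperbolic, normal, causally convex $V\subset W$ with relatively compact closure (Proposition \ref{prop-Mingu-Sanch}) and Cauchy surface $C\cong\mathbb{R}^{m-1}$, using the chart $\mathcal{N}_V\cong C\times\mathbb{S}^{m-2}$ of diagram (\ref{diagram-charts}). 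The geometric input is that, for $V$ small enough, each sky $S(x)$ with $x\in V$ is a \emph{graph} over the celestial sphere, $S(x)=\{(f_x(u),u):u\in\mathbb{S}^{m-2}\}$, where $f_x(u)\in C$ is the intersection with $C$ of the unique light ray through $x$ with direction $u$; smoothness of $(x,u)\mapsto f_x(u)$ comes from the exponential map and the absence of conjugate points (Remark \ref{remark-light-non-conjugate}). A thin tube $\mathcal{U}=\{(c,u):d(c,f_{x_0}(u))<\epsilon\}$ about the graph of $f_{x_0}$ is open and contains $X_0$, and a sky lying inside it is a $C^0$--close graph; its base point, recovered as the intersection in $V$ of two of its light rays, is close to the corresponding intersection $x_0$ of light rays of $X_0$ and hence lies in $V$.

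The main obstacle is the global part of this last step: a sky $Y=S(y)\subset\mathcal{U}$ is known a priori only to have \emph{all} its light rays meeting $V$, and one must exclude a distant $y$ whose entire sky nonetheless threads the tube. I would handle this by first arranging $\mathcal{U}\subset\mathcal{N}_V$, so that $Y\subset\mathcal{U}$ forces every light ray through $y$ to meet the compact set $\overline{V}$; the compactness of $Y\cong\mathbb{S}^{m-2}$, together with strong causality and the standing null pseudo--convexity assumption, confines such $y$ to a compact set $K$. On $K$ the map $y\mapsto f_y$ is a continuous injection (sky--separation), hence a homeomorphism onto its image, so the tube radius $\epsilon$ can be shrunk until the only points of $K$ with sky inside $\mathcal{U}$ lie in $V$. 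Establishing this confinement---equivalently, that ``every light ray through $y$ meets $\overline{V}$'' forces $y$ into a compact neighbourhood of $\overline{V}$---is the delicate point where the causal hypotheses on $M$ are genuinely used, and it is where I would concentrate the technical effort.
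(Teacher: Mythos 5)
Your bijectivity and continuity halves are correct. The continuity argument in particular is clean and complete: $S^{-1}\left(\Sigma(\mathcal{U})\right)$ has complement $\pi_M^{\mathbb{PN}}\left(\mathbb{PN}\setminus\boldsymbol{\upgamma}^{-1}(\mathcal{U})\right)$, and properness of the bundle projection (compact fibres $\mathbb{S}^{m-2}$ over a locally compact Hausdorff base) makes that image closed. Note that the review itself does not prove this proposition but cites \cite[Prop. 4.3]{Ki11} and \cite[Prop. 3]{Ba14}, together with \cite[Cor. 20]{Ba15} for the removal of the non--refocusing hypothesis; your continuity argument is of the same standard type.

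The gap is in the openness half, and it sits exactly where you say you would ``concentrate the technical effort'': the claim that ``every light ray through $y$ meets $\overline{V}$'' confines $y$ to a compact set $K$. This is not a routine consequence of compactness of $S(y)$, strong causality and null pseudo--convexity. Null pseudo--convexity controls segments of a \emph{single} light ray with \emph{both} endpoints in a compact set; the segments available to you run from $y$ to $\overline{V}$ and have only one endpoint there, and concatenating two of them at $y$ produces a broken null curve to which the hypothesis does not apply. What must be excluded is precisely the refocusing phenomenon of Figure \ref{figure-refocussing}: a distant $y$ all of whose light rays thread an arbitrarily thin tube around $S(x_0)$, which (since a light ray close in $\mathcal{N}$ to one through $x_0$ passes close to $x_0$ in $M$) is essentially equivalent to the failure of openness of $S$ at $x_0$. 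That refocusing cannot occur under the standing hypotheses (strong causality, null pseudo--convexity, sky--separation) is a genuine theorem, \cite[Cor. 20]{Ba15}, whose proof goes through the twisted--null--curve and celestial--curve machinery of Section \ref{sec:Twisted} rather than through a direct compactness argument. As written, your proposal defers the entire difficulty of the proposition to this unproved step; you should either assume non--refocusing explicitly (recovering Kinlaw's version of the statement) or supply the argument of \cite[Cor. 20]{Ba15}. A secondary, fixable issue: for $y\in K\setminus V$ the sky $S(y)$ need not be a graph over the celestial sphere in the chart $\mathcal{N}_V\simeq C\times\mathbb{S}^{m-2}$, so the map $y\mapsto f_y$ you invoke on $K$ is not defined there; the shrinking argument should instead use continuity of $y\mapsto S(y)$ in the Hausdorff sense together with the observation that one sky cannot be a proper subset of another.
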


The first proof of this proposition was given by Kinlaw in \cite[Prop. 4.3]{Ki11} under the hypotheses of non-refocusing in $M$, that is, we say that $M$ is \emph{refocusing} at $x\in M$ if there exists an open neighbourhood $V$ of $x$ such that for all open $U$ with $x\in U\subset V$, there exists $y\notin V$ such that all light rays through $y$ enter $U$, see figure \ref{figure-refocussing}.
When this property is not satisfied at any $x\in M$ we shall say that $M$ is \emph{non--refocusing}.
Another and simpler proof can be found in \cite[Prop. 3]{Ba14}.

In a later paper, it is shown \cite[Cor. 20]{Ba15} that the non-refocusing hypothesis is unnecessary under the hypotheses of strong causality, null pseudo-convexity and sky--separation. So, we will not use this concept in the present review, but it must be considered when weaker hypotheses are assumed.

\begin{figure}[h]
\centering
\begin{tikzpicture}[scale=1]
\shade[left color = gray,right color = gray!80,middle color = gray!30,shading angle=75,opacity=0.6] (-0.3,0) to[out=110,in=210] (1,3) to[out=-120,in=70] (0.4,0) arc (0:-180:0.35cm and 0.15cm);
\draw[densely dashed,color=gray!80] (0.4,0) arc (0:360:0.35cm and 0.15cm);
\draw[color=gray!40] (-0.3,0) to[out=110,in=210] (1,3) to[out=-120,in=70] (0.4,0);
\draw[color=gray!40] (-0.1,-0.12) to[out=100,in=220] (1,3) to[out=-120,in=90] (0.2,-0.12);
\draw[color=gray!40] (0,-0.15) to[out=90,in=230] (1,3) ;
\shade[shading=ball, ball color=red!40,opacity=0.3] (-1.5,0) to[out=87,in=182] (-0.3,1.1) to[out=2,in=92] (1.6,0.2) to[out=-88,in=4](-0.1,-1) to[out=184,in=273](-1.5,0);
\draw[color=red!50] (-1.5,0) to[out=87,in=182] (-0.3,1.1) to[out=2,in=92] (1.6,0.2) to[out=-88,in=4](-0.1,-1) to[out=184,in=273](-1.5,0);
\draw[densely dashed,color=red!50] (-1.5,0) to[out=90,in=180] (0.1,0.9) to[out=0,in=90] (1.6,0.2) to[out=-90,in=0](0,-0.8) to[out=180,in=270](-1.5,0);
\shade[shading=ball, ball color=blue!40,opacity=0.3] (0,0) circle (0.7cm); 
\draw[color=blue!60] (0,0) circle (0.7cm);
\draw[densely dashed,color=blue!60] (0.7,0) arc (0:360:0.7cm and 0.3cm);
\fill (1,3) node[anchor=north west] {$y$} circle (1pt);
\fill (0,0) node[anchor=east] {$x$} circle (1pt);
\draw (1.1,1.8) node {$S(y)$};
\draw[color=blue] (0.8,0.4) node {$U$};
\draw[color=red] (-1.3,1) node {$V$};
\end{tikzpicture}
 \caption{Refocusing at $x\in M$. All light rays in the sky of some point $y\in M$ enter $U\subset V \subset M$.}
  \label{figure-refocussing}
\end{figure}
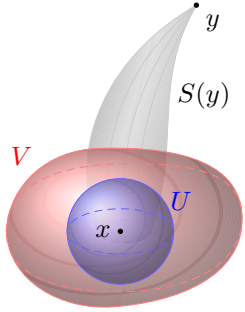

Now, we want to get the differentiable structure of $\Sigma$ compatible with the Reconstructive topology. We will describe $\Sigma$, locally, as a quotient manifold.

By proposition \ref{prop-Mingu-Sanch}, we can consider a globally hyperbolic, normal and causally convex open set $V\subset M$ (as in diagram (\ref{diagram-charts})). Notice that, by remark \ref{remark-light-non-conjugate}, $V$ is light non-conjugate, and by Proposition \ref{prop-sky-homeo},  $S\left( V\right) = \{ S(x) \mid x\in V \}$ is open in $\Sigma$. Then we can translate the property of light non-conjugation to open sets in $\Sigma$ by the following property. 

\begin{definition}\label{light-non-conjugate-sigma}
An open set $U\subset \Sigma$ in the Reconstructive topology is called \emph{light non-conjugate} if for
every $X,Y\in U $ and every $\gamma \in X\cap Y$ such that $T_{\gamma }X\cap
T_{\gamma }Y\neq \left\{ 0\right\} $ implies that $X=Y$. 
\end{definition}

All the convex normal neighbourhoods at $x\in M$ set up a basis for the
topology of $M$ at $x$  \cite[Prop. 5.7]{On83} then, by proposition \ref{prop-sky-homeo}, all light non-conjugate neighbourhoods in $\Sigma$ also constitute a basis for the Reconstructive topology. But this property is not good enough to construct the differentiable structure on $\Sigma$.

In what follows we will denote by 
\begin{equation}\label{eq-widehat-T}
\widehat{T}W = TW-\{0\}
\end{equation}
the non--zero tangent vectors at the manifold $W$.

\begin{definition}\label{regular_neigh}
A light non-conjugate open set $U\subset \Sigma $ is said to be a \emph{regular} open set if $U$ verifies that $\widehat{U}=\bigcup\limits_{X\in U}\widehat{T}X \subset T\mathcal{N}$ is a regular submanifold of $\widehat{T}\mathcal{U}$, $\mathcal{U=}\bigcup\limits_{X\in U}X$.
\end{definition}

Next theorem \ref{theorem1}, is a keystone in order to show that $S$ is a diffeomorphism. In the version of the statement below, we offer additional results contained in the proof of \cite[Thm. 1]{Ba14}.
\begin{theorem}
\label{theorem1} Let $X\in \Sigma$ be any sky with $x\in M$ such that $X=S(x)$. If $V\subset M$ is a relatively compact, globally hyperbolic, causally convex and normal open neighbourhood of $x$ then $U=S(V)$ is a regular open neighbourhood $U\subset \Sigma $ of $X$. Moreover, there exists a coordinate system $\overline{\varphi}$ in $\widehat{U}$ adapted to the leaves $\widehat{T}X \subset \widehat{U}$.
\end{theorem}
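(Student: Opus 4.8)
The plan is to verify the three assertions in turn: that $U=S(V)$ is open and light non-conjugate, that $\widehat{U}$ is a regular submanifold of $\widehat T\mathcal U$, and that adapted coordinates exist, the last two being obtained simultaneously from an explicit global parametrization of $\widehat U$. Openness of $U$ is immediate from Proposition~\ref{prop-sky-homeo}, since $S$ is a homeomorphism and $V$ is open; light non-conjugacy of $U$ follows from Remark~\ref{remark-light-non-conjugate} together with Definition~\ref{light-non-conjugate-sigma}, because $V$ is normal. Note also that $\mathcal U=\bigcup_{X\in U}X=\{\gamma\in\mathcal N:\gamma\cap V\neq\emptyset\}=\mathcal N_V$, so the ambient space $\widehat T\mathcal U=\widehat T\mathcal N_V$ is covered by the chart $\psi$ of~\eqref{eq-chart-N} coming from the Cauchy surface $C\subset V$.

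First I would build the model space. For $(p,[v])$ with $p\in V$ and $[v]\in\mathbb{PN}_p$, write $\gamma_{[v]}$ for the light ray through $p$ in the direction $[v]$ and let $\mathcal S_{(p,[v])}=v^{\perp}/\langle v\rangle$ be the associated screen space, a smooth rank $m-2$ vector bundle over $\mathbb{PN}(V)$. By~\eqref{eq-sky-tangent} together with~\eqref{eq-Jprima-gammaprima}, parametrising $\gamma_{[v]}$ so that $\gamma_{[v]}(0)=p$, a tangent vector $\langle J\rangle\in T_{\gamma_{[v]}}S(p)$ is exactly the class of the Jacobi field $J_\xi$ determined by $J_\xi(0)=0$ and $J_\xi'(0)=\xi$ with $\xi\in\mathcal S_{(p,[v])}$, and $\xi\mapsto\langle J_\xi\rangle$ is a linear isomorphism onto $T_{\gamma_{[v]}}S(p)$. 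Hence, setting $\mathcal E=\{(p,[v],\xi):p\in V,\ [v]\in\mathbb{PN}_p,\ \xi\in\mathcal S_{(p,[v])}\setminus\{0\}\}$, the map
\begin{equation*}
\Phi:\mathcal E\longrightarrow\widehat T\mathcal N_V,\qquad \Phi(p,[v],\xi)=\big(\gamma_{[v]},\langle J_\xi\rangle\big),
\end{equation*}
is a bijection onto $\widehat U$, since $\widehat T S(p)=\Phi(\{p\}\times\mathbb{PN}_p\times(\mathcal S\setminus\{0\}))$ and $\widehat U=\bigcup_{p\in V}\widehat T S(p)$. Its smoothness follows from Lemma~\ref{lemmaDC92}, realising $J_\xi$ through an exponential variation, and from the smooth dependence of the solutions of the Jacobi equation~\eqref{eq-Jacobi-field} on their initial data along $\gamma_{[v]}$.

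Next I would show $\Phi$ is an embedding. Injectivity uses light non-conjugacy crucially: if $\Phi(p,[v],\xi)=\Phi(p',[v'],\xi')$ then $\gamma_{[v]}=\gamma_{[v']}=:\gamma$ and $0\neq\langle J_\xi\rangle=\langle J_{\xi'}\rangle\in T_\gamma S(p)\cap T_\gamma S(p')$, whence $p=p'$ by Remark~\ref{remark-light-non-conjugate}, and then $[v]=[v']$ and $\xi=\xi'$. For the immersion property, suppose $d\Phi$ annihilates a vector; projecting to $\mathcal N$ forces its $\mathbb{PN}(V)$-component into $\ker d\boldsymbol{\upgamma}$, i.e. a pure sliding of $p$ along $\gamma$ at some rate $\mu$. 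Considering the family $J_s\in\mathcal J_L(\gamma)$ of Jacobi fields along the fixed $\gamma$ characterised by $J_s(\tau_s)=0$, $J_s'(\tau_s)=\xi(s)$ with $\gamma(\tau_s)=p(s)$, vanishing of the vertical part means $\partial_s J_s\in\mathcal J_0(\gamma)$; differentiating the first condition then yields $\mu\,\xi=0\pmod{\gamma'}$, and as $\xi\neq0$ in the screen this gives $\mu=0$, after which the second condition gives $\dot\xi=0$. Thus $d\Phi$ is injective. Finally, using that $V$ is relatively compact, causally convex and globally hyperbolic, $\Phi$ is a homeomorphism onto its image; therefore $\widehat U$ is a regular submanifold of $\widehat T\mathcal N_V$ and, with light non-conjugacy already established, $U$ is a regular open set in the sense of Definition~\ref{regular_neigh}.

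For the adapted coordinate system I would transport coordinates from $\mathcal E$ through $\Phi^{-1}$. Choosing a chart $(p^1,\dots,p^m)$ on $V$, a local trivialisation $\eta=(\eta^1,\dots,\eta^{m-2})$ of $\mathbb{PN}(V)\to V$, and a local frame trivialising the screen bundle with fibre coordinates $\zeta=(\zeta^1,\dots,\zeta^{m-2})\neq 0$, the composite $\overline\varphi=(p,\eta,\zeta)\circ\Phi^{-1}$ is a chart on $\widehat U$. Since $\widehat T X=\widehat T S(p_0)=\Phi(\{p_0\}\times\mathbb{PN}_{p_0}\times(\mathcal S\setminus\{0\}))$, each leaf is exactly the slice $\{p=p_0\}$, so $\overline\varphi$ is adapted to the leaves $\widehat T X$. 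The main obstacle I anticipate is not the local verification—injectivity and the immersion identity above are clean—but the global step of proving that $\Phi$ is a homeomorphism onto its image, i.e. that base points of skies cannot accumulate on $\partial V$ nor light rays degenerate; this is precisely where relative compactness of $V$, causal convexity and global hyperbolicity must be combined to upgrade the injective immersion to a proper embedding with closed image.
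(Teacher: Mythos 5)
Your construction is, in substance, the one the paper uses: the chart $\overline{\varphi}=(\mathbf{x},\mathbf{u},\mathbf{v})$ of Remark~\ref{remark-carta-U-hat} is exactly your $(p,\eta,\zeta)\circ\Phi^{-1}$ (base point, null direction, transversal component of $J'$ at the point where $J$ vanishes), and your use of light non--conjugacy for injectivity and of the linear isomorphism $\xi\mapsto\langle J_\xi\rangle$ between the screen space and $T_{\gamma}S(p)$ matches the paper's identification of $T_\gamma X$ via equation~(\ref{eq-sky-tangent}). The linear-algebraic part of your immersion argument (differentiating $J_s(\tau_s)=0$ and $J_s'(\tau_s)=\xi(s)$ along a fixed $\gamma$ to force $\mu=0$ and $\dot\xi=0$) is also sound.

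The one genuine gap is the step you yourself flag at the end: the assertion that $\Phi$ is a homeomorphism onto its image is never argued, and this is precisely the content of ``regular submanifold'' in Definition~\ref{regular_neigh} --- an injective immersion does not suffice, and without it the adapted chart $\overline\varphi$ is only a bijection, not a chart for the subspace topology of $\widehat{U}\subset\widehat{T}\mathcal{U}$. To close it you must show that if $\langle J_n\rangle\in\widehat{U}$ converges to $\langle J\rangle\in\widehat{U}$ in $\widehat{T}\mathcal{N}$, then the base data $(p_n,[v_n],\xi_n)$ converge to $(p,[v],\xi)$. Convergence of $[v_n]$ and of the Jacobi fields follows from continuity of $\boldsymbol{\upgamma}$ and of solutions of~(\ref{eq-Jacobi-field}) in their initial data; the delicate point is the base points: by relative compactness of $V$ a subsequence $p_{n_k}=\gamma_{n_k}(\tau_{n_k})$ converges to some $q\in\overline{V}$ with $J(\tau_\infty)=0\ (\mathrm{mod}\ \gamma')$ at $q$, and one must rule out $q\neq p$. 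If $q\in V$ this follows from Remark~\ref{remark-light-non-conjugate}, but $q$ may a priori lie on $\partial V$, where no--conjugacy of $V$ says nothing; the standard fix is to enclose $\overline{V}$ in a slightly larger globally hyperbolic, causally convex, normal $V'$ (possible by Proposition~\ref{prop-Mingu-Sanch} and relative compactness), so that $J$ has a unique zero $(\mathrm{mod}\ \gamma')$ along $\gamma\cap V'\supset\gamma\cap\overline{V}$ and hence $q=p$. Until this accumulation argument is written out, the claim that $U$ is regular --- which is the actual assertion of the theorem beyond openness --- is not established.
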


\begin{remark}\label{remark-carta-U-hat}
As shown in \cite{Ba14}, the adapted coordinate system $\overline{\varphi}$ of theorem \ref{theorem1} can be obtained by fixing a coordinate chart $\varphi=\left( x^{1},\ldots ,x^{m}\right)$ in $V$ and a orthonormal frame $\lbrace \mathbf{E}_{1},\ldots ,\mathbf{E}_{m} \rbrace$ in $V$, related to some auxiliary metric $\mathbf{g}\in\mathcal{C}$, such that $\mathbf{E}_{1}$ is timelike. 
Any null direction $[w]\in \mathbb{PN}(V)$ can be defined by a vector
\[
w=\mathbf{E}_1(p) + u^2 \mathbf{E}_2(p) + \cdots + u^m \mathbf{E}_m(p)
\]
where since $(u^2)^2+ \cdots +(u^m)^2 =1$ then, with no lack of generality, $(u^3,\ldots , u^m)$ define the components of $w$. 
On the other hand, if $\langle J \rangle \in\widehat{T}_{\gamma}S(p)$ then $J(s)=0 \left(\mathrm{mod}~\gamma'(s)\right)$ for $\gamma(s)=p$, then the transversal component to $\gamma'$ of the initial vector $J'(s)=\sum\limits_{j=1}^{m}v^{j}\mathbf{E}_{j}\left( p\right) $ defines $\langle J \rangle$. So, the coordinate chart $\overline{\varphi}:\widehat{U} \rightarrow \mathbb{R}^{3m-4}$ is given by:
\begin{equation}\label{overlinevarphi}
\overline{\varphi}\left(\langle J \rangle\right) = \left( \mathbf{x}, \mathbf{u}, \mathbf{v} \right)=\left( x^1, x^{2},\ldots ,x^{m},
u^{3},\dots ,u^{m} ,\left\langle v^{1}\dots ,v^{m}\right\rangle
\right) \in \mathbb{R}^{3m-4}
\end{equation}
where $\mathbf{v}=\left\langle v^{1},\ldots ,v^{m}\right\rangle=\left(v^1,\ldots,v^m\right)\left(\mathrm{mod}~\gamma^{\prime }\right)$. 
Observe that $\mathbf{x}$ defines the point $p\in V$ to whose sky $\langle J \rangle$ is tangent, $\mathbf{u}$ defines the null direction $[w]\in \mathbb{PN}(V)$ such that $\gamma=\gamma_{[w]}\in\mathcal{N}$ and, finally, $\mathbf{v}$ defines the value of $\langle J \rangle$ in the fibre $T_{\gamma}S(p)$.
\end{remark}

An immediate consequence \cite[Cor. 1]{Ba14} is that regular open sets constitute a basis for the topology of $\Sigma$. A refined notion of regular sets can be found in \cite[Sec. 4.1]{Ba15}.

Now, we can state the following theorem.

\begin{theorem}
\label{teo-sky-difeo} 
The sky map $S:M\rightarrow \Sigma $ is a diffeomorphism when $\Sigma $ is equipped with the smooth structure defined by regular sets.
\end{theorem}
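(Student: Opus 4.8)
The plan is to lean on the fact that the homeomorphism statement is already settled by Proposition \ref{prop-sky-homeo}, so that only the \emph{smoothness} of $S$ and of $S^{-1}$ remains, and to show that in the charts furnished by Theorem \ref{theorem1} the sky map is literally the identity. Since $S$ is a bijection (injective because $M$ is sky--separating, surjective by construction), it suffices to prove that $S$ is a local diffeomorphism at every point: a bijective local diffeomorphism is automatically a global diffeomorphism. Moreover, the corollary to Theorem \ref{theorem1} that regular open sets form a basis of the Reconstructive topology guarantees that the smooth atlas I am about to use is compatible with the topology for which $S$ is already known to be a homeomorphism.

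First I would fix $x\in M$ and, invoking strong causality through Proposition \ref{prop-Mingu-Sanch}, choose a relatively compact, globally hyperbolic, causally convex and normal neighbourhood $V$ of $x$, together with a coordinate chart $\varphi=(x^1,\ldots,x^m)$ and an orthonormal frame on $V$ as in Remark \ref{remark-carta-U-hat}. By Theorem \ref{theorem1}, $U=S(V)$ is then a regular open neighbourhood of $X=S(x)$ carrying the adapted coordinate system $\overline{\varphi}$ on $\widehat{U}$ whose explicit shape $\overline{\varphi}(\langle J\rangle)=(\mathbf{x},\mathbf{u},\mathbf{v})$ is recorded in (\ref{overlinevarphi}).

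Next I would identify the chart that the regular set $U$ induces on $\Sigma$. The leaves of the foliation of $\widehat{U}$ by tangent spaces of skies are exactly the level sets $\{\mathbf{x}=\mathrm{const}\}$: by Remark \ref{remark-carta-U-hat} every $\langle J\rangle\in\widehat{T}S(p)$ has first block of coordinates $\mathbf{x}=\varphi(p)$, so $\mathbf{x}$ is constant along each leaf $\widehat{T}X$ and separates distinct leaves, and a dimension count ($\dim\widehat{U}=3m-4$, $\dim\widehat{T}X=2m-4$, whence the transverse direction is $m$-dimensional) confirms that the leaves \emph{coincide} with these level sets. Hence $\mathbf{x}$ descends through the leaf projection $\widehat{U}\to U$ to a chart $\psi_U\colon U\to\varphi(V)\subset\mathbb{R}^m$ given by $\psi_U(S(p))=\varphi(p)$; by the very definition of the smooth structure on $\Sigma$ determined by regular sets, these are the charts of $\Sigma$ near $X$. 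Reading off the coordinate expression of $S$ in the pair $(\varphi,\psi_U)$ then yields
\begin{equation*}
\psi_U\circ S\circ\varphi^{-1}(\mathbf{x})=\psi_U\bigl(S(\varphi^{-1}(\mathbf{x}))\bigr)=\varphi\bigl(\varphi^{-1}(\mathbf{x})\bigr)=\mathbf{x},
\end{equation*}
so $S$ is represented by the identity of the open set $\varphi(V)$ and is therefore a diffeomorphism of $V$ onto $U$. As $x$ was arbitrary, $S$ is a local diffeomorphism everywhere, and being bijective it is a global diffeomorphism.

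The main obstacle is not this final bookkeeping but the two inputs I am quoting: that $S(V)$ is genuinely a regular open set admitting the adapted chart (\ref{overlinevarphi}) (Theorem \ref{theorem1}), and that ``the smooth structure defined by regular sets'' is well defined, i.e. that charts from different regular sets are mutually smoothly compatible. The latter is, however, immediate once the identification above is in hand: on an overlap $U_1\cap U_2$ the transition map is $\psi_{U_2}\circ\psi_{U_1}^{-1}=\varphi_2\circ S^{-1}\circ S\circ\varphi_1^{-1}=\varphi_2\circ\varphi_1^{-1}$, which is just a change of coordinates on $M$ and hence smooth; in other words, the atlas transported from $M$ by $S$ and the atlas defined by regular sets coincide. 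Thus the only delicate ingredient that must be imported from the cited work is the regularity-and-adaptedness statement of Theorem \ref{theorem1}; granting it, the diffeomorphism property is essentially tautological.
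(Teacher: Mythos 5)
Your proposal is correct and follows essentially the same route as the paper: both arguments rest entirely on Theorem \ref{theorem1} to produce the regular neighbourhood $U=S(V)$ with its adapted chart, identify the induced smooth structure on $U$ with the leaf space of $\widehat{U}$ (equivalently, with the chart $\varphi\circ S^{-1}$ descended from the $\mathbf{x}$-block of $\overline{\varphi}$), and conclude that $S$ is a local, hence global, diffeomorphism by bijectivity. Your version merely makes explicit in coordinates what the paper phrases via the quotient map $\widetilde{S}:V\rightarrow\widetilde{U}$, and your compatibility check for overlapping regular sets is the content the paper defers to Proposition \ref{differ}.
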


\begin{proof}
Consider $x\in M$ and $V\subset M$ a relatively compact, globally hyperbolic, causally convex and normal open neighbourhood of $x$, then $U=S(V)\subset \Sigma$ is open and light non-conjugate. By theorem \ref{theorem1}, $\widehat{U}=\bigcup\limits_{X\in U}\widehat{T}X$ is a regular submanifold of $\widehat{T}\mathcal{N}$ foliated by leaves $\widehat{T}X$ with $X\in U$. Hence, if we call $\widetilde{U}=\{\widehat{T}X:X\in U\}$, then the map $\widetilde{S}:V\rightarrow \widetilde{U}$ given by $\widetilde{S}\left( x\right) =\widehat{T}X$ is a diffeomorphism. Moreover, since $U$ is light non--conjugate, there is a bijection $U\rightarrow\widetilde{U}$ defined by $X \mapsto \widehat{T}X$ and therefore $U$ inherits from $\widetilde{U}$ a differentiable structure such that the sky map $S:V\rightarrow U$ is a diffeomorphism since $\widetilde{S}$ is also. The global bijectiveness of the sky map $S:M\rightarrow \Sigma$ gives us that $S$ is a diffeomorphism.
\qed
\end{proof}

The compatibility between the Reconstructive topology and the differentiable structure of $\Sigma$ is explicitly stated in \cite[Cor. 17]{Ba15}.

\begin{proposition}\label{differ}
The family of regular sets $\left\{ U \mid U \subset \Sigma \text{ regular}\right\}$ is a basis for the reconstructive topology of $\Sigma$. 
Moreover, the differentiable structure in $\Sigma$ inherited from $\widetilde{U}\simeq U$ is the only one making of $S:M\rightarrow \Sigma$ a diffeomorphism.
\end{proposition}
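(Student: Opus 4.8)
The plan is to split the statement into its two assertions --- that regular sets form a basis for the Reconstructive topology $\mathfrak{T}_L$, and that the induced smooth structure is the unique one for which $S$ is a diffeomorphism --- and to deduce each from the machinery already assembled, chiefly Theorem \ref{theorem1}, Proposition \ref{prop-sky-homeo} and Theorem \ref{teo-sky-difeo}. Neither part requires new geometric input; the proposition is essentially a packaging of those results, so the work is organizational.

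For the basis claim, first I would recall that every regular set is by Definition \ref{regular_neigh} an open subset of $\Sigma$, so it suffices to show that every point has arbitrarily small regular neighbourhoods, i.e. that for every open $W\subset\Sigma$ and every $X\in W$ there is a regular $U$ with $X\in U\subset W$. Put $x=S^{-1}(X)$; since $S$ is a homeomorphism onto $(\Sigma,\mathfrak{T}_L)$ by Proposition \ref{prop-sky-homeo}, the set $S^{-1}(W)$ is an open neighbourhood of $x$ in $M$. By Proposition \ref{prop-Mingu-Sanch} strong causality guarantees a basis of globally hyperbolic, causally convex and normal neighbourhoods, and shrinking if necessary we may pick such a $V$ that is in addition relatively compact with $x\in V\subset S^{-1}(W)$. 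Theorem \ref{theorem1} then says $U=S(V)$ is regular, and $X\in U=S(V)\subset W$, which establishes the basis property.

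The uniqueness of the smooth structure I would obtain from the general principle that a fixed bijection transports at most one smooth structure. Suppose $\Sigma$ carries some smooth structure $\mathcal{A}$ for which $S:M\rightarrow\Sigma$ is a diffeomorphism. Then $S$ is in particular a homeomorphism for the underlying topology of $\mathcal{A}$; but $S$ is also a homeomorphism onto $(\Sigma,\mathfrak{T}_L)$ by Proposition \ref{prop-sky-homeo}, and since any topology on $\Sigma$ making the fixed bijection $S$ a homeomorphism must equal $\{S(O):O\subset M\text{ open}\}$, the topology of $\mathcal{A}$ is forced to be $\mathfrak{T}_L$. Writing $\mathcal{A}_0$ for the structure defined by regular sets, Theorem \ref{teo-sky-difeo} gives that $S:M\rightarrow(\Sigma,\mathcal{A}_0)$ is also a diffeomorphism; hence $\mathrm{id}_{\Sigma}=S\circ S^{-1}:(\Sigma,\mathcal{A}_0)\rightarrow(\Sigma,\mathcal{A})$ is a composition of diffeomorphisms and therefore itself a diffeomorphism. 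An identity map that is a diffeomorphism between two smooth structures over the same topological space forces $\mathcal{A}_0=\mathcal{A}$ as maximal atlases, which is the claimed uniqueness.

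I expect the step most easily overlooked --- the genuine subtlety here --- to be the forcing of the topology: one must not silently presume that a competing smooth structure is defined over $\mathfrak{T}_L$, and it is precisely the observation that any structure making $S$ a diffeomorphism already determines $\mathfrak{T}_L$ through Proposition \ref{prop-sky-homeo} that closes this gap. The remaining ingredients --- the ``pushforward structure is unique'' argument and the chart compatibility --- are routine or already packaged in Theorems \ref{theorem1} and \ref{teo-sky-difeo}.
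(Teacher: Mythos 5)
Your proof is correct and follows exactly the route the paper indicates: the basis property is obtained from Proposition \ref{prop-sky-homeo} together with Theorem \ref{theorem1} (as the paper notes after that theorem), and the uniqueness is the standard transport-of-structure argument combined with Theorem \ref{teo-sky-difeo}. Your explicit remark that a competing smooth structure is first forced to have $\mathfrak{T}_L$ as its underlying topology is a point the paper (which defers the proof to a reference) leaves implicit, and it is handled correctly.
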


It is important to notice that the differentiable structure given in $U$ is inherited from $\widetilde{U}\simeq \widehat{U}/\mathcal{D}$ with data coming from $T\mathcal{N}$, so it is canonically obtained from elements of $\mathcal{N}$ and it is not necessary to define it by the geometry of $M$.

\subsection{The Reconstruction theorem}\label{sec:Reconstruction}

This section is addressed to discuss the conditions under which a conformal manifold can be reconstructed from its space of light rays.    

Notice that ``isomorphic'' data in the geometry of $\mathcal{N}$ must provide the same reconstruction of the conformal manifold $M$.  This observation leads to the Reconstruction theorem \cite[Lem. 2 \& Thm. 3]{Ba14}, which is a Malament-Hawking--like theorem (see \cite[Sect. 4.3.4]{Mi19}). Observe that there is no explicit mention to causal conditions of the conformal manifold in the statement of the following version of the Reconstruction theorem. The causal structure is implicit in the geometry of the spaces of light rays and skies as we will see in section \ref{sec:Causality}.

\begin{theorem}\textbf{(Reconstruction theorem)}\label{teo-reconstruction}
Let $\left( M,\mathcal{C}\right) $, $\left( \overline{M},\overline{\mathcal{C}}\right) $ be two strongly causal, null pseudo-convex and sky--separating conformal manifolds and $\left(\mathcal{N},\Sigma \right) $, $\left( \overline{\mathcal{N}},\overline{\Sigma }\right) $ their corresponding pairs of spaces of light rays and skies.
Let $\phi :\mathcal{N}\rightarrow \overline{\mathcal{N}}$ be a diffeomorphism such that $\phi \left( X\right) \in \overline{\Sigma }$ for all $X\in \Sigma$.
Then the map 
\[
\varphi =\overline{S}^{-1}\circ \Phi \circ S:M\rightarrow \overline{M}
\]
is a conformal diffeomorphism onto its image, where $\overline{S}:\overline{M} \rightarrow \overline{\Sigma } $ is the sky map of $\overline{M}$ and $\Phi:\Sigma \rightarrow \overline{\Sigma}$ is the map induced by $\phi$ defined by $\Phi\left(X\right)=\phi\left(X\right)$.
\end{theorem}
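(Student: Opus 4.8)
The plan is to prove the statement in three movements: first show that the induced map $\Phi$ is a diffeomorphism onto its image, then transfer this to $\varphi$ through the sky maps, and finally prove that $\varphi$ is conformal by establishing that it carries light rays to light rays. To begin, the hypothesis $\phi(X)\in\overline{\Sigma}$ for all $X\in\Sigma$ guarantees that $\Phi$ is a well-defined map $\Sigma\to\overline{\Sigma}$. The decisive observation — already stressed after Theorem \ref{teo-sky-difeo} — is that both the Reconstructive topology and the smooth structure on $\Sigma$ are defined purely in terms of the topology, smooth structure and tangent bundle of $\mathcal{N}$, with no reference to $M$. Since $\phi$ is a diffeomorphism it sends open sets to open sets and, because it is a bijection, $\phi(X)\subset\phi(\mathcal{U})$ if and only if $X\subset\mathcal{U}$; hence $\Phi(\Sigma(\mathcal{U}))=\overline{\Sigma}(\phi(\mathcal{U}))\cap\Phi(\Sigma)$, which makes $\Phi$ a homeomorphism onto its image for the Reconstructive topologies.

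For smoothness I would exploit the description of the differentiable structure via regular sets. The restriction $\phi|_X:X\to\phi(X)$ is a diffeomorphism of submanifolds, so its differential satisfies $d\phi(\widehat{T}X)=\widehat{T}\phi(X)$ in the notation of (\ref{eq-widehat-T}); consequently $d\phi$ carries $\widehat{U}=\bigcup_{X\in U}\widehat{T}X$ onto $\widehat{\Phi(U)}$ as a regular submanifold and transports the adapted coordinate charts $\overline{\varphi}$ of Theorem \ref{theorem1} to adapted charts on the target. Thus $\phi$ maps regular open sets to regular open sets and $\Phi$ is a diffeomorphism onto its image. Since $S$ and $\overline{S}$ are diffeomorphisms by Theorem \ref{teo-sky-difeo}, the composition $\varphi=\overline{S}^{-1}\circ\Phi\circ S$ is then automatically a diffeomorphism onto its image.

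To establish conformality I would use the point–sky duality $p\in\gamma\Leftrightarrow\gamma\in S(p)$. Fix a light ray $\gamma\in\mathcal{N}$ and set $\overline{\gamma}=\phi(\gamma)$. If $p\in\gamma$ then $\gamma\in S(p)$, whence $\overline{\gamma}=\phi(\gamma)\in\phi(S(p))=\Phi(S(p))=\overline{S}(\varphi(p))$, which says precisely that $\varphi(p)\in\overline{\gamma}$. Therefore $\varphi(\gamma)\subset\overline{\gamma}$, and applying the same reasoning to $\phi^{-1}$ gives equality, so $\varphi$ maps light rays of $M$ bijectively onto light rays of $\overline{M}$. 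Because the skies $S(p)=\boldsymbol{\upgamma}(\mathbb{PN}_p)$ account for \emph{all} null directions at $p$, the linear isomorphism $d\varphi_p$ sends the full null cone of $T_pM$ onto the null cone of $T_{\varphi(p)}\overline{M}$; invoking the standard fact that a linear isomorphism between Lorentzian vector spaces carrying one null cone onto the other is a positive conformal map, I conclude that $\varphi$ preserves the conformal structure.

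The main obstacle I anticipate is the smoothness of $\Phi$: although the point–sky duality makes the conformal conclusion conceptually transparent, the verification that $d\phi$ carries regular sets to regular sets and respects the adapted-chart structure of Theorem \ref{theorem1} requires handling the tangent-bundle description of the smooth structure of $\Sigma$ with some care. The closing null-cone argument is elementary linear algebra, but one must genuinely confirm that $d\varphi_p$ sweeps out the entire cone rather than an individual null direction — this is exactly what the surjectivity of $\boldsymbol{\upgamma}$ onto $\mathbb{PN}_p$ provides.
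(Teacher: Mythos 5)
Your proposal follows essentially the same route as the paper's proof: $\Phi$ is first shown to be a homeomorphism onto its image via the Reconstructive topology, then upgraded to a diffeomorphism through the regular--set description $\widehat{U}=\bigcup_{X\in U}\widehat{T}X$ of the smooth structure on $\Sigma$ (theorem \ref{theorem1} and proposition \ref{differ}), and conformality is obtained by showing that $\varphi$ carries light rays into light rays via point--sky duality. The only deviation is that where the paper cites \cite[Sec. 3.2]{HE} for the last step you unpack it into the explicit null--cone linear--algebra argument; this is fine, and note that only the inclusion $\varphi(\gamma)\subset\phi(\gamma)$ is needed there, so your slightly shaky appeal to $\phi^{-1}$ (which is not assumed to map skies to skies, since $\Phi$ need not be surjective) to get equality is harmless.
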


\begin{proof}
First, we will show that $\Phi$ is a diffeomorphism onto its range. 

Trivially, $\Phi$ is well defined and injective. 
Consider any open set $\overline{U} \subset \overline{\Sigma}$ and denote $U=\Phi^{-1}(\overline{U})$. By definition of Reconstructive topology, there exists $\overline{\mathcal{W}}\subset \overline{\mathcal{N}}$ open such that $\overline{U} = \Sigma (\overline{\mathcal{W}})$. Since $\phi$ is a diffeomorphism, then $\mathcal{W} = \phi^{-1}(\overline{\mathcal{W}})$ is open in $\mathcal{N}$ and moreover, for each sky $X\subset \mathcal{W}$ we have that $\phi(X) \subset \overline{\mathcal{W}}$ is a sky in $\overline{U}$, hence $\Phi(X) \in \overline{U}$.  So, $U = \Sigma (\mathcal{W})$ and $U\subset\Sigma$ is open. Therefore $\Phi$ is continuous.

Now, consider $X\in \Sigma $ and $\overline{X}=\phi \left( X \right) \in \overline{\Sigma }$.  By proposition \ref{differ} and the continuity of $\Phi$ there exist regular neighbourhoods $U=\Sigma(\mathcal{U})\subset \Sigma $ of $X$ and $\overline{U}=\Sigma(\overline{\mathcal{U}})\subset \overline{\Sigma }$ of $\overline{X}$ such that $\Phi \left( U\right)  \subset \overline{U}$, then we can assume that $\phi \left( \mathcal{U}\right) \subset \overline{\mathcal{U}}$ 
\footnote{
This is not an automatic property for any $\mathcal{U}\in\mathcal{N}$ and $\overline{\mathcal{U}}\in\overline{\mathcal{N}}$. If $U=\Sigma(\mathcal{V}) $ and $\overline{U}=\Sigma(\overline{\mathcal{U}})$ such that $\Phi\left(U\right)\subset \overline{U}$, then we can choose $\mathcal{U}=\phi^{-1}\left(\phi\left(\mathcal{V}\right)\cap \overline{\mathcal{U}}\right)\subset\mathcal{V}$. 
Now, if $X\subset \mathcal{V}$, since $\Phi\left(U\right)\subset \overline{U}$ then $\phi\left(X\right)\subset \phi\left(\mathcal{V}\right)\cap\overline{\mathcal{U}}$ and since $\phi$ is a diffeomorphism, then $X\in \mathcal{U}$. So we have $U=\Sigma\left(\mathcal{U}\right)= \Sigma\left(\mathcal{V}\right)$ and $\phi\left(\mathcal{U}\right)\subset \overline{\mathcal{U}}$.
}.

Since $\phi\colon \mathcal{N}\to \mathcal{\overline{N}}$ is a diffeomorphism, then $\phi_*\colon T\mathcal{N}\to T\mathcal{\overline{N}}$ is also a diffeomorphism and its restriction $\phi _{\ast }\colon \widehat{T}\mathcal{U}\rightarrow\widehat{T}\overline{\mathcal{U}}$ is a diffeomorphism onto its image. Since $U$ and $\overline{U}$ are regular, then $\widehat{U}$ and $\widehat{\overline{U}}$ are regular submanifolds, and 
\[
\phi _{\ast }\left( \widehat{U}\right) =\phi _{\ast }\left( \bigcup\limits_{%
\overline{X}\in U}\widehat{T}X\right) =\bigcup\limits_{X\in U}\phi _{\ast
}\left( \widehat{T}X\right) =\bigcup\limits_{X\in U}\widehat{T}\phi \left(
X\right) \subset \widehat{\overline{U}} 
\]
so, the restriction $\phi _{\ast }:\widehat{U}\rightarrow \widehat{\overline{U}}$ is another diffeomorphism onto its range.

If we denote by $\widehat{\mathcal{D}}$ and $\widehat{\overline{\mathcal{D}}}$ the distributions in $\widehat{U}$ and $\widehat{\overline{U}}$ such that their spaces of leaves are $\{\widehat{T}X:X\in U\}$ and $\{\widehat{T}\overline{X}:\overline{X}\in \overline{U}\}$ respectively, since $\phi$ maps skies to skies and $\phi _{\ast }\left(\widehat{T}X\right)=\widehat{T}\overline{X}$, then $\phi_{\ast }:\widehat{U}\rightarrow \widehat{\overline{U}}$ induces the canonical quotient map
$$\overline{\phi _{\ast}}  \colon \widehat{U}/\widehat{\mathcal{D}} \rightarrow \widehat{\overline{U}}/\widehat{\overline{\mathcal{D}}} ,$$ 
which is smooth. Then we obtain the following commutative diagram:
\begin{equation*}
\begin{tikzpicture}[every node/.style={midway}]
\matrix[column sep={6em,between origins},
        row sep={1em}] at (0,0)
{ \node(Ug1)   {$\widehat{U}$}  ; & \node(Ug2) {$\widehat{\overline{U}} $}; \\
  \node(UD1) {$\widehat{U}/\widehat{\mathcal{D}}$}; & \node(UD2) {$\widehat{\overline{U}}/\widehat{\overline{\mathcal{D}}}$};                      \\
  \node(U1) {$U$}; & \node(U2) {$\overline{U}$};  \\};
\draw[->] (Ug1) -- (Ug2) node[anchor=south]  {$\phi _{\ast }$};
\draw[->] (UD1) -- (UD2) node[anchor=south]  {$\overline{\phi _{\ast }}$};
\draw[->] (U1)   -- (U2) node[anchor=south] {$\Phi$};
\draw[->] (Ug1) -- (UD1) node[anchor=south]  {};
\draw[->] (UD1) -- (U1) node[anchor=north]  {};
\draw[->] (Ug2) -- (UD2) node[anchor=south]  {};
\draw[->] (UD2) -- (U2) node[anchor=north]  {};
\end{tikzpicture}
\end{equation*}
where the maps $\widehat{U}/\widehat{\mathcal{D}}\rightarrow U$ and $\widehat{\overline{U}}/\widehat{\overline{\mathcal{D}}}\rightarrow \overline{U}$ are diffeomorphisms because of proposition \ref{differ}. Then $\Phi :U\rightarrow\overline{U}$ a diffeomorphism onto its image, by injectiveness, $\Phi :\Sigma\rightarrow\overline{\Sigma}$  is a diffeomorphism onto its image and, by theorem \ref{teo-sky-difeo}, therefore $\varphi =\overline{S}^{-1}\circ \phi \circ S:M\rightarrow \overline{M}$ too.

Finally, we will show that $\varphi$ maps light rays into light rays.  The image of a light ray $\gamma$ under the sky map $S$ can be written by 
\[
S\left( \gamma \right) =\{\beta \in \mathcal{N}:\exists  \, X\in \Sigma 
\,\,\, \mathrm{ such ~ that} \,\, \gamma ,\beta \in X\}   .
\]
Then 
\[
\Phi \left( S\left( \gamma \right) \right) =\phi \left( S\left( \gamma
\right) \right) =\{\phi \left( \beta \right) \in \overline{\mathcal{N}}%
:\exists \, X\in \Sigma \,\,\, \mathrm{ such ~ that} \,\, \gamma ,\beta \in X\}
\]
and since $\phi $ is a diffeomorphism preserving skies
\[
\Phi \left( S\left( \gamma \right) \right) =\{\phi \left( \beta \right) \in
\overline{\mathcal{N}}:\exists \, \Phi \left( X\right) \in \overline{%
\Sigma }\,\,\, \mathrm{ such ~ that} \,\, \phi \left( \gamma \right) ,\phi \left( \beta
\right) \in \Phi \left( X\right) \} = \overline{S}\left(\phi\left(\gamma\right)\right)
\]
So, we have $\varphi \left( \gamma \right) =\overline{S}^{-1}\circ \Phi \circ S\left( \gamma \right) =\overline{S}^{-1}\circ \overline{S}\circ \phi \left(
\gamma \right) =\phi \left( \gamma \right) \in \overline{\mathcal{N}}$ is a light ray. Therefore, by \cite[Sec. 3.2]{HE}, $\varphi $ is a conformal diffeomorphism onto its image.
\qed
\end{proof}

In virtue of Hawking and Malament's theorems, weakening the hypotheses of the Reconstruction theorem is a plausible idea, but it is not evident that it can be done in a simple way. For example, under the hypothesis that $\phi$ is a sky--preserving homeomorphism, determining whether $\Phi$ is an open map is, so far, an open problem that we establish as a conjecture. Anyway, we can state the following proposition.

\begin{proposition}\label{prop-reconstruction}
Let $\left( M,\mathcal{C}\right) $, $\left( \overline{M},\overline{\mathcal{C}}\right) $ be two strongly causal, null pseudo-convex and sky--separating conformal manifolds and $\left(\mathcal{N},\Sigma \right) $, $\left( \overline{\mathcal{N}},\overline{\Sigma }\right) $ their corresponding pairs of spaces of light rays and skies.
Let $\phi :\mathcal{N}\rightarrow \overline{\mathcal{N}}$ be a homeomorphism such that $\phi \left( X\right) \in \overline{\Sigma }$ for all $X\in \Sigma$, the map $\Phi:\Sigma \rightarrow \overline{\Sigma}$ is defined by $\Phi\left(X\right)=\phi\left(X\right)$ and $\varphi=\overline{S}^{-1}\circ \Phi \circ S$ where $S:M\rightarrow \Sigma$ and $\overline{S}:\overline{M}\rightarrow \overline{\Sigma}$ are the sky maps.
Then the following conditions are equivalent:
\begin{enumerate}
\item \label{itm:prop-cond-1} $\Phi$ is an open map.
\item \label{itm:prop-cond-2} $\varphi$ is a conformal diffeomorphism onto its image.
\item \label{itm:prop-cond-3} $\phi$ is a diffeomorphism.
\end{enumerate}
\end{proposition}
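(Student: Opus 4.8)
The plan is to establish the cycle of implications (3) $\Rightarrow$ (2) $\Rightarrow$ (1) $\Rightarrow$ (3). Throughout I would keep in mind one standing fact, valid for \emph{any} sky--preserving homeomorphism $\phi$ and proved exactly as in the last part of the proof of the Reconstruction theorem \ref{teo-reconstruction} (that computation used only that $\phi$ is a sky--preserving bijection, never its smoothness): the map $\varphi$ carries each light ray $\gamma\subset M$, viewed as the point set $\{p\in M:\gamma\in S(p)\}$, bijectively onto the point set $\phi(\gamma)\subset\overline{M}$, so $\varphi(\gamma)=\phi(\gamma)$ as subsets. Since $\phi\colon\mathcal{N}\to\overline{\mathcal{N}}$ is a homeomorphism, $2m-3=2\overline m-3$ and hence $\dim M=\dim\overline{M}=m$.

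The first implication, (3) $\Rightarrow$ (2), is immediate from the Reconstruction theorem \ref{teo-reconstruction}: if $\phi$ is a diffeomorphism then $\varphi=\overline{S}^{-1}\circ\Phi\circ S$ is a conformal diffeomorphism onto its image. For (2) $\Rightarrow$ (1), suppose $\varphi$ is a conformal diffeomorphism onto its image. A smooth embedding between manifolds of the same dimension $m$ is a local diffeomorphism, and an injective local diffeomorphism is an open map onto an open subset (invariance of domain). By Theorem \ref{teo-sky-difeo} the sky maps $S$ and $\overline{S}$ are diffeomorphisms, so $\Phi=\overline{S}\circ\varphi\circ S^{-1}$ is the composition of a homeomorphism, an open map and a homeomorphism, hence open, which is (1).

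The substantial direction is (1) $\Rightarrow$ (3). Assuming $\Phi$ open, I would first note that $\Phi$ is continuous and injective (continuity follows from $\phi$ being a homeomorphism, exactly as in the proof of \ref{teo-reconstruction}, and injectivity from $\phi$ being a bijection that maps distinct skies to distinct skies), so openness makes $\Phi$ a homeomorphism onto an open subset of $\overline{\Sigma}$. Transporting through the diffeomorphisms $S,\overline{S}$, the map $\varphi\colon M\to\overline{M}$ becomes a homeomorphism onto an open subset $\varphi(M)\subset\overline{M}$, which by the standing fact carries light rays to light rays; moreover, since $\phi$ is a \emph{bijection} of $\mathcal{N}$ onto $\overline{\mathcal{N}}$, this light--ray correspondence runs in both directions between $M$ and the open subspacetime $\varphi(M)$. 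The heart of the argument is then to upgrade this topological, null--geodesic--preserving map to a smooth conformal one: I would invoke a rigidity theorem of Malament--Hawking--King--McCarthy type, asserting that a homeomorphism between strongly causal spacetimes of dimension $\geq 3$ which preserves the family of (unparametrized) null geodesics in both directions preserves the local causal order and is therefore a smooth conformal diffeomorphism onto its image. This already yields (2). To close the loop with (3), a conformal diffeomorphism $\varphi$ induces a smooth map $\varphi_{\mathcal{N}}\colon\mathcal{N}\to\overline{\mathcal{N}}$ on light rays by functoriality of the $\mathbb{PN}$--construction of section \ref{sec:lightrays-struct-N}, and the identity $\varphi(\gamma)=\phi(\gamma)$ forces $\varphi_{\mathcal{N}}=\phi$; as $\varphi_{\mathcal{N}}$ and its inverse are smooth, $\phi$ is a diffeomorphism.

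The main obstacle is precisely this rigidity step inside (1) $\Rightarrow$ (3). Openness of $\Phi$ delivers only a homeomorphism $\varphi$, and recovering differentiability and the conformal class from the mere preservation of the null--geodesic family is not formal; note in particular that the clean citation \cite[Sec.~3.2]{HE} used in Theorem \ref{teo-reconstruction} requires $\varphi$ to be a diffeomorphism as input, which is not yet available here. The cleanest route is to pass first through the causal structure: null--geodesic preservation should be promoted to preservation of the local light--cone order, after which a Malament--type theorem applies. Here one must handle time--orientation compatibility and the fact that $\varphi$ is only onto an \emph{open} subset of $\overline{M}$, so the causal comparison is carried out locally on the globally hyperbolic, causally convex, normal neighbourhoods furnished by Proposition \ref{prop-Mingu-Sanch}; these points are routine once two--sided order preservation is secured, but they are where the genuine work lies.
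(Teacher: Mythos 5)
Your proposal is correct and follows essentially the same route as the paper: the hard arrow is the passage from a sky--preserving homeomorphism to a conformal diffeomorphism, which the paper handles exactly as you describe by noting that $\varphi$ is a homeomorphism onto its image carrying light rays to light rays and then citing Hawking's theorem \cite[Thm. 4.61]{Mi19}, while the remaining arrows are the trivial ones you give. The only difference is cosmetic: the paper runs the cycle as \ref{itm:prop-cond-1}) $\Rightarrow$ \ref{itm:prop-cond-2}) $\Rightarrow$ \ref{itm:prop-cond-3}) $\Rightarrow$ \ref{itm:prop-cond-1}) whereas you run it in the opposite order, with the same content distributed among the arrows.
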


\begin{proof}
\ref{itm:prop-cond-1}) $\Rightarrow$ \ref{itm:prop-cond-2})

When $\phi$ is a homeomorphism, the same arguments used in the proof of theorem \ref{teo-reconstruction} can be used to show that $\Phi$ is continuous (as well as well--defined and injective) and $\varphi$ maps light rays of $\mathcal{N}$ into light rays of $\overline{\mathcal{N}}$. 
If, moreover, $\Phi$ is open, then it is a homeomorphism onto its image and therefore, by composition, $\varphi$ is also a homeomorphism onto its image which maps light rays into light rays. By the Hawking's theorem \cite[Thm. 4.61]{Mi19}, $\varphi$ is a conformal diffeomorphism.

Trivially we have \ref{itm:prop-cond-2}) $\Rightarrow$ \ref{itm:prop-cond-3}) and, as seen in the proof of theorem \ref{teo-reconstruction}, we get \ref{itm:prop-cond-3}) $\Rightarrow$ \ref{itm:prop-cond-1}). 
\qed
\end{proof}

The contact structure $\mathcal{H}$ is not sufficient to recover the conformal manifold $M$. The space of skies is also needed as done in theorem \ref{teo-reconstruction}. The following example (see \cite[Ex. 2]{Ba14} for details) shows that there can be a diffeomorphism preserving the contact structure between the spaces of light rays of non-equivalent conformal manifolds.

\begin{example}
Let $\mathbb{M}^3$ be the 3--dimensional Minkowski space--time with standard coordinates given by $\left(t,x,y\right)\in \mathbb{R}^3$ and let $\mathcal{N}_{\mathbb{M}^3}$ be its space of  light rays. 

For $\epsilon \geq 0$, we consider $M_{\epsilon}=\left\{\left(t,x,y\right)\in \mathbb{R}^3 : t<\epsilon \right\}$ equipped with the metric
\[
\mathbf{g}_{\epsilon} = -\left(1+f\left(t\right)\right)dt\otimes dt + 2f\left(t\right)dt\otimes dx + \left(1-f\left(t\right)\right)dx\otimes dx+dy\otimes dy
\]
where $f$ is a smooth function verifying $f\left(t\right)=0$ for every $t\leq 0$ which gives us a small perturbation $\mathbf{g}_{\epsilon}$ of the metric $\mathbf{g}$ of $\mathbb{M}^3$ for $0<t<\epsilon$. It is possible to choose $f$ small enough to keep $M_{\epsilon}$ globally hyperbolic. 

Trivially, $\mathbb{M}^3$ and $M_{\epsilon}$ are two space--times extending $M_0$ the corresponding spaces of light rays $\mathcal{N}_{\mathbb{M}^3}$, $\mathcal{N}_{\epsilon}$ and $\mathcal{N}_0$ can be fully determined by a common Cauchy surface, for example $C \equiv \left\{t=-1\right\}$.
Since $M_0$ (equipped with its metric) is contained in $\mathbb{M}^3$ and $M_{\epsilon}$, then every light ray $\gamma_0$ in $M_0$ passing through $C$, defines light rays $\gamma\subset \mathbb{M}^3$ and $\gamma_{\epsilon}\subset M_{\epsilon}$. 
Then we can state a diffeomorphism $\mathcal{N}_{\mathbb{M}^3}\rightarrow\mathcal{N}_{\epsilon}$ such that the contact structures are preserved $\left(\mathcal{H}_{\mathbb{M}^3}\right)_{\gamma}\simeq\left(\mathcal{H}_{\epsilon}\right)_{\gamma_{\epsilon}}$.

It is known that $\mathbb{M}^3$ is flat, but if we compute the Cotton tensor $\mathbf{C}_{\epsilon}$ of $M_{\epsilon}$, defined by $C_{ijk}=\nabla _{k}R_{ij}-\nabla _{j}R_{ik}+\frac{1}{4}\left( \nabla_{j}Rg_{ik}-\nabla _{k}Rg_{ij}\right)$ where $R_{ij}$ and $R$ denote the Ricci curvature and the scalar curvature, we obtain that $\mathbf{C}_{\epsilon}\neq 0$ therefore, by \cite[Thm. 9]{La88}, $\mathbb{M}^3$ and $M_{\epsilon}$ are not conformally equivalent because $M_{\epsilon}$ is not conformally flat.
\end{example}

\section{\texorpdfstring{Causality in $\mathcal{N}$}{Causality in N}}\label{sec:Causality}

In this section we will see how the causal structure of $M$ is encoded in $\mathcal{N}$ by \emph{legendrian isotopies}. 
Any curve $\lambda=\lambda(s)\in M$ is mapped to the curve of skies $S\left(\lambda(s)\right)\in \Sigma$ under the sky map $S$. We can see every $\Lambda_s=S\left(\lambda(s)\right)$ as a smooth submanifold in the space of light rays $\mathcal{N}$, so $S\circ \lambda$ defines a variation of smooth submanifolds in $\mathcal{N}$.

\subsection{Legendrian isotopies}\label{sec:isotopies}

We will introduce some basic and general definition and results. For a more detailed description, see \cite{Ch10b}.

\begin{definition}
Let $\left( \mathcal{M},\mathcal{H}\right) $ be a co-oriented $\left(2n+1\right) $--dimensional contact manifold with contact structure $\mathcal{H}$.

\begin{itemize}
\item A smooth submanifold $\Lambda\subset \mathcal{M}$ is called \emph{Legendrian} if $T_p \Lambda \subset \mathcal{H}_p$ for all $p\in \Lambda$.
\item A differentiable family $\{\Lambda _{s}\}_{s\in \left[0,1\right] }$ of Legendrian submanifolds is called a \emph{Legendrian isotopy}.
\item A \emph{parametrization} of a Legendrian isotopy $\{\Lambda _{s}\}_{s\in \left[0,1\right] }$ is a map $F:\Lambda _{0}\times \left[ 0,1\right] \rightarrow \mathcal{M}$ such that, for all $s \in [0,1]$, we have that $F\left(\Lambda _{0}\times \{s\}\right) =\Lambda _{s}\subset \mathcal{M}$ and $F_s \colon \Lambda_0 \to \Lambda_s$ given by $F_s (\lambda) = F(s,\lambda)$ is a diffeomorphism.
\item Two Legendrian isotopies are \emph{equivalent} if their corresponding parametrizations $F,\widetilde{F}:\Lambda_0 \times \left[0,1\right] \rightarrow \mathcal{M}$ verify $F_s\left(\Lambda_0\right)=\widetilde{F}_s\left(\Lambda_0\right)$ for every $s\in \left[0,1\right]$.
\end{itemize} 
\end{definition}

\begin{definition}\label{def-sign-legendrian}
We will say that a parametrization $F$ of a Legendrian isotopy is \emph{non--negative} (respectively, \emph{non--positive}, \emph{positive}, \emph{negative}) if $\left( F^{\ast }\alpha \right) \left( \frac{\partial}{\partial s}\right) \geq 0$ (respectively $\leq 0$, $>0$, $<0$), where $\alpha$ is a contact 1--form such that $\mathcal{H}=\mathrm{ker}~\alpha$. 
\end{definition}

If the sign of a parametrization of a Legendrian isotopy is defined in the sense of definition \ref{def-sign-legendrian}, then it does not depend on  the parametrization. This result is shown in \cite[Lem. 3]{Ba14} and it allows to define the sign of a Legendrian isotopy in terms of the sign of any parametrization.
 
\begin{lemma}\label{lemma00250} 
Let $F,\widetilde{F}:\Lambda _{0}\times \left[0,1\right]\rightarrow \mathcal{M}$ be two equivalent parametrizations of a Legendrian isotopy $\{\Lambda_{s}\}_{s\in \left[0,1\right]}$. If $F$ is non-negative (respectively non-positive, positive, negative) then so is $\widetilde{F}$.
\end{lemma}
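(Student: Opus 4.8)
The plan is to exploit the fact that two equivalent parametrizations of the same Legendrian isotopy differ only by a fibrewise reparametrization of $\Lambda_0$, and that a contact form annihilates every vector tangent to a Legendrian leaf. Concretely, since $F_s(\Lambda_0)=\widetilde F_s(\Lambda_0)=\Lambda_s$ and each of $F_s,\widetilde F_s$ is a diffeomorphism onto $\Lambda_s$, I would first set $\theta_s=F_s^{-1}\circ\widetilde F_s\colon\Lambda_0\to\Lambda_0$, a diffeomorphism for each $s$, and define $\Theta\colon\Lambda_0\times[0,1]\to\Lambda_0\times[0,1]$ by $\Theta(\lambda,s)=(\theta_s(\lambda),s)$. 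A direct check gives $\widetilde F=F\circ\Theta$, and, crucially, $\Theta$ preserves the $[0,1]$--factor.

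Next I would compute the relevant pullback. Writing $\beta=F^{\ast}\alpha$, we have $\widetilde F^{\ast}\alpha=\Theta^{\ast}\beta$, so the point is to understand $\Theta_{\ast}\bigl(\frac{\partial}{\partial s}\bigr)$. Because $\Theta$ moves only the $\Lambda_0$--coordinate inside each fibre $\Lambda_0\times\{s\}$, differentiating the curve $s\mapsto(\theta_s(\lambda),s)$ yields
\[
\Theta_{\ast}\!\left(\tfrac{\partial}{\partial s}\right)=\tfrac{\partial}{\partial s}+V,
\qquad V=\tfrac{d}{ds}\theta_s(\lambda),
\]
where $V$ is tangent to $\Lambda_0\times\{s\}$. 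Hence, at a point $(\lambda,s)$,
\[
\bigl(\widetilde F^{\ast}\alpha\bigr)\!\left(\tfrac{\partial}{\partial s}\right)
=\beta\!\left(\tfrac{\partial}{\partial s}+V\right)
=\bigl(F^{\ast}\alpha\bigr)\!\left(\tfrac{\partial}{\partial s}\right)\Big|_{(\theta_s(\lambda),s)}
+\alpha\bigl(F_{\ast}V\bigr).
\]
Since $V$ is tangent to a fibre, $F_{\ast}V$ is tangent to $\Lambda_s=F_s(\Lambda_0)$, which is Legendrian; therefore $F_{\ast}V\in\mathcal H=\ker\alpha$ and the last term vanishes.

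The upshot is that $\bigl(\widetilde F^{\ast}\alpha\bigr)\bigl(\frac{\partial}{\partial s}\bigr)$ evaluated at $(\lambda,s)$ equals $\bigl(F^{\ast}\alpha\bigr)\bigl(\frac{\partial}{\partial s}\bigr)$ evaluated at $(\theta_s(\lambda),s)$. As $(\lambda,s)$ ranges over $\Lambda_0\times[0,1]$ and $\theta_s$ is a bijection of $\Lambda_0$ for each fixed $s$, the point $(\theta_s(\lambda),s)$ also ranges over all of $\Lambda_0\times[0,1]$, so the two functions have exactly the same range of values and hence the same sign behaviour; this proves the claim in all four cases ($\geq0$, $\leq0$, $>0$, $<0$) simultaneously. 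The proof has no serious obstacle: the one substantive observation is that the correction vector $V$ pushes forward into the contact hyperplane, after which everything reduces to bookkeeping. The only mild care needed is to track base points correctly in the evaluation above, so that one compares the two parametrizations at corresponding, rather than identical, points of $\Lambda_0$.
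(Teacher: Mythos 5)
Your argument is correct and is essentially the standard proof (the paper itself does not reprove this lemma but cites \cite[Lem. 3]{Ba14}, where the same factorization $\widetilde{F}=F\circ\Theta$ through a fibre-preserving diffeomorphism is used, with the correction term killed because it pushes forward into the contact hyperplane tangent to the Legendrian leaf). The only point worth making explicit is that $\Theta$ is jointly smooth in $(\lambda,s)$, which follows from $F$ being a smooth family of diffeomorphisms onto the leaves; otherwise the bookkeeping of base points via $\theta_s$ is exactly right.
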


\subsection{Causal curves}\label{sec:Causalcurves}

As it was mentioned in the introduction of section \ref{sec:Causality}, we will study the Legendrian isotopies in the space of null
geodesics $\mathcal{N}$ defined by causal curves.

First notice that the co-orientation can be defined by appointing the sign of $\mathbf{g}\left( J,\gamma ^{\prime
}\right) $ to the sign of $\langle J \rangle  \in T_{\gamma}\mathcal{N}$, where $\langle J \rangle =J \left( \mathrm{mod}~\gamma' \right)$ with $\gamma \in \mathcal{N}$ and $\mathbf{g}\in \mathcal{C}$.

By equations (\ref{eq-contact-struct}) and (\ref{eq-sky-tangent}), any sky $X=S(x)\in \Sigma$ is a Legendrian submanifold and, by diagram (\ref{diagram-charts}), diffeomorphic to $S_{x}=\lbrace \left[ u\right] :u\in \mathbb{N}^{+}_{x} \rbrace = \mathbb{PN}_{x} \simeq \mathbb{S}^{m-2}$. So, given a Legendrian isotopy $\{X_{s}\}_{s\in \left[ 0,1\right] }$ where $X_{s} $ is the sky of $x_{s}\in M$ for $s\in \left[ 0,1\right] $, we can consider a parametrization $F$, given by null directions, as
\[
F:S_{x_0}\times \left[ 0,1\right] \rightarrow \mathcal{N}
\]
since $\boldsymbol{\upgamma}(S_{x_0})=S(x_0)\subset \mathcal{N}$ is the sky of $x_0\in M$ and $\boldsymbol{\upgamma}$ is the submersion of diagram (\ref{diagram-charts}).

\begin{lemma}\label{lemma00300}
Let $\mu:\left[0,1\right]\rightarrow M$ be a curve and $F:S_{\mu(0)} \times \left[0,1\right] \rightarrow \mathcal{N}$ be a Legendrian isotopy such that $F\left(S_{\mu(0)} \times \lbrace s\rbrace \right)= S\left(\mu\left(s\right)\right)\in \Sigma$. Then $\mu$ is differentiable.
\end{lemma}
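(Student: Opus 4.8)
The plan is to reduce the differentiability of $\mu$ to that of the curve of skies $s\mapsto S(\mu(s))$ in $\Sigma$, and then to exhibit this curve as a composition of smooth maps built out of $F$. Since the sky map $S\colon M\to\Sigma$ is a diffeomorphism by theorem \ref{teo-sky-difeo}, we have $\mu=S^{-1}\circ c$ where $c(s)=S(\mu(s))=F_s(S_{\mu(0)})$. Hence it suffices to show that $c\colon[0,1]\to\Sigma$ is differentiable, and since differentiability is a local property, it is enough to verify this near each $s_0\in[0,1]$.

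First I would localize. Fix $s_0$ and, invoking theorem \ref{theorem1} and proposition \ref{differ}, choose a regular neighbourhood $U=S(V)=\Sigma(\mathcal{U})$ of $X_{s_0}=S(\mu(s_0))$, where $V\subset M$ is a relatively compact, globally hyperbolic, causally convex, normal open set and $\mathcal{U}\subset\mathcal{N}$ is open. By the definition of $\Sigma(\mathcal{U})$ we have $X_{s_0}=F_{s_0}(S_{\mu(0)})\subset\mathcal{U}$. Because $S_{\mu(0)}\simeq\mathbb{S}^{m-2}$ is compact, $F$ is continuous and $\mathcal{U}$ is open, a compactness argument yields $\delta>0$ with $F\bigl(S_{\mu(0)}\times(s_0-\delta,s_0+\delta)\bigr)\subset\mathcal{U}$; thus $S(\mu(s))=F_s(S_{\mu(0)})\subset\mathcal{U}$, i.e. $X_s\in U$ and $\mu(s)\in V$, for all $|s-s_0|<\delta$. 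I expect this to be the main obstacle, since it is precisely what allows us to work inside a single regular chart without assuming a priori that $\mu$ is continuous.

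Next I would produce a smooth lift of $c$ into $\widehat U$. Fix a point $\lambda_0\in S_{\mu(0)}$ and a nonzero $w\in T_{\lambda_0}S_{\mu(0)}$, realized as $\sigma'(0)$ for a smooth curve $\sigma$ in $S_{\mu(0)}$ with $\sigma(0)=\lambda_0$. Writing $\gamma_s=F(\lambda_0,s)$ and
\[
\Xi(s)=\left.\frac{\partial}{\partial t}\right|_{t=0}F(\sigma(t),s)=(F_s)_\ast w\in T_{\gamma_s}\mathcal{N},
\]
the smoothness of $F$ makes $\Xi\colon(s_0-\delta,s_0+\delta)\to T\mathcal{N}$ a smooth curve. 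Since each $F_s\colon S_{\mu(0)}\to S(\mu(s))$ is a diffeomorphism onto the sky, $(F_s)_\ast$ is injective with image $T_{\gamma_s}S(\mu(s))$; hence $\Xi(s)\in\widehat{T}_{\gamma_s}S(\mu(s))\subset\widehat U$ is nonzero. As $\widehat U$ is a regular submanifold of $\widehat{T}\mathcal{N}$ by theorem \ref{theorem1}, the curve $\Xi$ is in fact smooth as a map into $\widehat U$.

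Finally I would project. By the construction of the differentiable structure of $\Sigma$ (theorem \ref{teo-sky-difeo} together with proposition \ref{differ}), the leaf-space quotient $\widehat U\to\widehat U/\widehat{\mathcal{D}}=\widetilde U\simeq U\subset\Sigma$ is smooth, and it sends $\Xi(s)\in\widehat{T}_{\gamma_s}S(\mu(s))$ to its leaf $\widehat{T}S(\mu(s))$, that is, to $S(\mu(s))=c(s)$. Therefore $c$ is a composition of smooth maps on $(s_0-\delta,s_0+\delta)$ and hence differentiable there, so $\mu=S^{-1}\circ c$ is differentiable near $s_0$. Since $s_0\in[0,1]$ was arbitrary, $\mu$ is differentiable on $[0,1]$.
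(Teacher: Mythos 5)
Your proof is correct and follows essentially the same route as the paper's: both push forward a fixed tangent vector by $(F_s)_*$ to obtain a smooth curve in $\widehat{T}\mathcal{N}$ landing in the regular submanifold $\widehat{U}$ of theorem \ref{theorem1}, and then recover $\mu$ by projecting (your quotient $\widehat{U}\to\widehat{U}/\widehat{\mathcal{D}}\simeq U$ followed by $S^{-1}$ is exactly the coordinate $\mathbf{x}$ of the adapted chart used in the paper). The only difference is that you make explicit, via compactness of $S_{\mu(0)}$, the localization step that the paper states as ``$j(s)\in\widehat{U}$ for $s$ close to $s_0$'' without further justification.
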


\begin{proof}
Observe that, since $F$ is a Legendrian isotopy, then $F_{s}:S_{\mu(0)}\rightarrow S\left( \mu \left( s\right) \right) \subset \mathcal{N}$ given by $F_{s}\left( z\right) =F\left(z,s\right) $ is a diffeomorphism for any $s\in \left[ 0,1\right] $. 
Since $F$ and $F_{s}$ are smooth maps then, for $z_{0}\in S_{\mu(0)}$ and $w \in T_{z_{0}}S_{\mu(0)}$, the curve
\[
j\left( s\right) =\left( dF_{s}\right) _{z_{0}}\left( w \right) \in
T_{F\left( z_{0},s\right) }S\left( \mu \left( s\right) \right)
\]
is also differentiable in $\widehat{T}\mathcal{N}$. 
Take $s_{0}\in \left[0,1\right] $ and $U=S\left( V\right) $ be a regular open neighbourhood of $S\left(\mu \left( s_{0}\right)\right)\in \Sigma $.
Then $j\left( s\right) \in \widehat{U}$ for $s$ close to $s_0$ and then the restriction of $j$ to $\widehat{U}$ is differentiable for those $s$ close enough to $s_{0}$. 

If $\left( \widehat{U},\overline{\varphi }=\left( \mathbf{x},\mathbf{u},\mathbf{v}\right) \right) $ is the coordinate system (\ref{overlinevarphi}) of theorem \ref{theorem1} adapted to the leaves $\{\widehat{T}X\}$ with $\left( V,\varphi =\mathbf{x}\right) $ the corresponding coordinate chart in $M$, then $\mu \left( s\right) =\varphi ^{-1}\circ \mathbf{x}\left( j\left( s\right) \right) \in V$, therefore $\mu $ is differentiable.
\qed
\end{proof}

The following result will be used as a technical lemma. The converse statement appears in \cite[Lem. 6]{Ba14} but here we offer a simpler proof.

\begin{lemma}
\label{lemma00350} Let $M$ be a Lorentz manifold and $p\in M$. A non--zero vector $v\in T_p M$ is timelike past--directed (respectively, causal past--directed) if and only if $\mathbf{g}\left(u,v\right)> 0$ for all $u\in \mathbb{N}^{+}_{p}$ (respectively $\mathbf{g}\left(u,v\right)\geq 0$).
\end{lemma}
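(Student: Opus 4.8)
The plan is to reduce the whole statement to an elementary computation in a $\mathbf{g}$--orthonormal frame adapted to the time orientation. Fix $p\in M$ and set $\mathbf{E}_0=T_p/\sqrt{-\mathbf{g}(T,T)}$, which is future--directed timelike with $\mathbf{g}(\mathbf{E}_0,\mathbf{E}_0)=-1$; complete it to an orthonormal basis $\{\mathbf{E}_0,\mathbf{E}_1,\ldots,\mathbf{E}_{m-1}\}$ of $T_pM$ with $\mathbf{E}_1,\ldots,\mathbf{E}_{m-1}$ spacelike and $\mathbf{g}$--orthogonal to $T_p$. Since the sign of $\mathbf{g}(u,v)$ is unchanged under positive rescaling of $u$, every element of $\mathbb{N}^{+}_p$ is, up to a positive factor, of the form $u_{\mathbf a}=\mathbf{E}_0+\sum_{i=1}^{m-1}a^i\mathbf{E}_i$ with $\sum_i(a^i)^2=1$: indeed a future null vector $u^0\mathbf{E}_0+\mathbf u$ satisfies $|\mathbf u|=|u^0|$ and $u^0>0$, and dividing by $u^0$ yields such a representative. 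Thus $\{u_{\mathbf a}\}$, as $\mathbf a$ ranges over the Euclidean unit sphere of $\mathrm{span}\{\mathbf{E}_1,\ldots,\mathbf{E}_{m-1}\}$, parametrizes all null directions in $\mathbb{N}^{+}_p$.

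Writing $v=v^0\mathbf{E}_0+\mathbf w$ with $\mathbf w$ the spacelike component, a direct computation gives $\mathbf{g}(u_{\mathbf a},v)=-v^0+\langle\mathbf a,\mathbf w\rangle$, where $\langle\cdot,\cdot\rangle$ and $|\cdot|$ denote the Euclidean inner product and norm on $\mathrm{span}\{\mathbf{E}_1,\ldots,\mathbf{E}_{m-1}\}$. Minimizing the linear functional $\mathbf a\mapsto\langle\mathbf a,\mathbf w\rangle$ over the unit sphere (its infimum $-|\mathbf w|$ being attained at $\mathbf a=-\mathbf w/|\mathbf w|$ when $\mathbf w\neq 0$, and trivially when $\mathbf w=0$), I would record the equivalences
\begin{equation*}
\mathbf{g}(u,v)>0 \ \ \forall u\in\mathbb{N}^{+}_p \iff -v^0-|\mathbf w|>0 \iff v^0<-|\mathbf w|,
\end{equation*}
and likewise $\mathbf{g}(u,v)\geq 0$ for all $u\in\mathbb{N}^{+}_p$ if and only if $v^0\leq-|\mathbf w|$.

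It then remains to match these two scalar inequalities with the causal conditions. Since $\mathbf{g}(v,v)=-(v^0)^2+|\mathbf w|^2$ and $\mathbf{g}(v,T)=-v^0\sqrt{-\mathbf{g}(T,T)}$ has the opposite sign to $v^0$, the inequality $v^0<-|\mathbf w|$ is equivalent to $(v^0)^2>|\mathbf w|^2$ together with $v^0<0$, i.e. to $v$ being timelike and past--directed; this settles the timelike statement. For the causal statement, $v^0\leq-|\mathbf w|$ yields $(v^0)^2\geq|\mathbf w|^2$ (so $v$ is causal) and $v^0\leq 0$; the only place where the hypothesis $v\neq 0$ enters is to upgrade $v^0\leq 0$ to $v^0<0$, for if $v^0=0$ then $0\leq-|\mathbf w|$ forces $\mathbf w=0$ and hence $v=0$, which is excluded. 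The converse implications are read off the same inequalities.

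The computation itself is routine; I expect no substantial obstacle. The only delicate points are the boundary cases of the non--strict inequality: verifying that the infimum over null directions is genuinely attained, so that the quantifier ``for all $u$'' collapses to a single scalar condition, and handling the degenerate case $\mathbf w=0$ separately, together with the use of $v\neq 0$ to exclude $v=0$ in the causal equivalence.
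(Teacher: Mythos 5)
Your proof is correct and follows essentially the same route as the paper's: an orthonormal frame at $p$, parametrization of the future null directions $\mathbb{N}^{+}_p$ by the Euclidean unit sphere, and minimization of the resulting linear functional to collapse the quantifier over $u$ into a single scalar inequality on the components of $v$. The only difference is cosmetic: you read both implications off the same chain of equivalences, whereas the paper dispatches the direct implication by citing \cite[Lem.~5.26]{On83} and reserves the spherical-coordinate computation for the converse.
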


\begin{proof}
The direct statement is a consequence of \cite[Lem. 5.26]{On83} since the orthogonal complement of $[v]=\mathrm{span}\{v\}$ is a spacelike subspace separating $\mathbb{N}^{+}_{p}$ and $\mathbb{N}^{-}_{p}$.

To show the converse, recall that there exist coordinates $(x^0,\ldots , x^{m-1})$ such that the metric can be written by
\[
\mathbf{g}_p = -dx^0_p \otimes dx^0_p +dx^1_p \otimes dx^1_p + \cdots + dx^{m-1}_p \otimes dx^{m-1}_p
\]
at $p\in M$,  with $\left(\frac{\partial}{\partial x^0}\right)_p$ timelike future--directed. 
We can write $v=v_k \left(\frac{\partial}{\partial x^k}\right)_p$ and $u=\lambda \left(\frac{\partial}{\partial x^0}\right)_p + \lambda\sum_{k=1}^{m-1} Z_k(\overline{\theta}) \left(\frac{\partial}{\partial x^k}\right)_p$ where $\lambda>0$ and $\overline{\theta}=\left(\theta_1 ,\ldots , \theta_{m-2}\right)$ are the corresponding angles in the expressions of the spherical coordinates of the sphere $\mathbb{S}^{m-2}\subset \mathbb{R}^{m-1}$ of radius $r=1$ given by 
\[
\left\{
\begin{tabular}{ll}
$Z_1(\overline{\theta})=\cos\theta_1$ & \\
$Z_k(\overline{\theta})=\prod_{j=1}^{k-1} \sin \theta_j \cos\theta_k$ & , for $k=2,\ldots , m-2$ \\
$Z_{m-1}(\overline{\theta})=\prod_{j=1}^{m-2} \sin \theta_j$ & 
\end{tabular} 
\right.
\]

If $\mathbf{g}\left(u,v\right)>0$, then $-\lambda v_0 + \lambda v_1 Z_1 + \cdots + \lambda v_{m-1} Z_{m-1}>0$ and we have
\begin{equation}\label{eq-causal-past}
 v_0 <  v_1 Z_1 + \cdots + v_{m-1} Z_{m-1} = \overline{v}\cdot Z  \qquad \text{ for all } \overline{\theta} .
\end{equation}

Since $\overline{v}\cdot Z$ can be seen as the standard scalar product in $\mathbb{R}^{m-1}$ of the vectors $\overline{v}=(v_1,\ldots , v_{m-1})$ and $Z=(Z_1, \ldots , Z_{m-1})$, then $\overline{v}\cdot Z = \vert \overline{v} \vert \cdot \vert Z \vert \cdot \cos \alpha$ where $\alpha$ is the angle between $\overline{v}$ and $Z$, then 
\begin{equation}\label{eq-causal-past2}
\underset{\overline{\theta}}{\mathrm{min}}~\overline{v}\cdot Z = -\vert \overline{v} \vert =-\sqrt{v^2_1 + \ldots + v^2_{m-1}} \leq 0
\end{equation}
because $\vert Z \vert=1$. By equations (\ref{eq-causal-past}) and (\ref{eq-causal-past2}), we obtain that 
\[
\left\{
\begin{tabular}{l}
$v_0^2 > v^2_1 + \ldots + v^2_{m-1}$ \\
$v_0<0$ 
\end{tabular} 
\right.
\Rightarrow 
\left\{
\begin{tabular}{l}
$v$ is timelike \\
$v$ is past--directed
\end{tabular} 
\right.
\]

The causal case is shown analogously.
\qed
\end{proof}

The following proposition characterizes the causality of $M$ in terms of Legendrian isotopies of skies in $\mathcal{N}$.

\begin{proposition}\label{prop00300} 
A regular curve $\mu:\left[ a,b\right] \rightarrow M$ is causal past--directed (respectively causal
future--directed, timelike past--directed, timelike future--directed) if and only if $ \{S\left(\mu(s)\right)\}_{s\in [a,b]}$ is a non-negative (respectively
non-positive, positive, negative) Legendrian isotopy.
\end{proposition}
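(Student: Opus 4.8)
The plan is to exhibit a concrete parametrization of the family $\{S(\mu(s))\}$, compute the pullback of the contact form along it, and reduce the sign condition to Lemma \ref{lemma00350}. First recall that, by equations (\ref{eq-sky-tangent}) and (\ref{eq-contact-struct}), every sky $S(\mu(s))$ is a Legendrian submanifold, so only a smooth parametrization is needed in order to display a Legendrian isotopy. Fix an auxiliary metric $\mathbf{g}\in\mathcal{C}$ and, for each null direction $[w_0]\in\mathbb{PN}_{\mu(a)}=S_{\mu(a)}$, let $w(s)\in\mathbb{N}^{+}_{\mu(s)}$ be the parallel transport (with respect to $\mathbf{g}$) of a representative $w_0$ along $\mu$; since parallel transport is a linear isometry preserving time--orientation, $[w_0]\mapsto[w(s)]$ is a diffeomorphism $\mathbb{PN}_{\mu(a)}\to\mathbb{PN}_{\mu(s)}$. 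Setting $\mathbf{f}(s,\tau)=\exp_{\mu(s)}(\tau\, w(s))$ and $F([w_0],s)=\gamma_s\in\mathcal{N}$, where $\gamma_s(\tau)=\mathbf{f}(s,\tau)$, produces a smooth parametrization $F:S_{\mu(a)}\times[a,b]\to\mathcal{N}$ of $\{S(\mu(s))\}$ with each $F_s$ a diffeomorphism onto $S(\mu(s))$.

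Next I would compute $(F^{\ast}\alpha)(\partial/\partial s)$. For fixed $[w_0]$, the velocity $\frac{\partial F}{\partial s}\in T_{\gamma_s}\mathcal{N}$ is represented by the Jacobi field $J_s(\tau)=\frac{\partial\mathbf{f}}{\partial s}(s,\tau)$ of the geodesic variation $\mathbf{f}$. Applying Lemma \ref{lemmaDC92} at each $s$, with base curve $\mu$ and vector field $w$, gives $J_s(0)=\mu'(s)$, while by construction $\gamma_s(0)=\mu(s)$ and $\gamma_s'(0)=w(s)$. Since $\mathbf{g}(J_s,\gamma_s')$ is constant along $\gamma_s$ by (\ref{eq-J-gamma-constant}), evaluating the contact form (\ref{eq-alpha-contact}) at $\tau=0$ yields
\[
(F^{\ast}\alpha)\!\left(\tfrac{\partial}{\partial s}\right)\Big|_{([w_0],s)}=\alpha(\langle J_s\rangle)=\mathbf{g}\big(J_s(0),\gamma_s'(0)\big)=\mathbf{g}\big(\mu'(s),w(s)\big).
\]

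The conclusion then follows from Lemma \ref{lemma00350}. As $[w_0]$ ranges over $S_{\mu(a)}$, the vector $w(s)=\gamma_s'(0)$ ranges over a full set of representatives of $\mathbb{N}^{+}_{\mu(s)}$, one per null direction; and since the sign of $\mathbf{g}(\mu'(s),w(s))$ is unchanged under positive rescaling of $w(s)$, the quantity $(F^{\ast}\alpha)(\partial/\partial s)$ is $\geq 0$ (resp. $>0$) at every point of the isotopy lying over $s$ if and only if $\mathbf{g}(u,\mu'(s))\geq 0$ (resp. $>0$) for all $u\in\mathbb{N}^{+}_{\mu(s)}$. By Lemma \ref{lemma00350} this is equivalent to $\mu'(s)$ being causal (resp. timelike) past--directed. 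Quantifying over $s\in[a,b]$, the parametrization $F$ is non--negative (resp. positive) exactly when $\mu$ is causal (resp. timelike) past--directed; replacing $\mu'$ by $-\mu'$ interchanges past and future and reverses the inequalities, giving the non--positive (resp. negative) cases for future--directed curves. For the converse implications one invokes Lemma \ref{lemma00250}: the sign of the isotopy does not depend on the chosen parametrization, so the identity computed for $F$ detects the sign of any admissible parametrization, and ``isotopy non--negative'' forces $(F^{\ast}\alpha)(\partial/\partial s)\geq 0$ everywhere.

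The main obstacle I anticipate is bookkeeping rather than conceptual: verifying that $F$ is a genuine parametrization (in particular that $F([w_0],s)$ is a well--defined element of $\mathcal{N}$, namely the maximal null geodesic through $\mu(s)$ in the direction $[w(s)]$, and that $F_s$ is injective since distinct null directions at $\mu(s)$ determine distinct light rays), and ensuring that the reduction from ``the sign condition holds at every $[w_0]$'' to ``$\mathbf{g}(u,\mu'(s))\geq 0$ for all $u\in\mathbb{N}^{+}_{\mu(s)}$'' correctly matches the ``for all $u\in\mathbb{N}^{+}_p$'' quantifier of Lemma \ref{lemma00350}, using that the sign of $\mathbf{g}(\cdot,\cdot)$ against a future null cone depends only on the null direction.
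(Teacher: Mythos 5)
Your proposal is correct and follows essentially the same route as the paper's proof: parallel transport of null directions along $\mu$ to build the parametrization $F$, Lemma \ref{lemmaDC92} to identify $J_s(0)=\mu'(s)$, the constancy of $\mathbf{g}(J_s,\gamma_s')$ to evaluate $\alpha$ at $\tau=0$, and the reduction to Lemmas \ref{lemma00350} and \ref{lemma00250}. The bookkeeping points you flag (well-definedness of $F$ as a parametrization and the quantifier match with Lemma \ref{lemma00350}) are exactly the ones the paper handles explicitly, and they go through as you anticipate.
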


\begin{proof}

Fix an auxiliary metric $\mathbf{g} \in \mathcal{C}$ in $M$ and denotes by $\mathscr{P}:T_{\mu\left(0\right)}M\times\left[0,1\right]\rightarrow TM$
the parallel transport along $\mu$ with respect to $\mathbf{g}$ given by $\mathscr{P}\left(u,s\right)=u_s\in T_{\mu\left(s\right)}M$. Since $\mathscr{P}$ is differentiable and the map $\mathscr{P}_s:T_{\mu\left(0\right)}M\rightarrow T_{\mu\left(s\right)}M$ defined by $\mathscr{P}_s\left(u\right)=\mathscr{P}\left(u,s\right)$ is a linear isometry, then for any $s\in [0,1]$
\[
\mathbf{g}\left( u_{s},u_{s}\right) =\mathbf{g}\left( u,u\right) =0, \qquad u \in \mathbb{N}^{+}_{\mu(0)}
\]
hence $u_s \in \mathbb{N}^{+}_{\mu(s)}$ and $\mathscr{P}_{s}\left( \mathbb{N}^{+}_{\mu \left(0\right) }\right) =\mathbb{N}^{+}_{\mu \left( s\right) }$ for any metric $\mathbf{g}$ in the conformal structure $\mathcal{C}$. 

Now, consider the submersion $\boldsymbol{\upgamma}:\mathbb{PN}\rightarrow \mathcal{N}$ given by $\boldsymbol{\upgamma}\left([u]\right)=\gamma_{\left[u\right]}$ (see section \ref{sec:lightrays-struct-N}). By composition, the map $\boldsymbol{\upgamma}\circ \pi_{\mathbb{PN}}^{\mathbb{N}^{+}}\circ \mathscr{P}$
is differentiable and, since $\mathscr{P}_s$ is linear then 
\[
\pi_{\mathbb{PN}}^{\mathbb{N}^{+}}\circ \mathscr{P}(u)=\left[ \mathscr{P}_s (u) \right]=\left[ \mathscr{P}_s (\lambda u) \right]=\pi_{\mathbb{PN}}^{\mathbb{N}^{+}}\circ \mathscr{P}(\lambda u)\in \mathbb{PN}
\]
for all $\lambda\neq 0$, so it induces a map $F^\mu:S_{\mu(0)}\times \left[ 0,1\right] \rightarrow\mathcal{N}$ given by
\[
F^{\mu }\left( \left[ u\right] ,s\right) =\gamma _{\left[ u_{s}\right] }
\]
with $S_{\mu(0)}=\lbrace\left[ u\right] :u\in \mathbb{N}^{+}_{\mu \left(0\right) }\rbrace \subset \mathbb{PN}$.
So, for $s\in \left[ 0,1\right] $ we have
\begin{align*}
F^{\mu }\left( S_{\mu(0)}\times \{s\}\right) & =  \{F^{\mu }\left( \left[ u\right]
,s\right) \in \mathcal{N}:u\in \mathbb{N}^{+}_{\mu \left( 0\right) }\}
= \{\gamma _{\left[ u_{s}\right] }\in \mathcal{N}:u\in \mathbb{N}^{+}_{\mu
\left( 0\right) }\}= \\
& = \{\gamma _{\left[ v\right] }\in \mathcal{N}:v\in \mathbb{N}^{+}_{\mu \left(
s\right) }\}
= S\left( \mu \left( s\right) \right)
\end{align*}%
Hence, $F^{\mu }$ is a parametrization of the Legendrian isotopy $ \{S\left(\mu(s)\right)\}_{s\in [a,b]}$.

Now, notice that, since $F^{\mu}\left(\left[u\right],s\right)=\gamma_{\left[u_s\right]}$, we can use the exponential map to give an affine parametrization to all light rays in the variation $\gamma_{\left[u_s\right]}$, so we write
\[
F^{\mu}\left(\left[u\right],s\right)\left(t\right)=\gamma_{\left[u_s\right]}\left(t\right)=\mathrm{exp}_{\mu\left(s\right)}\left(tu_s\right)  .
\]
By lemma \ref{lemmaDC92} and since $u_s$ is the parallel transport of $u$ along $\mu$, we have that the Jacobi field $J_{s}\left(t\right)$ along $\gamma_{\left[u_s\right]}$ verifies
\[
J_{s}\left(0\right)=\mu ^{\prime }\left(s\right), \qquad J^{\prime}_{s}\left(0\right)=\left.\frac{D}{ds}\right|_{s}u_s =0  .
\]
Hence, since
\[
F_{*}^{\mu}\left(\frac{\partial}{\partial s}\right)_{\left(\left[u\right],s\right)}= \frac{\partial F^{\mu}}{\partial s}\left(\left[u\right],s\right)=\langle J_{s}\rangle
\]
we have that
\begin{align*}
\alpha\left(F_{*}^{\mu}\left(\frac{\partial}{\partial s}\right)
\right)_{\left(\left[u\right],s\right)} & = \alpha\left(
\langle J_{s}\rangle\right)=\mathbf{g}\left(J_{s}\left(t\right),\gamma
^{\prime }_{\left[u_{s}\right]}\left(t\right)\right)= \\
& =\mathbf{g}\left(J_{s}\left(0\right),\gamma ^{\prime }_{\left[u_{s}\right]}\left(0\right)\right) =\mathbf{g}\left(\mu ^{\prime
}\left(s\right),u_{s}\right)  .
\end{align*}
Therefore, the statement follows from the equality  
\begin{equation}\label{eq-causal-isotopy}
\alpha\left(F_{*}^{\mu}\left(\frac{\partial}{\partial s}\right)
\right)_{\left(\left[u\right],s\right)}  = \mathbf{g}\left(\mu ^{\prime}\left(s\right),u_{s}\right) 
\end{equation}
in virtue of lemmas \ref{lemma00250} and \ref{lemma00350}.
\qed
\end{proof}

\begin{remark}
Recall that a causal curve $\mu :\left[ a,b\right] \rightarrow M$ is assumed to be continuous, piecewise differentiable and time--oriented. Even being $\mu$ smooth, it may not be regular at some $s=s_0$, , that is, $\mu'(s_0)=0$. In this case, by equation (\ref{eq-causal-isotopy}), we have $\alpha\left(F_{*}^{\mu}\left(\frac{\partial}{\partial s}\right)
\right)_{\left(\left[u\right],s_0\right)} =0$ in spite of $\mu$ could be timelike.
\end{remark}

Therefore, if $\mu$ is not regular, the statement of proposition \ref{prop00300} only can be enunciated as: 

\begin{corollary}
A smooth curve $\mu=\mu(s)$ is causal past--directed (respectively future--directed) if and only if $ \{S\left(\mu(s)\right)\}$ is a non-negative (respectively non-positive) Legendrian isotopy.
\end{corollary}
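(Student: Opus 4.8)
The plan is to reduce the statement to Proposition \ref{prop00300}, exploiting that the construction carried out there — the parallel transport $\mathscr{P}$, the induced parametrization $F^{\mu}$, and the key identity (\ref{eq-causal-isotopy}) — never actually used regularity of $\mu$; the hypothesis $\mu'(s)\neq 0$ entered only through Lemma \ref{lemma00350}, whose conclusion sharpens the equivalence to the strict (timelike $\leftrightarrow$ positive) case. Thus the idea is to rerun that argument for an arbitrary smooth $\mu$ and then verify that at a critical point $\mu'(s_0)=0$ the \emph{non--strict} characterisation still goes through, even though the strict one cannot — which is precisely what the preceding remark records.

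First I would recall that $F^{\mu}\colon S_{\mu(0)}\times[0,1]\to\mathcal N$, built from the parallel transport of null directions along $\mu$, is well defined and smooth for any smooth base curve, and that each $F^{\mu}_{s}$ remains a diffeomorphism of skies (parallel transport being a linear isometry), so $\{S(\mu(s))\}$ is genuinely a Legendrian isotopy regardless of regularity. It satisfies
\[
\alpha\!\left(F_{*}^{\mu}\!\left(\tfrac{\partial}{\partial s}\right)\right)_{([u],s)} = \mathbf{g}\!\left(\mu'(s),u_{s}\right)
\]
exactly as in (\ref{eq-causal-isotopy}), with $u_{s}\in\mathbb N^{+}_{\mu(s)}$. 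This reduces the whole corollary to comparing, pointwise in $s$, the sign of $\mathbf g(\mu'(s),u_{s})$ with the causal character of $\mu'(s)$.

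For that pointwise comparison I would split on whether $\mu'(s)$ vanishes. Where $\mu'(s)\neq 0$, Lemma \ref{lemma00350} gives directly that $\mu'(s)$ is causal past--directed if and only if $\mathbf g(u_{s},\mu'(s))\ge 0$ for every $u_{s}\in\mathbb N^{+}_{\mu(s)}$, that is, if and only if the isotopy quantity above is $\ge 0$ at $s$. At a critical point $\mu'(s_0)=0$ both sides hold trivially: the zero vector is causal (indeed null), and $\mathbf g(0,u_{s_0})=0\ge 0$. Since the non--negativity of $\alpha(F^{\mu}_{*}\partial_s)$ over all $[u]$ and all $s$ is, by Lemma \ref{lemma00250}, intrinsic to the isotopy and independent of the chosen parametrisation, gathering both cases over $s\in[a,b]$ yields that $\mu$ is causal past--directed exactly when $\{S(\mu(s))\}$ is non--negative. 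The future--directed case is identical with all inequalities reversed.

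The one point to get right — and the reason the four--fold statement of Proposition \ref{prop00300} must collapse to a two--fold one here — is the behaviour at critical points. If $\mu$ were timelike yet had $\mu'(s_0)=0$, identity (\ref{eq-causal-isotopy}) forces $\alpha(F^{\mu}_{*}\partial_s)=0$ there, so the isotopy cannot be strictly positive; hence only the non--strict ``causal $\leftrightarrow$ non--negative'' equivalence survives for merely smooth curves. One must also fix the time--orientation convention: although the zero vector is causal it is neither future-- nor past--directed, so ``causal past--directed'' for a non--regular smooth curve should be read as the requirement that $\mu'(s)$ be past--directed and causal at every $s$ with $\mu'(s)\neq 0$. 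With this reading the pointwise equivalence closes the argument.
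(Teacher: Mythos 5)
Your proposal is correct and follows essentially the same route as the paper: the paper derives this corollary directly from the remark preceding it, observing that the construction and identity (\ref{eq-causal-isotopy}) from the proof of Proposition \ref{prop00300} remain valid for any smooth $\mu$, and that at a critical point $\mu'(s_0)=0$ the quantity $\alpha\bigl(F^{\mu}_{*}\partial_s\bigr)$ vanishes, so only the non-strict equivalence survives. Your pointwise case split via Lemma \ref{lemma00350} and your remark on the time-orientation convention at critical points simply make that implicit argument explicit.
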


These results characterize the causality of $M$, that is 
\begin{equation}\label{eq-causal-struct}
q\in J^{+}\left(p\right) \Longleftrightarrow S\left(p\right)\leq_{\Sigma} S\left(q\right)
\end{equation}
where $\leq_{\Sigma}$ denotes the existence of a non--positive Legendrian isotopy consisting of skies.

In \cite{Ch16}, an analogous result is stated under the hypotheses of $M$ simply connected and globally hyperbolic: $q\in J^{+}\left(p\right)$ if and only if there is a non--positive Legendrian isotopy (not necessarily consisting of skies) connecting $S\left(p\right)$ and $S\left(q\right)$.

\subsection{Twisted null curves}\label{sec:Twisted}
 
A key element to show unnecessary the condition of non-refocussing is the notion of twisted null curve. These curves, roughly speaking, are null curves non-geodesic at any point. The precise definition is as follows. 

\begin{definition}\label{def-twisted-null} 
A continuous curve $\mu:\left[a,b\right]\rightarrow M$ is called a \emph{twisted null curve} if 
\begin{enumerate}
\item[i.] $\left. \mu \right|_{\left(a,b\right)}$ is differentiable.
\item[ii.] $\mathbf{g}\left(\mu'\left(s\right),\mu'\left(s\right)\right)=0$ for all $s\in \left(a,b\right)$.
\item[iii.] $\mu'\left(s\right)$ and $\frac{D\mu'}{ds}\left(s\right)$ are linearly independent for all $s\in \left(a,b\right)$.
\end{enumerate}

If $\mu:\left[a,b\right]\rightarrow M$ is continuous and there exists a partition $a=s_0 < s_1 <\ldots < s_k = b$ such that $\left. \mu \right|_{\left(s_{i-1},s_i\right)}$ is a twisted null curve for every $i=1,\ldots,k$, then $\mu$ is called a \emph{piecewise twisted null curve}. 
Moreover, we say that $\mu$ is future--directed (respectively past--directed) if $\mu\mid_{\left(s_{i-1},s_i\right)}$ is future--directed (respectively past--directed) for all $i = 1,\ldots,k$.  
\end{definition}

Observe that, in the definition of twisted null curve, we are considering that $\mu$ contains its endpoints because the objective is to connect points with this kind of curves. A general definition of these curves without endpoints can also be used for other purposes. 

It is clear that these curves $\mu$ can only exist when $\mathrm{dim}(M)\geq 3$ because, if $M$ is $2$--dimensional, $\mu$ could not be twisted so that $\mu'$ remains null. According to this, we state the following result.

\begin{theorem}\textbf{(Twisted null curve theorem)}\label{mu-teorema}
Let $p,q\in M$ such that $q\in I^+(p)$, then there exists a future--directed piecewise twisted null curve $\mu$ joining $p$ to $q$.
\end{theorem}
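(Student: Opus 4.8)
The plan is to reduce the global statement to a purely local construction and then concatenate. Since $q\in I^+(p)$, fix a future--directed timelike curve $\sigma:[0,1]\rightarrow M$ from $p$ to $q$ and an auxiliary metric $\mathbf{g}\in\mathcal{C}$. Using strong causality and Proposition \ref{prop-Mingu-Sanch}, I would cover the compact image $\sigma([0,1])$ by finitely many normal, causally convex neighbourhoods and pick a partition $0=s_0<s_1<\dots<s_N=1$ so that each arc $\sigma([s_{i-1},s_i])$ lies in one such neighbourhood $V_i$. Writing $z_i=\sigma(s_i)$, each $z_i$ lies in the chronological future of $z_{i-1}$ \emph{inside} the convex normal neighbourhood $V_i$, so $z_{i-1}$ and $z_i$ are joined there by a future--directed timelike geodesic. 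If on each $V_i$ one can join $z_{i-1}$ to $z_i$ by a future--directed twisted null curve, their concatenation is exactly a future--directed piecewise twisted null curve from $p$ to $q$, the corners at the $z_i$ being permitted by Definition \ref{def-twisted-null}.

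The heart of the matter is therefore local: given $a\in M$ and $b\in I^+(a)$ sufficiently close, build a future--directed twisted null curve from $a$ to $b$. First I record the flat model, which needs $m\geq 3$. In $\mathbb{M}^3$ with coordinates $(t,x,y)$ and $\tau>0$, the curve
\[
\mu(s)=\Bigl(s,\;\tfrac{\tau}{2\pi}\sin\tfrac{2\pi s}{\tau},\;\tfrac{\tau}{2\pi}\bigl(1-\cos\tfrac{2\pi s}{\tau}\bigr)\Bigr),\qquad s\in[0,\tau],
\]
has $\mu'(s)=(1,\cos\frac{2\pi s}{\tau},\sin\frac{2\pi s}{\tau})$, so $\mathbf{g}(\mu',\mu')=0$ and $\mathbf{g}(\mu',\partial_t)=-1<0$: it is null and future--directed. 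Its acceleration $\frac{D\mu'}{ds}=(0,-\frac{2\pi}{\tau}\sin\frac{2\pi s}{\tau},\frac{2\pi}{\tau}\cos\frac{2\pi s}{\tau})$ is spacelike and nowhere proportional to $\mu'$, so $\mu$ is twisted, while $\mu(0)=(0,0,0)$ and $\mu(\tau)=(\tau,0,0)$. Since affine isometries preserve $\mathbf{g}$, the connection and time--orientation (hence null directions and the twisting condition), homogeneity of $\mathbb{M}^m$ together with the transitive action of the Lorentz group on future timelike directions yields such a curve between any $a$ and any $b\in I^+(a)$, working in a $(t,x,y)$--slice.

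For a general $M$ I would transplant this into normal coordinates on $V_i$ centred at $a$, where $\mathbf{g}$ agrees with the Minkowski metric $\eta$ at $a$. Rather than prescribing $\mu$ outright, I prescribe a rotating spatial velocity (the spatial part of the model, with a small adjustable constant drift and a free rotation axis) and solve the scalar quadratic $\mathbf{g}(\mu',\mu')=0$ for the future value of $\dot t$; this yields an \emph{exactly} null, future--directed curve whose angular motion is nonzero, so that $\mu'$ and $\frac{D\mu'}{ds}$ stay linearly independent, twisting being an open condition that holds in the flat limit. The endpoint is a smooth function $E(\mathbf{c})$ of finitely many controls $\mathbf{c}$ (angular speed, number of loops, rotation plane, drift). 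In the flat model $E$ sweeps out a full neighbourhood of the time direction as the drift and axis vary, so $dE$ is onto the timelike directions; by smooth dependence of ODE solutions on the metric and the implicit function theorem, $E$ remains a submersion for $V_i$ small, hence surjective onto a neighbourhood of $a$ in $I^+(a)$. Refining the partition so each $z_i$ is within that neighbourhood of $z_{i-1}$ supplies all the local curves.

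The hard part will be precisely this last local step: in a curved metric the explicit helix is no longer null, and one must simultaneously (i) restore the null condition exactly, (ii) preserve the twisting, and (iii) hit the prescribed endpoint $b$. Items (i)–(ii) are robust, being respectively an algebraic normalization and an open nondegeneracy condition stable under small perturbations of $\eta$; the genuine work is (iii), the endpoint--matching, which I expect to settle by the submersion/implicit--function argument above, using the freedom to refine the partition so that only short displacements must be realized. I would also keep an eye on the regularity bookkeeping at the junctions $z_i$, where Definition \ref{def-twisted-null} asks only for continuity, so the corners produced by concatenation are harmless.
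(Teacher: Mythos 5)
Your global reduction is exactly the one the paper uses: replace the timelike curve from $p$ to $q$ by a future--directed broken timelike geodesic via a finite cover by causally convex normal neighbourhoods, and then solve the problem locally on each segment, the corners being harmless by Definition \ref{def-twisted-null}. Your flat local model (the null helix in a $(t,x,y)$--slice of $\mathbb{M}^3$, first done in dimension $3$ and then pushed into a $3$--dimensional submanifold for $m\geq 3$) is also the same idea as in \cite[Thm.~9]{Ba15}. Where you genuinely diverge is in how the local construction is transplanted to a curved metric and, above all, in how the endpoint is hit. The paper works in a \emph{synchronous} coordinate system adapted to the timelike geodesic segment $\lambda$ joining $z_{i-1}$ to $z_i$, so that the metric takes the form $-dt\otimes dt+h_{ij}(t,x)\,dx^i\otimes dx^j$ and the two points to be joined have the \emph{same spatial coordinates}; one then prescribes a closed spatial loop, solves $\dot t=\sqrt{h_{ij}\dot x^i\dot x^j}$ for the time component (which automatically gives an exactly null, future--directed, twisted curve), and matches the endpoint by adjusting a single parameter (the size of the loop), i.e.\ by a one--dimensional intermediate--value shooting argument, since the spatial endpoint is matched identically by construction. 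Your normal--coordinate version instead leaves all $m$ components of the endpoint free and must recover them through a multi--parameter endpoint map and an implicit--function/submersion argument, which is workable (the flat endpoint map is indeed a submersion onto the uniformly timelike displacements, and refining the partition keeps the directions $z_{i-1}\to z_i$ bounded away from the null cone, where the argument would degenerate) but is precisely the part you flag as the genuine work. The synchronous gauge is the device that makes that work essentially disappear; if you carry out your version, the uniformity of the submersion estimate as the segments shrink is the point to nail down.
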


\begin{proof}[Sketch of the proof]
The details of the demonstration are in \cite[Thm. 9]{Ba15}. The first step is to show that, for the $3$--dimensional case, it is possible to connect locally two points lying in a future--directed timelike geodesic $\lambda$ by a future--directed twisted null curve. It is specially helpful to consider a synchronous coordinate system in a neighbourhood of $\lambda$ (see \cite[Def. 7.13]{Pe72} or \cite[Sec. 99]{La71}). Next, it is possible to generalize the previous step for general dimension $m\geq 3$ if we apply it to some $3$--dimensional submanifold. Then, if $q\in I^{+}(p)$ then there is a timelike curve $\beta$ with endpoints at $p$ and $q$. If we consider a finite covering $\{W_i\}$ of the timelike segment $\beta$ such that each $W_i\subset M$ is causally convex, then every pair of points in $\beta\cap W_i$ can be connected by a timelike geodesic segment. Therefore, by gluing a finite amount of these timelike geodesic segment, we obtain a future--directed broken timelike geodesic. Finally, we can apply the previous steps to the resultant curve to conclude the proof.
\qed
\end{proof}

The theorem \ref{mu-teorema} is a result independent of the space of light rays but, since twisted null curves can be described in terms of $\mathcal{N}$ and $\Sigma$, it can be written in the context of section \ref{sec:Causalcurves}. 

In sections \ref{sec:Topology} and \ref{sec:Reconstruction}, we have introduced some subsets of $T\mathcal{N}$ such that their vectors are tangent to skies, for example,  in theorem \ref{theorem1} we have shown that $\widehat{U}=\bigcup_{X\in U}\widehat{T}X$ is a regular submanifold of $T\mathcal{N}$ where $V\subset M$ is a causally convex, normal, globally hyperbolic open set and $U=S(V)\subset \Sigma$.
This kind of vectors can characterize twisted null curves. 

\begin{definition}\label{def-celestial}
A non-zero tangent vector $\langle J \rangle\in \widehat{T}_\gamma\mathcal{N}$ will be called a \emph{celestial vector} if there exists a sky $X\in \Sigma$ such that $\langle J \rangle \in \widehat{T}_\gamma X$, where $\widehat{T}X=TX-\{0\}$ as in (\ref{eq-widehat-T}).  Consequently, a smooth curve $\Gamma \colon I \to \mathcal{N}$ is called a \emph{celestial curve} if $\Gamma'(s)$ is a celestial vector for all $s \in I$.
\end{definition}

The conditions to build a variation of light rays in $M$ defining a celestial curve in $\mathcal{N}$ are stated in the following proposition.

\begin{proposition}\label{proposition4.3} 
If $\Gamma :\left[ 0,1\right]\rightarrow \mathcal{N}$ is a celestial curve such that $\Gamma \left( s\right) =\gamma _{s}\in \mathcal{N}$, then there exists a null curve $\mu :\left[ 0,1\right] \rightarrow M$ such that $\gamma_{s}\left( \tau \right) =\exp _{\mu \left( s\right) }\left( \tau \sigma\left( s\right) \right) $ where $\sigma \left( s\right) \in \mathbb{N}^{+}_{\mu\left( s\right) }$ is a smooth curve proportional to $\mu ^{\prime}\left( s\right) $ wherever $\mu $ is regular.
\end{proposition}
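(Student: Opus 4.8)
The plan is to extract both the curve $\mu$ and its velocity field $\sigma$ directly from the celestial data carried by $\Gamma$, and then to identify $\mu'(s)$ with the initial value of the Jacobi field representing $\Gamma'(s)$, which by the tangency to a sky must be proportional to $\gamma_s'$.

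First I would define $\mu$. For each $s$, the vector $\Gamma'(s)$ is celestial, hence tangent at $\gamma_s$ to some sky; by the light non--conjugate property (Remark \ref{remark-light-non-conjugate}, valid on normal neighbourhoods and here extended to $M$) this sky is unique, since $\Gamma'(s)\in T_{\gamma_s}X\cap T_{\gamma_s}Y$ with $\Gamma'(s)\neq 0$ forces $X=Y$. Writing that sky as $S(\mu(s))$ and using that the sky map $S$ is a diffeomorphism (Theorem \ref{teo-sky-difeo}) defines $\mu(s)\in M$ unambiguously. To get smoothness I would argue locally, exactly as in Lemma \ref{lemma00300}: around a fixed $s_0$ choose a relatively compact, globally hyperbolic, causally convex, normal neighbourhood $V$ of $\mu(s_0)$ and the adapted chart $\overline{\varphi}=(\mathbf{x},\mathbf{u},\mathbf{v})$ on $\widehat{U}$, with $U=S(V)$, provided by Theorem \ref{theorem1}. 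Since $\Gamma$ is smooth, $\Gamma'$ is a smooth curve in $\widehat{T}\mathcal{N}$ with $\Gamma'(s_0)\in\widehat{U}$, so $\Gamma'(s)\in\widehat{U}$ for $s$ near $s_0$ and $\mu(s)=\varphi^{-1}(\mathbf{x}(\Gamma'(s)))$ is smooth. The component $\mathbf{u}(\Gamma'(s))$ records the null direction of $\gamma_s$ at $\mu(s)$, so it determines a future--directed lightlike vector $\sigma(s)=\mathbf{E}_1(\mu(s))+\sum_{j\geq 2}u^j(s)\,\mathbf{E}_j(\mu(s))\in\mathbb{N}^{+}_{\mu(s)}$ (notation of Remark \ref{remark-carta-U-hat}), which is again smooth in $s$.

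Next I would fix the parametrization of each light ray by setting $\mathbf{f}(s,\tau)=\exp_{\mu(s)}(\tau\sigma(s))$. Because $\gamma_s$ is the light ray through $\mu(s)$ with null direction $[\sigma(s)]$, the map $\tau\mapsto\mathbf{f}(s,\tau)$ is precisely $\gamma_s$ with $\gamma_s(0)=\mu(s)$ and $\gamma_s'(0)=\sigma(s)$, which is the asserted identity $\gamma_s(\tau)=\exp_{\mu(s)}(\tau\sigma(s))$. Smoothness of $\mu$, $\sigma$ and of $\exp$ makes $\mathbf{f}$ a smooth variation through null geodesics, and since $s\mapsto\mathbf{f}(s,\cdot)=\gamma_s$ coincides with $\Gamma$, its variational field $J_s(\tau)=\frac{\partial\mathbf{f}}{\partial s}(s,\tau)$ represents $\Gamma'(s)$, that is $\langle J_s\rangle=\Gamma'(s)$.

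Finally I would prove nullity of $\mu$ and the proportionality on the regular locus. Evaluating the variational field at $\tau=0$ gives $J_s(0)=\frac{\partial\mathbf{f}}{\partial s}(s,0)=\mu'(s)$ (equivalently Lemma \ref{lemmaDC92} with $\lambda=\mu$, $W=\sigma$). On the other hand $\Gamma'(s)=\langle J_s\rangle$ is tangent to $S(\mu(s))$ and $\gamma_s(0)=\mu(s)$, so by the description \eqref{eq-sky-tangent} of tangent spaces of skies we have $J_s(0)=0\ (\mathrm{mod}\ \gamma_s')$, i.e. $\mu'(s)=c(s)\,\sigma(s)$ for some function $c$. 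Hence $\mathbf{g}(\mu'(s),\mu'(s))=c(s)^2\,\mathbf{g}(\sigma(s),\sigma(s))=0$, so $\mu$ is a null curve, and wherever $\mu$ is regular $\mu'(s)\neq 0$ forces $c(s)\neq 0$, whence $\sigma(s)=c(s)^{-1}\mu'(s)$ is proportional to $\mu'(s)$. I expect the main obstacle to be precisely the smoothness of $\mu$ and $\sigma$: this is where the regular chart of Theorem \ref{theorem1} and the uniqueness supplied by light non--conjugacy are essential, since without them the pointwise assignment $s\mapsto(\mu(s),\sigma(s))$ would be defined only set--theoretically.
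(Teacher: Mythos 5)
Your proposal is correct and follows essentially the same route as the paper's proof: both read $\mu$ and $\sigma$ off the adapted coordinates $(\mathbf{x},\mathbf{u},\mathbf{v})$ of the regular chart from Theorem \ref{theorem1}, form the variation $\mathbf{f}(s,\tau)=\exp_{\mu(s)}(\tau\sigma(s))$, and apply Lemma \ref{lemmaDC92} together with the sky--tangency condition (\ref{eq-sky-tangent}) to get $\mu'(s)$ proportional to $\sigma(s)$. The only differences are cosmetic — you make the pointwise uniqueness via light non--conjugacy and the nullity computation $\mathbf{g}(\mu',\mu')=c^2\,\mathbf{g}(\sigma,\sigma)=0$ explicit, where the paper leaves them implicit and closes with the compactness argument for extending to all of $[0,1]$.
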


\begin{proof}
Let $\Gamma \colon [0,1] \to \mathcal{N}$ be a celestial curve with $\Gamma (s) = \gamma_s$. Consider a local chart $(\widehat{U}, \overline{\varphi})$, with $\overline{\varphi} = (\mathbf{x}, \mathbf{u}, \mathbf{v})$ as in (\ref{overlinevarphi}) at some point $\Gamma'(z)\in T_{\gamma_{z}} S(\gamma_{z}(t))$ for $z \in [0,1]$ and $t\in \mathbb{R}$. We can take $(V,\varphi)$ the coordinate chart in $V\subset M$ such that $\varphi=\mathbf{x}$ according to remark \ref{remark-carta-U-hat}.
Consider a neighbourhood $I\subset [0,1]$ of $z$ such that $\Gamma'(s) \in \widehat{U}$ for all $s \in I$.  Then
\[
\overline{\varphi} (\Gamma'(s)) = (\mathbf{x}(\Gamma'(s)), \mathbf{u}(\Gamma'(s)), \mathbf{v}(\Gamma'(s))) 
\]
is a smooth curve.   Notice that the curve $\mu (s) = \varphi^{-1} \circ \mathbf{x}(\Gamma'(s)) \in M$ is also smooth and describes the points to whose skies $\Gamma(s)$ are tangent. Then, the coordinates $\mathbf{u}$ of $\Gamma'(s)$ permit to define the smooth curve $\sigma\subset \mathbb{N}^{+}$ such that if $\mathbf{u}=\left(u^i\right)$ then
\[
\sigma (s) = \mathbf{E}_1(\mu(s)) + u^2(\Gamma'(s)) \mathbf{E}_2(\mu(s)) + \cdots + u^m(\Gamma'(s)) \mathbf{E}_m (\mu(s)) \in \mathbb{N}^{+}_{\mu(s)} \, .
\]
where $\{\mathbf{E}_1, \ldots , \mathbf{E}_m \}$ is the orthonormal frame associated to the local chart $(\widehat{U}, \overline{\varphi})$. Hence, the geodesic variation $\mathbf{f}(s,\tau ) = \exp_{\mu(s)} (\tau \sigma (s)) = \overline{\gamma}_s (\tau) $ defines the curve $\Gamma (s)$.

By lemma \ref{lemmaDC92}, the Jacobi field $\overline{J}_s$ along $\overline{\gamma}_s$ of the variation $\mathbf{f}$ satisfies that $\overline{J}_s(0) = \mu'(s)$ and, since $\Gamma'(s)\in T_{\Gamma(s)}S\left(\mu(s)\right)$, then  $\overline{J}_s(0) = \alpha_s \overline{\gamma}'_s(0)$ for some $\alpha_s \in \mathbb{R}$.  Then, $\mu'$ must be proportional to $\overline{\gamma}'_s(0)$, therefore also proportional to $\sigma(s)$.

The extension of $\mu$ and $\sigma$ to the interval $[0,1]$ follows from the compactness of $\Gamma$. 
\qed
\end{proof}

It is proved in \cite[Cor. 7]{Ba15} that, given a celestial curve $\Gamma:\left[ 0,1\right]\rightarrow \mathcal{N}$ such that $\Gamma'\left(s_0\right)\in \widehat{T}S\left(p_0\right)$, then the curve $\mu: \left[ 0,1\right]\rightarrow M$ of the previous proposition verifying $\mu\left(s_0\right)=p_0\in M$ is unique. The proof can be obtained by \emph{reductio ad absurdum}, assuming there are two such curves $\mu_1,\mu_2$ and showing that the set $A=\left\{s\in \left[0,1\right]:\mu_1\left(s\right)=\mu_2\left(s\right)  \right\}$ is closed, open and not empty, therefore $A=[0,1]$.

The unique curve $\mu$ (in the sense of proposition \ref{proposition4.3}) passing by $p = S^{-1}(X)$ of a given celestial curve $\Gamma$ will  be called the \emph{dust} or \emph{trail} of $\Gamma$ by $X$ and denoted by $\mu_{X}^\Gamma$.   
The following proposition shows that the dust of a celestial curve is a twisted null curve. See \cite[Lem. 8]{Ba15} for proof.

\begin{proposition}\label{mu-lemma}
Let $\Gamma:\left[0,1\right]\rightarrow \mathcal{N}$ be a celestial curve such that $\Gamma'\left(0\right)\in \widehat{T}X_0$ with $X_0\in \Sigma$. 
Then there exists a unique curve $\chi_{X_0}^{\Gamma}\colon \left[0,1\right]\rightarrow \Sigma$, continuous in Low's topology, verifying  $\chi_{X_0}^{\Gamma}\left(0\right)=X_0$ and $\Gamma'\left(s\right)\in \widehat{T}\chi_{X_0}^{\Gamma}\left(s\right)$. Moreover, the dust curve $\mu^{\Gamma}_{X_0}$ is a piecewise twisted null curve in $M$ along the image of $S^{-1} \circ \chi_{X_0}^{\Gamma}$.

Conversely, given a regular twisted null curve $\mu\colon \left[0,1\right]\rightarrow M$ such that $\mu\left(0\right)=x_0=S^{-1}\left(X_0\right)$ and $\mu'(0) \neq 0 \neq \mu'(1)$, then the variation of null geodesics  
\[
f\left(s,t\right)=\mathrm{exp}_{\mu\left(s\right)}\left(t\mu'\left(s\right)\right)=\left.\Gamma\left(s\right)\right|_{t}
\]
with $s\in\left[0,1\right]$, defines a celestial curve $\Gamma:\left[0,1\right]\rightarrow \mathcal{N}$ such that $\Gamma'\left(0\right)\in \widehat{T}X_0$ and $\chi_{X_0}^{\Gamma}\left(s\right)=S\left(\mu\left(s\right)\right)$.
\end{proposition}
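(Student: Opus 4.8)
The plan is to prove the two implications separately, in both directions exploiting the identification $T_{\gamma}\mathcal{N}\cong \mathcal{J}_L(\gamma)/\mathcal{J}_0(\gamma)$ together with Lemma \ref{lemmaDC92}, which records the initial data of the variational Jacobi field of an exponential variation. For the forward implication I would first apply Proposition \ref{proposition4.3} to the celestial curve $\Gamma$, obtaining a null curve $\mu$ with $\gamma_s(\tau)=\exp_{\mu(s)}(\tau\sigma(s))$, where $\sigma(s)\in\mathbb{N}^{+}_{\mu(s)}$ is smooth and proportional to $\mu'(s)$ wherever $\mu$ is regular. Normalizing by $\mu(0)=x_0$ and invoking the uniqueness of the dust curve from \cite[Cor. 7]{Ba15}, I would set $\chi_{X_0}^{\Gamma}:=S\circ\mu$. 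Since $\mu$ is continuous and $S$ is a homeomorphism in Low's topology (Proposition \ref{prop-sky-homeo}), the map $\chi_{X_0}^{\Gamma}$ is continuous, satisfies $\chi_{X_0}^{\Gamma}(0)=X_0$, and the tangency $\Gamma'(s)\in\widehat{T}\chi_{X_0}^{\Gamma}(s)$ is exactly the defining property that $\Gamma$ is celestial with dust $\mu$; uniqueness of $\chi_{X_0}^{\Gamma}$ then follows from uniqueness of $\mu$ and injectivity of $S$.

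The heart of the forward implication is showing that $\mu$ is piecewise twisted null. Conditions (i)--(ii) of Definition \ref{def-twisted-null} are immediate since $\mu$ is smooth and null, so only the transversality (iii) must be checked on each interval where $\mu$ is regular. Writing $\mu'(s)=a(s)\sigma(s)$ with $a(s)\neq 0$, Lemma \ref{lemmaDC92} gives that the variational field $\overline{J}_s$ representing $\Gamma'(s)$ satisfies $\overline{J}_s(0)=\mu'(s)$ and $\overline{J}_s'(0)=\frac{D\sigma}{ds}(s)$. Since $\overline{J}_s(0)$ is proportional to $\gamma_s'(0)=\sigma(s)$, the class $\langle\overline{J}_s\rangle=\Gamma'(s)$ vanishes precisely when $\overline{J}_s'(0)$ is also proportional to $\sigma(s)$, i.e. when $\overline{J}_s\in\mathcal{J}_0(\gamma_s)$. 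Because $\Gamma'(s)$ is a celestial vector it is nonzero (Definition \ref{def-celestial}), so $\frac{D\sigma}{ds}(s)$ is not proportional to $\sigma(s)$; a short computation then gives $\mathrm{span}\{\mu'(s),\frac{D\mu'}{ds}(s)\}=\mathrm{span}\{\sigma(s),\frac{D\sigma}{ds}(s)\}$, which is two-dimensional, yielding (iii). Partitioning $[0,1]$ at the points where $\mu'$ vanishes then produces the piecewise structure.

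For the converse, given a regular twisted null $\mu$ I would set $\gamma_s(\tau)=\exp_{\mu(s)}(\tau\mu'(s))$ and $\Gamma(s)=\gamma_s$, so that $\gamma_s'(0)=\mu'(s)$ exactly. Here Lemma \ref{lemmaDC92} gives $J_s(0)=\mu'(s)=\gamma_s'(0)$ and $J_s'(0)=\frac{D\mu'}{ds}(s)$, whence $J_s(0)\equiv 0\ (\mathrm{mod}\ \gamma_s')$ and therefore $\Gamma'(s)\in T_{\gamma_s}S(\mu(s))$, so $\Gamma$ is tangent to skies. It remains to check $\Gamma'(s)\neq 0$, and this is exactly where the twisted hypothesis enters: $\Gamma'(s)=0$ would force $J_s\in\mathcal{J}_0(\gamma_s)$, i.e. $\frac{D\mu'}{ds}(s)$ proportional to $\mu'(s)$, contradicting (iii). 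Thus $\Gamma'(s)$ is a genuine celestial vector, $\Gamma$ is a celestial curve, $\Gamma'(0)\in T_{\gamma_0}X_0=\widehat{T}X_0$, and $\chi_{X_0}^{\Gamma}(s)=S(\mu(s))$ by construction.

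I expect the main obstacle to be the transversality argument linking condition (iii) to the (non)vanishing of $\Gamma'(s)$ in $T_{\gamma}\mathcal{N}$: one must carefully track the two proportionality conditions on $\overline{J}_s(0)$ and $\overline{J}_s'(0)$ and correctly use the characterization of $\mathcal{J}_0(\gamma)$ as the fields of the form $(\alpha\tau+\beta)\gamma'$. A secondary technical point, handled in \cite[Lem. 8]{Ba15}, is guaranteeing that the zeros of $\mu'$ yield a finite partition so that the conclusion matches Definition \ref{def-twisted-null} literally.
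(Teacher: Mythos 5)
Your reconstruction is essentially the argument the paper outsources to \cite[Lem.~8]{Ba15}: the paper itself gives no proof of Proposition \ref{mu-lemma}, and the route you take --- feeding the celestial curve into Proposition \ref{proposition4.3} to get $\mu$ and $\sigma$, reading off the initial data of the variational Jacobi field via Lemma \ref{lemmaDC92}, and then translating the (non)vanishing of the class $\langle J_s\rangle$ in $\mathcal{J}_L(\gamma_s)/\mathcal{J}_0(\gamma_s)$ into the proportionality conditions on $J_s(0)$ and $J_s'(0)$ --- is exactly the right mechanism. The central equivalence you use is correct: since $\mathcal{J}_0(\gamma)$ consists precisely of the fields $(\alpha\tau+\beta)\gamma'$, a Jacobi field with $J(0)$ already proportional to $\gamma'(0)$ represents the zero tangent vector if and only if $J'(0)$ is also proportional to $\gamma'(0)$; combined with $\mathrm{span}\{\mu',\tfrac{D\mu'}{ds}\}=\mathrm{span}\{\sigma,\tfrac{D\sigma}{ds}\}$ wherever $a(s)\neq 0$, this correctly ties condition (iii) of Definition \ref{def-twisted-null} to the non-vanishing of the celestial vector in both directions. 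The converse is clean and complete as written.

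The one genuine gap is the one you flag but do not close: the passage from ``twisted wherever $\mu$ is regular'' to ``piecewise twisted null'' in the sense of Definition \ref{def-twisted-null}, which demands a \emph{finite} partition $0=s_0<\dots<s_k=1$ with $\mu$ twisted on each open subinterval. Nothing in your argument controls the zero set of $\mu'$: it could fail to be finite, and more seriously a celestial curve may dwell inside a single sky $S(p)$ on a whole subinterval (this is perfectly compatible with Definition \ref{def-celestial}, since $\Gamma'(s)\in\widehat{T}S(p)$ there), in which case the dust is constant on that subinterval and no partition makes the restriction satisfy condition (iii). So this step is not merely ``secondary technical'': it is the only part of the forward implication that requires an idea beyond the Jacobi-field bookkeeping, and as written your proof establishes the twisted property only on the open set where $\mu$ is regular. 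Everything else --- continuity and uniqueness of $\chi_{X_0}^{\Gamma}$ via Proposition \ref{prop-sky-homeo} and the uniqueness of the dust, and the entire converse implication --- is correct.
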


Previous proposition \ref{mu-lemma} says that celestial curves in $\mathcal{N}$ and twisted null curves in $M$ are two sides of the same coin, and consequently, it permits to interpret theorem \ref{mu-teorema} according to proposition \ref{prop00300} as the next result. 

\begin{corollary}\label{cor-celestial-curve}
$q\in I^{+}(p)$ if and only if there exists a celestial curve $\Gamma:\left[0,1\right]\rightarrow \mathcal{N}$ and a continuous curve of skies $\chi^{\Gamma}:\left[0,1\right]\rightarrow \Sigma$ such that
\begin{enumerate}
\item $\chi^{\Gamma}$ is a non--positive Legendrian isotopy with $\chi^{\Gamma}\left(0\right)= S\left(p\right)$ and $\chi^{\Gamma}\left(1\right)=S\left(q\right)$.
\item $\Gamma ' \left(s\right)\in T_{\Gamma\left(s\right)}\chi^{\Gamma}\left(s\right)$.
\end{enumerate} 
\end{corollary}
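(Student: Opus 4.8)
The plan is to prove the two implications separately, in each case transferring between the dust (trail) curve $\mu$ in $M$ and the celestial curve $\Gamma$ in $\mathcal{N}$ by means of Proposition \ref{mu-lemma}, and reading off the causal character from the sign of the associated Legendrian isotopy through Proposition \ref{prop00300}. In other words, the corollary is essentially the assembly of Theorem \ref{mu-teorema}, Proposition \ref{mu-lemma} and Proposition \ref{prop00300} (together with Theorem \ref{1T5} for one half), so the work lies in checking that their hypotheses match up along the constructed curves.

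For the implication $q\in I^{+}(p)\Rightarrow$ (existence of $\Gamma,\chi^{\Gamma}$): first I would invoke Theorem \ref{mu-teorema} to produce a future--directed piecewise twisted null curve $\mu\colon[0,1]\to M$ with $\mu(0)=p$ and $\mu(1)=q$. On each regular twisted null piece I apply the converse part of Proposition \ref{mu-lemma} to the geodesic variation $f(s,t)=\exp_{\mu(s)}(t\mu'(s))$, obtaining a celestial curve whose associated curve of skies is $\chi^{\Gamma}(s)=S(\mu(s))$; concatenating the pieces yields $\Gamma$ together with $\chi^{\Gamma}$ satisfying $\Gamma'(s)\in\widehat{T}\chi^{\Gamma}(s)$, which is exactly condition (2), and with $\chi^{\Gamma}(0)=S(p)$, $\chi^{\Gamma}(1)=S(q)$. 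Since each piece of $\mu$ is null and future--directed, it is a regular future--directed causal curve, so Proposition \ref{prop00300} forces $\{S(\mu(s))\}$ to be a non--positive Legendrian isotopy, which is condition (1).

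For the converse, suppose $\Gamma$ and $\chi^{\Gamma}$ are given with $\chi^{\Gamma}(0)=S(p)$ and $\chi^{\Gamma}(1)=S(q)$. Setting $X_0=S(p)$ and applying the direct part of Proposition \ref{mu-lemma}, the dust $\mu=\mu^{\Gamma}_{X_0}$ is a piecewise twisted null curve running along $S^{-1}\circ\chi^{\Gamma}$, so $\chi^{\Gamma}(s)=S(\mu(s))$ and, since $M$ is sky--separating, $\mu(0)=p$ and $\mu(1)=S^{-1}(S(q))=q$. Because $\chi^{\Gamma}$ is non--positive and equals $\{S(\mu(s))\}$, Proposition \ref{prop00300} identifies $\mu$ as a future--directed causal curve; and because each piece is twisted, condition (iii) of Definition \ref{def-twisted-null} guarantees that $\mu$ is not a null pregeodesic. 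Theorem \ref{1T5} then furnishes a timelike curve joining $p$ to $q$ in any neighbourhood of $\mu$, and being future--directed this gives precisely $q\in I^{+}(p)$.

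The main obstacle I anticipate is the bookkeeping at the junctions of the piecewise twisted null curve. In the forward direction one must verify that concatenating the pieces produces an admissible celestial curve $\Gamma$ and that $\chi^{\Gamma}$ remains continuous in Low's topology across the breakpoints, and one must secure the endpoint regularity $\mu'(0)\neq 0\neq\mu'(1)$ demanded by the converse of Proposition \ref{mu-lemma} without displacing the endpoints $p$ and $q$ (for instance by working on slightly interior subintervals and gluing). Once this matching is handled, the remaining arguments are a direct transcription of the already established equivalences.
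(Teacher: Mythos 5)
Your proposal is correct and follows essentially the same route the paper intends: the corollary is presented there as the direct assembly of Theorem \ref{mu-teorema}, Proposition \ref{mu-lemma} and Proposition \ref{prop00300} (with Theorem \ref{1T5} implicitly supplying the upgrade from a twisted null causal curve to a timelike one in the converse direction), which is precisely what you carry out. The technical points you flag --- endpoint regularity $\mu'(0)\neq 0\neq\mu'(1)$ and continuity of $\chi^{\Gamma}$ across breakpoints --- are genuine but minor, and your proposed handling is adequate.
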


So, the connection by a signed Legendrian isotopy between the skies $S(p),S(q)\in \Sigma$ of any pair of chronological related points $q\in I^{+}(p)$, ensured by proposition \ref{prop00300}, can be achieved by the skies $\chi^{\Gamma}$ pointed by a celestial curve $\Gamma$.

\subsection{Legendrian linking}\label{sec:Linking}

Related to causality in the framework of the space of light rays, we find the topic of \emph{sky--linking}. 
This interesting subject is out of the scope of this review, but we think it is worth introducing the general problem and some references. 

By the equivalence between the conformal manifold $M$ and its space of skies $\Sigma$ stated in theorem \ref{teo-sky-difeo} and according section \ref{sec:isotopies}, we have that any (smooth) curve $\lambda:I\rightarrow M$ defines a Legendrian isotopy of skies $\{S\left(\lambda(s)\right)\}_{s\in I}$ in $\mathcal{N}$. Then all skies in $\Sigma$, trivially, can be connected by a Legendrian isotopy. 

At his PhD thesis \cite{Lo88}, Low noticed that the isotopy classes for pairs of skies $S(p)\sqcup S(q)$ depend on the causal relation between $p,q\in M$. 
Later, Penrose suggested this conjecture to Nat\'{a}rio et al. as a problem to be solved motivating \cite{Na04}. In this reference, the authors proposed a modification \cite[Conj. 6.4]{Na04} of the Low's conjecture to catch the past or future orientation of the skies of the link $S(p)\sqcup S(q)$, so they consider Legendrian isotopies.

In the literature, a pair $S(p)\sqcup S(q)$ in $\mathcal{N}$ is called a \emph{link} and it is shown, for example in \cite[Lem. 4.3]{Ch10}, that all links $S(p)\sqcup S(q)$ of skies of causally unrelated points $p,q\in M$ lies in the same class of (Legendrian) isotopies. So, in this case, it is said that $S(p)\sqcup S(q)$ is \emph{(Legendrian) unlinked} or also \emph{(Legendrian) trivially linked}. There might be a bit of confusion with the use of the word ``link", since a link $S(p)\sqcup S(q)$ might be unlinked, but we use this terminology as in the references. 

The development of this topic done by Low in \cite{Lo88}, \cite{Lo90}, \cite{Lo94}, \cite{Lo00}, by Field and Low \cite{Fi98}, by Nat\'{a}rio et al. \cite{Na00}, \cite{Na02}, \cite{Na04} and by Chernov et al. \cite{Ch08} led Chernov and Nemirowski to prove both conjectures in \cite{Ch10}.

\begin{theorem}[Low's conjecture]
Let $M$ be a $3$--dimensional spacetime, globally hyperbolic with Cauchy surface diffeomorphic to a subset of $\mathbb{R}^{2}$.
Two points $p,q\in M$ are causally related if and only if $S(p)\sqcup S(q)$ are linked.
\end{theorem}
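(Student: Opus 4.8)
The plan is to work entirely inside the contact $3$--manifold $\mathcal{N}$. Under the stated hypotheses $\mathcal{N}$ is diffeomorphic to $C\times \mathbb{S}^{1}$ by the remark following diagram (\ref{diagram-charts}), since $\dim M=3$ forces each sky $S(p)$ to be diffeomorphic to $\mathbb{S}^{m-2}=\mathbb{S}^{1}$; moreover the conformal identification $\mathbb{PN}(C)\cong ST(C)\cong ST^{*}C$ lets me regard $\mathcal{N}$ as the space of co--oriented contact elements of the Cauchy surface, with $\mathcal{H}$ its canonical contact structure. Each sky is then a Legendrian circle that, projected to the direction factor $\mathbb{S}^{1}$, winds exactly once, so that $[S(p)]$ equals the class of a fibre $\{c\}\times\mathbb{S}^{1}$ in $H_{1}(\mathcal{N})$ for every $p\in M$. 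Because all skies represent the same nonzero homology class, the naive integer linking number of a pair $S(p)\sqcup S(q)$ is not defined, and ``linked'' must be read as the assertion that the pair is not Legendrian isotopic to a split (trivially linked) pair, as in \cite{Ch10}. The whole argument thus amounts to matching this Legendrian isotopy class of the two--component link with the causal relation between $p$ and $q$.

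One implication is essentially recorded already: if $p$ and $q$ are causally unrelated, then $S(p)\sqcup S(q)$ is unlinked by \cite[Lem. 4.3]{Ch10}. Conceptually, the set of spacelike separated pairs is connected, and $(p,q)\mapsto S(p)\sqcup S(q)$ lands in a single Legendrian isotopy class; it suffices to exhibit one widely separated pair on $C$ whose wavefront circles in $C$ are disjoint and unnested, so that the two Legendrian lifts can be separated by an embedded sphere. Taking the contrapositive gives immediately the direction ``linked $\Rightarrow$ causally related,'' so I would spend essentially no effort here and concentrate on the converse.

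The substantive content is that causally related points have linked skies. Assume $q\in J^{+}(p)$ with $q\neq p$. By Proposition \ref{prop00300} and the characterization (\ref{eq-causal-struct}), there is a non--positive Legendrian isotopy $\{S(\mu(s))\}_{s\in[0,1]}$ joining $S(p)$ to $S(q)$, traced by a future--directed causal curve $\mu$. I would argue by contradiction: were $S(p)\sqcup S(q)$ Legendrian unlinked, one could concatenate this causally positive isotopy with a separating (unlinking) Legendrian isotopy to produce a \emph{contractible positive loop} of Legendrian spheres in $\mathcal{N}$. The existence of such a loop is exactly what is forbidden by the orderability of $ST^{*}C$ (in the sense of Eliashberg--Polterovich, with the refinement of Chernov--Nemirovski), valid here because $C$ is an open subset of $\mathbb{R}^{2}$. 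This contradiction forces the link to be nontrivial and closes the equivalence. The main obstacle is precisely this orderability input: proving that $\mathcal{N}\cong ST^{*}C$ admits no contractible positive loop of Legendrian spheres is the single deep step, everything else being bookkeeping.

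For the three--dimensional case specifically, I would also keep in reserve an elementary substitute that sidesteps the general orderability machinery: pass to the infinite cyclic cover $C\times\mathbb{R}\to C\times\mathbb{S}^{1}$ obtained by unwrapping the direction factor. Each sky, winding once around $\mathbb{S}^{1}$, lifts to a properly embedded, $2\pi$--periodic line in $C\times\mathbb{R}$, and the wavefronts of two causally related points are nested along the time direction in a controlled way. A winding/linking invariant of these lifted lines can then be computed directly and shown to be nonzero exactly when $p$ and $q$ are causally related, matching the model computation in $\mathbb{M}^{3}$ where the sky of $(t_{0},x_{0},y_{0})$ is the Legendrian circle $\theta\mapsto(x_{0}-t_{0}\cos\theta,\,y_{0}-t_{0}\sin\theta,\,\theta)$. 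Either route completes the proof, but the orderability statement, or its explicit low--dimensional surrogate, is where the genuine difficulty resides.
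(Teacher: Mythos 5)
The paper does not prove this theorem: it is quoted from Chernov--Nemirovski \cite{Ch10} and explicitly declared to be outside the scope of the review, so there is no internal proof to compare yours against. Judged on its own terms, your sketch correctly reproduces the broad strategy of the cited work (identify $\mathcal{N}$ with the space of co--oriented contact elements of the Cauchy surface, translate causality into signed Legendrian isotopies of skies via Proposition \ref{prop00300}, and invoke the non--existence of positive contractible loops of Legendrian submanifolds), but it has a genuine gap exactly where the theorem as stated differs from the Legendrian Low conjecture.

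The gap: Low's conjecture asserts that causally related points have \emph{topologically} (smoothly) linked skies, which is strictly stronger than having Legendrian linked skies, since a Legendrian unlinking is in particular a smooth unlinking but not conversely. Your contradiction argument begins with ``were $S(p)\sqcup S(q)$ Legendrian unlinked'' and therefore can only yield the weaker implication ``causally related $\Rightarrow$ Legendrian linked''. To prove the stated theorem you must rule out a \emph{smooth} unlinking isotopy, and such an isotopy cannot be concatenated with Legendrian isotopies to produce a loop of Legendrian circles at all; upgrading from smooth to Legendrian triviality of the link is precisely the additional low--dimensional contact--topological input in \cite{Ch10} and is not supplied by orderability alone. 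Two further points need repair even for the Legendrian version: first, concatenating the signed isotopy $S(p)\to S(q)$ with an unlinking isotopy of the \emph{pair} does not by itself produce a based, sign--controlled, contractible loop --- one must first use triviality of the link to replace $S(q)$ by the sky of a point causally unrelated to $p$ (e.g.\ a fibre over the same Cauchy surface) while keeping track of signs; second, your fallback ``winding/linking invariant in the infinite cyclic cover'' is never defined, and the claim that it is nonzero exactly for causally related pairs restates the theorem rather than proving it.
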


\begin{theorem}[Legendrian Low conjecture]
Let $M$ be a globally hyperbolic $m$--dimensional spacetime with Cauchy surface diffeomorphic to a subset of $\mathbb{R}^{m-1}$.
Then two points $p,q\in M$ are causally related in $M$ if and only if their skies either intersect or are Legendrian linked in $\mathcal{N}$.
\end{theorem}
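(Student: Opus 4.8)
The plan is to reduce the statement to a purely contact-topological equivalence in $(\mathcal{N},\mathcal{H})$ and then to isolate the single genuinely deep ingredient. By equation (\ref{eq-causal-struct}), which is a consequence of Proposition \ref{prop00300}, the points $p,q\in M$ are causally related if and only if $S(p)\leq_{\Sigma}S(q)$ or $S(q)\leq_{\Sigma}S(p)$, that is, one sky is reachable from the other by a signed (non-positive, resp.\ non-negative) Legendrian isotopy of skies; Corollary \ref{cor-celestial-curve} even shows that in the timelike case this isotopy may be realised tangent to a celestial curve. Thus it suffices to prove that $S(p)$ and $S(q)$ are joined by a signed Legendrian isotopy exactly when they either intersect or are Legendrian linked in $\mathcal{N}$.

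The intersecting branch is immediate. If $S(p)\cap S(q)\neq\emptyset$, some light ray $\gamma$ contains both $p$ and $q$, so $p$ and $q$ lie on a common null geodesic and are causally related; conversely a null connection between $p$ and $q$ produces a common light ray $\gamma\in S(p)\cap S(q)$. This is the degenerate branch in which the sign of the isotopy vanishes, and it needs only the conformal definition of $\mathcal{N}$ and of the sky map.

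For disjoint skies the claim becomes: a strictly signed Legendrian isotopy joining $S(p)$ to $S(q)$ exists if and only if $S(p)$ and $S(q)$ are Legendrian linked. For the forward direction, if $p,q$ are causally related but not null-connected then $q\in I^{+}(p)$ by Theorem \ref{1T5}, and Proposition \ref{prop00300} furnishes a negative, hence strictly signed, Legendrian isotopy $\{S(\mu(s))\}_{s}$ from $S(p)$ to $S(q)$; the co-orientation of $\mathcal{H}$ endows this family with a definite monotone direction relative to $\alpha$, and the content of the theorem is that such a monotone sweep cannot be reversed by an arbitrary (unsigned) Legendrian isotopy, which is precisely the obstruction measured by Legendrian linkedness. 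For the converse I would argue by contraposition: if $p,q$ are causally unrelated, then by the result recalled in Section \ref{sec:Linking} (\cite[Lem.\ 4.3]{Ch10}) the pair $S(p)\sqcup S(q)$ lies in the trivial Legendrian-isotopy class and is therefore unlinked, so that a linked pair must come from causally related points.

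The hard part, which cannot be settled by soft arguments, is showing that a monotone strictly signed Legendrian isotopy of two disjoint spheres forces a nontrivial link. Here the global hyperbolicity hypothesis is essential: by the remark following diagram (\ref{diagram-charts}) the diffeomorphism $\xi$ is global, so $\mathcal{N}\simeq\mathbb{PN}(C)$ is the spherical cotangent bundle $ST^{\ast}C$ of the Cauchy surface $C$, with $C$ an open subset of $\mathbb{R}^{m-1}$ carrying its canonical contact structure, under which each sky becomes a Legendrian sphere. One then detects the linking by an integer-valued invariant of two-component Legendrian links in $ST^{\ast}C$ (constructed, for instance, from generating families, equivalently linearised Legendrian contact homology) whose rigidity — its monotone behaviour along non-negative Legendrian isotopies and its vanishing on the split link — forces it to be nonzero exactly in the causal case. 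Verifying that this invariant is a genuine Legendrian-isotopy invariant of the link and computing it on the skies is the technical core carried out in \cite{Ch10}, and this is the step I expect to be the principal obstacle.
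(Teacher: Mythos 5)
The paper does not actually prove this theorem: it is quoted from Chernov--Nemirovski \cite{Ch10} as an external result, so there is no in-paper argument to measure your proposal against. Your reductions are consistent with the machinery the paper does develop: equation (\ref{eq-causal-struct}) (via Proposition \ref{prop00300}) converts causal relatedness into the existence of a signed Legendrian isotopy of skies, the intersecting-skies branch is exactly the null-geodesic case, and the converse direction by contraposition correctly invokes the fact (also only cited, as \cite[Lem.~4.3]{Ch10}) that skies of causally unrelated points form a trivial Legendrian link. Two small cautions: Proposition \ref{prop00300} yields a \emph{strictly} signed isotopy only for regular timelike curves, so for a causally related pair with disjoint skies you must first pass to a timelike curve via Theorem \ref{1T5} and reparametrize it regularly before claiming strict positivity; and ``skies intersect'' is not equivalent to ``null-connected'' (skies can intersect beyond conjugate points even when $q\in I^{+}(p)$), although only the implication you actually use survives and is correct.

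The genuine gap is the step you yourself flag as the principal obstacle: that a non-negative (or non-positive) Legendrian isotopy joining two \emph{disjoint} Legendrian spheres, each Legendrian isotopic to a fibre of $ST^{*}C$ with $C$ an open subset of $\mathbb{R}^{m-1}$, forces the pair to be Legendrian linked. Saying that this is ``precisely the obstruction measured by Legendrian linkedness'' assumes the equivalence you are trying to prove, and invoking ``an integer-valued invariant from generating families whose rigidity forces it to be nonzero exactly in the causal case'' names the kind of object a proof would need without constructing it, establishing its monotone behaviour under signed Legendrian isotopies, or computing it on the split link and on skies. None of the paper's results (Propositions \ref{prop00300} and \ref{mu-lemma}, Corollary \ref{cor-celestial-curve}) substitute for that analysis; it is the entire content of \cite{Ch10}. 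As written, the proposal is a sound organization of the elementary reductions around an unproven core, i.e.\ an outline rather than a proof.
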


Some recent work on the extension of these results for  more general spacetimes are \cite{Ch18b} and \cite{Ch18c}.

\section{\texorpdfstring{$L$--boundary}{L--boundary}}\label{sec:LBoundary}

As it was mentioned in the introduction, we try to point out a possible way to construct a new causal boundary for $M$, called the \emph{light boundary} or \emph{$L$--boundary}, introduced by R. Low in \cite{Lo06}. 
It will be a conformal extension of the conformal manifold and its construction requires the use of geometrical structures of the space of light rays.
A first study of this new boundary is done in \cite{Ba17} including some initial results and examples comparing it with the \emph{c-boundary} developed in \cite{GKP68}. Later, a more complete article \cite{Ba18} was release.

\subsection{Preliminary: the causal boundary}\label{sec:causal-boundary}

In \cite{GKP68}, the \emph{$c$--boundary} or \emph{causal boundary} is introduced to improve others already existing boundaries (Geroch's \cite{Ge68}, Schmidt's \cite{Sc71}, etc.) It is conformally invariant and defined intrinsically adding endpoints, as subsets in the spacetime $M$, to causal inextensible curves. 

\begin{definition}
We will say that $W\subset M$ is an \emph{indecomposable past set} or \emph{IP} if $W$ is open, non--empty and a past set, that is $I^{-}(W)=W$, which can not be expressed as the union of two proper subsets satisfying the same previous properties.

If there exists $p\in M$ such that $I^{-}(p)=W$, we will say that $W$ is a \emph{proper IP} or \emph{PIP}. We will name $W$ as a \emph{terminal IP} or \emph{TIP} whenever such $p\in M$ does not exist.

The \emph{future causal boundary} or \emph{future $c$--boundary} is defined by the set of all TIP contained in $M$. 

In an analogous way, the \emph{past $c$--boundary} is defined from \emph{indecomposable future sets}, \emph{PIFs} and \emph{TIFs}.
\end{definition}

The following proposition characterizes the TIPs as chronological past of future inextendible timelike curves and states that points in the $c$--boundary can also be defined by causal curves.
Analogue results can be stated for TIFs.
See \cite[Prop. 6.8.1]{HE} and \cite[Prop. 3.32(i)]{Fl11} for proofs. 

\begin{proposition}\label{Prop-TIPs}
Let $M$ be a strongly causal spacetime. 
\begin{enumerate}
\item A set $W\subset M$ is a TIP if and only if there exists a future inextendible timelike curve $\lambda\subset M$ such that $I^{-}(\lambda)=W$. \item If $\lambda\subset M$ is a future inextendible causal curve, then $I^{-}(\lambda)$ is a TIP.
\end{enumerate}
\end{proposition}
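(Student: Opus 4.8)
The plan is to treat the two statements together, noting that the ``if'' direction of (1) is exactly the special case of (2) in which the causal curve $\lambda$ happens to be timelike; so I would first establish (2) in full generality and then only need to supply the existence (``only if'') half of (1).

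For (2), let $\lambda$ be a future inextendible future-directed causal curve and set $W=I^{-}(\lambda)=\bigcup_s I^{-}(\lambda(s))$. That $W$ is open, non-empty and a past set is immediate from the proposition asserting that $I^{+}(A)$ is open together with the transitivity relations of Corollary \ref{1C3}. The substance is indecomposability, which I would verify through the standard directedness criterion: a past set is an IP if and only if any two of its points admit a common future point lying inside it. Given $p,q\in W$ with $p\ll\lambda(s_1)$ and $q\ll\lambda(s_2)$, we may assume $s_1\leq s_2$; the ordering along the curve gives $\lambda(s_1)\leq\lambda(s_2)$, whence by Corollary \ref{1C3}(4) both $p\ll\lambda(s_2)$ and $q\ll\lambda(s_2)$. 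Choosing a point $z$ on a future-directed timelike curve from $p$ to $\lambda(s_2)$ sufficiently close to $\lambda(s_2)$, openness of $I^{+}(q)\ni\lambda(s_2)$ forces $q\ll z$ as well, while $z\ll\lambda(s_2)$ places $z\in W$; thus $z$ is the required common future point and $W$ is an IP.

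The key step is to show that $W$ is \emph{terminal}, i.e.\ not a PIP, and this is where strong causality enters. I would argue the contrapositive: if $I^{-}(\lambda)=I^{-}(p)$ for some $p\in M$, then $p$ must be a future endpoint of $\lambda$, contradicting inextendibility. Fixing a causally convex neighbourhood $V$ of $p$ (available by Proposition \ref{prop-Mingu-Sanch}), any $q\in I^{-}(p)\cap V=W\cap V$ satisfies $q\ll\lambda(s_q)\ll p$ for some $s_q$; causal convexity of $V$ traps the concatenated causal curve $q\to\lambda(s_q)\to p$ inside $V$, so $\lambda(s_q)\in V$, and for every later parameter $s>s_q$ the causal curve $\lambda(s_q)\to\lambda(s)\to p$ again has both endpoints in $V$, forcing $\lambda(s)\in V$. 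Since the causally convex neighbourhoods form a basis at $p$, this yields $\lambda(s)\to p$, i.e.\ $p$ is a future endpoint. This contradiction shows $W$ is a TIP, proving (2) and with it the ``if'' half of (1).

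It remains to prove the ``only if'' direction of (1): every TIP $W$ equals $I^{-}(\lambda)$ for some future inextendible timelike $\lambda$. Here I would run the Geroch--Kronheimer--Penrose cofinal-chain construction. Using second countability of the manifold $M$, choose a countable dense subset $\{q_k\}$ of $W$ and, exploiting the directedness of the IP $W$ just established, build inductively a chronological chain $p_1\ll p_2\ll\cdots$ in $W$ with $q_k\ll p_k$ for each $k$; concatenating timelike segments from $p_k$ to $p_{k+1}$ produces a future-directed timelike curve $\lambda$. One checks $I^{-}(\lambda)=\bigcup_k I^{-}(p_k)=W$ (the inclusion $\subseteq$ because each $p_k\in W$ and $W$ is a past set; the inclusion $\supseteq$ because any $x\in W$ satisfies $x\ll w$ for some $w\in W$, hence $x\ll q_k\ll p_k$ for $q_k$ near $w$). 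Finally $\lambda$ is future inextendible: were it to possess a future endpoint $p$, the convergence $\lambda(s)\to p$ would give $I^{-}(\lambda)=I^{-}(p)$ exactly as in the terminal argument above, making $W$ a PIP and contradicting that it is a TIP. The main obstacle throughout is the interplay with strong causality in the terminal/endpoint step: this is the only place the causal hypothesis on $M$ is genuinely used, and it simultaneously controls the terminality in (2) and the inextendibility of the constructed curve in (1).
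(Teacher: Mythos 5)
The paper offers no proof of this proposition (it is quoted from \cite[Prop.~6.8.1]{HE} and \cite[Prop.~3.32(i)]{Fl11}), so your argument must stand alone. Most of it does: openness, the past-set property, the directedness of $I^{-}(\lambda)$ (your use of Corollary \ref{1C3}(4) and of the openness of $I^{+}(q)$ is correct, and it works for causal $\lambda$), and the cofinal-chain construction are all standard and sound. The genuine gap is in the terminality step, and it sits exactly at the point where ``causal'' differs from ``timelike'' --- which is the whole content of item (2). You assert that $q\ll\lambda(s_q)\ll p$ and later invoke causal curves $\lambda(s_q)\to\lambda(s)\to p$; but the relation $\lambda(s)\ll p$ (or even $\lambda(s)\le p$) is unjustified. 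For timelike $\lambda$ it is automatic, since $\lambda(s)\ll\lambda(s')$ gives $\lambda(s)\in I^{-}(\lambda)=I^{-}(p)$. For a merely causal $\lambda$ --- say a future-inextendible null geodesic, for which $\lambda(s)\notin I^{-}(\lambda)$ in general --- one only gets $\lambda(s)\in\overline{I^{-}(\lambda(s))}\subseteq\overline{I^{-}(p)}$, and in a strongly causal spacetime $\overline{I^{-}(p)}$ is strictly larger than $J^{-}(p)$; there need be no causal curve from $\lambda(s)$ to $p$ at all, so the concatenation you feed to causal convexity does not exist. Since you derive the ``if'' half of (1) as the timelike special case of (2), the gap lies on the main line of your argument, not in a side case.

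The statement is still true, but closing the gap needs an extra idea. One repair: either $\lambda(s)\in I^{-}(p)$ for a cofinal set of $s$ (then your trapping argument works verbatim, using that $\lambda\cap V$ is connected), or $\lambda$ eventually runs inside the achronal boundary $\dot I^{-}(p)$; the latter case is excluded by a limit-curve argument --- taking $x_n\ll p$ with $x_n\to p$, timelike curves from $x_n$ to $\lambda(s_n)$ concatenated with $\lambda|_{[s_n,b)}$ give future-inextendible causal curves inside $\overline{I^{-}(p)}$, whose limit curve $\beta$ issuing from $p$ satisfies $p\le\beta(u)$ and $\beta(u)\in\overline{I^{-}(p)}$ for some $\beta(u)\neq p$, producing an almost-closed causal curve that violates strong causality. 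Two smaller points: (i) for the cofinal-chain construction you need the \emph{hard} direction of the directedness criterion (every IP is directed); you only proved the easy direction, and for $I^{-}(\lambda)$ rather than for an abstract TIP $W$, so ``the directedness of the IP $W$ just established'' is inaccurate --- this is a citable classical fact, but it should be flagged as an input, not as something you proved; (ii) the inextendibility of the constructed chain curve uses the implication ``future endpoint $p$ $\Rightarrow$ $I^{-}(\lambda)=I^{-}(p)$'', which is the converse of, not ``exactly'', your terminal argument (it is true for timelike curves, via Theorem \ref{1T5} applied to the concatenated causal curve ending at $p$).
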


An immediate consequence of proposition \ref{Prop-TIPs} is the following corollary.

\begin{corollary}\label{cor-TIPs}
Let $M$ be a strongly causal spacetime. 
$I^{-}(\gamma)$ is a TIP for any light ray $\gamma\in\mathcal{N}$.
\end{corollary}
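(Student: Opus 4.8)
The plan is to identify a light ray with a future inextendible causal curve and then apply Proposition \ref{Prop-TIPs} verbatim. First I would recall from Section \ref{sec:lightrays-causal} that a light ray $\gamma\in\mathcal{N}$ is, by definition, the image of a \emph{maximal} null geodesic; I parametrize it as a future--directed null geodesic $\gamma\colon I\rightarrow M$. Its velocity $\gamma'(\tau)$ is everywhere lightlike and future--directed, so in particular $\gamma'(\tau)$ is causal for every $\tau\in I$, and hence $\gamma$ is a future--directed causal curve in $M$. Since $I^{-}$ of the curve and $I^{-}$ of its image set coincide, it suffices to treat $\gamma$ as such a curve.

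The one point that needs argument is that $\gamma$ is future inextendible. This is the standard fact that a maximal geodesic cannot acquire a future endpoint inside $M$: if $b=\sup I<\infty$ and the limit $p=\lim_{\tau\to b^{-}}\gamma(\tau)$ existed in $M$, then by geodesic existence and uniqueness the null geodesic issuing from $p$ with the appropriate lightlike direction would prolong $\gamma$ beyond the parameter value $b$, contradicting the maximality of the geodesic defining $\gamma$. Therefore $\gamma$ has no future endpoint in $M$, i.e. $\gamma$ is a future inextendible causal curve.

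Having established this, the conclusion is immediate: applying part (2) of Proposition \ref{Prop-TIPs} to the future inextendible causal curve $\lambda=\gamma$ gives that $I^{-}(\gamma)$ is a TIP, which is precisely the assertion of the corollary. The only delicate ingredient is the inextendibility step above, and even there no genuine obstacle arises, since it is nothing more than the observation that maximality of the affine interval of a geodesic rules out an interior endpoint; everything else is a direct invocation of the already--stated proposition.
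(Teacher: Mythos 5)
Your proposal is correct and matches the paper's (implicit) argument exactly: the paper presents this corollary as an immediate consequence of Proposition \ref{Prop-TIPs}, relying precisely on the fact that a light ray, being the image of a maximal null geodesic, is a future inextendible causal curve, so part (2) of that proposition applies. Your extra care in justifying future inextendibility from maximality of the geodesic is a reasonable elaboration of the step the paper leaves unstated.
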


Observe that the points $p\in M$ can be identified by its corresponding PIPs $I^{-}(p)$ or PIFs $I^{+}(p)$ and the gaps and points at infinity of $M$ are identified with TIPs $I^{-}(\lambda)$ or TIFs $I^{+}(\lambda)$ for some future inextensible timelike curve $\lambda$. Therefore, by corollary \ref{cor-TIPs}, light rays also define points in the $c$--boundary.

\begin{figure}[h]
  \centering
\begin{tikzpicture}[scale=1]
\fill[rounded corners,color=gray!5]
(-6,0) -- (-5,2) -- (-2,2.3) -- (3,2.1) -- (5.5,1.9) -- (6,0) -- (5,-2) -- (4,-2.1)--(-2,-1.9)--(-5.5,-2.2)--(-6,0);
\fill[color=blue!15] (-5.9,0.2)--(-4,2.1)--(0.065,-1.965)--(-2,-1.9)--(-5.4,-2.14)--(-5.525,-2.04)--(-6,0)--(-5.9,0.2);
\fill[color=blue!15,rounded corners] (-5.9,0.2)--(-4,2.1)--(0.065,-1.965)--(-2,-1.9)--(-5.5,-2.2)--(-5.9,0.2);
\fill[color=emerald!15] (1.5,-0.5)--(3.06,-2.06)--(0.0465,-1.965)--(1.5,-0.5);
\fill[color=red!15] (3.425,-2.075)--(5.9,0.4)-- (5.95,0.2)--(4.9,-1.9)--(3.425,-2.075);
\fill[color=red!15,rounded corners] (3.425,-2.075)--(5.9,0.4)-- (5.95,0.2)-- (6,0) -- (5,-2) -- (4,-2.1)--(3.425,-2.075);
\draw[dashed,rounded corners]
(-6,0) -- (-5,2) -- (-2,2.3) -- (3,2.1) -- (5.5,1.9) -- (6,0) -- (5,-2) -- (4,-2.1)--(-2,-1.9)--(-5.5,-2.2)--(-6,0);
\draw[thick,color=red] (5.2,-1.2) to[out=88, in=-92] (5.9,0.4);  
\draw[color=red] (5.5,-0.4) node[anchor=east] {$\lambda$};       
\draw[fill=white] (5.9,0.4) node[anchor=west] {$\widetilde{q}$} circle (2pt);
\draw[thick,color=blue] (-4,0.5) to[out=120, in=-60] (-4,2.1);       
\draw[thick,color=blue] (-4.1,1.2) node[anchor=east] {$\mu$};        
\draw[thick,color=blue] (0.065,-1.965)--(-4,2.1);       
\draw[color=blue] (-2,0.1) node[anchor=north east] {$\gamma$};  
\draw[fill=white] (-4,2.1) node[anchor=south] {$\widetilde{r}$} circle (2pt);
\draw[thick,color={rgb:red,0;green,128;blue,128}] (1,-1.5) to[out=88, in=-92] (1.5,-0.5);   
\draw[thick,color={rgb:red,0;green,128;blue,128}] (1.3,-1) node[anchor=west] {$\beta$};     
\draw[fill=white] (1.5,-0.5) node[anchor=south] {$\widetilde{p}$} circle (2pt);
\draw[fill=white] (5.3,1.7) node[anchor=north east] {$M$};
\draw[color=blue] (-3.8,-0.8) node {$I^{-}(\mu)$};
\draw[color=red] (4.6,-1.7) node {$I^{-}(\lambda)$};
\draw[color={rgb:red,0;green,128;blue,128}] (1.8,-1.7) node {$I^{-}(\beta)$};
\end{tikzpicture}
  \caption{The ideal points $\widetilde{r}$, $\widetilde{p}$ and $\widetilde{q}$ are respectively identified with the chronological past of the future inextensible timelike curves $\mu$, $\beta$ and $\lambda$. Observe that the point $\widetilde{p}$ has been removed from $M$, so it is a \emph{gap} in the spacetime. The light ray $\gamma$ also defines the same TIP as $\mu$ because $I^{-}(\gamma)=I^{-}(\mu)$.}
  \label{TIPs}
\end{figure}
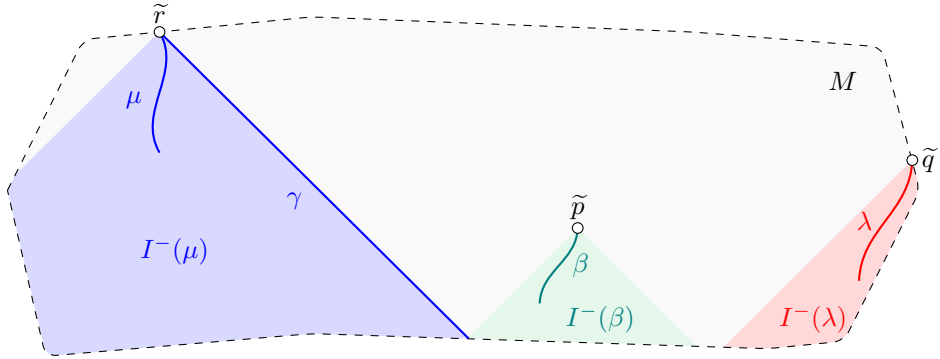

\begin{remark}\label{remark-light-ray-no-accessible}
As proposition \ref{Prop-TIPs} suggests in its statements, not every TIP can be defined by light rays because such ideal points may be accessible only by timelike curves. For example, let us consider $M=\{(t,x)\in\mathbb{M}^2 : t<-2\vert x \vert \}$ as submanifold of the $2$--dimensional Minkowski spacetime with the standard metric $\mathbf{g}=-dt\otimes dt + dx \otimes dx$. The ideal point $(0,0)$ is not accessible by light rays contained in $M$. In fact, the TIP defined by the curve $\lambda(s)=(s,0)$ for $s\in(-\epsilon, 0)$ is the whole spacetime $M$, that is $(0,0)\sim I^{-}(\lambda)=M$ (see figure \ref{figura-accesible-TIP}). This example can be easily generalized to higher dimensions.
\end{remark}

\begin{figure}[h]
  \centering
\begin{tikzpicture}[scale=1]
\draw[-latex] (-1.5,0) -- (1.5,0) node[anchor=south west] {$x$};
\draw[-latex] (0,-2) -- (0,0.5) node[anchor=east] {$t$};
\draw[dashed,fill=blue!20,opacity=0.3] (-1,-2) -- (0,0)--(1,-2);
\draw[red,very thick] (0,-0.8) node[anchor=east] {$\lambda$} -- (0,0);
\fill[green!30,opacity=0.3] (-0.2,-2)--(0.6,-1.2)--(1,-2) -- cycle;
\draw[teal,very thick] (-0.2,-2)--(0.4,-1.4) node[anchor=south east] {$\gamma$} --(0.6,-1.2);
\draw[blue] (-0.5,-2) node[anchor=south] {$M$};
\draw (-1.2,-0.5) node {$\mathbb{M}^2$};
\draw[color=teal] (0.9,-1.8) node[anchor=south west] {$I^{-}(\gamma)$};
\draw[fill=white] (0,0) node[anchor=south west] {$(0,0)$} circle (1pt);
\draw[fill=white] (0.6,-1.2)  node[anchor=south west] {$\widetilde{p}$} circle (1pt);
\end{tikzpicture}
  \caption{The point $(0,0)$ is not accessible by light rays in $M$. It corresponds to the TIP $I^{-}(\lambda)=M$ defined by a timelike curve such $\lambda$. Light rays define ideal points such as $\widetilde{p}$ which correspond to TIPs like $I^{-}(\gamma)$ in green.}
  \label{figura-accesible-TIP}
\end{figure}
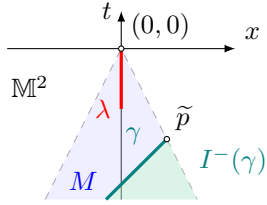

The identification $p\sim I^{-}(p)\sim I^{+}(p)$ presents some inconsistencies affecting even to the topology of the spacetime's completion and many authors have tried to solve this problem (Budic et al. \cite{Bu74}, R\'{a}cz \cite{Ra87}, Szabados \cite{Sz88}, \cite{Sz89}, Harris \cite{Ha98}, \cite{Ha00}, Marolf et al. \cite{Ma02}, \cite{Ma03} among others) but finally, Flores et al. \cite{Fl11} stated a consistent definition of the $c$--boundary. Anyway, this matter is beyond the scope of this review, so, for our purposes, we will only need the classical definition of the causal boundary above.
See \cite{GaP05}, \cite{Sa09} and \cite{Fl11} for an overview on this subject.

\subsection{Low's idea}\label{sec:LBoundary-idea}

As indicated in \cite{Lo06}, Low introduces a new idea for the construction of a causal and conformal boundary in $M$. 
It consists in the addition of \emph{skies at infinity}, in virtue of the equivalence between events and skies by the sky map (see section \ref{sec:lightrays-skies}). 

It is expected that a sky at infinity will be the \emph{limit} of the skies of the points of a fixed light ray when the parameter tends to the future/past endpoint of its domain of definition. 
Low wonders if this is a new way to obtain the $c$--boundary using the geometry of the space of light rays.

In order to build these boundary points, we will take some future--directed inextensible null geodesic $\gamma:\left(a,b\right)\rightarrow M$, then we will consider the curve $\widetilde{\gamma}:\left(a,b\right)\rightarrow \mathrm{Gr}^{m-2}\left(\mathcal{H}_{\gamma}\right)$ defined by 
\[
\widetilde{\gamma}\left(s\right)= T_{\gamma}S\left(\gamma\left(s\right) \right) 
\]
where $S(\gamma(s))\in \Sigma$ is the sky of $\gamma(s)\in M$.  
Since each $S(p)$ is diffeomorphic to the sphere $\mathbb{S}^{m-2}$, then $T_{\gamma}S\left(\gamma\left(s\right) \right)$ is contained in the Grassmannian manifold $\mathrm{Gr}^{m-2}\left(\mathcal{H}_{\gamma}\right)$ of $\left(m-2\right)$--dimensional subspaces of $\mathcal{H}_{\gamma}\subset T_{\gamma}\mathcal{N}$.
Then, we can define endpoints of the curve $\widetilde{\gamma}$ by
\begin{equation}\label{boundary-field}
\begin{tabular}{l}
$\ominus_{\gamma} = \lim_{s\mapsto a^{+}}\widetilde{\gamma}\left(s\right)\in \mathrm{Gr}^{m-2}\left(\mathcal{H}_{\gamma}\right)$ , \vspace{3mm} \\
$\oplus_{\gamma} = \lim_{s\mapsto b^{-}}\widetilde{\gamma}\left(s\right)\in \mathrm{Gr}^{m-2}\left(\mathcal{H}_{\gamma}\right)$ ,
\end{tabular}
\end{equation}
when the limits exist. 
In general, the existence of $\ominus_{\gamma}$ and $\oplus_{\gamma}$ is not clear, although the compactness of $\mathrm{Gr}^{m-2}\left(\mathcal{H}_{\gamma}\right)$ assures the existence of accumulation points when $s\mapsto a^{+},b^{-}$. 
Moreover, even in case of the existence of the limits, we wonder if $\ominus,\oplus:\mathcal{N}\rightarrow \mathrm{Gr}^{m-2}\left(\mathcal{H}\right)$ are smooth distributions in $\mathcal{N}$. 
Low defines the future/past endpoint of the light ray $\gamma\subset M$ for this new boundary of $M$ as the integral manifold of the distributions $\oplus / \ominus$, which will comprise all light rays \emph{arriving at/emerging from} the same point at infinity than $\gamma$, see \cite{Lo06}.

Recall that, at the end of section \ref{sec:lightrays-contact}, we have seen that $\left(\mathcal{H}_{\gamma},\left.\omega\right|_{\mathcal{H}_{\gamma}}\right)$ is a symplectic vector space for any  $\gamma\in \mathcal{N}$, where $\left.\omega\right|_{\mathcal{H}_{\gamma}}$ satisfies the expression (\ref{eq-dalpha-expression}). 
Then, it is easy to show that for any sky $X\in \Sigma$ such that $\gamma\in X$, then $T_{\gamma}X$ is its own symplectic orthogonal vector space, that is, 
\[
T_{\gamma}X=\left(T_{\gamma}X\right)^{\perp}\equiv \{ \langle J \rangle\in \mathcal{H}_{\gamma}:\left.\omega\right|_{\mathcal{H}_{\gamma}}\left(\langle J \rangle,\langle K \rangle\right)=0 \text{ for all } \langle K \rangle\in T_{\gamma}X \}
\]
therefore $T_{\gamma}X$ is a \emph{lagrangian subspace} of $\mathcal{H}_{\gamma}$.
So, if we denote by $\mathscr{L}\left(\mathcal{H}\right)\subset \mathrm{Gr}^{m-2}\left(\mathcal{H}\right)$ the \emph{manifold of Lagrange grassmannian subspaces} in $\mathcal{H}$ and by $\mathscr{L}\left(\mathcal{H}_{\gamma}\right)\subset \mathrm{Gr}^{m-2}\left(\mathcal{H}_{\gamma}\right)$ the submanifold of Lagrange grassmannian subspaces in $\mathcal{H}_{\gamma}$, then $\Lambda\in \mathscr{L}\left(\mathcal{H}_{\gamma}\right)$ if and only if $\dim \Lambda = m-2$ and $\left.\omega\right|_{\Lambda}=0$. 
So, the image of the maps $\ominus,\oplus$ can be restricted to $\mathscr{L}\left(\mathcal{H}\right)$ by
\[
\ominus,\oplus:\mathcal{N}\rightarrow \mathscr{L}\left(\mathcal{H}\right) ,
\]
as well as $\widetilde{\gamma}\subset \mathscr{L}\left(\mathcal{H}_{\gamma}\right)$ holds.

The distributions $\oplus$ and $\ominus$ are independent of each other and they permit to build the future and past boundaries respectively. Therefore, the construction of one of these boundaries is also independent of that of the other.  
Thus, we will describe the construction of the future boundary from the distribution $\oplus$, taking into account that the process to obtain the past boundary from $\ominus$ is analogous.

We propose the following hypotheses for the general case $\dim M \geq 3$:

\begin{enumerate}[start=1,label={\bfseries H\arabic* }]
\item \label{itm:hypotheses-general-1} $\left(M,\mathcal{C}\right)$ is strongly causal, null--pseudo convex, light non--conjugate and sky--separating.
\item \label{itm:hypotheses-general-2} The distribution $\oplus : \mathcal{N} \rightarrow \mathscr{L}\left(\mathcal{H}\right)$, defined by $\oplus_{\gamma}=\lim_{s\mapsto b^{-}}T_{\gamma}S\left(\gamma\left(s\right)\right)$ for any maximally and future--directed parametrized light ray $\gamma:\left(a,b\right)\rightarrow M$, is differentiable and regular.
\end{enumerate}

\begin{definition}
A Lorentz conformal manifold $M$ satisfying conditions \ref{itm:hypotheses-general-1} and \ref{itm:hypotheses-general-2} is said to be a \emph{$L$--spacetime}.
\end{definition}

Notice that \ref{itm:hypotheses-general-1} is required for $\mathcal{N}$ having good topological and differentiable properties and \ref{itm:hypotheses-general-2} are basic conditions on $\oplus$ so that the $L$--boundary could be built.

For the sake of simplicity, we do not use the labels \emph{future} and \emph{past} in the definition of $L$--spacetime as the property \ref{itm:hypotheses-general-2} is verified for $\oplus$ or $\ominus$ respectively. We understand that this may be a bit ambiguous, but in this way we will avoid too many adjectives in later definitions. Therefore, as in what follows, we will build only the future boundary, \emph{$L$--spacetime} should be understood as \emph{future $L$--spacetime}.

\subsection{Hypotheses for the 3--dimensional case}\label{sec:LBoundary-3D}

As a first approximation to the general case, we will study the $L$--boundary for $3$--dimensional conformal manifolds following the original Low's idea. 
Observe that in such case $\mathcal{N}$ is also $3$--dimensional and the Lagrangian grassmannian manifold $\mathscr{L}\left(\mathcal{H}\right)$ becomes $\mathbb{P}\left(\mathcal{H}\right)$ (in fact, $\mathrm{Gr}^{1}\left(\mathcal{H}\right)=\mathscr{L}\left(\mathcal{H}\right)=\mathbb{P}\left(\mathcal{H}\right)$). So the curve $\widetilde{\gamma}$ is contained in $\mathbb{P}\left(\mathcal{H}_{\gamma}\right)\simeq\mathbb{S}^1$.

Notice that, if $M$ is light non--conjugate, then the curve $\widetilde{\gamma}\left(s\right)=T_{\gamma}S\left(\gamma\left(s\right)\right)\in \mathbb{P}\left(\mathcal{H}_{\gamma}\right)\simeq \mathbb{S}^{1}$ is injective and therefore the continuity of $\widetilde{\gamma}$ would imply that the limits $\oplus_{\gamma}$ and $\ominus_{\gamma}$ exist consisting in lines in $\mathbb{P}\left(\mathcal{H}_{\gamma}\right)$.

\begin{figure}[h]
  \centering
\begin{tikzpicture}[scale=1]
\draw[dashed] (1.4,0) arc[start angle=0, end angle=360,radius=1.4cm];
\draw (0,0) node {$\mathbb{P}\left(\mathcal{H}_{\gamma}\right)$};
\draw[-|,thick] (1.4,0) arc[start angle=0, end angle=60,radius=1.4cm];
\draw[-|,thick] (1.4,0) arc[start angle=0, end angle=-210,radius=1.4cm];
\draw[->,thick] (0,-1.2) arc[start angle=-90, end angle=-70,radius=1cm];
\draw (0,-1.2) arc[start angle=-90, end angle=-110,radius=1cm];
\draw (-1.3,0.9) node[anchor=east] {$\ominus_{\gamma}$};
\draw (1,1.3) node[anchor=south] {$\oplus_{\gamma}$};
\draw[-|,thick] (1.4,0) arc[start angle=0, end angle=0,radius=1.4cm];
\draw (1.6,0) node[anchor=west] {$T_{\gamma}S\left(\gamma(s)\right)$};
\draw (0,-1.4) node[anchor=north] {$\widetilde{\gamma}$};
\end{tikzpicture}
  \caption{The curve $\widetilde{\gamma}\subset \mathbb{P}\left(\mathcal{H}_{\gamma}\right)\simeq\mathbb{S}^1$.}
  \label{diapositiva7}
\end{figure}
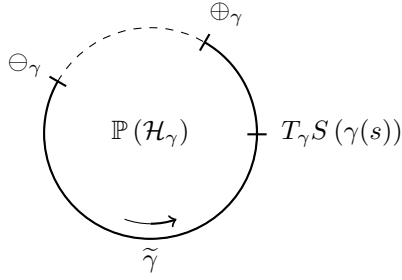

So, when $M$ is $3$--dimensional, the hypothesis \ref{itm:hypotheses-general-2} becomes

\begin{enumerate}[start=2,label={\bfseries H\arabic* }]
\item The distribution $\oplus: \mathcal{N} \rightarrow \mathbb{P}\left(\mathcal{H}\right)$, defined by $\oplus_{\gamma}=\lim_{s\mapsto b^{-}}T_{\gamma}S\left(\gamma\left(s\right)\right)$ for any maximally and future--directed parametrized light ray $\gamma:\left(a,b\right)\rightarrow M$, is differentiable and regular.
\end{enumerate}

The rest of this section \ref{sec:LBoundary} is devoted to the construction of the future $L$--boundary for $m=3$, but we also believe that a similar way can be travelled in order to get the $L$--boundary for any dimension $m\geq 3$. 

\subsection{The space \texorpdfstring{$\widetilde{\mathcal{N}}$}{Ñ} of tangent spaces to skies}\label{sec:LBoundary-Ntilde}

First, let us consider the following natural map
\begin{equation}\label{difeo-sigma}
\begin{tabular}{rrcl}
$\sigma:$ & $\mathbb{PN}$ & $\rightarrow$ & $\mathbb{P}\left(\mathcal{H}\right)$ \\
& $\left[u\right]$ & $\mapsto$ & $T_{\gamma_{\left[u\right]}}S\left(\pi^{\mathbb{PN}}_{M}\left(\left[u\right]\right)\right)$
\end{tabular}
\end{equation}
where $\pi^{\mathbb{PN}}_{M}:\mathbb{PN}\rightarrow M$ is the canonical projection. 
The assumption of $M$ being light non--conjugate, by \cite[Lem. 2.5]{Ba17}, gives us the injectivity of $\sigma$.
But, moreover, this map permits to embed $M$ (by its bundle $\mathbb{PN}$ of null directions) into the geometry of $\mathcal{N}$ (by the bundle $\mathbb{P}\left(\mathcal{H}\right)$ of lines in the contact structure $\mathcal{H}$). 

\begin{proposition}\label{prop-diffeo-sigma}
The map $\sigma:\mathbb{PN}\rightarrow\mathbb{P}\left(\mathcal{H}\right)$ defined in (\ref{difeo-sigma}) is a diffeomorphism onto its image.
Moreover, $\widetilde{\mathcal{N}}=\mathrm{Im}\left(\sigma \right)$ is an open submanifold of $\mathbb{P}\left(\mathcal{H}\right)$.
\end{proposition}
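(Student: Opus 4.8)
The plan is to reduce everything to a local computation in the adapted chart of Theorem \ref{theorem1}, exploiting that for $\dim M=3$ the tangent spaces $T_{\gamma}S(p)$ are \emph{lines}. Injectivity of $\sigma$ is already granted by the light non--conjugate hypothesis (via the cited \cite[Lem.~2.5]{Ba17}), so the real content is to show that $\sigma$ is a smooth local diffeomorphism onto an open subset; together with injectivity this yields a diffeomorphism onto an open image.

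First I would fix a relatively compact, globally hyperbolic, causally convex and normal neighbourhood $V\subset M$, set $U=S(V)$ and consider $\widehat{U}=\bigcup_{X\in U}\widehat{T}X$, which by Theorem \ref{theorem1} is a regular submanifold of $\widehat{T}\mathcal{N}$ carrying the adapted chart $\overline{\varphi}=(\mathbf{x},\mathbf{u},\mathbf{v})$ of Remark \ref{remark-carta-U-hat}. Since every vector in $\widehat{U}$ is tangent to a sky, $\widehat{U}$ lies inside the contact bundle $\mathcal{H}$ (minus its zero section) and is invariant under the fibrewise scaling action of $\mathbb{R}^{*}$; in the chart this action scales only the coordinate $\mathbf{v}$ and fixes $\mathbf{x}$ and $\mathbf{u}$. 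For $m=3$ the fibre $T_{\gamma}S(p)$ is one--dimensional, so $\mathbf{v}$ is a single coordinate and the quotient $\widehat{U}/\mathbb{R}^{*}$ is coordinatised by the invariant pair $(\mathbf{x},\mathbf{u})$ alone.

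Next I would identify $\mathbb{P}(\mathcal{H})$ with the quotient of $\mathcal{H}$ (minus its zero section) by this scaling action. Because the action is free and proper and $\widehat{U}$ is an invariant embedded submanifold, $\mathbb{P}(\widehat{U})=\widehat{U}/\mathbb{R}^{*}$ embeds in $\mathbb{P}(\mathcal{H})$ as a submanifold of dimension $\dim\widehat{U}-1=3m-5$. Precisely here the dimensional coincidence $3m-5=4m-8$, valid only for $m=3$, forces $\mathbb{P}(\widehat{U})$ to be \emph{open} in $\mathbb{P}(\mathcal{H})$. This step, namely establishing that the projectivisation of the regular submanifold $\widehat{U}$ is again a full--dimensional (hence open) submanifold, is the main obstacle and the place where three--dimensionality is indispensable: for $m>3$ the codomain should be a genuine Grassmannian and the dimensions no longer match.

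Finally I would read $\sigma$ in these matched coordinates. On $\mathbb{PN}(V)$ the natural coordinates are again the base point and null direction $(\mathbf{x},\mathbf{u})$, and by definition $\sigma([u])=T_{\gamma_{[u]}}S\big(\pi^{\mathbb{PN}}_{M}([u])\big)$ preserves both the base point $\mathbf{x}=p$ and the null direction $\mathbf{u}$, the suppressed $[\mathbf{v}]$ slot being automatically determined since the fibre is a line. Hence $\sigma$ is the identity in these charts, so it is smooth and a local diffeomorphism onto the open set $\mathbb{P}(\widehat{U})$. Letting $V$ range over a cover of $M$ exhibits $\mathrm{Im}(\sigma)=\widetilde{\mathcal{N}}$ as a union of such open sets, hence open; together with global injectivity this proves that $\sigma$ is a diffeomorphism onto the open submanifold $\widetilde{\mathcal{N}}\subset\mathbb{P}(\mathcal{H})$.
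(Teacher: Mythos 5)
Your argument is correct, but it follows a genuinely different route from the one the paper takes. The paper (following \cite[Prop.~5.1]{Ba18}) proves smoothness of $\sigma$ by exhibiting $\sigma([u])$ as a composition of smooth maps built from null geodesic variations fixing $\pi^{\mathbb{PN}}_{M}([u])$, and then shows directly that $d\sigma_{[u]}$ is an isomorphism; the technical heart there is a separate lemma asserting that $\widetilde{\gamma}(t)=\sigma([\gamma'(t)])$ is a \emph{regular} curve in $\mathbb{P}(\mathcal{H}_{\gamma})$ for an affine parameter, and openness of the image then follows because $\mathbb{PN}$ and $\mathbb{P}(\mathcal{H})$ are both $4$--dimensional. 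You instead outsource all the analytic content to Theorem \ref{theorem1}: since $\widehat{U}$ is a regular $(3m-4)$--dimensional submanifold of $\widehat{T}\mathcal{U}$ contained in $\mathcal{H}$ minus its zero section (which has dimension $4m-7$), for $m=3$ it is already full--dimensional there, its projectivisation is an open embedded submanifold of $\mathbb{P}(\mathcal{H})$, and $\sigma$ is literally the identity in the charts $(\mathbf{x},\mathbf{u})$. This is shorter and makes completely transparent where three--dimensionality enters (the coincidence $3m-4=4m-7$), at the price of hiding the pointwise geometric fact --- the nonvanishing rotation of the sky--tangent line along a light ray --- that the paper isolates and then reuses (for the injectivity of $\widetilde{\gamma}$, the projective parameter, and the definition of $\oplus$ as a genuine limit). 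Two small points you assert without proof but which are standard and true: that a regular submanifold of $\widehat{T}\mathcal{U}$ contained in $\mathcal{H}\setminus 0$ is regular in $\mathcal{H}\setminus 0$, and that the quotient of a saturated embedded submanifold by the free proper scaling action embeds in $\mathbb{P}(\mathcal{H})$; a sentence justifying each would make the argument airtight.
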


\begin{proof}[Sketch of  the proof]
This proposition in proven in \cite[Prop. 5.1]{Ba18} in two steps. 
First, the differentiability of $\sigma$ is shown \cite[Lem. 5.1]{Ba18} by the construction of $\sigma\left( \left[u\right] \right)\in \mathbb{P}\left(\mathcal{H}\right)$ for $\left[u\right]\in \mathbb{PN}$ by differentiable composition of maps, using null geodesics variations which fix the point $\pi^{\mathbb{PN}}_{M}([u])\in M$. 
Second, the differential $d\sigma_{\left[u\right]}$ is an isomorphism. 
In this second step, it is necessary to prove that the curve $\widetilde{\gamma}\left(t\right)=\sigma\left(\left[\gamma'\left(t\right)\right]\right)$ is regular whenever the parameter $t$ is affine \cite[Lem. 5.2]{Ba18}. 
Finally, since both $\mathbb{PN}$ and $\mathbb{P}\left(\mathcal{H}\right)$ are $4$--dimensional then the image $\sigma\left(\mathbb{PN}\right)=\widetilde{\mathcal{N}}$ is open in $\mathbb{P}\left(\mathcal{H}\right)$.
\qed 
\end{proof}

Observe that, by construction, the manifold $\widetilde{\mathcal{N}}$ is the space of all tangent spaces to skies of points of $M$. 
In \cite[Sec. 3.1]{Ba18}, $\widetilde{\mathcal{N}}$ is named as the \emph{blow up} of $M$. 

Notice that, if $\gamma=\gamma\left(t\right)$ is a null geodesic, then $\sigma\left( \left[\gamma'\left(t\right) \right] \right)=T_{\gamma}S\left(\gamma\left(t\right)\right) \in \mathbb{P}\left(\mathcal{H}_{\gamma}\right)$. 
So, the endpoints of the curve 
\begin{equation}\label{def-gammatilde}
\widetilde{\gamma}\left(t\right) \equiv \sigma\left( \left[\gamma'\left(t\right) \right] \right) = T_{\gamma}S\left(\gamma\left(t\right)\right)
\end{equation}
define the distributions $\oplus$ and $\ominus$.
Assuming the hypotheses \ref{itm:hypotheses-general-1}, we have the following implications
\begin{align*}
\widetilde{\gamma}\left(t_1\right)=\widetilde{\gamma}\left(t_2\right) & \Rightarrow T_{\gamma}S\left(\gamma\left(t_1\right)\right) = T_{\gamma}S\left(\gamma\left(t_2\right)\right) & \text{ (by definition)} \\
& \Rightarrow S\left(\gamma\left(t_1\right)\right) = S\left(\gamma\left(t_2\right)\right) &\text{ (by light non--conjugate)} \\
& \Rightarrow \gamma\left(t_1\right) = \gamma\left(t_2\right) &\text{ (by injectiveness of $S$)} \\
& \Rightarrow t_1 = t_2  &\text{ (by injectiveness of $\gamma$)}
\end{align*}
then any $\widetilde{\gamma}$ is also an injective curve.

In \cite[Prop. 6.1]{Ba18}, it is shown that $\widetilde{\mathcal{N}}\subset \mathbb{P}\left(\mathcal{H}\right)$ is a submanifold with boundary under the hypotheses \ref{itm:hypotheses-general-1}, \ref{itm:hypotheses-general-2} for both $\oplus$ and $\ominus$ with $\oplus_{\gamma}\neq \ominus_{\gamma}$ for all $\gamma\in \mathcal{N}$ . This last hypothesis is not critical, it is just a technical condition to simplify the construction.
It is possible to show a more general statement only under conditions \ref{itm:hypotheses-general-1} and \ref{itm:hypotheses-general-2} with the same procedure used in \cite{Ba18}, which is sufficient for the construction of the future boundary.
Notice that, if $\mathcal{N}_{U}$ is the open set of all light rays passing through the globally hyperbolic, normal and causally convex open $U\subset M$ according to remark \ref{rmk-basic-neighb}, we can consider two smooth spacelike Cauchy surfaces $C,C_{-}\subset U$ such that $C_{-}\subset I^{-}(C)$. 
Observe that if $\mathcal{H}_U=\bigcup_{\gamma\in \mathcal{N}_U}\mathcal{H}_{\gamma}$ then  
\[
\mathbb{P}\left(\mathcal{H}_U\right)\simeq \mathcal{N}_{U} \times \mathbb{S}^{1} \simeq C \times \mathbb{S}^{1}\times \mathbb{S}^{1}  .
\]
So, we can choose a coordinate system $\widetilde{\psi}=(x,y,\theta, \phi)$ where $(x,y,\theta)$ are coordinates for $C\times \mathbb{S}^{1}$ and $\phi\in\left[0,2\pi\right)$ is a coordinate for the fibres $\mathbb{P}\left(\mathcal{H}_{\gamma}\right)\simeq \mathbb{S}^{1}$ that can be built from the initial values of the Jacobi fields at the Cauchy surface $C\subset U$. If $P\in\mathbb{P}\left(\mathcal{H}_{\gamma}\right)$ is a line such that $P=\mathrm{span}\{\langle J \rangle\}$, then we can choose a representative $J$ such that $J(0),J'(0)\in T_{\gamma(0)}C$. Since  $\mathbf{g}\left(J(0),\gamma'(0)\right)=0$ and $\mathbf{g}\left(J'(0),\gamma'(0)\right)=0$, then it is possible to write 
\begin{equation*}\label{initial-vectors-J}
J\left(0\right)=w \cdot \mathbf{e}, \quad \text{ and } \quad J'\left(0\right)=v \cdot \mathbf{e}
\end{equation*} 
where $T_{\gamma(0)}C \cap \left\{\gamma'\left(0\right)\right\}^{\perp}=\mathrm{span}\{\mathbf{e}\}$ with $\left\{\gamma'\left(0\right)\right\}^{\perp} = \left\{u\in T_{\gamma(0)}M: \mathbf{g}\left( \gamma'\left(0\right), u \right) = 0 \right\}$.  
Since for any $0\neq\alpha\in\mathbb{R}$ we have $P=\mathrm{span}\{\langle \alpha J \rangle\}=\mathrm{span}\{\langle J \rangle\}$ then the homogeneous coordinate 
\begin{equation}\label{eq-homogeneus-coord}
\phi=\left[w:v\right]
\end{equation}
or, even the polar coordinate $\phi=\arctan (w/v)$, determines the line $P\in\mathbb{P}\left(\mathcal{H}_{\gamma}\right)$.

For any $\gamma\in \mathcal{N}_U$ there exists $s_{\gamma}\in \mathbb{R}$ smoothly depending on $\gamma$ such that $\gamma\left(s_{\gamma}\right)\in C_{-}$\footnote{This can be shown if we notice that, since $U$ is globally hyperbolic, by \cite[Thm. 3.78]{Mi08}, there exists a diffeomorphism $h:C\times \mathbb{R}\rightarrow U$ such that $C_{\lambda}=h\left(C\times\{\lambda\}\right)$ is a smooth spacelike Cauchy surface in $U$. Without any lack of generality, we can assume that $C=C_0$ and $C_{-}=C_c$ for some $c\in\mathbb{R}$. Observe that any light ray can be parametrized by $\gamma(s)=\mathrm{exp}_{\gamma(0)}\left(s\cdot \gamma'(0)\right)$ then, if $p_2:C\times \mathbb{R}\rightarrow \mathbb{R}$ is the canonical projection, then the equation $p_2 \circ h^{-1}\left(\gamma(s)\right)=c$ can be written in coordinates by an equation such that $F(x,y,\theta,s)=c$. Then, the existence of $s_{\gamma}=s_{\gamma}(x,y,\theta)$ follows from the Theorem of implicit function. }.
Moreover, since for all $\gamma\in \mathcal{N}$, every curve $\sigma\left( [\gamma'(s)] \right)=\widetilde{\gamma}(s)$ is injective, then we can assume that 
\[
0<\phi\left(\sigma\left( [\gamma'(s_{\gamma})] \right)\right) < \phi\left(\sigma\left( [\gamma'(s)] \right)\right) < \phi\left(\oplus_{\gamma}\right) < 2\pi
\]
for all $\gamma\in \mathcal{N}_U$ and $s>s_{\gamma}$ in the domain of $\gamma$. Due to $\oplus$ is a smooth distribution then the function $\phi_{\oplus}=\phi\circ \oplus:\mathcal{N}_U\rightarrow \left[0,2\pi\right)$ is smooth, therefore the future boundary of $\widetilde{\mathcal{N}}_U=\mathbb{P}\left(\mathcal{H}_U\right)\cap \widetilde{\mathcal{N}}$ corresponding to $\oplus$ can be locally written by 
\[
\partial^{+} \widetilde{\mathcal{N}}_U =\left\{ \phi = \phi_{\oplus} \right\}   
\]

which can be seen as the graph of $\oplus$. Then $\oplus:\mathcal{N}\rightarrow \mathbb{P}\left(\mathcal{H}\right)$ is a diffeomorphism onto its image and the future boundary of $\widetilde{\mathcal{N}}$ is 
\[
\partial^{+}\widetilde{\mathcal{N}} = \oplus\left(\mathcal{N}\right) .
\]

The past boundary $\partial^{-} \widetilde{\mathcal{N}}$ can be shown to be smooth analogously. It is summarized in the following proposition.

\begin{proposition}\label{prop-boundaries-Ntilde}
Let $M$ be a $3$--dimensional conformal manifold under the hypotheses \ref{itm:hypotheses-general-1} and \ref{itm:hypotheses-general-2}. Then $\oplus:\mathcal{N}\rightarrow \partial^{+}\widetilde{\mathcal{N}}$ is a diffeomorphism and $\partial^{+}\widetilde{\mathcal{N}}=\oplus\left(\mathcal{N}\right)$ is a smooth manifold embedded in the boundary of $\widetilde{\mathcal{N}}$. 
\end{proposition}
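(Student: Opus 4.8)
The plan is to exploit the structure of $\mathbb{P}\left(\mathcal{H}\right)$ as a smooth fibre bundle over $\mathcal{N}$, whose fibre over $\gamma$ is the circle $\mathbb{P}\left(\mathcal{H}_{\gamma}\right)\simeq\mathbb{S}^{1}$. Under hypothesis \ref{itm:hypotheses-general-2}, the map $\oplus$ assigns to each $\gamma\in\mathcal{N}$ the line $\oplus_{\gamma}\in\mathbb{P}\left(\mathcal{H}_{\gamma}\right)$, so it is a smooth section of this bundle; in particular $\pi\circ\oplus=\mathrm{id}_{\mathcal{N}}$, where $\pi:\mathbb{P}\left(\mathcal{H}\right)\rightarrow\mathcal{N}$ is the projection. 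The first remark I would record is that any section is automatically injective: if $\oplus_{\gamma}=\oplus_{\gamma'}$ then both lines lie in a single fibre, forcing $\gamma=\gamma'$. Thus the task splits into three parts: (a) $\oplus\left(\mathcal{N}\right)$ is a smooth embedded submanifold; (b) $\oplus:\mathcal{N}\rightarrow\oplus\left(\mathcal{N}\right)$ is a diffeomorphism; and (c) $\oplus\left(\mathcal{N}\right)$ coincides with the topological boundary $\partial^{+}\widetilde{\mathcal{N}}$.

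For (a) and (b) I would use that a smooth section of a smooth fibre bundle is automatically a smooth embedding onto its image: from $\pi\circ\oplus=\mathrm{id}_{\mathcal{N}}$ one gets $d\pi\circ d\oplus=\mathrm{id}$, so $d\oplus$ is injective and $\oplus$ is a homeomorphism onto its image with continuous inverse $\pi|_{\oplus(\mathcal{N})}$. Together with the differentiability and regularity provided by \ref{itm:hypotheses-general-2}, this yields at once that $\oplus\left(\mathcal{N}\right)$ is an embedded submanifold of $\mathbb{P}\left(\mathcal{H}\right)$ of dimension $\dim\mathcal{N}=3$ and that $\oplus$ is a diffeomorphism onto it. Concretely, in the chart $\widetilde{\psi}=\left(x,y,\theta,\phi\right)$ on $\mathbb{P}\left(\mathcal{H}_U\right)$ introduced before the statement, this is exactly the assertion that the smooth function $\phi_{\oplus}=\phi\circ\oplus$ exists and that $\oplus$ is the graph map $\left(x,y,\theta\right)\mapsto\left(x,y,\theta,\phi_{\oplus}\left(x,y,\theta\right)\right)$, whose image $\left\{\phi=\phi_{\oplus}\right\}$ is a smooth graph and whose inverse is the projection.

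The substantive part is (c), which I would argue fibrewise. For fixed $\gamma:\left(a,b\right)\rightarrow M$ the null directions $\left[u\right]$ with $\gamma_{\left[u\right]}=\gamma$ are exactly those based at points $\gamma\left(s\right)$, $s\in\left(a,b\right)$, so $\widetilde{\mathcal{N}}\cap\mathbb{P}\left(\mathcal{H}_{\gamma}\right)=\widetilde{\gamma}\left(\left(a,b\right)\right)$ is the image of the curve $\widetilde{\gamma}\left(s\right)=T_{\gamma}S\left(\gamma\left(s\right)\right)$, which is injective because $M$ is light non--conjugate; it is thus an embedded arc inside the circle $\mathbb{P}\left(\mathcal{H}_{\gamma}\right)$. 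Using the strict ordering $\phi\left(\widetilde{\gamma}\left(s\right)\right)<\phi\left(\oplus_{\gamma}\right)$ for all $s$ recorded before the statement, I would conclude that $\oplus_{\gamma}=\lim_{s\rightarrow b^{-}}\widetilde{\gamma}\left(s\right)$ is an accumulation point of $\widetilde{\mathcal{N}}$ not attained by any $\widetilde{\gamma}\left(s\right)$, hence $\oplus_{\gamma}\in\overline{\widetilde{\mathcal{N}}}\setminus\widetilde{\mathcal{N}}=\partial\widetilde{\mathcal{N}}$. Ranging over all $\gamma$ and comparing with the local graph description gives $\oplus\left(\mathcal{N}\right)=\partial^{+}\widetilde{\mathcal{N}}$, embedded in $\partial\widetilde{\mathcal{N}}$.

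The main obstacle I anticipate is precisely step (c): making rigorous that the limit line $\oplus_{\gamma}$ is never already a point of $\widetilde{\mathcal{N}}$ and that the graph of $\oplus$ is exactly the accumulation set of $\widetilde{\mathcal{N}}$ from the future side, rather than an interior piece or only part of the boundary. This rests entirely on the fibrewise arc picture together with light non--conjugacy and the monotonicity of $\phi\circ\widetilde{\gamma}$, and on checking that the local descriptions $\left\{\phi=\phi_{\oplus}\right\}$ patch into one global embedded hypersurface of $\partial\widetilde{\mathcal{N}}$. Since the existence, smoothness and regularity of $\oplus$ needed throughout are supplied by \ref{itm:hypotheses-general-2}, I expect the remaining difficulty to be organisational rather than analytic.
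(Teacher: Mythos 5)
Your proposal is correct and follows essentially the same route as the paper: the paper also works in the local trivialization $\mathbb{P}\left(\mathcal{H}_U\right)\simeq \mathcal{N}_U\times\mathbb{S}^1$ with the fibre coordinate $\phi$, uses the injectivity of $\widetilde{\gamma}$ (from light non--conjugacy) and the resulting monotone ordering $\phi\left(\widetilde{\gamma}\left(s\right)\right)<\phi\left(\oplus_{\gamma}\right)$ to identify $\partial^{+}\widetilde{\mathcal{N}}_U$ with the graph $\left\{\phi=\phi_{\oplus}\right\}$ of the smooth function $\phi_{\oplus}=\phi\circ\oplus$, and concludes that $\oplus$ is a diffeomorphism onto this graph. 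Your repackaging of the graph argument as the general fact that a smooth section of a fibre bundle is an embedding is the same idea in slightly different clothing, and your fibrewise treatment of part (c) matches the paper's ordering argument.
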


If both $\oplus$ and $\ominus$ satisfy the condition \ref{itm:hypotheses-general-2} with $\oplus_{\gamma}\neq \ominus_{\gamma}$ for all $\gamma\in \mathcal{N}$, a trivial corollary can be stated.

\begin{corollary}
Let $M$ be a $3$--dimensional conformal manifold under the hypotheses \ref{itm:hypotheses-general-1} and \ref{itm:hypotheses-general-2} for both $\oplus$ and $\ominus$ and $\oplus_{\gamma}\neq \ominus_{\gamma}$ for all $\gamma\in \mathcal{N}$. Then the closure of $\widetilde{\mathcal{N}}$ is a smooth manifold with boundary $\partial \widetilde{\mathcal{N}}= \partial^{+} \widetilde{\mathcal{N}} \cup  \partial^{-} \widetilde{\mathcal{N}}$ where $\partial^{+} \widetilde{\mathcal{N}}=\oplus\left(\mathcal{N}\right)$ and  $\partial^{-} \widetilde{\mathcal{N}}=\ominus\left(\mathcal{N}\right)$ are embedded in $\mathbb{P}\left(\mathcal{H}\right)$ . 
\end{corollary}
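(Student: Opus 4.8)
The plan is to verify the manifold-with-boundary structure fibrewise, working in the local coordinate system $\widetilde{\psi}=(x,y,\theta,\phi)$ on $\mathbb{P}\left(\mathcal{H}_U\right)$ constructed above, where $(x,y,\theta)$ coordinatize $\mathcal{N}_U$ and $\phi$ is the fibre coordinate on $\mathbb{P}\left(\mathcal{H}_\gamma\right)\simeq\mathbb{S}^1$, and then to glue. First I would record the fibrewise picture: for each $\gamma\in\mathcal{N}_U$ the injective curve $\widetilde{\gamma}(s)=T_\gamma S\left(\gamma(s)\right)$ sweeps out an open arc of the circle $\mathbb{P}\left(\mathcal{H}_\gamma\right)$, and since both limits $\ominus_\gamma$ and $\oplus_\gamma$ exist (hypothesis \ref{itm:hypotheses-general-2} applied to $\oplus$ and to $\ominus$) and are distinct ($\oplus_\gamma\neq\ominus_\gamma$), this arc is a genuine closed-ended arc with endpoints $\ominus_\gamma$ and $\oplus_\gamma$. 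Thus in coordinates $\widetilde{\mathcal{N}}_U=\{\phi_\ominus(\gamma)<\phi<\phi_\oplus(\gamma)\}$, where $\phi_\oplus=\phi\circ\oplus$ and $\phi_\ominus=\phi\circ\ominus$ are smooth functions of $(x,y,\theta)$ by Proposition \ref{prop-boundaries-Ntilde} and its analogue for $\ominus$.

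Next I would identify the closure in $\mathbb{P}\left(\mathcal{H}\right)$. Because $\oplus$ and $\ominus$ are continuous, any sequence in $\widetilde{\mathcal{N}}$ that converges has base points $\gamma_n\to\gamma_\infty$ with fibre coordinate trapped in $[\phi_\ominus(\gamma_n),\phi_\oplus(\gamma_n)]$, so its limit lies in the closed arc over $\gamma_\infty$; conversely each such closed arc is approached from within $\widetilde{\mathcal{N}}$. Hence fibrewise the closure is exactly the closed arc, and globally
\[
\overline{\widetilde{\mathcal{N}}}=\widetilde{\mathcal{N}}\cup\oplus(\mathcal{N})\cup\ominus(\mathcal{N})=\widetilde{\mathcal{N}}\cup\partial^{+}\widetilde{\mathcal{N}}\cup\partial^{-}\widetilde{\mathcal{N}},
\]
with no contribution from the complementary arc. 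The hypothesis $\oplus_\gamma\neq\ominus_\gamma$ for all $\gamma$ forces $\partial^{+}\widetilde{\mathcal{N}}\cap\partial^{-}\widetilde{\mathcal{N}}=\varnothing$.

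Then I would produce the charts. Interior points are covered by $\widetilde{\psi}$ since $\widetilde{\mathcal{N}}$ is open in the $4$--manifold $\mathbb{P}\left(\mathcal{H}\right)$. Near a point of $\partial^{+}\widetilde{\mathcal{N}}$ I would replace $\phi$ by $r=\phi_\oplus(x,y,\theta)-\phi$, obtaining a chart onto $\mathbb{R}^{3}\times[0,\infty)$ in which $\widetilde{\mathcal{N}}$ is $\{r>0\}$ and $\partial^{+}\widetilde{\mathcal{N}}$ is $\{r=0\}$; this change of coordinates is smooth precisely because $\phi_\oplus$ is smooth by \ref{itm:hypotheses-general-2}, and $\oplus\colon\mathcal{N}\to\{r=0\}$ is a diffeomorphism onto the embedded hypersurface by Proposition \ref{prop-boundaries-Ntilde}. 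The symmetric choice $r=\phi-\phi_\ominus(x,y,\theta)$ handles $\partial^{-}\widetilde{\mathcal{N}}$. Since the two boundary pieces are disjoint, no chart meets both and no corners arise, so these half-space charts together with $\widetilde{\psi}$ form a manifold-with-boundary atlas whose transitions are smooth, as they all descend from the ambient smooth structure of $\mathbb{P}\left(\mathcal{H}\right)$ combined with the globally defined smooth distributions $\oplus$ and $\ominus$.

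The coordinate computations are routine; the step requiring care is showing that the closure contributes \emph{exactly} the two endpoint hypersurfaces and nothing else, and that the boundary charts are genuinely smooth up to $r=0$. Both reduce to the continuity and smoothness of $\oplus$ and $\ominus$ guaranteed by \ref{itm:hypotheses-general-2}, together with the separation hypothesis $\oplus_\gamma\neq\ominus_\gamma$, which is what prevents the fibrewise arcs from degenerating or from creating corner points where the two boundaries would meet.
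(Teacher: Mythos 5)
Your argument is correct and follows essentially the same route as the paper: the text preceding the corollary establishes exactly this local picture, writing $\partial^{+}\widetilde{\mathcal{N}}_U$ as the graph $\{\phi=\phi_{\oplus}\}$ in the trivialization $(x,y,\theta,\phi)$ of $\mathbb{P}\left(\mathcal{H}_U\right)$, using the injectivity of $\widetilde{\gamma}$ and the smoothness of $\phi_{\oplus}=\phi\circ\oplus$ guaranteed by \ref{itm:hypotheses-general-2}, and then treats the two--sided statement as an immediate consequence together with its $\ominus$--analogue and the separation $\oplus_{\gamma}\neq\ominus_{\gamma}$. Your explicit boundary charts $r=\phi_{\oplus}-\phi$ and the fibrewise identification of the closure make precise what the paper leaves as ``trivial,'' but introduce no new idea.
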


It is known that a projectivity from $\mathbb{S}^{1}$ to $\mathbb{R}\cup \{\infty\}$ can be defined by choosing three points $s_1,s_2,s_3\in\mathbb{S}^{1}$ and assigning them the corresponding value in $\mathbb{R}\cup \{\infty\}$. 
Then we can define, in a smooth way, a projectivity in each fibre $\mathbb{P}\left(\mathcal{H}_{\gamma}\right)\simeq \mathbb{S}^{1}$. 
If $U\subset M$ is a globally hyperbolic, causally convex and normal open set and $C,C_{-}\subset U$ are smooth spacelike Cauchy surfaces such that $C_{-}\subset I^{-}(C)$, we can observe that $\oplus$ is a section of the bundle $\pi^{\mathbb{P}\left(\mathcal{H}\right)}_{\mathcal{N}}:\mathbb{P}\left(\mathcal{H}_U\right)\rightarrow \mathcal{N}_U$ and we can consider two disjoint sections $\sigma\left(\mathbb{PN}(C)\right)\in \mathbb{P}\left(\mathcal{H}_U\right)$ and $\sigma\left(\mathbb{PN}(C_{-})\right)\in \mathbb{P}\left(\mathcal{H}_U\right)$, then they permit to define a projectivity $\mathbf{t}$ in such a way that
\[
\mathbf{t}\left(\sigma\left(\mathbb{PN}\left(C_{-}\right)\right)\right)=-1, \qquad \mathbf{t}\left(\sigma\left(\mathbb{PN}\left(C\right)\right)\right)=0, \qquad \mathbf{t}\left(\partial^{+} \widetilde{\mathcal{N}}_U \right)=1 .
\]
We can see $\mathbf{t}$ as a new coordinate for the fibres of $\mathbb{P}\left(\mathcal{H}_{U}\right)$. It is easy to check that $\mathbf{t}$ is related to $\phi$ by
\begin{equation}\label{eq-proyectivo-homogeneo}
\mathbf{t}\left(P_{\gamma}\right)=\frac{\left(\phi^{\vee}-\phi^{\wedge}\right)\left(\phi\left(P_{\gamma}\right)-\phi^0\right)}{\left(2\phi^0 -\left(\phi^{\wedge}+\phi^{\vee}\right)\right)\phi\left(P_{\gamma}\right) +\left(2\phi^{\wedge}\phi^{\vee}-\phi^0\left(\phi^{\wedge}+\phi^{\vee}\right)\right)}  
\end{equation} 
for $P_{\gamma}\in \mathbb{P}\left(\mathcal{H}_{\gamma}\right)-\widetilde{\infty}_{\gamma}$ where $\widetilde{\infty}$ is the section corresponding to the infinite of $\mathbf{t}$ which verifies $\widetilde{\infty}\cap \overline{\widetilde{\mathcal{N}}_U} = \varnothing$ and where we have denoted $\phi^{\vee}=\phi\circ \sigma\left([\gamma'(s_{\gamma})]\right)$, $\phi^{\wedge}=\phi\circ\oplus (\gamma)$ and $\phi^0=\phi\circ\sigma\circ \xi^{-1}(\gamma)$ with $\xi:\mathbb{PN}(C)\rightarrow \mathcal{N}_U$ the diffeomorphism of diagram (\ref{diagram-charts}).

So, we can express the trivialization $ \mathbb{P}\left(\mathcal{H}_U\right)\simeq \mathcal{N}_{U} \times \mathbb{S}^{1}$ by a map 
\begin{equation}\label{eq-varepsilon}
\begin{tabular}{rrcl}
$\varepsilon:$ & $\mathcal{N}_{U}\times \mathbb{R}$ & $\rightarrow$ & $\mathbb{P}\left(\mathcal{H}_{U}\right)-\widetilde{\infty}$ \\
& $\left(\gamma,\mathbf{t}\right)$ & $\mapsto$ & $\widetilde{\gamma}\left(\mathbf{t}\right)$
\end{tabular}
\end{equation}
which is a diffeomorphism because its expression in coordinates is just $\left((x,y,\theta),\mathbf{t}\right)\mapsto (x,y,\theta,\mathbf{t})$ and $\mathbf{t}$ is said to be a \emph{projective parameter}.
Now it is clear that $\varepsilon\left(\mathcal{N}_{U}\times (-1,1)\right)\subset \widetilde{\mathcal{N}}_U $ with 
\[
\sigma\left(\mathbb{PN}\left(C_{-}\right)\right)=\varepsilon\left(\mathcal{N}_{U}\times \{-1\}\right), \qquad \sigma\left(\mathbb{PN}\left(C\right)\right)=\varepsilon\left(\mathcal{N}_{U}\times \{0\}\right), \qquad \partial^{+} \widetilde{\mathcal{N}}_U =\varepsilon\left(\mathcal{N}_{U}\times \{1\}\right)
\]
and, for any light ray $\gamma\in \mathcal{N}_U$, the curve $\sigma\left(\left[\gamma'(\mathbf{t})\right]\right)=\widetilde{\gamma}\left(\mathbf{t}\right)=\varepsilon\left(\gamma, \mathbf{t}\right)$ is defined for $\mathbf{t}\in\left(-1,1\right)$, but it is naturally extended to any $\mathbf{t}\in \mathbb{R}$ by $\widetilde{\gamma}\left(\mathbf{t}\right)= \varepsilon\left(\gamma,\mathbf{t}\right)$.
Moreover, the tangent vector to $\widetilde{\gamma}$ satisfies
\begin{equation}\label{extension-gamma-prima-tilde}
\widetilde{\gamma}'\left(\mathbf{t}\right)=\left( \frac{\partial}{\partial \mathbf{t}} \right)_{\widetilde{\gamma}\left(\mathbf{t}\right)}  .
\end{equation}

The map $\varepsilon$ defined by choosing Cauchy surfaces $C,C_{-}\subset U$ is called a \emph{local projective synchronization} and it will permit us to build the future $L$--boundary. The construction of the past boundary can be done by reversing the time. Notice that, if $\oplus_{\gamma}\neq \ominus_{\gamma}$ for all $\gamma\in\mathcal{N}$, the section $\sigma\left(\mathbb{PN}(C_{-})\right)\in \mathbb{P}\left(\mathcal{H}_U\right)$ can be replaced by $\partial^{-}\widetilde{\mathcal{N}}_U$ in order to define the projective parameter. In this case, we have that $\varepsilon:\mathcal{N}_U\times (-1,1)\rightarrow \widetilde{\mathcal{N}}_U$ is a diffeomorphism (see \cite[Prop. 5.2]{Ba18}).

\begin{remark}
If $M$ is globally hyperbolic conformal manifold with Cauchy surface $C\subset M$ diffeomorphic to $\mathbb{R}^{2}$, then $\mathcal{N}\simeq C \times \mathbb{S}^{1}$ and therefore the projective parameter $\mathbf{t}$ can be defined for $\mathbb{P}\left(\mathcal{H}\right)$ since $\mathbb{P}\left(\mathcal{H}\right)\simeq \mathcal{N}\times \mathbb{R}\cup\{\infty\}\simeq C\times \mathbb{S}^{1}\times \mathbb{R}\cup\{\infty\}$ where $C$ is a global Cauchy surface. 
In this case we will call \emph{universal projective parameter} to the parameter $\mathbf{t}\in \mathbb{R}$.
\end{remark}

\begin{remark}\label{remark-light-non-conj}
The bijectivity of $\varepsilon$ implies the injectivity of the curves $\widetilde{\gamma}$. This generalizes the condition of light non--conjugation of $M$, because  $\widetilde{\gamma}(\mathbf{t})=\varepsilon(\gamma,\mathbf{t})$ extends $\widetilde{\gamma}(\mathbf{t})=\sigma([\gamma'(\mathbf{t})])$ outside of $\widetilde{\mathcal{N}}$. 
\end{remark}

\subsection{Distributions in \texorpdfstring{$\widetilde{\mathcal{N}}$}{Ñ}}\label{sec:LBoundary-distrib}

In order to simplify, we will work only with the boundary $\partial^{+}\widetilde{\mathcal{N}}$ because all the construction for $\partial^{-}\widetilde{\mathcal{N}}$ can be done analogously.

Now, we will define two distributions in $\overline{\widetilde{\mathcal{N}}}=\widetilde{\mathcal{N}}\cup \partial^{+}\widetilde{\mathcal{N}}$: the former will be $\mathcal{D}^{\sim}$ in $\widetilde{\mathcal{N}}$ and the latter $\partial^{+}\mathcal{D}^{\sim}$ in $\partial^{+}\widetilde{\mathcal{N}}$. 
We will determine the conditions so that the union $\overline{\mathcal{D}^{\sim}}=\mathcal{D}^{\sim}\cup \partial^{+}\mathcal{D}^{\sim}$ is a smooth distribution in $\overline{\widetilde{\mathcal{N}}}$.   
The orbits of the distribution $\partial^{+}\mathcal{D}^{\sim}$ will corresponds to the points of the future boundary. 

First, let us call $\mathcal{P}$ to the regular distribution in $\mathbb{PN}$ defined by the fibres $\mathbb{PN}_q$ with $q\in M$ then, trivially, the map $\zeta:M\rightarrow \mathbb{PN}/\mathcal{P}$ defined by $\zeta\left(q\right)=\mathbb{PN}_q$ is a diffeomorphism.
Then, passing the distribution $\mathcal{P}$ to $\widetilde{\mathcal{N}}$ by the diffeomorphism $\sigma$, we obtain the distribution $\mathcal{D}^{\sim}$. 

Observe that the orbits of $\mathcal{D}^{\sim}$ are 
\[
\sigma\left(\mathbb{PN}_{q}\right)=\{ \sigma\left(\left[v\right]\right)\in\widetilde{\mathcal{N}}:\left[v\right]\in \mathbb{PN}_{q}  \}
\]
being $1$--dimensional compact submanifolds when $\mathrm{dim}~M=3$, then  $\mathcal{D}^{\sim}$ is a regular distribution and $\widetilde{\mathcal{N}}/\mathcal{D}^{\sim}$ is a differentiable manifold. 
Moreover the quotient map $\widetilde{\pi}:\widetilde{\mathcal{N}} \rightarrow \widetilde{\mathcal{N}}/\mathcal{D}^{\sim}$ is a submersion. 

Now, we have the following diagram
\begin{equation}\label{diagram-distrib-D}
\begin{tikzpicture}[every node/.style={midway}]
\matrix[column sep={6em,between origins},
        row sep={2em}] at (0,0)
{ ; &  \node(PN)   { $\mathbb{PN}$}  ; & \node(Nt)   { $\widetilde{\mathcal{N}}$} ; \\
 \node(M)   { $M$}; &  \node(PN-P)   { $\mathbb{PN}/ \mathcal{P}$}  ; & \node(Nt-D)   { $\widetilde{\mathcal{N}}/\mathcal{D}^{\sim}$} ;  \\} ; 
\draw[->] (PN) -- (Nt) node[anchor=south]  {$\sigma$};
\draw[->] (PN) -- (PN-P) node[anchor=east]  {$\kappa$};
\draw[->] (Nt)   -- (Nt-D) node[anchor=west] {$\widetilde{\pi}$};
\draw[->] (M)   -- (PN-P) node[anchor=north] {$\zeta$};
\draw[->] (PN-P)   -- (Nt-D) node[anchor=north] {$\widetilde{\sigma}$}; 
\end{tikzpicture}
\end{equation}
where $\widetilde{\sigma}$ is the map defined by $\widetilde{\sigma}\left(\mathbb{PN}_q\right)=\sigma\left(\mathbb{PN}_q\right)\in \widetilde{\mathcal{N}}/\mathcal{D}^{\sim}$, the maps $\kappa$ and $\widetilde{\pi}$ are the corresponding submersions, and $\sigma$, $\zeta$ and $\widetilde{\sigma}$ are diffeomorphisms. 
Therefore, we can observe that 
\begin{equation}\label{diffeo-S}
\widetilde{S}=\widetilde{\sigma}\circ \zeta: M \rightarrow \widetilde{\mathcal{N}}/\mathcal{D}^{\sim}
\end{equation}
is a diffeomorphism. 
There is a different proof of this result in \cite[Prop. 2.6]{Ba17}.

\vspace{5mm}

The second distribution to be defined is $\partial^{+} \mathcal{D}^{\sim}$. 
Recall that, by hypotheses, $\oplus$ is a $1$--dimensional regular distribution in $\mathcal{N}$, then $\oplus$ is integrable and the orbits of $\oplus$ define a regular foliation. 
By proposition \ref{prop-boundaries-Ntilde}, $\oplus: \mathcal{N}\rightarrow\partial^{+}\widetilde{\mathcal{N}}$ is a diffeomorphism, then the images of the orbits of $\oplus$ define a regular foliation in $\partial^{+}\widetilde{\mathcal{N}}$ corresponding to the distribution $\partial^{+} \mathcal{D}^{\sim}$.  
Of course, the distribution $\partial^{-}\mathcal{D}^{\sim}$ in $\partial^{-}\widetilde{\mathcal{N}}$ needed to build the past boundary can be defined by $\ominus$.                        

\vspace{5mm}

The next step is to describe the distribution $\overline{\mathcal{D}^{\sim}}=\mathcal{D}^{\sim}\cup \partial^{+} \mathcal{D}^{\sim}$ in order to study its smoothness. 
For this purpose, we will construct explicitly the orbits of $\mathcal{D}^{\sim}$.

Fix some auxiliary metric $\mathbf{g}\in \mathcal{C}$ and some globally hyperbolic, normal and causally convex open $U\subset M$ with $C\subset U$ a smooth spacelike Cauchy surface in $U$ as in remark \ref{rmk-basic-neighb}.
We denote by $\mathcal{N}_{U}\subset \mathcal{N}$ the open set of all light rays passing by $U$ and hence $\mathcal{N}_{U}$ is diffeomorphic to $C\times \mathbb{S}^1$ and then, we can consider all light rays $\gamma\in \mathcal{N}_{U}$ parametrized such that $\gamma'\left(0\right)\in \Omega\left(C\right)=\{ u\in \mathbb{N}^{+}\left(U\right): \mathbf{g}\left(u,T\right)=-1 \}$ for some future--directed timelike vector field $T\in\mathfrak{X}\left(M\right)$.
Since $M$ is strongly causal, by \cite[Prop. 6.4.7]{HE}, we can assume without any lack of generality, there is no imprisoned light ray in the closure $\overline{U}$ where $U$ is assumed to be relatively compact, so $U$ can be chosen such that $\gamma\cap U$ has only one connected component for all $\gamma \in\mathcal{N}_{U}$.

Let us consider an orthonormal frame $\{ E_1, E_2, E_3 \}$ on the local Cauchy surface $C$ such that $E_2, E_3$ are tangent to $C$ and $E_1$ is future--directed timelike related to the conformal structure $\left(M,\mathcal{C}\right)$.  
For a light ray $\gamma\in \mathcal{N}_{U}$ such that $\gamma\simeq\left(c,\theta\right)\in C\times \mathbb{S}^{1}$, we define $\left\{\mathbf{E}_i\left(\gamma,\mathbf{t}\right)\right\}_{i=1,2,3}$ the extension of the frame $\left\{E_i\left(c\right)\right\}_{i=1,2,3}$ by parallel transport to $\gamma\left(\mathbf{t}\right)$ along $\gamma$ related to the metric $\mathbf{g}$, where $\mathbf{t}$ is the projective parameter defined in section \ref{sec:LBoundary-Ntilde} by a local projective synchronization $\varepsilon$. 

The regular dependence on parameters of the solutions of initial value problems of ODEs \cite[Ch.~5]{Ha} assures the smooth dependence of the frames $\left\{\mathbf{E}_i\left(\gamma,\mathbf{t}\right)\right\}_{i=1,2,3}$ on $\left(\gamma,\mathbf{t}\right)$.

If $\theta\in \left[0,2\pi\right)\simeq \mathbb{S}^{1}$, then we define the lightlike vector
\[
V\left(\gamma,\mathbf{t},s\right) = \mathbf{E}_1\left(\gamma,\mathbf{t}\right) + \cos\left(\theta + s \right)\mathbf{E}_2\left(\gamma,\mathbf{t}\right)+\sin\left(\theta + s \right)\mathbf{E}_3\left(\gamma,\mathbf{t}\right)\in \mathbb{N}
\]
which depends smoothly on $\left(\gamma,\mathbf{t}\right)$ and defines the line 
\[
\Lambda\left(\gamma,\mathbf{t},s\right)=\left[V\left(\gamma,\mathbf{t},s\right)\right] = \mathrm{span}\{V\left(\gamma,\mathbf{t},s\right)\}\in \mathbb{PN}  .
\]

By means of the diffeomorphisms $\sigma$ and $\varepsilon$ of section \ref{sec:LBoundary-Ntilde}, and the canonical projections $p_1:\mathcal{N}\times \left(-1,1\right)\rightarrow \mathcal{N}$ and $p_2:\mathcal{N}\times \left(-1,1\right)\rightarrow \left(-1,1\right)$, we define the following differentiable maps
\begin{equation}\label{eq-def-distribution}
\begin{tabular}{l}
$\widetilde{X}\left(\gamma,\mathbf{t},s\right)=\sigma\left( \Lambda\left(\gamma,\mathbf{t},s\right)  \right) \in \widetilde{\mathcal{N}}$ \\
$X\left(\gamma,\mathbf{t},s\right)= p_1 \circ \varepsilon^{-1}\left(\widetilde{X}\left(\gamma,\mathbf{t},s\right) \right) \in \mathcal{N}$ \\
$\tau\left(\gamma,\mathbf{t},s\right)=p_2 \circ \varepsilon^{-1}\left(\widetilde{X}\left(\gamma,\mathbf{t},s\right) \right) \in \left(-1,1\right)$ .
\end{tabular}
\end{equation}
Observe that, for fixed $\left(\gamma,\mathbf{t}\right)\in\mathcal{N}_{U}\times\left(-1,1\right)$, the curve $X_{\left(\gamma,\mathbf{t}\right)}\left(s\right)=X\left(\gamma,\mathbf{t},s\right)$ is a parametrization of the $1$--dimensional submanifold $S\left(\gamma\left(\mathbf{t}\right)\right)\cap \mathcal{N}_U$. 
We will denote
\[
\gamma_{\left(\mathbf{t},s\right)}= X\left(\gamma,\mathbf{t},s\right)\in \mathcal{N}
\] 
when we will need to use a parameter for the light ray.
Also, the function $\tau_{\left(\gamma,\mathbf{t}\right)}\left(s\right)=\tau\left(\gamma,\mathbf{t},s\right)$ is the value of the parameter of $\gamma_{\left(\mathbf{t},s\right)}$ at the point $\gamma\left(\mathbf{t}\right)$ from $C$, then the identity
\begin{equation}\label{eq-gamma-tau}
\gamma_{\left(\mathbf{t},s\right)}\left( \tau\left(\gamma,\mathbf{t},s\right)\right)=\gamma\left(\mathbf{t}\right)
\end{equation}  
holds.
Moreover, $\widetilde{X}_{\left(\gamma,\mathbf{t}\right)}\left(s\right)=\widetilde{X}\left(\gamma,\mathbf{t},s\right)$ is the curve of lines of classes of Jacobi fields tangent to the light ray $X_{\left(\gamma,\mathbf{t}\right)}\left(s\right)$ at the point $\gamma\left(\mathbf{t}\right)$.

Recall that the map $\xi:\mathbb{PN}(C) \rightarrow \mathcal{N}_U$ of diagram (\ref{diagram-charts}) is a diffeomorphism, so the curve defined by 
\begin{equation}\label{eq-curva-c}
c_{\left(\gamma,\mathbf{t}\right)}\left(s\right)=\pi^{\mathbb{PN}(C)}_{C}\circ \xi^{-1}\left(\gamma_{(\mathbf{t},s)}\right)
\end{equation}
is smooth, but it can also be written by 
\[
c_{\left(\gamma,\mathbf{t}\right)}\left(s\right)=\gamma_{(\mathbf{t},s)}\cap C =\gamma_{\left(\mathbf{t},s\right)}\left(0\right)\in C.
\]

Now, we replace the parameter $s$ for the arc--length parameter. 
Fix some auxiliary metric $\mathbf{g}\in \mathcal{C}$ in $M$, since the Cauchy surface $C$ is differentiable and spacelike, the restriction $\left.\mathbf{g}\right|_{TC\times TC}$ is a Riemannian metric on $C$. 
If consider any $\langle J_{\left(\gamma,\mathbf{t},s\right)}\rangle \in \widetilde{X}\left(\gamma,\mathbf{t},s\right)$, since $M$ is assumed to be light non--conjugate, then for any $\mathbf{t}>0$ we have that any representative $J_{\left(\gamma,\mathbf{t},s\right)}\in \langle J_{\left(\gamma,\mathbf{t},s\right)}\rangle$ satisfies
\[
 J_{\left(\gamma,\mathbf{t},s\right)}\left(0\right)\neq 0~(\mathrm{mod} \gamma'_{\left(\mathbf{t},s\right)}\left(0\right)) 
\] 
and by lemma \ref{lemmaDC92}, 
\[
c'_{\left(\gamma,\mathbf{t}\right)}\left(s\right) \neq 0
\]
therefore we can parametrize the curves $c_{\left(\gamma,\mathbf{t}\right)}$ with the arc--length parameter $\mathbf{s}$ defined in $C$ by the restriction of $\mathbf{g}$.

\begin{lemma}\label{lem-curva-c-epsilon}
For every $\gamma_0\in \mathcal{N}$ and every $\delta\in(0,1)$ there exists $\epsilon>0$ and a neighbourhood $\mathcal{N}_U^{\epsilon}\subset \mathcal{N}$ of $\gamma_0$ such that the curves $c_{\left(\gamma,\mathbf{t}\right)}\left(\mathbf{s}\right)\in C$ can be defined for $\left(\gamma,\mathbf{t},\mathbf{s}\right)\in \mathcal{N}_{U}^{\epsilon}\times (1-\delta,1) \times (-\epsilon,\epsilon)$, where $\mathbf{s}$ is the arc--length parameter.
\end{lemma}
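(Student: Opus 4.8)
The plan is to reduce the statement to two uniform estimates over the half--open interval $(1-\delta,1)$: a positive lower bound $\rho$ for the $s$--speed of the curves $c_{(\gamma,\mathbf{t})}$ measured with $\left.\mathbf{g}\right|_{TC}$, and a window $(-\eta,\eta)$ of the parameter $s$ on which the light rays $\gamma_{(\mathbf{t},s)}$ still meet $U$ and cross $C$. First I would record the structural facts already available: the maps in (\ref{eq-def-distribution}) are smooth in $(\gamma,\mathbf{t},s)$ on the open set where $\gamma_{(\mathbf{t},s)}\in\mathcal{N}_U$, one has $X(\gamma,\mathbf{t},0)=\gamma$ and $\tau(\gamma,\mathbf{t},0)=\mathbf{t}$, so $c_{(\gamma,\mathbf{t})}(0)=\gamma\cap C=\gamma(0)$ for every $\mathbf{t}$, and $c'_{(\gamma,\mathbf{t})}(s)\neq 0$ by the non--conjugacy argument preceding the lemma. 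Hence for each fixed $(\gamma,\mathbf{t})$ the arc--length reparametrization $\mathbf{s}(s)=\int_0^{s}\bigl|c'_{(\gamma,\mathbf{t})}\bigr|$ is a diffeomorphism onto its image, and the only issue is that this image must contain a \emph{fixed} interval $(-\epsilon,\epsilon)$ simultaneously for all $(\gamma,\mathbf{t})\in\mathcal{N}_U^{\epsilon}\times(1-\delta,1)$.

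The next step is to identify the speed with a Jacobi field in order to control it. Since $X_{(\gamma,\mathbf{t})}(s)$ parametrizes $S(\gamma(\mathbf{t}))\cap\mathcal{N}_U$, the $s$--variation $s\mapsto\gamma_{(\mathbf{t},s)}$ is a variation of light rays all passing through the single point $\gamma(\mathbf{t})$. By Lemma \ref{lemmaDC92} its variational class $\langle J_{(\gamma,\mathbf{t},s)}\rangle$ is tangent to that sky, so it vanishes $(\mathrm{mod}\ \gamma')$ exactly at $\gamma(\mathbf{t})$, while its value at the parameter $0$ projects onto $c'_{(\gamma,\mathbf{t})}(s)\in T_{\gamma(0)}C$. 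Thus the speed is the norm of the initial datum at $C$ of a Jacobi class whose zero recedes along the ray. Light non--conjugation guarantees this datum is nonzero; what must be added is that it cannot degenerate as $\mathbf{t}\to 1$.

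This uniform lower bound is the main obstacle, precisely because $(1-\delta,1)$ is not compact. On a compact subinterval $[1-\delta,1-\delta']$ the continuity and nonvanishing of $c'_{(\gamma,\mathbf{t})}$, together with the relative compactness of $\overline{U}$ and the absence of imprisoned light rays in it, yield a lower bound there. Near the open end I would argue that $\gamma(\mathbf{t})$ escapes to the future $L$--boundary: since $\tau(\gamma,\mathbf{t},0)=\mathbf{t}\to 1$, the point $\gamma(\mathbf{t})$ sits at projective parameter tending to $1$ along the neighbouring rays as well, so it leaves every compact subset of $M$ and its affine distance to $C$ grows without bound. A Jacobi class that vanishes at such a receding point has norm at $C$ that is bounded below (indeed diverges, as the flat model of Example \ref{example-contact-3M} exhibits), using the no--conjugate--point property of Remark \ref{remark-light-non-conjugate} to prevent collapse. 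Combining the two regimes gives a single $\rho>0$ valid on all of $(1-\delta,1)$ after shrinking the neighbourhood of $\gamma_0$.

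Finally I would fix the $s$--window. The condition $\gamma_{(\mathbf{t},s)}\in\mathcal{N}_U$ is open and holds at $s=0$ for every $(\gamma,\mathbf{t})$ with $\gamma\in\mathcal{N}_U$; using again that $U$ is relatively compact and causally convex with no imprisoned rays, and shrinking to a neighbourhood $\mathcal{N}_U^{\epsilon}$ of $\gamma_0$, one obtains a uniform $\eta>0$ such that $\gamma_{(\mathbf{t},s)}$ meets $U$ and crosses $C$ for all $|s|<\eta$. Then from $\bigl|\mathbf{s}(s)\bigr|=\bigl|\int_0^{s}|c'_{(\gamma,\mathbf{t})}|\bigr|\ge \rho|s|$ the arc--length map carries $(-\eta,\eta)$ onto an interval containing $(-\rho\eta,\rho\eta)$, so setting $\epsilon=\rho\eta$ ensures that $c_{(\gamma,\mathbf{t})}(\mathbf{s})$ is defined for all $(\gamma,\mathbf{t},\mathbf{s})\in\mathcal{N}_U^{\epsilon}\times(1-\delta,1)\times(-\epsilon,\epsilon)$, which is the claim.
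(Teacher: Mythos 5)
Your decomposition into two separate uniform bounds --- a lower bound $\rho$ on the speed $\vert c'_{(\gamma,\mathbf{t})}(s)\vert$ in the angular parameter $s$, and a uniform angular window $(-\eta,\eta)$ on which $\gamma_{(\mathbf{t},s)}$ still meets $U$ and crosses $C$ --- cannot work, because the second bound is false. Already in $\mathbb{M}^3$, if $\gamma(\mathbf{t})$ sits at affine height $T$ above $C$, the light ray through $\gamma(\mathbf{t})$ whose direction is rotated by the angle $s$ meets $\{t=0\}$ at distance $2T\vert\sin(s/2)\vert$ from $\gamma\cap C$; for any fixed $\eta>0$ this point leaves every relatively compact $U$ (and the ray then fails to meet $U$ at all) once $T$ is large, i.e.\ once $\mathbf{t}$ is close enough to $1$. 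So the maximal angular interval of definition of $c_{(\gamma,\mathbf{t})}$ shrinks like $1/T$ as $\mathbf{t}\to1$ while the speed grows like $T$: the two quantities are coupled, and taking the infimum of each separately over the non-compact range $(1-\delta,1)$ yields $\inf\eta=0$, so $\epsilon=\rho\eta$ degenerates. The openness of the condition $\gamma_{(\mathbf{t},s)}\in\mathcal{N}_U$ only gives an $\eta$ depending on $(\gamma,\mathbf{t})$; relative compactness of $U$ and the absence of imprisoned rays do not upgrade this to a uniform $\eta$, precisely because $\gamma(\mathbf{t})$ escapes every compact set. (Your uniform lower bound on the speed near $\mathbf{t}=1$ is likewise only asserted: light non--conjugation gives nonvanishing for each fixed $\mathbf{t}$, not a bound uniform over $(1-\delta,1)$.)

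The content of the lemma is exactly that one must work directly with the arc--length parameter, for which these two effects cancel, and this is what the paper's proof does. It considers, for each $(\gamma,\mathbf{t})$, the maximal arc--length interval $(-a_{(\gamma,\mathbf{t})},b_{(\gamma,\mathbf{t})})$ on which $c_{(\gamma,\mathbf{t})}$ stays inside the ball $B_{2\epsilon}(\gamma_0\cap C)$, and shows $b_{(\gamma,\mathbf{t})}\geq\epsilon$ by contradiction: if $b_{(\gamma,\mathbf{t})}<\epsilon$, a sequence $\mathbf{s}_n\to b_{(\gamma,\mathbf{t})}$ produces directions $[\gamma'_{(\mathbf{t},\mathbf{s}_n)}(0)]$ in the relatively compact set $\mathbb{PN}(C)$ and directions $[v_n]$ in the compact fibre $\mathbb{PN}_{\gamma(\mathbf{t})}$; extracting convergent subsequences and using that $\mathcal{N}$ is Hausdorff forces the two limit light rays to coincide, so the limit ray lies in $S(\gamma(\mathbf{t}))$ and still crosses $C$, contradicting maximality of the interval. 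That compactness--plus--Hausdorffness step, applied in the arc--length parametrization rather than the angular one, is the idea missing from your proposal.
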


\begin{proof}
Fix some auxiliary $\mathbf{g}\in \mathcal{C}$. 
Let us consider a neighbourhood $\mathcal{N}_U$ of $\gamma_0$ such that $U\subset M$ is relatively compact and globally hyperbolic with Cauchy surface $C\in U$. 
For $p\in C$ and $r>0$, we will denote by $B_r(p)$ the ball in $C$ of radius $r$ centered at $p$ determined by the restriction of the metric $\mathbf{g}$ to $C$.
Now consider some $\epsilon > 0$ such that the ball $B_{2\epsilon}(\gamma_0\cap C)$ is fully contained in $C$. 
So, let us call $\mathcal{N}_U^{\epsilon}=\{ \gamma\in \mathcal{N}_U: \gamma\cap C\in B_{\epsilon}(\gamma_0 \cap C)\}\subset \mathcal{N}_U$. 

For any $(\gamma,\mathbf{t})\in \mathcal{N}_U^{\epsilon} \times \left(1-\delta,1\right)$ there exist $a_{(\gamma,\mathbf{t})},b_{(\gamma,\mathbf{t})}>0$ such that the maximal domain of definition of the segment of the curve $c_{(\gamma,\mathbf{t})}$ contained in $B_{2\epsilon}(\gamma_0\cap C)$ is the interval $I_{(\gamma,\mathbf{t})}=\left( -a_{(\gamma,\mathbf{t})},b_{(\gamma,\mathbf{t})} \right)\subset \mathbb{R}$.

Let us assume that $b_{(\gamma,\mathbf{t})}<\epsilon$. If we take a sequence $\{\mathbf{s}_n\}\subset I_{(\gamma,\mathbf{t})}$ such that $\mathbf{s}_n\mapsto b_{(\gamma,\mathbf{t})}$, since $\mathbb{PN}(C)$ is relatively compact, then the sequence 
\[
\left[ u_n \right] = \left[ \gamma'_{(\mathbf{t},\mathbf{s}_n)}(0) \right]\in\mathbb{PN}(C)
\]
has a convergent subsequence.  Assuming that this subsequence is $\{[u_n]\}$ itself, then 
\[
\left[ u_n \right]\mapsto \left[ u \right]\in\mathbb{PN}(C)  .
\] 
Moreover, for any $\left[ u_n \right]$ there exists $\left[ v_n \right]\in \mathbb{PN}_{\gamma(\mathbf{t})}$ such that $\gamma_{\left[ v_n \right]}=\gamma_{\left[ u_n \right]}\in\mathcal{N}$. 
Since $\mathbb{PN}_{\gamma(\mathbf{t})}$ is compact, then we can consider that $\left[ v_n \right]\mapsto \left[ v \right]\in \mathbb{PN}_{\gamma(\mathbf{t})}$. 
Now, because $\mathcal{N}$ is Hausdorff, then $\gamma_{\left[ v \right]}=\gamma_{\left[ u \right]}\in\mathcal{N}$ and hence we have $\gamma_{\left[ u \right]}\in S\left(\gamma(\mathbf{t})\right)$. 
This implies that $c_{(\gamma,\mathbf{t})}(\mathbf{s})$ exists for $\mathbf{s}=b_{(\gamma,\mathbf{t})}$ and the interval $I_{(\gamma,\mathbf{t})}$ is not maximal. 
This fact contradicts the assumption of $b_{(\gamma,\mathbf{t})}<\epsilon$. 

Therefore, $b_{(\gamma,\mathbf{t})}\geq\epsilon$ for all $(\gamma,\mathbf{t})\in \mathcal{N}_U^{\epsilon} \times \left(1-\delta,1\right)$. The case $a_{(\gamma,\mathbf{t})}\geq\epsilon$ can be shown analogously. 
Then $c(\gamma,\mathbf{t},\mathbf{s})=c_{(\gamma,\mathbf{t})}(\mathbf{s})$ can be defined by
\[
c:\mathcal{N}_U^{\epsilon} \times \left(1-\delta,1\right)\times  \left(-\epsilon,\epsilon\right)\longrightarrow B_{2\epsilon}(\gamma_0\cap C)\subset C  
\]
\qed
\end{proof}

The arc--length parameter $\mathbf{s}\in(-\epsilon,\epsilon)$ of the curves $c_{(\gamma,\mathbf{t})}$ can replace the previous variable $s$ in the maps $\widetilde{X}$, $X$ and $\tau$. 
We will denote again $\widetilde{X}\left(\gamma,\mathbf{t},\mathbf{s}\right)$, $X\left(\gamma,\mathbf{t},\mathbf{s}\right)$, $\tau\left(\gamma,\mathbf{t},\mathbf{s}\right)$ the corresponding maps of (\ref{eq-def-distribution}) with the new variable $\mathbf{s}$ and defined as
\begin{equation}\label{eq-Xtilde-X-tau}
\left\{
\begin{tabular}{l}
$\widetilde{X}:\mathcal{N}^{\epsilon}_{U}\times\left(1-\delta,1\right)\times \left(-\epsilon,\epsilon\right)\longrightarrow \widetilde{\mathcal{N}}_{U}$ \\
$X: \mathcal{N}^{\epsilon}_{U}\times\left(1-\delta,1\right)\times \left(-\epsilon,\epsilon\right)\longrightarrow  \mathcal{N}_{U}$ \\
$\tau: \mathcal{N}^{\epsilon}_{U}\times\left(1-\delta,1\right)\times \left(-\epsilon,\epsilon\right)\longrightarrow \left(-1,1\right)$
\end{tabular}
\right.
\end{equation}

Observe that, by construction, we have
\begin{equation} \label{eq-Xtilde-X}
\widetilde{X}\left(\gamma,\mathbf{t},\mathbf{s}\right)=\varepsilon\left(X\left(\gamma,\mathbf{t},\mathbf{s}\right), \tau\left(\gamma,\mathbf{t},\mathbf{s}\right) \right)
\end{equation}
and moreover
\begin{equation}\label{eq-X-tau-cero}
\left\{
\begin{tabular}{l}
$X\left(\gamma,\mathbf{t},0\right)=\gamma_{\left(\mathbf{t},0\right)}=\gamma$ \\
$\tau\left(\gamma,\mathbf{t},0\right)=\mathbf{t}$
\end{tabular}  
\right. 
\end{equation}
holds for all $\mathbf{t}\in\left(1-\delta,1\right)$.

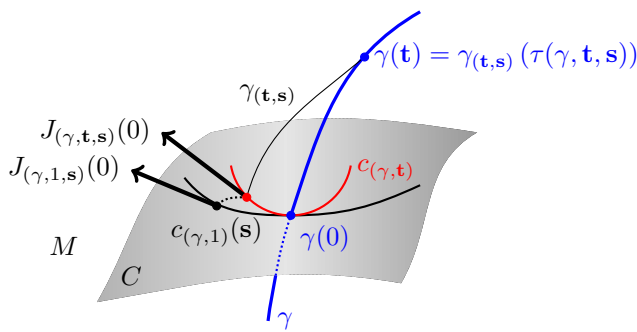
\begin{figure}[h]
  \centering
\begin{tikzpicture}[scale=1]
\fill[left color=gray!30!gray!60, right color=gray!15!gray!80, middle color=gray!20]
(-0.25,-0.25) to[out=60, in=-150] (1.25,2) to[out=12, in=171] (4.75,2) to[out=-150, in=60] (3.75,0) to[out=170, in=10] (-0.25,-0.25);
\draw[thick] (0.9,1.4) to[out=-52, in=180] (2.3,0.9);
\draw[thick] (2.3,0.9) to[out=0, in=-155] (4,1.3);
\draw[thick,color=red] (2.3,0.9) arc(-90:-10: 0.8 and 0.8);
\draw[thick,color=red] (2.3,0.9) arc(-90:-170: 0.8 and 0.8);
\draw[thick,densely dotted] (1.72,1.14) to[out=-180, in=50] (1.325,1.025);
\draw[very thick,color=blue] (2,-0.5) to[out=83, in=-100] (2.1,0.1);
\draw[thick,densely dotted,color=blue] (2.1,0.1) to[out=80, in=-108] (2.3,0.9);
\draw[very thick,color=blue] (2.3,0.9) to[out=72, in=-150] (4,3.6);
\draw (3.27,3) to[out=-140, in=75] (1.72,1.14);
\draw[->,ultra thick] (1.72,1.14) -- (0.6,2) node[anchor=east] {$J_{(\gamma,\mathbf{t},\mathbf{s})}(0)$};
\draw[->,ultra thick] (1.325,1.025) -- (0.2,1.5) node[anchor=east] {$J_{(\gamma,1,\mathbf{s})}(0)$};
\filldraw[color=blue] (2.3,0.9) node[anchor=north west] {$\gamma(0)$} circle (1.5pt);
\filldraw[color=blue] (3.27,3) node[anchor=west] {$\gamma(\mathbf{t})=\gamma_{(\mathbf{t},\mathbf{s})}\left(\tau(\gamma,\mathbf{t},\mathbf{s})\right)$} circle (1.5pt);
\filldraw[color=red] (1.72,1.14)  circle (1.5pt);
\filldraw (1.325,1.025) circle (1.5pt);
\draw (1.325,1) node[anchor=north] {$c_{(\gamma,1)}(\mathbf{s})$};
\draw (0.2,0.1) node {$C$};
\draw (-1,0.5) node[anchor=west] {$M$};
\draw[color=blue] (2,-0.5) node[anchor=west] {$\gamma$};
\draw (2,2.5) node {$\gamma_{(\mathbf{t},\mathbf{s})}$};
\draw[color=red] (3.05,1.5) node[anchor=west] {$c_{(\gamma,\mathbf{t})}$};
\end{tikzpicture}
  \caption{The curves $c_{\left(\gamma,\mathbf{t}\right)}$ in $C\subset M$.}
  \label{diapositiva4}
\end{figure}

It is important to remark that the curve $\widetilde{X}_{\left(\gamma,\mathbf{t}\right)}(\mathbf{s})=\widetilde{X}\left(\gamma,\mathbf{t},\mathbf{s}\right)$ is a parametrization of the orbit of the distribution $\mathcal{D}^{\sim}$ passing through $\widetilde{\gamma}\left(\mathbf{t}\right)$, that is the submanifold $\widetilde{S\left(\gamma\left(\mathbf{t}\right)\right)} \subset \widetilde{\mathcal{N}}$. 
This implies that $\mathcal{D}^{\sim}$ is generated by the tangent vectors $\frac{\partial \widetilde{X}}{\partial s}\left(\gamma,\mathbf{t},0\right) \in T\widetilde{\mathcal{N}}$ so, by (\ref{eq-Xtilde-X}), we have
\begin{equation} \label{eq-partial-Xtilde-X}
\frac{\partial \widetilde{X}}{\partial s}\left(\gamma,\mathbf{t},0\right)=\left(d\varepsilon\right)_{\left(\gamma,\mathbf{t}\right)}\left( \frac{\partial X}{\partial s}\left(\gamma,\mathbf{t},0\right) , \frac{\partial \tau}{\partial s}\left(\gamma,\mathbf{t},0\right) \right) 
\end{equation} 
for all $\left(\gamma,\mathbf{t}\right)\in \mathcal{N}^{\epsilon}_{U}\times\left(1-\delta,1\right)$.

\subsection{Smoothness of the distribution \texorpdfstring{$\overline{\mathcal{D}^{\sim}}=\partial^{+}\mathcal{D}^{\sim}\cup \mathcal{D}^{\sim}$}{\overline{D}=dD U D}: current status}\label{sec:Lboundary-Smoothness}

We have defined the distributions $\mathcal{D}^{\sim}$ and $\partial^+ \mathcal{D}^{\sim}$ separately in $\widetilde{\mathcal{N}}$ and in $\partial^+ \widetilde{\mathcal{N}}$ respectively, so we have that $\overline{\mathcal{D}^{\sim}}=\partial^{+}\mathcal{D}^{\sim}\cup \mathcal{D}^{\sim}$ is a distribution that, a priori, could even be non--continuous. In this section, we will study the conditions under which the distribution $\overline{\mathcal{D}^{\sim}}$ is smooth.  

The following theorem \ref{theorem-extension} clarifies a loophole in \cite[Thm. 7.1]{Ba18} and allows to establish equivalent conditions to the differentiability of $\overline{\mathcal{D}^{\sim}}$.

Recall that when we say that a smooth map $f:A\times (a,b) \rightarrow B$ can be smoothly (or differentiably) extended to $A\times (a,b]$ we mean that there exists $\epsilon>0$ and a smooth map $\overline{f}:A\times (a,b+\epsilon) \rightarrow B$ such that $\overline{f}=f$ in $A\times (a,b)$.


\begin{theorem}\label{theorem-extension}
Under the hypotheses \ref{itm:hypotheses-general-1} and \ref{itm:hypotheses-general-2} the following conditions are equivalent:
\begin{enumerate}
\item \label{itm:theo-cond-1} $\overline{\mathcal{D}^{\sim}}$ is smooth in $\overline{\widetilde{\mathcal{N}}_U}$.
\item \label{itm:theo-cond-2} $\widetilde{X}$ can be smoothly extended to $\mathbf{t}=1$.
\item \label{itm:theo-cond-3} $X$ can be smoothly extended to $\mathbf{t}=1$.
\item \label{itm:theo-cond-4} $\tau$ can be smoothly extended to $\mathbf{t}=1$.
\item \label{itm:theo-cond-5} $h(\gamma, \mathbf{t})=\frac{\partial \tau}{\partial \mathbf{s}}\left(\gamma,\mathbf{t},0\right)$ can be smoothly extended to $\mathbf{t}=1$.
\end{enumerate}
\end{theorem}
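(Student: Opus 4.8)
The plan is to separate the \emph{formal} equivalences, which follow purely from the fact that $\varepsilon$ is a diffeomorphism extending smoothly across the hypersurface $\{\mathbf{t}=1\}$, from the single piece of genuine analytic content, which is an integration of the distribution $\overline{\mathcal{D}^{\sim}}$ up to the boundary. First I would record the formal equivalences coming from the identity (\ref{eq-Xtilde-X}), $\widetilde{X}=\varepsilon(X,\tau)$. Since $\varepsilon\colon\mathcal{N}_U\times\mathbb{R}\to\mathbb{P}(\mathcal{H}_U)-\widetilde{\infty}$ is a diffeomorphism whose domain already contains the value $\tau=1$ (the boundary $\partial^{+}\widetilde{\mathcal{N}}_U$ sits in the interior of $\mathbb{P}(\mathcal{H}_U)$, disjoint from $\widetilde{\infty}$), composing with $\varepsilon$ and with $\varepsilon^{-1}=(p_1,p_2)\circ\varepsilon^{-1}$ neither creates nor destroys smoothness at $\mathbf{t}=1$. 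Hence $\widetilde{X}$ extends smoothly iff both $X=p_1\circ\varepsilon^{-1}\circ\widetilde{X}$ and $\tau=p_2\circ\varepsilon^{-1}\circ\widetilde{X}$ do; in particular this gives \ref{itm:theo-cond-2}$\Rightarrow$\ref{itm:theo-cond-3}, \ref{itm:theo-cond-2}$\Rightarrow$\ref{itm:theo-cond-4}, and differentiating in $\mathbf{s}$ at $\mathbf{s}=0$ gives \ref{itm:theo-cond-4}$\Rightarrow$\ref{itm:theo-cond-5} immediately.

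The second ingredient is a lemma extracting the consequence of hypothesis \ref{itm:hypotheses-general-2}: the vector $\frac{\partial X}{\partial\mathbf{s}}(\gamma,\mathbf{t},0)\in T_\gamma\mathcal{N}$ extends smoothly to $\mathbf{t}=1$ \emph{unconditionally}. Indeed, by Lemma \ref{lemmaDC92} this vector spans the line $\widetilde{\gamma}(\mathbf{t})=T_\gamma S(\gamma(\mathbf{t}))$, and reparametrizing by the arc length $\mathbf{s}$ of $c_{(\gamma,\mathbf{t})}$ fixes its length at the Cauchy surface $C$; since \ref{itm:hypotheses-general-2} asserts precisely that $\mathbf{t}\mapsto\widetilde{\gamma}(\mathbf{t})$ extends smoothly to $\oplus_\gamma$ at $\mathbf{t}=1$, the normalized spanning field inherits a smooth, nonvanishing extension. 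Feeding this into (\ref{eq-partial-Xtilde-X}) and using that $(d\varepsilon)_{(\gamma,\mathbf{t})}$ is a smooth isomorphism up to the boundary, the generator $\frac{\partial\widetilde{X}}{\partial\mathbf{s}}(\gamma,\mathbf{t},0)=(d\varepsilon)_{(\gamma,\mathbf{t})}\left(\frac{\partial X}{\partial\mathbf{s}}(\gamma,\mathbf{t},0),h(\gamma,\mathbf{t})\right)$ of $\mathcal{D}^{\sim}$ extends smoothly \emph{iff} $h$ extends smoothly. Since this generator, transported to $\mathcal{N}_U\times\mathbb{R}$ by the diffeomorphism $\varepsilon$, is exactly a local nonvanishing section spanning $\overline{\mathcal{D}^{\sim}}$, this yields \ref{itm:theo-cond-1}$\Leftrightarrow$\ref{itm:theo-cond-5}; here one also checks, using $\oplus_\gamma\neq 0$, that the limiting section is nonzero and tangent to $\partial^{+}\widetilde{\mathcal{N}}$, so that the smooth extension coincides with the distribution $\partial^{+}\mathcal{D}^{\sim}$ defined by $\oplus$.

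The substantive converse directions are the integrations that upgrade smoothness of the infinitesimal data to smoothness of the full orbit maps. For \ref{itm:theo-cond-5}$\Rightarrow$\ref{itm:theo-cond-2} I would read $\mathbf{s}\mapsto(X,\tau)(\gamma,\mathbf{t},\mathbf{s})$ as the integral curve of the section $W=\left(\frac{\partial X}{\partial\mathbf{s}}(\cdot,0),h\right)$ spanning $\overline{\mathcal{D}^{\sim}}$ in the $\varepsilon$-trivialization: smoothness of $W$ up to $\mathbf{t}=1$, together with smooth dependence of ODE solutions on initial conditions and parameters, produces a smooth extension of the orbits, while the choice of arc length of $c_{(\gamma,\mathbf{t})}$ as parameter keeps the reparametrizing speed smooth and nonzero up to the boundary, using the uniform existence interval of Lemma \ref{lem-curva-c-epsilon}. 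The same circle of ideas handles the one remaining implication, namely recovering the fibre coordinate from the base datum in \ref{itm:theo-cond-3}$\Rightarrow$\ref{itm:theo-cond-4} (equivalently \ref{itm:theo-cond-3}$\Rightarrow$\ref{itm:theo-cond-2}): when $X$ extends, the light rays $X(\gamma,1,\mathbf{s})$ all lie on the single boundary orbit, the common point $\gamma(\mathbf{t})$ recedes to their shared ideal endpoint, and the rate of this recession, measured by $\tau$, is controlled by the derivative of $\oplus$ along that orbit, which is smooth by \ref{itm:hypotheses-general-2}.

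The hardest point, and where I expect the real obstacle to lie, is exactly this recovery of $\tau$ (equivalently $h$) near $\mathbf{t}=1$: one must show the escape of $\gamma(\mathbf{t})$ to the ideal boundary is controlled \emph{uniformly} in $(\gamma,\mathbf{s})$ and in a genuinely $C^{\infty}$ (not merely $C^{0}$) fashion, so that the reconstructed fibre coordinate extends smoothly rather than just continuously. This is the step that forces one to combine Lemma \ref{lem-curva-c-epsilon}, null pseudo-convexity from \ref{itm:hypotheses-general-1}, and the full regularity clause of \ref{itm:hypotheses-general-2}; the purely compositional identities and the generator lemma reduce everything to it, but they do not by themselves supply the joint smoothness at the boundary.
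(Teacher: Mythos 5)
Your architecture largely parallels the paper's: the unconditional smooth extension of $\Phi(\gamma,\mathbf{t})=\frac{\partial X}{\partial\mathbf{s}}(\gamma,\mathbf{t},0)$ via a normalized local section of $\mathcal{H}\rightarrow\mathbb{P}(\mathcal{H})$ is exactly the paper's construction of $\overline{\Phi}$ inside the step \ref{itm:theo-cond-5}$\Rightarrow$\ref{itm:theo-cond-1}, and your ODE/flow-box integration is the paper's \ref{itm:theo-cond-1}$\Rightarrow$\ref{itm:theo-cond-2}. The genuine gap sits exactly where your last paragraph places it, namely \ref{itm:theo-cond-3}$\Rightarrow$\ref{itm:theo-cond-4}, but the obstacle you anticipate there is illusory: no uniform control of the ``escape to the ideal boundary'' and no appeal to null pseudo-convexity are needed. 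The resolution is one of the compositional identities you set aside as insufficient. For $\mathbf{t}<1$ one has $\widetilde{X}(\gamma,\mathbf{t},\mathbf{s})=\mathrm{span}\{\frac{\partial X}{\partial\mathbf{s}}(\gamma,\mathbf{t},\mathbf{s})\}\in\mathbb{P}(\mathcal{H})$ and $(X,\tau)=\varepsilon^{-1}(\widetilde{X})$. If $\overline{X}$ is a smooth extension of $X$ to $\mathbf{t}=1$, then $\frac{\partial\overline{X}}{\partial\mathbf{s}}$ is automatically smooth up to the boundary, and it is nonvanishing there because the arc-length normalization gives $\vert c'_{(\gamma,\mathbf{t})}(\mathbf{s})\vert=1$ for $\mathbf{t}<1$, hence $\vert c'_{(\gamma,1)}(\mathbf{s})\vert=1$ by continuity, and by Lemma \ref{lemmaDC92} the corresponding Jacobi class has nonzero transversal initial value. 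Therefore $\mathrm{span}\{\frac{\partial\overline{X}}{\partial\mathbf{s}}\}$ is a smooth $\mathbb{P}(\mathcal{H})$-valued map up to $\mathbf{t}=1$ and $\overline{\tau}=p_2\circ\varepsilon^{-1}\bigl(\mathrm{span}\{\frac{\partial\overline{X}}{\partial\mathbf{s}}\}\bigr)$ is the required smooth extension. Your proposal, as written, leaves this implication unproved and even suggests it cannot be obtained from the formal identities alone, which is incorrect.

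A second, smaller omission: in your \ref{itm:theo-cond-1}$\Leftrightarrow$\ref{itm:theo-cond-5} step you assert that ``one also checks'' the limiting generator is tangent to $\partial^{+}\widetilde{\mathcal{N}}$, but this check is where the identification of the extended distribution with $\partial^{+}\mathcal{D}^{\sim}$ actually happens, and it does not follow merely from $\oplus_{\gamma}\neq 0$. The paper's argument is that $\tau(\gamma,\mathbf{t},0)=\mathbf{t}$ forces $\overline{\tau}(\gamma,1,0)=1$ while $\overline{\tau}\leq 1$ everywhere, so $\mathbf{s}=0$ is a maximum of $\overline{\tau}(\gamma,1,\cdot)$ and hence $\overline{h}(\gamma,1)=\frac{\partial\overline{\tau}}{\partial\mathbf{s}}(\gamma,1,0)=0$; this vanishing, fed into (\ref{eq-partial-Xtilde-X}) and (\ref{eq-curva-integral}), is precisely what places the limiting generator in $\partial^{+}\mathcal{D}^{\sim}$. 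You should supply that argument explicitly rather than defer it.
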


\begin{proof}

\ref{itm:theo-cond-1}) $\Rightarrow$ \ref{itm:theo-cond-2})

Assuming $\overline{\mathcal{D}^{\sim}}$ is smooth, since it is $1$--dimensional, then there exists a non--zero vector field $\widetilde{\Phi}\in \mathfrak{X}\left(\overline{\widetilde{\mathcal{N}}}\right)$ such that $\overline{\mathcal{D}^{\sim}}=\mathrm{span}\{\widetilde{\Phi}\}$. For any $\widetilde{\gamma}(1)\in\partial^{+}\widetilde{\mathcal{N}}$ there exists a flow box of $\widetilde{\Phi}$, that is, a smooth map $\widetilde{F}:\widetilde{\mathcal{U}}\times (-\overline{\epsilon},\overline{\epsilon}) \rightarrow \overline{\widetilde{\mathcal{N}}}$ such that $u\mapsto \widetilde{F}_{\widetilde{\gamma}(\mathbf{t})}(u)=\widetilde{F}\left(\widetilde{\gamma}(\mathbf{t}),u\right)$ is an integral curve of $\widetilde{\Phi}$. Making smaller the neighbourhood of definition of $\widetilde{F}$ and since $\varepsilon$ is a diffeomorphism, we can define the map
\[
\begin{tabular}{rccl}
$\overline{\widetilde{X}}:$ & $\mathcal{U} \times \left(1-\delta,1\right] \times \left(-\overline{\epsilon},\overline{\epsilon}\right)$ & $\longrightarrow$ & $\overline{\widetilde{\mathcal{N}}}$ \\
& $\left(\gamma,\mathbf{t},u\right)$ & $\longmapsto$ & $\overline{\widetilde{X}}\left(\gamma,\mathbf{t},u\right)=\widetilde{F}\left(\varepsilon(\gamma,\mathbf{t}),u\right)$ .
\end{tabular}
\]
with $\mathcal{U}\subset \mathcal{N}$. Then we have that 
\[
\left(\overline{X}\left(\gamma,\mathbf{t},u\right), \overline{\tau}\left(\gamma,\mathbf{t},u\right)\right)=\varepsilon^{-1}\left(\overline{\widetilde{X}}\left(\gamma,\mathbf{t},u\right)\right)\in \mathcal{U} \times \left(1-\delta,1\right]
\]
where $\overline{X}$ and $\overline{\tau}$ are differentiable maps for $\mathbf{t}\leq 1$. 

By lemma \ref{lem-curva-c-epsilon}, we can reparametrize the maps $\overline{\widetilde{X}}$, $\overline{X}$ and $\overline{\tau}$ by arc--length of the curves $z_{(\gamma,\mathbf{t})}(u)=\overline{X}\left(\gamma,\mathbf{t},u\right)\cap C$ and we obtain the maps (called in the same way)
\[
\left\{
\begin{tabular}{l}
$\overline{\widetilde{X}}\left(\gamma,\mathbf{t},\mathbf{s}\right)\in \overline{\widetilde{\mathcal{N}}}$ \\
$\overline{X}\left(\gamma,\mathbf{t},\mathbf{s}\right)\in \mathcal{N}_{U}^{\epsilon}$ \\
$\overline{\tau}\left(\gamma,\mathbf{t},\mathbf{s}\right)\in \left(1-\delta,1\right]$
\end{tabular}
\right.
\]
for $\mathbf{s}\in(-\epsilon,\epsilon)$. Since the curves $\mathbf{s}\mapsto \overline{\widetilde{X}}(\gamma,\mathbf{t},\mathbf{s})$ describe the orbits of $\overline{\mathcal{D}^{\sim}}$ and they are parametrized by the same arc--length parameter then
\[
\overline{\widetilde{X}}=\widetilde{X} \quad , \qquad \overline{X}=X \quad , \qquad \overline{\tau}=\tau
\]
for $\mathbf{t}<1$. Therefore $\overline{\widetilde{X}}$ is a smooth extension of $\widetilde{X}$.

\bigskip

\ref{itm:theo-cond-2}) $\Rightarrow$ \ref{itm:theo-cond-3})

Trivially, since $\varepsilon$ is a diffeomorphism, then if $\overline{\widetilde{X}}$ is a smooth extension of $\overline{\widetilde{X}}$ to $\mathbf{t} = 1$ we have
\[
\overline{\widetilde{X}}(\gamma,\mathbf{t},\mathbf{s})= \varepsilon\left(\overline{X}(\gamma,\mathbf{t},\mathbf{s}),\overline{\tau}\left(\gamma,\mathbf{t},\mathbf{s}\right) \right) \quad \Longleftrightarrow \quad \varepsilon^{-1}\circ \overline{\widetilde{X}}(\gamma,\mathbf{t},\mathbf{s})= \left(\overline{X}(\gamma,\mathbf{t},\mathbf{s}),\overline{\tau}\left(\gamma,\mathbf{t},\mathbf{s}\right) \right)
\]
for $\mathbf{t} \leq 1$ and therefore $\overline{X}$ (and also $\overline{\tau}$) is a smooth extension to $\mathbf{t}=1$.

\bigskip

\ref{itm:theo-cond-3}) $\Rightarrow$ \ref{itm:theo-cond-4})

Let us assume that $\overline{X}(\gamma,\mathbf{t},\mathbf{s})$ is a smooth extension of $X(\gamma,\mathbf{t},\mathbf{s})$ to $\mathbf{t}=1$, then we have that
\[
\frac{\partial \overline{X}}{\partial \mathbf{s}}\left(\gamma,\mathbf{t},\mathbf{s}\right)\in T_{\gamma_{(\mathbf{t},\mathbf{s})}}S\left( \gamma_{(\mathbf{t},\mathbf{s})}\left(\tau(\gamma,\mathbf{t},\mathbf{s})\right) \right) \subset \mathcal{H}_{\gamma_{\left(\mathbf{t},\mathbf{s}\right)}}  
\]
for $\mathbf{t}<1$.

Since the curves $c_{\left(\gamma,\mathbf{t}\right)}(\mathbf{s})=\overline{X}\left(\gamma,\mathbf{t},\mathbf{s}\right)\cap C$ are smooth and parametrized by arc--length as in equation (\ref{eq-curva-c}), by continuity we have that 
\[
\vert c'_{\left(\gamma,1\right)}(\mathbf{s})\vert=\lim_{\mathbf{t}\mapsto 1} \vert c'_{\left(\gamma,\mathbf{t}\right)}(\mathbf{s})\vert =1
\]
where $\vert \cdot \vert$ denotes the norm related to the restriction of the metric $\mathbf{g}\in\mathcal{C}$ to the local Cauchy surface $C\subset U$, hence we have that $c'_{\left(\gamma,1\right)}(\mathbf{s})\neq 0$. 
Then $\frac{\partial \overline{X}}{\partial \mathbf{s}}\left(\gamma,\mathbf{t},\mathbf{s}\right)\neq 0$ for all $\mathbf{t}\leq 1$ and so we obtain 
\[
\left[\frac{\partial \overline{X}}{\partial \mathbf{s}}\left(\gamma,\mathbf{t},\mathbf{s}\right)\right]=\mathrm{span}\left\{ \frac{\partial \overline{X}}{\partial \mathbf{s}}\left(\gamma,\mathbf{t},\mathbf{s}\right) \right\}\in \mathbb{P}\left(\mathcal{H}_{\gamma_{\left(\mathbf{t},\mathbf{s}\right)}}\right)  .
\]
Notice that, since $X\left(\gamma,\mathbf{t},\mathbf{s}\right)=\overline{X}\left(\gamma,\mathbf{t},\mathbf{s}\right)$ for $\mathbf{t}<1$ and $\widetilde{X}\left(\gamma,\mathbf{t},\mathbf{s}\right)=\mathrm{span}\left\{ \frac{\partial X}{\partial \mathbf{s}}\left(\gamma,\mathbf{t},\mathbf{s}\right) \right\}$ for $\mathbf{t}<1$, then we have $\widetilde{X}\left(\gamma,\mathbf{t},\mathbf{s}\right)= \mathrm{span}\left\{ \frac{\partial \overline{X}}{\partial \mathbf{s}}\left(\gamma,\mathbf{t},\mathbf{s}\right) \right\}$ for $\mathbf{t}<1$. 

By the diffeomorphism $\varepsilon$, we have that
\[
\left( \overline{X}(\gamma,\mathbf{t},\mathbf{s}), \overline{\tau}(\gamma,\mathbf{t},\mathbf{s}) \right)=\varepsilon^{-1}\left( \left[\frac{\partial \overline{X}}{\partial \mathbf{s}}\left(\gamma,\mathbf{t},\mathbf{s}\right)\right] \right)
\]
therefore $\overline{\tau}$ is a smooth extension of $\tau$.

\bigskip 

\ref{itm:theo-cond-4}) $\Rightarrow$ \ref{itm:theo-cond-5})

Let $\overline{\tau}$ be a smooth extension of $\tau$ to $\mathbf{t}=1$. Trivially, $\overline{h}(\gamma,\mathbf{t})=\frac{\partial \overline{\tau}}{\partial \mathbf{s}}\left(\gamma,\mathbf{t},0\right)$ is an smooth extension of $h(\gamma,\mathbf{t})=\frac{\partial \tau}{\partial \mathbf{s}}\left(\gamma,\mathbf{t},0\right)$ to $\mathbf{t}=1$.

\bigskip

\ref{itm:theo-cond-5}) $\Rightarrow$ \ref{itm:theo-cond-1})

First, let us show that $\overline{h}(\gamma,1)=\frac{\partial \overline{\tau}}{\partial \mathbf{s}}\left(\gamma,1,0\right)=0$. 
Observe that, by equation (\ref{eq-X-tau-cero}), $\tau(\gamma,\mathbf{t},0)=\mathbf{t}$ and since $\overline{\tau}$ is continuous, we have $\overline{\tau}(\gamma,1,0)=1$. 
Moreover, since $\overline{\tau}(\gamma,\mathbf{t},\mathbf{s})< 1$ for all $(\gamma,\mathbf{s})$ and $\mathbf{t}<1$, then $\overline{\tau}(\gamma,1,\mathbf{s})\leq 1$. So, since for every $\gamma$, the function $f(\mathbf{s})=\overline{\tau}(\gamma,1,\mathbf{s})$ reaches its maximum at $\mathbf{s}=0$, then, the smoothness of $\overline{\tau}$ brings $\frac{\partial \overline{\tau}}{\partial \mathbf{s}}\left(\gamma,1,0\right)=f'(0)=0$.

\medskip

Now, let us show that $\Phi\left(\gamma,\mathbf{t}\right)=\frac{\partial X}{\partial \mathbf{s}}\left(\gamma,\mathbf{t},0\right)$ can be smoothly extended to $\mathcal{N}^{\epsilon}_{U} \times \left(1-\delta,1\right]$. 
Notice that in (\ref{eq-def-distribution}) and (\ref{eq-Xtilde-X-tau}), we have defined $X(\gamma,\mathbf{t},\mathbf{s})=\gamma_{(\mathbf{t},\mathbf{s})}$ for $\mathbf{t}<1$ such that $c_{(\gamma,\mathbf{t})}(\mathbf{s})=\gamma_{(\mathbf{t},\mathbf{s})}(0)\in C$ is arc--length parametrized and, by lemma \ref{lemmaDC92}, the tangent vector $\langle J_{(\gamma,\mathbf{t})} \rangle=\frac{\partial X}{\partial \mathbf{s}}\left(\gamma,\mathbf{t},0\right) $ of the variation of light rays $X(\gamma,\mathbf{t},\mathbf{s})$ at $\mathbf{s}=0$ can be chosen such that $J_{(\gamma,\mathbf{t})}(0)=c'_{(\gamma,\mathbf{t})}(0)$.

We can consider the fibre bundle $\pi:\mathcal{H}\rightarrow \mathbb{P}(\mathcal{H})$ and any smooth non--zero local section $\overline{\omega}:\widetilde{\mathcal{U}}\subset\mathbb{P}\left(\mathcal{H}\right)\rightarrow \mathcal{H}$ in some neighbourhood $\widetilde{\mathcal{U}}$ of some $\widetilde{\gamma}_0\left(1\right)\in\partial^{+}\widetilde{\mathcal{N}}$.  
Without any lack of generality, we can assume that $\widetilde{\mathcal{U}}=\varepsilon\left(\mathcal{N}^{\epsilon}_{U}\times (1-\delta,1+\delta)\right)$.

Since $\pi\left(\overline{\omega}(\widetilde{\gamma}(\mathbf{t}))\right)=\widetilde{\gamma}(\mathbf{t})$, then we have 
\[
\overline{\omega}(\widetilde{\gamma}(\mathbf{t}))\in\widetilde{\gamma}(\mathbf{t})   .
\]

We denote by $\langle Z_{(\gamma,\mathbf{t})} \rangle$ the class of Jacobi fields along $\gamma\in\mathcal{N}^{\epsilon}_{U}$ defined by $\overline{\omega}(\widetilde{\gamma}(\mathbf{t}))$, that is 
\[
\overline{\omega}(\widetilde{\gamma}(\mathbf{t}))=\langle Z_{(\gamma,\mathbf{t})} \rangle\in \mathcal{H}_{\gamma}  .
\]

For any $\mathbf{t}\neq 0$, any representative $Z_{(\gamma,\mathbf{t})}\in \langle Z_{(\gamma,\mathbf{t})} \rangle$ verifies that 
\[
Z_{(\gamma,\mathbf{t})}(0) \neq 0~(\mathrm{mod}~\gamma'(0))
\]
because, in other case, we will have $\langle Z_{(\gamma,\mathbf{t})}\rangle \in \widetilde{\gamma}(0)\cap \widetilde{\gamma}(\mathbf{t})$. 
But this is not possible since $\widetilde{\gamma}$ is an injective curve for all $\mathbf{t}\in \mathbb{R}$, as noted in remark \ref{remark-light-non-conj}.

In fact, by locality of $\overline{\omega}$, we can assume that $Z_{(\gamma,\mathbf{t})}(0)$ is far from $0$ because $Z_{(\gamma_0,1)}(0)\neq 0~(\mathrm{mod}~\gamma_0'(0))$, it means that for all $(\gamma,\mathbf{t})\in\mathcal{N}^{\epsilon}_{U}\times (1-\delta,1+\delta)$ we have 
\[
\vert Z_{(\gamma,\mathbf{t})}(0) \vert^2=\mathbf{g}\left( Z_{(\gamma,\mathbf{t})}(0), Z_{(\gamma,\mathbf{t})}(0)\right)\geq\epsilon_0 > 0  
\]
for some $\epsilon_0 > 0$.
We can call $f(\gamma,\mathbf{t})=\vert Z_{(\gamma,\mathbf{t})}(0) \vert$ the smooth function which does not annihilate for all $(\gamma,\mathbf{t})\in\mathcal{N}^{\epsilon}_{U}\times (1-\delta,1+\delta)$. 

If we define  
\[
Y_{(\gamma,\mathbf{t})}= \frac{1}{f(\gamma,\mathbf{t})} \cdot Z_{(\gamma,\mathbf{t})}
\]
we have that 
\[
\omega\left(\widetilde{\gamma}(\mathbf{t})\right)=\frac{1}{f(\gamma,\mathbf{t})} \cdot \overline{\omega}\left(\widetilde{\gamma}(\mathbf{t})\right)=\langle Y_{(\gamma,\mathbf{t})} \rangle \in\widetilde{\gamma}(\mathbf{t})
\]
is another smooth non--zero local section defined for all  $(\gamma,\mathbf{t})\in\mathcal{N}^{\epsilon}_{U}\times (1-\delta,1+\delta)$ verifying  
\[
\mathbf{g}\left( Y_{(\gamma,\mathbf{t})}(0), Y_{(\gamma,\mathbf{t})}(0)\right)=1   .
\]
Take into account that, since $\widetilde{\gamma}(\mathbf{t})=T_{\gamma}S\left(\gamma\left(\mathbf{t}\right)\right)$ is $1$--dimensional, then the initial vectors $Y_{\left(\gamma,\mathbf{t}\right)}\left(0\right)$ determine the value of the section $\omega$.  
In fact, if $\overline{Y}$ is a another Jacobi field along $\gamma$ such that $\langle \overline{Y} \rangle \in \widetilde{\gamma}(\mathbf{t})$ with the same initial vector $\overline{Y}\left(0\right)=Y_{(\gamma,\mathbf{t})}(0)$, then $K=Y_{(\gamma,\mathbf{t})}-\overline{Y}$ is also a Jacobi field along $\gamma$ verifying $K\left(0\right)=0~\left(\mathrm{mod}~\gamma'(0)\right)$ and so, $K\in \widetilde{\gamma}\left(0\right)\cap \widetilde{\gamma}\left(\mathbf{t}\right)$.
Since every curve $\widetilde{\gamma}$ is injective, then $K=0$ and therefore $\langle \overline{Y} \rangle=\langle Y_{(\gamma,\mathbf{t})} \rangle$.

Recall that the curves $c_{(\gamma,\mathbf{t})}$ have been parametrized by arc--length, so 
\[
\mathbf{g}\left( c'_{(\gamma,\mathbf{t})}(0), c'_{(\gamma,\mathbf{t})}(0)\right)=1
\]
and we can take representatives $Y_{(\gamma,\mathbf{t})}$ such that $Y_{(\gamma,\mathbf{t})}(0)=c'_{(\gamma,\mathbf{t})}(0)$, therefore by construction of $X(\gamma,\mathbf{t},\mathbf{s})$ and $\omega$, then    
\begin{equation}
\omega\left(\widetilde{\gamma}\left(\mathbf{t}\right)\right)=\langle Y_{\left(\gamma,\mathbf{t}\right)}\rangle= \langle J_{\left(\gamma,\mathbf{t}\right)}\rangle =\frac{\partial X}{\partial \mathbf{s}}\left(\gamma,\mathbf{t},0\right) 
\end{equation}
holds for $\mathbf{t}<1$.

Therefore $\overline{\Phi}\left(\gamma,\mathbf{t}\right)=\omega\left(\varepsilon(\gamma,\mathbf{t})\right)=\omega\left(\widetilde{\gamma}\left(\mathbf{t}\right)\right)$ is a smooth extension of $\Phi\left(\gamma,\mathbf{t}\right)=\frac{\partial X}{\partial \mathbf{s}}\left(\gamma,\mathbf{t},0\right)$ defined in $\mathcal{N}^{\epsilon}_{U} \times \left(1-\delta, 1+\delta\right)$.

Now, the expression of equation (\ref{eq-partial-Xtilde-X}) can be extended as 
\begin{equation*}
\overline{\widetilde{\Phi}}\left(\gamma,\mathbf{t}\right)=\left(d\varepsilon\right)_{\left(\gamma,\mathbf{t}\right)}\left( \overline{\Phi}\left(\gamma,\mathbf{t}\right) , \overline{h}\left(\gamma,\mathbf{t}\right) \right) 
\end{equation*} 
for all $\left(\gamma,\mathbf{t}\right)\in \mathcal{N}^{\epsilon}_{U}\times\left(1-\delta,1\right]$ such that 
\begin{equation*} 
\frac{\partial \widetilde{X}}{\partial s}\left(\gamma,\mathbf{t},0\right)=\overline{\widetilde{\Phi}}\left(\gamma,\mathbf{t}\right)
\end{equation*} 
because $\overline{\Phi}\left(\gamma,\mathbf{t}\right)=\frac{\partial X}{\partial s}\left(\gamma,\mathbf{t},0\right)$ and $\overline{h}\left(\gamma,\mathbf{t}\right)=\frac{\partial \tau}{\partial s}\left(\gamma,\mathbf{t},0\right)$ for $\mathbf{t}<1$.

Notice that a curve $\Gamma\left(s\right)\in \mathcal{N}$ is an integral curve of $\oplus:\mathcal{N}\rightarrow \mathbb{P}\left(\mathcal{H}\right)$ if $\Gamma'\left(s\right)\in \oplus_{\Gamma\left(s\right)}$.
So, the curve $\widetilde{\Gamma}\left(s\right)=\varepsilon\left(\Gamma\left(s\right),1\right)$ is a leaf of the distribution $\partial^{+}\mathcal{D}^{\sim}$ if $\Gamma'\left(s\right)\in \oplus_{\Gamma\left(s\right)}$, that is  
\begin{equation}\label{eq-curva-integral}
\widetilde{\Gamma}'\left(s\right)=\left(d\varepsilon\right)_{\left(\Gamma\left(s\right),1\right)}\left( \Gamma'\left(s\right) , 0 \right)\in \partial^{+}\mathcal{D}^{\sim} \Longleftrightarrow  \Gamma'\left(s\right)\in \oplus_{\Gamma\left(s\right)}  .
\end{equation}

Now, since $\frac{\partial \overline{X}}{\partial s}\left(\gamma,\mathbf{t},0\right)\in \widetilde{\gamma}(\mathbf{t})$ for $1-\delta <\mathbf{t}<1$, by continuity we have
\[
\frac{\partial \overline{X}}{\partial s}\left(\gamma,1,0\right)\in \widetilde{\gamma}(1)=\oplus_{\gamma}
\]
and also
\begin{align*}
\frac{\partial \overline{\widetilde{X}}}{\partial s}\left(\gamma,1,0\right)& =\left(d\varepsilon\right)_{\left(\gamma,1\right)}\left( \frac{\partial \overline{X}}{\partial s}\left(\gamma,1,0\right) , \frac{\partial \overline{\tau}}{\partial s}\left(\gamma,1,0\right) \right) = \\
& =\left(d\varepsilon\right)_{\left(\gamma,1\right)}\left( \frac{\partial \overline{X}}{\partial s}\left(\gamma,1,0\right) , 0 \right)   .
\end{align*}
Then, by (\ref{eq-curva-integral}), $\frac{\partial \overline{\widetilde{X}}}{\partial s}\left(\gamma,1,0\right)\in \partial^{+}\mathcal{D}^{\sim}$. 

Therefore, $\overline{\widetilde{\Phi}}\left(\gamma,\mathbf{t}\right)=\frac{\partial \overline{\widetilde{X}}}{\partial s}\left(\gamma,\mathbf{t},0\right)$ is a smooth vector field defining $\overline{\mathcal{D}^{\sim}}=\mathcal{D}^{\sim} \cup \partial^{+}\mathcal{D}^{\sim}$ for all $1-\delta <\mathbf{t}\leq 1$, so $\overline{\mathcal{D}^{\sim}}$ is a differentiable distribution (see figure \ref{diapositiva5}).
\qed
\end{proof}

\begin{figure}[h]
  \centering
\begin{tikzpicture}[scale=0.9]
\fill[left color=gray!30!gray!60, right color=gray!15!gray!80, middle color=gray!20] 
(0.25,0.25) to[out=60, in=-150] (1.25,1.5) to[out=12, in=171] (4.75,1.5) to[out=-150, in=60] (3.75,0) to[out=170, in=10] (0.25,0.25);
\fill[left color=gray!30!gray!60, right color=gray!15!gray!80, middle color=gray!20] 
(0.25,4.25) to[out=60, in=-150] (1.25,5.5) to[out=12, in=171] (4.75,5.5) to[out=-150, in=60] (3.75,4) to[out=170, in=10] (0.25,4.25);
\draw[color=red] (3,1) arc(-35:60: 0.8 and 0.4);
\draw[color=red] (3,1) arc(-35:-100: 0.8 and 0.4);
\draw[color=red] (3,3) arc(-80:10: 0.8 and 0.4);
\draw[color=red] (3,3) arc(-80:-170: 0.8 and 0.4);
\draw[color=red] (3,5.1) arc(-150:-240: 0.8 and 0.4);
\draw[color=red] (3,5.1) arc(-150:-60: 0.8 and 0.4);
\draw[thick,color=blue] (3,-0.3) node[anchor=north] {$\widetilde{\gamma}$} -- (3,0.12);
\draw[densely dotted,color=blue] (3,0.12) -- (3,1);
\draw[thick,color=blue] (3,1) -- (3,4.12);
\draw[densely dotted,color=blue] (3,4.12) -- (3,5.1);
\draw[->,thick] (3,1) -- (3.8,1.8) node[anchor=south west] {$\overline{\widetilde{\Phi}}$};
\draw[->,thick] (3,3) -- (4,3.1) node[anchor=west] {$\overline{\widetilde{\Phi}}$};
\draw[->,thick] (3,5.1) -- (3.6,4.7) node[anchor=north east] {$\overline{\widetilde{\Phi}}$};
\filldraw[color=blue] (3,1) node[anchor=north west] {$\widetilde{\gamma}(0)$} circle (1pt);
\filldraw[color=blue] (3,5.1) node[anchor=north east] {$\oplus_{\gamma}=\widetilde{\gamma}(1)$} circle (1pt);
\filldraw[color=blue] (3,3) node[anchor=north east] {$\widetilde{\gamma}(\mathbf{t})$} circle (1pt);
\draw[color=red] (1.7,0.9) node {$\widetilde{X}_{(\gamma,0)}$};
\draw[color=red] (1.6,3.3) node {$\widetilde{X}_{(\gamma,\mathbf{t})}$};
\draw[color=red] (3.5,6) node {$\widetilde{X}^{+}_{\gamma}$};
\draw (3.75,0) node[anchor=west] {$\varepsilon\left(\mathcal{N}\times \{0\}\right)$};
\draw (4.75,5.5) node[anchor=west] {$\varepsilon\left(\mathcal{N}\times \{1\}\right)=\partial^{+}\widetilde{\mathcal{N}}$};
\draw (0,2.8) node[anchor=east] {$\widetilde{\mathcal{N}}\subset \mathbb{P}\left(\mathcal{H}\right)$};
\end{tikzpicture}
  \caption{The vector field $\overline{\widetilde{\Phi}}$ in $\overline{\widetilde{\mathcal{N}}}$ defining the distribution $\overline{\mathcal{D}^{\sim}}$.}
  \label{diapositiva5}
\end{figure}
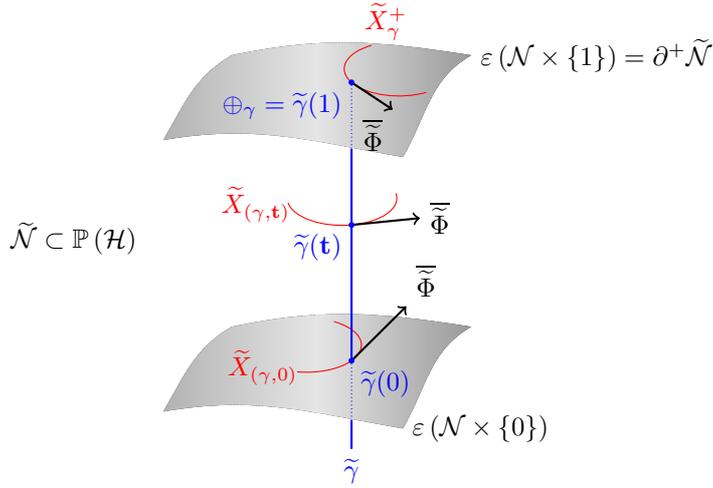

As an immediate consequence we have the following result.

\begin{corollary}
The distribution $\overline{\mathcal{D}^{\sim}}$ in $\overline{\widetilde{\mathcal{N}}}$ is smooth if for each light ray $\gamma_0\in\mathcal{N}$ there exist $U\subset M$ such that $\gamma_0\in \mathcal{N}_U=\{ \gamma\in \mathcal{N} : \gamma\cap U \neq \varnothing \}$ satisfying the hypotheses of theorem \ref{theorem-extension}.  
\end{corollary}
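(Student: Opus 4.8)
The plan is to reduce the global statement to the local equivalence already established in Theorem~\ref{theorem-extension}, using that smoothness of a distribution is a local property. First I would recall that $\overline{\mathcal{D}^{\sim}}=\mathcal{D}^{\sim}\cup\partial^{+}\mathcal{D}^{\sim}$ is a genuinely global, pointwise well-defined $1$--dimensional distribution on $\overline{\widetilde{\mathcal{N}}}$: on the interior $\widetilde{\mathcal{N}}$ its orbits are the submanifolds $\sigma(\mathbb{PN}_q)$, and on $\partial^{+}\widetilde{\mathcal{N}}$ they are the images under $\oplus$ of the orbits of the regular distribution $\oplus$. What has to be shown is only that this fixed distribution admits, near every point, a local smooth non-vanishing spanning vector field.

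Next I would fix an arbitrary point $P\in\overline{\widetilde{\mathcal{N}}}$. Since $\overline{\widetilde{\mathcal{N}}}=\widetilde{\mathcal{N}}\cup\partial^{+}\widetilde{\mathcal{N}}$, and both $\widetilde{\gamma}(\mathbf{t})$ (for $\mathbf{t}<1$) and $\oplus_{\gamma}=\widetilde{\gamma}(1)$ lie over a single light ray $\gamma\in\mathcal{N}$, the point $P$ determines such a $\gamma$. By hypothesis there is an open set $U\subset M$ with $\gamma\in\mathcal{N}_{U}$ for which the hypotheses of Theorem~\ref{theorem-extension} hold, so that by the equivalence proven there (condition (1)) the distribution $\overline{\mathcal{D}^{\sim}}$ is smooth in $\overline{\widetilde{\mathcal{N}}_{U}}$. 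The essential point to verify here is that this local model is an honest open neighbourhood of $P$ inside $\overline{\widetilde{\mathcal{N}}}$, and not merely inside $\widetilde{\mathcal{N}}$: since $\mathcal{N}_{U}$ is open in $\mathcal{N}$, the bundle $\mathbb{P}\left(\mathcal{H}_{U}\right)$ is open in $\mathbb{P}\left(\mathcal{H}\right)$, and therefore $\overline{\widetilde{\mathcal{N}}_{U}}=\mathbb{P}\left(\mathcal{H}_{U}\right)\cap\overline{\widetilde{\mathcal{N}}}$ is open in $\overline{\widetilde{\mathcal{N}}}$ and contains $P$, including the case $P\in\partial^{+}\widetilde{\mathcal{N}}$, because $\gamma\in\mathcal{N}_{U}$ forces $\oplus_{\gamma}\in\partial^{+}\widetilde{\mathcal{N}}_{U}$.

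Finally, invoking Theorem~\ref{theorem-extension} on $\overline{\widetilde{\mathcal{N}}_{U}}$ provides a smooth non-vanishing field $\overline{\widetilde{\Phi}}$ spanning $\overline{\mathcal{D}^{\sim}}$ on this neighbourhood. As $P$ was arbitrary, every point of $\overline{\widetilde{\mathcal{N}}}$ possesses such a neighbourhood, and smoothness of a line distribution being a local condition, $\overline{\mathcal{D}^{\sim}}$ is smooth on all of $\overline{\widetilde{\mathcal{N}}}$. I do not expect any serious obstacle: the only subtlety, and the step I would double-check, is the openness of the local models $\overline{\widetilde{\mathcal{N}}_{U}}$ at boundary points, since it is precisely this that guarantees the locally defined spanning fields assemble into a single global smooth distribution. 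On interior points no argument is needed, as $\mathcal{D}^{\sim}$ is already known to be regular; the whole force of the statement is carried at $\partial^{+}\widetilde{\mathcal{N}}$ by the covering property supplied in the hypothesis.
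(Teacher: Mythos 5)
Your proof is correct and follows exactly the route the paper intends: the paper states this corollary as an immediate consequence of Theorem \ref{theorem-extension}, relying precisely on the locality of smoothness of a line distribution and the fact that the sets $\overline{\widetilde{\mathcal{N}}_U}$ are open neighbourhoods covering $\overline{\widetilde{\mathcal{N}}}$, including its boundary points. Your explicit check that $\gamma_0\in\mathcal{N}_U$ forces $P\in\overline{\widetilde{\mathcal{N}}_U}$ even when $P\in\partial^{+}\widetilde{\mathcal{N}}$ is the only detail the paper leaves tacit, and you have handled it correctly.
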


At this point, it has not yet been possible to prove whether the hypotheses \ref{itm:hypotheses-general-1} and \ref{itm:hypotheses-general-2} are sufficient conditions to establish the existence of the extension of the function $h(\gamma,\mathbf{t})=\frac{\partial \tau}{\partial \mathbf{s}}\left(\gamma,\mathbf{t},0\right)$ to $\mathbf{t}=1$ or whether there is some counterexample of this. So, this is an open question that should be studied in future work.

\subsection{The canonical future extension of \texorpdfstring{$M$}{M}}\label{sec:Lboundary-canonical}

Let us assume that $\overline{\mathcal{D}^{\sim}}$ is a smooth distribution. 
Notice that the leaves of $\mathcal{D}^{\sim}$ are always compact but the ones of $\partial^{+}\mathcal{D}^{\sim}$ could be not, making of $\overline{\mathcal{D}^{\sim}}$ a non--regular distribution. 
When, moreover, $\overline{\mathcal{D}^{\sim}}$ becomes a regular distribution then the quotient    
\[
\overline{\widetilde{\mathcal{N}}} / \overline{\mathcal{D}^{\sim}} =  \widetilde{\mathcal{N}} / \mathcal{D}^{\sim}  \cup \partial^{+}\widetilde{\mathcal{N}} / \partial^{+}\mathcal{D}^{\sim} 
\]
is a differentiable manifold.  
In this case, by the diffeomorphism $\widetilde{S}:M\rightarrow \widetilde{\mathcal{N}} / \mathcal{D}^{\sim}$ of equation (\ref{diffeo-S}), and because $\partial^{+}\widetilde{\mathcal{N}}$ is the boundary of $\overline{\widetilde{\mathcal{N}}}$, then $ \partial^{+}\widetilde{\mathcal{N}} / \partial^{+}\mathcal{D}^{\sim}$ is the boundary of $\overline{\widetilde{\mathcal{N}}} / \overline{\mathcal{D}^{\sim}}$.
Then we can extend $\widetilde{S}$ by
\[
\widetilde{S}:\overline{M}\rightarrow \overline{\widetilde{\mathcal{N}}} / \overline{\mathcal{D}^{\sim}}
\]
where $\overline{M} = M \cup \partial^{+} M $ with 
\[
\partial^{+} M = \partial^{+}\widetilde{\mathcal{N}} / \partial^{+}\mathcal{D}^{\sim}  
\]
in such a way that $\left.\widetilde{S}\right|_{\partial^{+} M}$ is the identity map. Then there exists a differentiable structure in $\overline{M}$, compatible with the one in $M$, such that the extension $\widetilde{S}$ is a diffeomorphism inducing in $\partial^{+} M$ a differentiable structure.

This is summarized in the following corollary.

\begin{corollary}\label{corollary-main-theorem}
If $\overline{\mathcal{D}^{\sim}}$ is a smooth and regular distribution then the quotient $\overline{\widetilde{\mathcal{N}}} / \overline{\mathcal{D}^{\sim}} $ 
is a differentiable manifold with boundary  $ \partial^{+} M= \partial^{+}\widetilde{\mathcal{N}} / \partial^{+}\mathcal{D}^{\sim}$. Moreover, there exists an extension 
\[
\widetilde{S}:\overline{M}=M \cup \partial M\rightarrow \overline{\widetilde{\mathcal{N}}} / \overline{\mathcal{D}^{\sim}}
\]
of the diffeomorphism $\widetilde{S}:M\rightarrow \widetilde{\mathcal{N}} / \mathcal{D}^{\sim}$ of equation (\ref{diffeo-S}) such that $\left.\widetilde{S}\right|_{\partial^{+} M}$ is the identity map and it induces in $\partial^{+} M$ a differentiable structure such that the extended map $\widetilde{S}$ is a diffeomorphism.
\end{corollary}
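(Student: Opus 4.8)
The plan is to treat the statement as a quotient--manifold construction, taking the hypothesis ``$\overline{\mathcal{D}^{\sim}}$ smooth and regular'' as given and producing the manifold--with--boundary structure together with the extension of $\widetilde{S}$ in three stages: first exhibiting $\partial^{+}\widetilde{\mathcal{N}}$ as a saturated set (a union of leaves) for $\overline{\mathcal{D}^{\sim}}$, then building foliated half--space charts for the leaf space, and finally transporting the resulting differentiable structure to $\overline{M}$ through $\widetilde{S}$.

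First I would record that $\partial^{+}\widetilde{\mathcal{N}}$ is a union of leaves of $\overline{\mathcal{D}^{\sim}}$. By the construction of section \ref{sec:LBoundary-distrib}, the leaves of $\mathcal{D}^{\sim}$ are the compact circles $\sigma\left(\mathbb{PN}_q\right)\subset \widetilde{\mathcal{N}}$, whereas the leaves of $\partial^{+}\mathcal{D}^{\sim}$ are the images under $\oplus$ of the orbits of the distribution $\oplus$ in $\mathcal{N}$, which lie entirely in $\partial^{+}\widetilde{\mathcal{N}}=\oplus\left(\mathcal{N}\right)$ by proposition \ref{prop-boundaries-Ntilde}. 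Hence the generating vector field $\overline{\widetilde{\Phi}}\in \mathfrak{X}\left(\overline{\widetilde{\mathcal{N}}}\right)$ of the proof of theorem \ref{theorem-extension} is tangent to $\partial^{+}\widetilde{\mathcal{N}}$, and no orbit of $\overline{\mathcal{D}^{\sim}}$ meets both the interior $\widetilde{\mathcal{N}}$ and the boundary $\partial^{+}\widetilde{\mathcal{N}}$.

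Second I would establish that $\overline{\widetilde{\mathcal{N}}}/\overline{\mathcal{D}^{\sim}}$ is a smooth manifold with boundary. Over the interior this is already known, since $\mathcal{D}^{\sim}$ is regular and $\widetilde{\pi}\colon \widetilde{\mathcal{N}}\rightarrow \widetilde{\mathcal{N}}/\mathcal{D}^{\sim}$ is a submersion (diagram (\ref{diagram-distrib-D})). Near a boundary leaf, the tangency of $\overline{\widetilde{\Phi}}$ to $\partial^{+}\widetilde{\mathcal{N}}$ permits choosing a flow box adapted to the boundary, that is, local coordinates in which $\overline{\widetilde{\Phi}}=\partial/\partial u$ and $\partial^{+}\widetilde{\mathcal{N}}$ is a coordinate face; projecting out $u$ then yields a chart of the leaf space modelled on $\mathbb{R}^{2}\times [0,\infty)$. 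Regularity of $\overline{\mathcal{D}^{\sim}}$ guarantees that these charts assemble into a Hausdorff, second--countable atlas, so $\overline{\widetilde{\mathcal{N}}}/\overline{\mathcal{D}^{\sim}}$ is a manifold with boundary whose interior is $\widetilde{\mathcal{N}}/\mathcal{D}^{\sim}$ and whose boundary is $\partial^{+}\widetilde{\mathcal{N}}/\partial^{+}\mathcal{D}^{\sim}$, which is exactly the asserted decomposition. Finally I would extend $\widetilde{S}$: setting $\partial^{+}M=\partial^{+}\widetilde{\mathcal{N}}/\partial^{+}\mathcal{D}^{\sim}$ and $\overline{M}=M\cup \partial^{+}M$ as a set, define $\widetilde{S}$ by (\ref{diffeo-S}) on $M$ and by the identity on $\partial^{+}M$; this is a bijection onto $\overline{\widetilde{\mathcal{N}}}/\overline{\mathcal{D}^{\sim}}$. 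Transporting the manifold--with--boundary structure through this bijection makes $\widetilde{S}$ a diffeomorphism tautologically, and since $\widetilde{S}|_{M}$ is already the diffeomorphism of (\ref{diffeo-S}) onto the interior, the transported structure restricts on $M$ to its original one; thus $\overline{M}$ is a manifold with boundary $\partial^{+}M$ extending $M$, and $\left.\widetilde{S}\right|_{\partial^{+}M}$ is the identity.

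The genuinely substantive step is the second one, the local half--space model at the boundary. It hinges on $\partial^{+}\widetilde{\mathcal{N}}$ being saturated, so that $\overline{\widetilde{\Phi}}$ is tangent there and a boundary--adapted flow box exists, and on the regularity hypothesis to exclude holonomy and non--Hausdorff leaf--space phenomena; since the leaves of $\partial^{+}\mathcal{D}^{\sim}$ need not be compact (as already noted before this corollary), one must check that regularity really does deliver consistent foliated charts along them. The first and third stages are, by contrast, essentially formal: a saturation check and a transport of differentiable structure along a prescribed bijection.
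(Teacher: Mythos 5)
Your proposal is correct and follows essentially the same route as the paper: the paper's own ``proof'' is the informal paragraph immediately preceding the corollary, which decomposes the quotient as $\widetilde{\mathcal{N}}/\mathcal{D}^{\sim}\cup\partial^{+}\widetilde{\mathcal{N}}/\partial^{+}\mathcal{D}^{\sim}$, invokes regularity for the manifold structure, and transports it to $\overline{M}$ through the extension of $\widetilde{S}$ by the identity on the boundary. Your added detail --- the explicit check that $\partial^{+}\widetilde{\mathcal{N}}$ is saturated for $\overline{\mathcal{D}^{\sim}}$, hence that $\overline{\widetilde{\Phi}}$ is tangent to the boundary, and the boundary--adapted flow box giving half--space charts for the leaf space --- is exactly what is needed to make that informal argument rigorous, and the third stage is the same formal transport of structure the paper performs.

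One inaccuracy: you assert that ``regularity of $\overline{\mathcal{D}^{\sim}}$ guarantees that these charts assemble into a Hausdorff, second--countable atlas.'' Regularity of a foliation (distinguished charts met by each leaf in at most one plaque) yields the smooth atlas and second countability, but it does \emph{not} imply that the leaf space is Hausdorff; the standard counterexample is the foliation of the punctured plane by vertical lines, which is regular yet has a non--Hausdorff leaf space. The paper is deliberately careful on this point: Hausdorffness of the space of leaves is imposed as a \emph{separate} hypothesis in the definition of a proper $L$--spacetime and is only used in corollary \ref{Corollary-ext-2}, while corollary \ref{corollary-main-theorem} itself makes no Hausdorffness claim. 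Your proof of the corollary as stated is therefore fine, but you should delete the word ``Hausdorff'' from that sentence rather than attribute it to regularity.
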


\begin{definition}
The extension $\overline{M}$ constructed in the corollary \ref{corollary-main-theorem} is called the \emph{canonical future extension of} $\left(M,\mathcal{C}\right)$ and the boundary $\partial^{+} M$ is the \emph{future $L$--boundary}.
\end{definition}

Again, we only focus on future boundary, but the construction of the \emph{canonical past extension} $M\cup \partial^{-}M$ and the past boundary $\partial^{-} M$ can be done using the distribution $\ominus$ analogously.

In virtue of maps (\ref{eq-varepsilon}) and (\ref{diffeo-S}), we obtain a double fibration extended to the boundaries which is analogous to the double fibration (\ref{double-fibration-1}) of twistor theory. So, we have  
\begin{equation}\label{double-fibration-3}
\begin{tikzpicture}[every node/.style={midway}]
\matrix[column sep={6em,between origins},
        row sep={2em}] at (0,0)
{ ; &  \node(Ntilde)   { $\overline{\widetilde{\mathcal{N}}}$}  ; &  ; \\
 \node(N)   { $\mathcal{N}$}; &    ; & \node(M)   { $\overline{M}$} ;  \\} ; 
\draw[->] (Ntilde) -- (N) node[anchor=south east]  {$\pi^{\mathbb{P}\left(\mathcal{H}\right)}_{\mathcal{N}}$};
\draw[->] (Ntilde) -- (M) node[anchor=south west]  {$\rho$};
\end{tikzpicture}
\end{equation}
where $\pi^{\mathbb{P}\left(\mathcal{H}\right)}_{\mathcal{N}}$ is the canonical projection that can be expressed by $\pi^{\mathbb{P}\left(\mathcal{H}\right)}_{\mathcal{N}}=p_1\circ \varepsilon^{-1}$ as in equation (\ref{eq-def-distribution}) and $\rho$ is the extension of the submersion given by $\rho=\widetilde{S}^{-1}\circ \widetilde{\pi}$ where $\widetilde{\pi}$ is the quotient map given in diagram (\ref{diagram-distrib-D}).

\begin{remark}
The regularity of $\overline{\mathcal{D}^{\sim}}$ is achieved, for example when the leaves of $\partial^{+}\mathcal{D}^{\sim}$ are compact, because then, all the leaves of $\overline{\mathcal{D}^{\sim}}$ are compact. This holds when $M$ is globally hyperbolic with compact Cauchy surface $C$. In this case we have that $\mathcal{N}\simeq \mathbb{PN}(C)$ is compact and since, by hypothesis, $\oplus$ is a regular distribution, then the leaves of $\partial^{+}\mathcal{D}^{\sim}$ must be compact, therefore $\overline{\mathcal{D}^{\sim}}$ is regular and the canonical extension  $\overline{M}$ is a differentiable manifold. This is the case of de Sitter $d\mathcal{S}^{m}$ spacetimes and Robertson-Walker models without initial or final singularity.  

If the leaves of $\partial^{+}\mathcal{D}^{\sim}$ are not compact, the canonical extension can still exist, as for Minkowski $\mathbb{M}^m$ spacetimes. 
\end{remark}

In example \ref{example-M3-block}, the canonical extension of Minkowski is built for $\mathrm{dim}(M)=3$. 
For higher dimension it can be computed integrating the distribution $\oplus$ in $\mathcal{N}$ as done in \cite[Sec. IV.B]{Ba17}. 
In this reference, the $L$--boundary of $\mathbb{M}^3$ is constructed by restriction of the one of $\mathbb{M}^4$.  
This can be done because there exists an embedding $\mathbb{M}^3\hookrightarrow \mathbb{M}^4$ such that the maximal null geodesic in $\mathbb{M}^3$ are maximal null geodesics in $\mathbb{M}^4$. 
Moreover, any Cauchy surface $\overline{C}\subset \mathbb{M}^4$ defines a Cauchy surface $C=\overline{C}\cap \mathbb{M}^3$ in $\mathbb{M}^3$ and, since we have an embedding $T\mathbb{M}^3\hookrightarrow T\mathbb{M}^4$, then $\mathbb{PN}(C)\hookrightarrow \mathbb{PN}(\overline{C})$ is an embedding. Therefore, $\mathcal{N}_{\mathbb{M}^3}\hookrightarrow \mathcal{N}_{\mathbb{M}^4}$ is an embedding and this implies that $T\mathcal{N}_{\mathbb{M}^3}\hookrightarrow T\mathcal{N}_{\mathbb{M}^4}$ is another embedding. Then, denoting $\overline{\mathcal{H}}$ and $\overline{\oplus}$ the contact structure and the future limit distribution related to $\mathbb{M}^4$ and $\mathcal{H}$ and $\oplus$ the same geometric objects related to $\mathbb{M}^3$, then it is easy to see that 
\[
\mathcal{H}_{\gamma}= \overline{\mathcal{H}}\cap T_{\gamma}\mathcal{N}_{\mathbb{M}^3}, \qquad \oplus_{\gamma}= \overline{\oplus}\cap T_{\gamma}\mathcal{N}_{\mathbb{M}^3} \qquad \text{ for } \gamma\in\mathcal{N}_{\mathbb{M}^3}\subset \mathcal{N}_{\mathbb{M}^4} .
\]

We will use this procedure in example \ref{example-de-Sitter} for the $3$--dimensional de Sitter spacetime which is embedded in $\mathbb{M}^4$. 

\begin{example}\label{example-M3-block}
Consider the $3$--dimensional Minkowski spacetime block 
\[
\mathbb{M}^3_{(a,b)} = \left\{ (t,x,y)\in\mathbb{M}^3: a<t<b \right\}   .
\]
To simplify, we will only compute the future $L$--boundary for $-\infty <a<0$ to avoid $\oplus = \ominus$ when $b=\infty$ and to keep $C\equiv\{t=0\}$ as Cauchy surface for initial values of the null geodesics. 
First, observe that, by (\ref{eq-Jacobi-3M-initial}), the homogeneous coordinate $\phi$ in eq. (\ref{eq-homogeneus-coord}) is 
\[
\phi=[-s:1]=\left[1:\frac{1}{-s}\right]
\]
when $s\neq 0$ then, by equation (\ref{eq-proyectivo-homogeneo}), we can consider the projective parameter $\mathbf{t}(s)=\frac{(b-a)s}{(b+a)s-2ab}$ and whence $s(\mathbf{t})=\frac{2ab\mathbf{t}}{(b+a)\mathbf{t}-(b-a)}$.
In example \ref{example-contact-3M}, we have seen that $\mu(\theta,s,\tau)\in C$ whenever $\tau=-s$ and this value does not depend on $\theta\in\left[0,2\pi\right)$. Moreover, in virtue of remark \ref{remark-contact-skies} and examples \ref{ex-contact-4M} and \ref{example-contact-3M}, the tangent space of the sky $S\left(\gamma\left(\mathbf{t}\right)\right)\in\Sigma$ at $\gamma$ can be written as
\begin{equation}\label{tangent-Mink-sky}
T_{\gamma}S\left(\gamma\left(\mathbf{t}\right)\right) =  \mathrm{span}\left\{   \textstyle{   s(\mathbf{t})\left( \sin\theta_0 \left( \frac{\partial}{\partial x} \right)_{\gamma} - \cos\theta_0 \left( \frac{\partial}{\partial y} \right)_{\gamma} \right) + \left( \frac{\partial}{\partial \theta} \right)_{\gamma}   }    \right\}  .
\end{equation}
Therefore, the orbit of the distribution $\overline{\mathcal{D}^{\sim}}$ in $\overline{\widetilde{\mathcal{N}}}$ passing through $\widetilde{\gamma}(\mathbf{t})\simeq(x_0,y_0,\theta_0,\mathbf{t}_0)$ corresponds to the integral curve $c\left(r\right)=\left(x\left(r\right),y\left(r\right),\theta\left(r\right),\mathbf{t}\left(r\right)\right)$ of the vector field 
\[
\widetilde{\Phi}=s(\mathbf{t})\left(\sin\theta \frac{\partial}{\partial x} - \cos\theta  \frac{\partial}{\partial y}\right)+\frac{\partial}{\partial \theta} \in\mathfrak{X}\left(\widetilde{\mathcal{N}}\right)   .
\] 
So, after integration, it can be written by
\begin{align*}
 c\left(r\right)&=\textstyle{\left(x_0 + s(\mathbf{t}_0) \left[\cos \theta_0 -\cos \left(\theta_0+r\right)\right]  \, ,   \right. } \\
& \textstyle{ \left.  y_0 + s(\mathbf{t}_0) \left[\sin \theta_0 -\sin \left(\theta_0+r\right)\right] \, , \, \theta_0+r \, , \, \mathbf{t}_0\right) }  
\end{align*}
and verifies 
\begin{equation}\label{eq-orbit-boundary-3m}
\left\{
\begin{array}{l}
\left(x-x_0 - s(\mathbf{t}_0) \cos \theta_0\right)^2+ \left(y-y_0 - s(\mathbf{t}_0) \sin \theta_0\right)^2=s^2(\mathbf{t}_0)\\
\theta= \theta_0+r \\
\mathbf{t}=\mathbf{t}_0
\end{array}
\right. .
\end{equation}

Since $\lim_{\mathbf{t_0}\mapsto 1^{-}}s(\mathbf{t}_0)=b$ then, whenever $b<\infty$, the light rays in $\mathbb{M}^{3}_{(a,b)}$ of the orbit of $\partial^{+}\mathcal{D}^{\sim}$ passing through $\widetilde{\gamma}(1)\simeq (x_0,y_0,\theta_0, 1)$ corresponds to the sky in $\mathbb{M}^{3}$ of the point $p=\left(b, x_0 + b \cos \theta_0 , y_0 + b \sin \theta_0\right)$, see figure \ref{fig-L-boundary-M3-a}. 
Then the future boundary $\partial^{+} M$ can be identified with the topological boundary of $\mathbb{M}^{3}_{(a,b)}$ as a set in $\mathbb{M}^{3}$, that is 
\[
\partial^{+} M = \{ (t,x,y)\in \mathbb{M}^3 : t=b\}  .
\]

In case of $b=\infty$, we can develop the squares of the first equation in (\ref{eq-orbit-boundary-3m}) and divide by $s(\mathbf{t}_0)$ to obtain 
\begin{equation*}
\left\{
\begin{array}{l}
\frac{1}{s(\mathbf{t}_0)}\left(x-x_0\right)^2 - 2\left(x-x_0\right) \cos \theta_0 + \frac{1}{s(\mathbf{t}_0)}\left(y-y_0\right)^2 - 2\left(y-y_0\right) \sin \theta_0=0\\
\theta= \theta_0+r \\
\mathbf{t}=\mathbf{t}_0
\end{array}
\right. 
\end{equation*}
then, taking the limit $\lim_{\mathbf{t_0}\mapsto 1^{-}}s(\mathbf{t}_0)=+\infty$, we have that the orbit of $\partial^{+}\mathcal{D}^{\sim}$ passing through the point with coordinates $(x_0,y_0,\theta_0,1)\in \overline{\widetilde{\mathcal{N}}}$ verifies 
\begin{equation*}
\left\{
\begin{array}{l}
\left(x-x_0\right) \cos \theta_0 + \left(y-y_0\right) \sin \theta_0=0\\
\theta= \theta_0 \\
\mathbf{t}=1
\end{array}
\right. .
\end{equation*}
Therefore, as it is illustrated in figure \ref{fig-L-boundary-M3-b}, the orbit consists of the light rays with tangent vector $v=\left(1, \cos \theta_0, \sin \theta_0 \right)$ intersecting the Cauchy surface $C\equiv \{t=0\}$ at the points of the straight line 
\[
\left\{
\begin{array}{l}
\left(x-x_0\right) \cos \theta_0 + \left(y-y_0\right) \sin \theta_0 =0 \\
t= 0    .
\end{array}
\right.
\] 
   
It is trivial to see that for any $\beta\in \mathcal{N}$ in the above orbit of $\oplus$, the chronological past of $\beta$ is
\[
I^{-}\left(\beta\right) = \left\{ \left(t,x,y\right)\in \mathbb{M}^3: t< \left(x-x_0\right) \cos \theta_0 + \left(y-y_0\right) \sin \theta_0  \right\} 
\]
then, any point in the future $L$-boundary corresponds to a point in the future $c$-boundary.

In both cases, the $L$--boundary coincides with the part of $c$--boundary of $\mathbb{M}^{3}_{(a,b)}$ accessible by light rays, but if $b=\infty$, it is not possible for the $L$--boundary to obtain the points of the timelike $c$--boundary because there exist inextensible timelike curves with chronological past which is not the chronological past of any light ray (see remark \ref{remark-light-ray-no-accessible}).

\end{example}

\begin{figure}[h]
\centering
\begin{subfigure}[t]{0.40\textwidth}
\centering
\begin{tikzpicture}[scale=1]
\draw[->] (0,0) -- (-1.5,-1.5) node[anchor=north west] {$x$};
\draw[->] (0,0) -- (0,1.5) node[anchor=south west] {$t$};
\draw[->] (0,0) node[anchor=south west] {$0$} -- (4.5,0) node[anchor=north] {$y$};
\fill[color=red!10] (2,-0.5)  arc (90:270:1.2cm and 0.3cm);
\fill[color=red!10] (2,-0.5)  arc (90:-90:1.2cm and 0.3cm);
\fill[color=red!70, shading=axis, opacity=0.15]
	 (0.8,-0.8) -- (2,0.4) -- (3.2,-0.8) -- (3.2,-0.8) arc (0:-180:1.2cm and 0.3cm);
\fill[color=blue!80, opacity=0.2] (0,1.2) -- (-1.5,-0.3) -- (3,-0.3)--(4.5,1.2)--cycle;,
\draw[thick,color=blue] (2,0.4)  -- (1.2,-1);       
\draw[thick,color=blue!30] (0.8,-1.7)  -- (1.2,-1);  
\draw[thick,color=red] (2,0.4)  -- (2.8,-1);        
\draw[thick,color=red!30] (3.2,-1.7)  -- (2.8,-1);  
\draw[densely dotted] (2,0.4)--(2,-0.8) -- (1.2,-1)--(0.9,-1.3);
\draw[densely dotted] (1,-1.2) arc (-120:25:0.45cm and 0.25cm);
\draw (1.6,-1.4) node {$\theta_0$};
\fill (1.2,-1) circle (1.5pt);
\draw (1.15,-1.1) node[anchor=east] {$q$};
\draw[thick,fill=white] (2,0.4) node[anchor=south] {$\widetilde{p}_{\gamma}$} circle (1.5pt);
\draw (-0.1,1.2) node[anchor=east] {$b$} -- (0.1,1.2);
\draw (4.2,1.2) node[anchor=north east] {$\{t=b\}$};
\draw[color=blue] (1.7,-0.5) node {$\gamma$};
\draw[color=red] (2.3,-0.5) node {$\beta$};
\end{tikzpicture}
 \caption{Case $0<b<\infty$. The orbit of $\oplus$ passing through $\gamma$ consists of all light rays $\beta$ arriving at $\widetilde{p}_{\gamma}\notin \mathbb{M}^3_{(a,b)}$. It forms the half of the cone in the figure which can be identified with the point $\widetilde{p}_{\gamma}\in\{t=b\}$. }
  \label{fig-L-boundary-M3-a}
\end{subfigure}
\hspace{5pt}
\begin{subfigure}[t]{0.40\textwidth}
\centering
\begin{tikzpicture}[scale=1]
\draw[->] (0,0) -- (-1.5,-1.5) node[anchor=north west] {$x$};
\draw[->] (0,0) -- (0,1.5) node[anchor=south west] {$t$};
\draw[->] (0,0) node[anchor=south west] {$0$} -- (4.5,0) node[anchor=north] {$y$};
\draw[densely dashed] (1.2,-1)  arc (-135:225:0.8cm and 0.14cm);
\draw[densely dashed] (1.2,-1)  arc (-135:225:1cm and 0.22cm);
\draw[densely dashed] (1.2,-1)  arc (-135:225:1.2cm and 0.3cm);
\draw[densely dashed] (1.2,-1)  arc (-135:225:1.4cm and 0.38cm);
\draw[densely dashed] (1.2,-1)  arc (225:150:1.8cm and 0.44cm);
\draw[densely dashed] (1.2,-1)  arc (-135:-10:1.8cm and 0.44cm);
\draw[densely dotted] (1.2,-1)--(2.134,0.595)--(2.134,-0.666) -- (1.2,-1)--(0.9,-1.3);
\draw[densely dotted] (1,-1.2) arc (-120:25:0.45cm and 0.25cm);
\draw[thick,color=blue]  (1.2,-1)--(2.4,1.1);
\draw[thick,color=blue!30] (0.8,-1.7)  -- (1.2,-1);
\draw (1.6,-1.4) node[anchor=north east] {$\theta_0$};
\draw[color=blue]  (-0.5,-0.5) -- (1.2,-1)--(2.9,-1.5) ;
\draw[densely dotted] (-0.5,-0.5)--(0.434,1.095)--(0.434,-0.166) -- (-0.5,-0.5);
\draw[densely dotted] (2.9,-1.5)--(3.834,0.095)--(3.834,-1.166) -- (2.9,-1.5);
\fill[color=red!70,opacity=0.15] (-0.5,-0.5) -- (0.434,1.095)--(3.834,0.095)--(2.9,-1.5) --cycle;
\draw[thick,color=red] (0.35,-0.75) -- (1.55,1.45);
\draw[thick,color=red!30] (-0.05,-1.45)  -- (0.35,-0.75);
\draw[color=blue] (1.85,0.17) node[anchor=east] {$\gamma$};
\draw[color=red] (1,0.42) node[anchor=west] {$\beta$};
\fill (1.2,-1) circle (1.5pt);
\draw (1.15,-1.1) node[anchor=east] {$q$};
\end{tikzpicture}
 \caption{Case $b=\infty$. The orbit of $\oplus$ passing through $\gamma$ consists of all light rays $\beta$ in the same direction that $\gamma$ intersecting the Cauchy surface $C=\{t=0\}$ in a straight line orthonormal to the direction of $\gamma$.}
  \label{fig-L-boundary-M3-b}
\end{subfigure}
 \caption{The $L$--boundaries of $\mathbb{M}^3_{(a,b)}$}
  \label{fig-L-boundary-M3}
\end{figure}
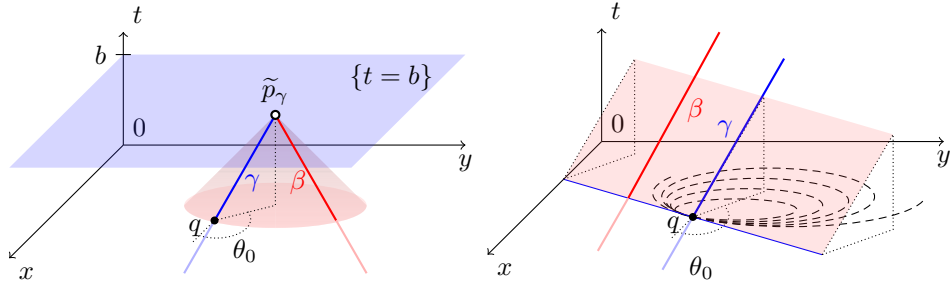

\begin{example}\label{example-de-Sitter}
We consider the $3$--dimensional \emph{de Sitter spacetime} $d\mathcal{S}^{3}$ embedded in $\mathbb{M}^4$ as the set verifying 
\begin{equation} \label{Secuacion}
-t^{2}+x^{2}+y^{2}+z^{2}=1 \, .
\end{equation}

As seen in example \ref{ex-contact-4M}, a null geodesic in $\mathbb{M}^4$ can be written by 
\[
\gamma(s)=\left(s,x_0+s \cos\theta_0\sin\phi_0,y_0+s \sin\theta_0\sin\phi_0,z_0+s \cos\phi_0  \right)\in \mathbb{M}^{4}
\]
where $\gamma(0)\in \overline{C}\equiv\{t=0\}$. Then $\gamma$ is a null geodesic in $d\mathcal{S}^{3}$ if equation (\ref{Secuacion}) is satisfied, so 
\[
-s^2 + \left(x_0+s\cos \theta_0 \sin \phi_0 \right)^{2} + \left(y_0 +s\sin \theta_0 \sin \phi_0\right)^{2} + \left(z_0+s\cos \phi_0\right)^{2} = 1 
\]
and simplifying we get
\begin{equation}  \label{cond_v1}
\left(x_0\cos \theta_0 +y_0\sin \theta_0\right)\sin \phi_0+z_0\cos \phi_0=0 \quad \Rightarrow \quad \cot \phi_0=-\frac{x_0 \cos \theta_0+y_0 \sin \theta_0}{z_0}  .
\end{equation}

Taking $t=0$ in equation (\ref{Secuacion}), we obtain a Cauchy surface $C\subset d\mathcal{S}^{3}$, in fact it is $2$--sphere, restriction of the Cauchy surface $\overline{C}\equiv\{t=0\}\subset \mathbb{M}^4$ to $d\mathcal{S}^{3}$. We can parametrize $C$ by 
\begin{equation}\label{Cparam}
\left\{ 
\begin{array}{l}
x=\cos u \sin w \\ 
y=\sin u \sin w \\ 
z=\cos w
\end{array}
\right.   
\end{equation}
where $(x_0,y_0,z_0)=(\cos u_0\sin w_0,\sin u_0 \sin w_0,\cos w_0)$.

By remark \ref{remark-contact-skies}, the tangent space to the sky of $\gamma(s)$ related to $\mathbb{M}^4$ is 
\begin{align*}
T_{\gamma}\overline{S}\left(\gamma(s)\right) & = \textstyle{ \mathrm{span}\left\{ s \left(\sin\theta_0 \sin\phi_0\left( \frac{\partial}{\partial x} \right)_{\gamma} - \cos\theta_0 \sin\phi_0\left( \frac{\partial}{\partial y} \right)_{\gamma}\right)+ \left( \frac{\partial}{\partial \theta} \right)_{\gamma},  \right. } \\
& \textstyle{ \left. s\left(-\cos\theta_0 \cos \phi_0 \left( \frac{\partial}{\partial x} \right)_{\gamma} - \sin\theta_0 \cos \phi_0 \left( \frac{\partial}{\partial y} \right)_{\gamma}  + \sin \phi_0 \left( \frac{\partial}{\partial z} \right)_{\gamma}\right)+\left( \frac{\partial}{\partial \theta} \right)_{\gamma}  \right\}  }
\end{align*}
and then 
\begin{align*}
\overline{\oplus}_{\gamma}&=\lim_{s\mapsto\infty} T_{\gamma}S\left(\gamma(s)\right)  = \\ 
& =\textstyle{ \mathrm{span}\left\{\sin\theta_0 \sin\phi_0\left( \frac{\partial}{\partial x} \right)_{\gamma} - \cos\theta_0 \sin\phi_0\left( \frac{\partial}{\partial y} \right)_{\gamma},  \right. } \\
& \textstyle{ \left. -\cos\theta_0 \cos \phi_0 \left( \frac{\partial}{\partial x} \right)_{\gamma} - \sin\theta_0 \cos \phi_0 \left( \frac{\partial}{\partial y} \right)_{\gamma}  + \sin \phi_0 \left( \frac{\partial}{\partial z} \right)_{\gamma} \right\}  }  .
\end{align*}

Integrating $\overline{\oplus}$, we obtain that its orbit passing through $(x_0,y_0,z_0,\theta_0,\phi_0)\in\mathcal{N}_{\mathbb{M}^4}$ is defined by
\begin{equation*} 
\left\{ 
\begin{array}{l}
x(\tau,\eta) = x_0+\tau\sin \theta_0 \sin \phi_0 - \eta \cos \theta_0 \cos \phi_0 \\ 
y(\tau,\eta) = y_0-\tau\cos \theta_0 \sin \phi_0 - \eta \sin \theta_0 \cos \phi_0 \\ 
z(\tau,\eta) = z_0+ \eta \sin \phi_0 \\ 
\theta(\tau,\eta) = \theta_0 \\
\phi(\tau,\eta) = \phi_0%
\end{array}%
\right.   
\end{equation*}%
which verifies the equation
\begin{equation} \label{eq-l-boundary-4M}
\left\{
\begin{array}{l}
\cos \theta_0 \sin \phi_0 \cdot\left(x-x_0\right)+ \sin \theta_0 \sin \phi_0 \cdot\left(y-y_0\right)+ \cos \phi_0 \cdot \left(z-z_0\right)=0 \\ 
\theta = \theta_0 \\
\phi = \phi_0  .
\end{array}%
\right.   
\end{equation}%

Substituting (\ref{cond_v1}) and (\ref{Cparam}) in (\ref{eq-l-boundary-4M}), we obtain the expression of the orbit of the field $\oplus$ in $\mathcal{N}_{d\mathcal{S}^3}$ as the restriction of the orbit of $\overline{\oplus}$ to $d\mathcal{S}^3$. So, we have
\begin{align*}
&  \cos \theta_0 \sin \phi_0 \cos u \sin w + \sin \theta_0 \sin \phi_0 \sin u \sin w+ \cos \phi_0 \cos  w=0   \\ 
\Rightarrow &  \tan w \left(\cos \theta_0 \cos u + \sin \theta_0  \sin u \right)=-\cot \phi_0 \\ 
\Rightarrow  &  \tan w \cos\left( u -\theta_0 \right)=-\cot \phi_0    \\
\text{ (by (\ref{cond_v1}))}\Rightarrow  & \tan w \cos\left( u -\theta_0 \right)=\tan w_0 \cos\left( u_0 -\theta_0 \right) 
\end{align*}%
and therefore, the orbits of $\oplus$ must satisfy the equation
\begin{equation}\label{eq-cielo-desitter}
\tan w \cos\left( u -\theta_0 \right)=\tan w_0 \cos\left( u_0 -\theta_0 \right)  .
\end{equation}

For any $\gamma_0\in\mathcal{N}_{d\mathcal{S}^3}$ with coordinates $(u_0,w_0,\theta_0)\simeq\gamma_0$, it is straightforward to check that $(u,w)$ in eq. (\ref{eq-cielo-desitter}) corresponds with a maximal circumference on the Cauchy surface $C=\mathbb{S}^2$. This circumference is the limiting curve of the intersection of the lights rays in $S(\gamma_0(s))$ with $C$ whenever $s\mapsto \infty$ (see figure \ref{figura-orbita-de-Sitter-1}). 
For fixed $\theta=\theta_0$ we obtain all maximal circumferences passing through the points $(u,w)=\left(\theta_0\pm\frac{\pi}{2},\frac{\pi}{2}\right)$ (see figure \ref{figura-orbita-de-Sitter-2}).
Then, moving $\theta_0$ in $\left[0,2\pi\right)$, we get all maximal circumferences in the sphere $C$. 
Observe that each maximal circumference can be obtained twice as solutions of eq. (\ref{eq-cielo-desitter}) for two different light rays with the same values of $(u_0,w_0)$ and the antipodal values $\theta_0$ and $\theta_0+\pi$. So, although the space of maximal circumferences in $C$ is diffeomorphic to the projective space $\mathbb{P}(\mathbb{R}^3)$, the orbits of $\oplus$ do not coincide for antipodal values of $\theta_0$.
Therefore, we have that the space of orbits of $\oplus$, that is, the future $L$--boundary is $\partial^{+}(d\mathcal{S}^3)\simeq \mathbb{S}^2$. Notice that $c$--boundary and $L$--boundary of $d\mathcal{S}^3$ coincide.  
\end{example}

\begin{figure}[h]
\centering
\begin{subfigure}[t]{0.40\textwidth}
\centering
\begin{tikzpicture}[scale=1]
  \begin{axis}[xtick = {-1,0,1},
    ytick = {-1,0,1},ztick = {-1,0,1},xlabel = $x$,ylabel = $y$,zlabel = {$z$},
    ticklabel style = {font = \tiny},axis equal image,colormap={custom}{rgb255(0cm)=(240,240,240);
            rgb255(1cm)=(230,230,230);rgb255(2cm)=(220,220,220);rgb255(3cm)=(210,210,210);},    
    view={135}{25}]
    \addplot3 [surf,domain=0:360,samples=60,y domain=0:180,samples y=30,line join=round,opacity=10,shader=interp,variable=\t,point meta={y}]
   ( {cos(t)*sin(y)},{sin(t)*sin(y)},{cos(y)} );
   \addplot3 [thick,red!20,domain=145:327,samples=60,y domain=0:180,samples y=0,line join=round,variable=\t] ( {0},{sin(t)},{cos(t)} );
   \addplot3 [blue!20,domain=200:325,samples=60,y domain=0:180,samples y=0,line join=round,variable=\t] ( {(2/5)*(1-cos(t))},{((4/5)^0.5)*sin(t)},{(4/5)*(0.25+cos(t))} );
   \addplot3 [blue!20,domain=170:350,samples=60,y domain=0:180,samples y=0,line join=round,variable=\t] ( {(4/17)*(1-cos(t))},{((16/17)^0.5)*sin(t)},{(16/17)*(0.0625+cos(t))} );
  \addplot3 [blue,domain=0:360,samples=60,y domain=0:180,samples y=0,line join=round,variable=\t] ( {(4/17)*(1-cos(t))},{((1/17)^0.5)*sin(t)},{(1/17)*(16+cos(t))} );
  \addplot3 [blue,domain=0:360,samples=60,y domain=0:180,samples y=0,line join=round,variable=\t] ( {(2/5)*(1-cos(t))},{((1/5)^0.5)*sin(t)},{(1/5)*(4+cos(t))} );
  \addplot3 [blue,domain=0:360,samples=60,y domain=0:180,samples y=0,line join=round,variable=\t] ( {(1/2)*(1-cos(t))},{((1/2)^0.5)*sin(t)},{(1/2)*(1+cos(t))} );
  \addplot3 [blue,domain=-35:200,samples=60,y domain=0:180,samples y=0,line join=round,variable=\t] ( {(2/5)*(1-cos(t))},{((4/5)^0.5)*sin(t)},{(4/5)*(0.25+cos(t))} );
  \addplot3 [blue,domain=-10:170,samples=60,y domain=0:180,samples y=0,line join=round,variable=\t] ( {(4/17)*(1-cos(t))},{((16/17)^0.5)*sin(t)},{(16/17)*(0.0625+cos(t))} );
  \addplot3 [thick,red,domain=-33:145,samples=60,y domain=0:180,samples y=0,line join=round,variable=\t] ( {0},{sin(t)},{cos(t)} );
\addplot3[fill=black] ( 0,0,1) node[anchor=south west] {$\gamma(0)$} circle (1pt);
\addplot3[draw=black] ( 1,1,-0.9) node[anchor=south]  {$C\subset d\mathcal{S}^3$};
\end{axis}
\end{tikzpicture}
\caption{Skies of a light ray in $d\mathcal{S}^3$. Each blue curve describes the trace  on the Cauchy surface $C=\mathbb{S}^2$ of the sky of a point $\gamma(s)$ along a fixed light ray $\gamma$. The red circle corresponds to the limiting sky of the point of $\gamma$ at the $L$--boundary.}
  \label{figura-orbita-de-Sitter-1}
\end{subfigure}
\hspace{5pt}
\begin{subfigure}[t]{0.40\textwidth}
\centering
\begin{tikzpicture}[scale=1]
  \begin{axis}[xtick = {-1,0,1},ytick = {-1,0,1},ztick = {-1,0,1},xlabel = $x$,ylabel = $y$,zlabel = {$z$},
    ticklabel style = {font = \tiny},axis equal image,colormap={custom}{rgb255(0cm)=(240,240,240);
            rgb255(1cm)=(230,230,230);rgb255(2cm)=(220,220,220);rgb255(3cm)=(210,210,210);},    
    view={135}{25}]
  \addplot3 [surf,domain=0:360,samples=60,y domain=0:180,samples y=30,line join=round,opacity=10,shader=interp,variable=\t,point meta={y}]
   ( {cos(t)*sin(y)},{sin(t)*sin(y)},{cos(y)} );
\addplot3 [blue!20,domain=135:315,samples=60,y domain=0:180,samples y=0,line join=round,variable=\t] ( {cos(t)},{sin(t)},{0} );
\addplot3 [blue!20,domain=134:314,samples=60,y domain=0:180,samples y=0,line join=round,variable=\t] ( {cos(15)*cos(t)},{sin(t)},{sin(15)*cos(t)} );
\addplot3 [blue!20,domain=133:313,samples=60,y domain=0:180,samples y=0,line join=round,variable=\t] ( {cos(30)*cos(t)},{sin(t)},{sin(30)*cos(t)} );
\addplot3 [blue!20,domain=131:311,samples=60,y domain=0:180,samples y=0,line join=round,variable=\t] ( {cos(45)*cos(t)},{sin(t)},{sin(45)*cos(t)} );
\addplot3 [blue!20,domain=134:315,samples=60,y domain=0:180,samples y=0,line join=round,variable=\t] ( {cos(60)*cos(t)},{sin(t)},{sin(60)*cos(t)} );
\addplot3 [blue!20,domain=138:320,samples=60,y domain=0:180,samples y=0,line join=round,variable=\t] ( {cos(75)*cos(t)},{sin(t)},{sin(75)*cos(t)} );
\addplot3 [blue!20,domain=145:327,samples=60,y domain=0:180,samples y=0,line join=round,variable=\t] ( {0},{sin(t)},{cos(t)} );
\addplot3 [blue!20,domain=143:325,samples=60,y domain=0:180,samples y=0,line join=round,variable=\t] ({cos(15)*cos(t)},{sin(t)},{-sin(15)*cos(t)});
\addplot3 [blue!20,domain=145:335,samples=60,y domain=0:180,samples y=0,line join=round,variable=\t] ({cos(30)*cos(t)},{sin(t)},{-sin(30)*cos(t)});
\addplot3 [blue!20,domain=159:348,samples=60,y domain=0:180,samples y=0,line join=round,variable=\t] ({cos(45)*cos(t)},{sin(t)},{-sin(45)*cos(t)});
\addplot3 [blue!20,domain=180:360,samples=60,y domain=0:180,samples y=0,line join=round,variable=\t] ({cos(60)*cos(t)},{sin(t)},{-sin(60)*cos(t)});
\addplot3 [blue!20,domain=200:380,samples=60,y domain=0:180,samples y=0,line join=round,variable=\t] ({cos(75)*cos(t)},{sin(t)},{-sin(75)*cos(t)});

\addplot3 [blue,domain=-45:135,samples=60,y domain=0:180,samples y=0,line join=round,variable=\t] ( {cos(t)},{sin(t)},{0} );
\addplot3 [blue,domain=-46:134,samples=60,y domain=0:180,samples y=0,line join=round,variable=\t] ( {cos(15)*cos(t)},{sin(t)},{sin(15)*cos(t)} );
\addplot3 [blue,domain=-47:133,samples=60,y domain=0:180,samples y=0,line join=round,variable=\t] ( {cos(30)*cos(t)},{sin(t)},{sin(30)*cos(t)} );
\addplot3 [blue,domain=-49:131,samples=60,y domain=0:180,samples y=0,line join=round,variable=\t] ( {cos(45)*cos(t)},{sin(t)},{sin(45)*cos(t)} );
\addplot3 [blue,domain=-45:134,samples=60,y domain=0:180,samples y=0,line join=round,variable=\t] ( {cos(60)*cos(t)},{sin(t)},{sin(60)*cos(t)} );
\addplot3 [blue,domain=-40:138,samples=60,y domain=0:180,samples y=0,line join=round,variable=\t] ( {cos(75)*cos(t)},{sin(t)},{sin(75)*cos(t)} );
\addplot3 [blue,domain=-33:145,samples=60,y domain=0:180,samples y=0,line join=round,variable=\t] ( {0},{sin(t)},{cos(t)} );
\addplot3 [blue,domain=-35:143,samples=60,y domain=0:180,samples y=0,line join=round,variable=\t] ( {cos(15)*cos(t)},{sin(t)},{-sin(15)*cos(t)} );
\addplot3 [blue,domain=-25:145,samples=60,y domain=0:180,samples y=0,line join=round,variable=\t] ( {cos(30)*cos(t)},{sin(t)},{-sin(30)*cos(t)} );
\addplot3 [blue,domain=-12:159,samples=60,y domain=0:180,samples y=0,line join=round,variable=\t] ( {cos(45)*cos(t)},{sin(t)},{-sin(45)*cos(t)} );
\addplot3 [blue,domain=0:180,samples=60,y domain=0:180,samples y=0,line join=round,variable=\t] ( {cos(60)*cos(t)},{sin(t)},{-sin(60)*cos(t)} );
\addplot3 [blue,domain=20:200,samples=60,y domain=0:180,samples y=0,line join=round,variable=\t] ( {cos(75)*cos(t)},{sin(t)},{-sin(75)*cos(t)} );
\addplot3[fill=black] ( 0,1,0) circle (1pt);
\addplot3[draw=black] ( 1,1,-0.9) node[anchor=south]  {$C\subset d\mathcal{S}^3$};
\end{axis}
\end{tikzpicture}
\caption{Orbits of $\oplus$ in $d\mathcal{S}^3$.  The traces of the orbits on the Cauchy surface $C=\mathbb{S}^2$ are maximal circles. Fixed $\theta_0=0$, the orbits of $\oplus$ correspond to the maximal circles passing by the points $(0,\pm 1,0)$.}
  \label{figura-orbita-de-Sitter-2}
\end{subfigure}
\caption{Orbits of $\overline{\mathcal{D}^{\sim}}$ in $d\mathcal{S}^3$.}
  \label{figura-orbita-de-Sitter}
\end{figure}
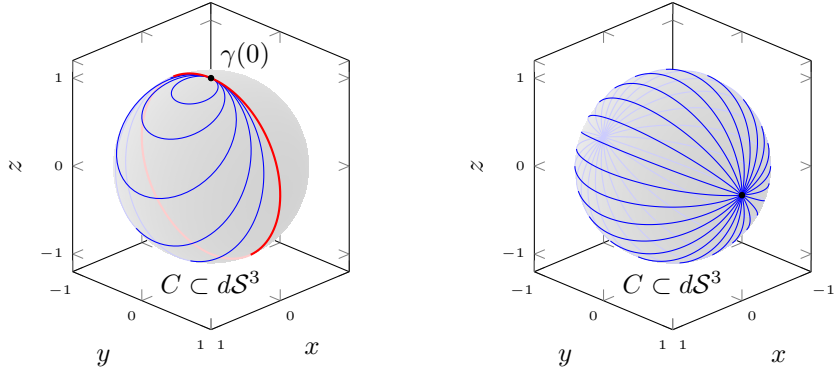

\begin{definition}
A $L$--spacetime is said to be \emph{proper} if the total distribution $\overline{\mathcal{D}^\sim}$ is smooth and regular with Hausdorff space of leaves. 
\end{definition}

If $M$ is a proper $L$--spacetime, the future $L$--boundary $\partial^{+} M = \partial^{+}\widetilde{\mathcal{N}} / \partial^{+}\mathcal{D}^{\sim}$ built in corollary \ref{corollary-main-theorem} is a smooth boundary for the manifold $\overline{M} = M \cup \partial^{+} M$.  Moreover, the hausdorffness of the quotient space $\overline{\widetilde{\mathcal{N}}}$ over the regular distribution $\overline{\mathcal{D}^\sim}$ assures that $\overline{M}$ is Hausdorff. This can be summarized in the following result. 

\begin{corollary}\label{Corollary-ext-2}
If $M$ is a proper $L$--spacetime then the canonical extension $\overline{M}$ of $\left(M,\mathcal{C}\right)$ exists.
\end{corollary}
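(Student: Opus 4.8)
The plan is to read off the conclusion directly from the machinery assembled in Corollary \ref{corollary-main-theorem}, once the clauses hidden in the definition of a proper $L$--spacetime are matched against the hypotheses required there. First I would recall that a proper $L$--spacetime is in particular an $L$--spacetime, so hypotheses \ref{itm:hypotheses-general-1} and \ref{itm:hypotheses-general-2} hold; these guarantee, through Proposition \ref{prop-diffeo-sigma} and Proposition \ref{prop-boundaries-Ntilde}, that $\widetilde{\mathcal{N}}$ is an open submanifold of $\mathbb{P}(\mathcal{H})$ and that $\partial^{+}\widetilde{\mathcal{N}}=\oplus(\mathcal{N})$ is an embedded smooth boundary, so that $\overline{\widetilde{\mathcal{N}}}=\widetilde{\mathcal{N}}\cup\partial^{+}\widetilde{\mathcal{N}}$ is a smooth manifold with boundary carrying the two pieces $\mathcal{D}^{\sim}$ and $\partial^{+}\mathcal{D}^{\sim}$ of the total distribution.

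Next I would invoke the remaining two clauses of properness: $\overline{\mathcal{D}^{\sim}}$ is smooth and regular. These are exactly the hypotheses of Corollary \ref{corollary-main-theorem}, so that corollary immediately supplies the differentiable manifold with boundary $\overline{\widetilde{\mathcal{N}}}/\overline{\mathcal{D}^{\sim}}$, with boundary $\partial^{+}M=\partial^{+}\widetilde{\mathcal{N}}/\partial^{+}\mathcal{D}^{\sim}$, together with the extended map $\widetilde{S}\colon \overline{M}=M\cup\partial^{+}M\to \overline{\widetilde{\mathcal{N}}}/\overline{\mathcal{D}^{\sim}}$ which restricts to the diffeomorphism (\ref{diffeo-S}) on $M$ and to the identity on $\partial^{+}M$, and which is itself a diffeomorphism. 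Transporting the differentiable structure of the quotient through $\widetilde{S}$ then endows $\overline{M}$ with the structure of a smooth manifold with boundary extending $(M,\mathcal{C})$.

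The only point deserving genuine care --- and the step I would flag as the main obstacle --- is separation: a quotient by a regular distribution need not be Hausdorff, and this is precisely where non-Hausdorff pathologies of the kind met in Remark \ref{remark-non-null-convex} and Proposition \ref{prop-hausdorff-null} could resurface at infinity. Here I would use the third clause of properness directly, namely that the space of leaves of $\overline{\mathcal{D}^{\sim}}$ is Hausdorff: this makes $\overline{\widetilde{\mathcal{N}}}/\overline{\mathcal{D}^{\sim}}$ Hausdorff, and pulling this back through the diffeomorphism $\widetilde{S}$ shows that $\overline{M}$ is Hausdorff. With the smooth-manifold-with-boundary structure furnished by Corollary \ref{corollary-main-theorem} and this Hausdorffness in hand, every requirement in the definition of the canonical future extension is satisfied, so $\overline{M}=M\cup\partial^{+}M$ is the canonical future extension of $(M,\mathcal{C})$, which is what had to be shown.
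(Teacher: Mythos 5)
Your proposal is correct and follows essentially the same route as the paper: the text preceding the corollary argues exactly as you do, feeding the smoothness and regularity clauses of properness into corollary \ref{corollary-main-theorem} and then using the Hausdorffness of the leaf space of $\overline{\mathcal{D}^{\sim}}$ to conclude that $\overline{M}$ is Hausdorff. Your explicit flagging of the separation issue as the step requiring the third clause of properness is a faithful (and slightly more careful) rendering of the paper's one-line remark on this point.
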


\section{Characterization of the differentiable structure of the canonical extension: \texorpdfstring{$L$--extensions}{L--extensions}}\label{sec:Lextensions}

In this section we will give a characterization of the canonical extension of a proper $L$--spacetime $\left(M,\mathcal{C}\right)$ based on how light rays arrive at the boundary. First, let us present some properties satisfied at the $L$--boundary. 

\subsection{Properties of \texorpdfstring{$L$--boundary}{L--boundary}}\label{sec:Lextensions-properties}

In section \ref{sec:Lboundary-canonical}, we have seen that, whenever $M$ is a proper $L$--spacetime, we get a diffeomorphism $\widetilde{S}:\overline{M}\rightarrow \overline{\widetilde{\mathcal{N}}} / \overline{\mathcal{D}^{\sim}}$ and moreover, we can consider the quotient map $\widetilde{\pi}: \overline{\widetilde{\mathcal{N}}}\rightarrow \overline{\widetilde{\mathcal{N}}} / \overline{\mathcal{D}^{\sim}}$ obtaining the following commutative diagram 
\begin{equation}\label{diagram-N-tilde}
\begin{tikzpicture}[every node/.style={midway}]
\matrix[column sep={7em,between origins},
        row sep={2em}] at (0,0)
{ \node(PN1)   {$\overline{\widetilde{\mathcal{N}}}$}  ; & \node(N) {$\overline{\widetilde{\mathcal{N}}} / \overline{\mathcal{D}^{\sim}}$}; \\
; &  \node(PN2) {$\overline{M}$};                   \\};
\draw[->] (PN1) -- (N) node[anchor=south]  {$\widetilde{\pi}$};
\draw[->] (N) -- (PN2) node[anchor=west]  {$\widetilde{S}^{-1}$};
\draw[->] (PN1)   -- (PN2) node[anchor=north east] {$\mathbf{\rho}$};
\end{tikzpicture}
\end{equation}
where $\rho$ is one of the branches of the double fibration (\ref{double-fibration-3}).  

If we consider the following composition of diffeomorphisms
\[
\begin{tabular}{rcccl}
$\mathcal{N}$ & $\overset{i}{\longrightarrow}$ & $\mathcal{N}\times \{1\}$ & $\overset{\left.\varepsilon\right|_{\mathcal{N}\times \{1\}}}{\longrightarrow}$ & $\partial^{+}\widetilde{\mathcal{N}}$ \\
$\gamma$ & $\mapsto$ & $\left(\gamma,1\right)$  & $\mapsto$  & $\widetilde{\gamma}\left(1\right)$ 
\end{tabular}
\]
then, projecting on $\overline{M}$, we have that
\[
\infty^{+}=\left.\rho\right|_{\partial^{+}\widetilde{\mathcal{N}}} \circ \left.\varepsilon\right|_{\mathcal{N}\times \{1\}} \circ i : \mathcal{N}\rightarrow \partial^{+} M 
\]
is a surjective submersion.
The same construction can be done to get a surjective submersion $\infty^{-}$ onto the past $L$--boundary.
Then the canonical extension verify the following property:

\begin{enumerate}[start=1,label={\bfseries P\arabic* }]
\item \label{property-p1} The map $\infty^{+}:\mathcal{N}\rightarrow \partial^{+} M$ defined by $\infty^{+}\left(\gamma\right)=\lim_{\mathbf{t}\mapsto + 1} \overline{\gamma}\left(\mathbf{t}\right)$ is a surjective submersion. 
\end{enumerate}

Observe that a light ray can be extended to the $L$--boundary by $\overline{\gamma}(\mathbf{t})=\rho\left( \widetilde{\gamma}(\mathbf{t}) \right)$ for all $\mathbf{t}\in\left(-1,1\right]$.
Moreover, by (\ref{extension-gamma-prima-tilde}), $\widetilde{\gamma}'(\mathbf{t})=\left(\frac{\partial}{\partial \mathbf{t}}\right)_{\gamma}\neq 0$ and, by construction, the map $\overline{X}$ used in the proof of theorem \ref{theorem-extension} to extend $X$ verifies  $\frac{\partial \overline{X}}{\partial \mathbf{s}}(\gamma,\mathbf{t},0)\neq 0$. This implies that the curve $\widetilde{\gamma}$ is transversal to the orbit of $\overline{\mathcal{D}^{\sim}}$ passing through $\widetilde{\gamma}(\mathbf{t})$. 
Then, applying $\rho$, the curve $\overline{\gamma}(\mathbf{t})$ is regular. Moreover, since equation (\ref{extension-gamma-prima-tilde}) also holds  for $\mathbf{t}=+ 1$ then $\overline{\gamma}$ is transversal to $\partial^{+} M$.
So we obtain the second property of the canonical extension:

\begin{enumerate}[start=2,label={\bfseries P\arabic* }]
\item \label{property-p2} For every $\mathcal{N}_U\subset \mathcal{N}$, the map $\rho\circ\varepsilon:\mathcal{N}_U\times\left[-1,1\right]\rightarrow\overline{M}$, where $\gamma\left(\mathbf{t}\right)=\rho\circ\varepsilon\left(\gamma, \mathbf{t}\right)$ is a projective parametrization of $\gamma\in \mathcal{N}_U$ such that $\frac{\partial \rho\circ\varepsilon}{\partial \mathbf{t}}\left(\gamma,+ 1\right)\notin T_{\infty^{+}\left(\gamma\right)}\partial^{+}M$.
\end{enumerate}

For a more detailed proof of the above properties, see \cite[Prop. 7.1]{Ba18}.

The differentiable structure at the boundary is a key element for an extension and it affects to the way in which the light rays arrive at the boundary, so the features of the parametrization of a light ray by a projective parameter can change very subtly depending on if it corresponds to a  canonical--like extension or to another general extension. 
The following example illustrates this behaviour.

\begin{example}\label{example-no-admiss}
In this example we will only consider the past $L$--extension of the spacetime.
Let $\mathbb{M}^{3}_{(0,\infty)}=\left\{ \left(t,x,y\right)\in \mathbb{R}^3 : t>0 \right\}$ be the Minkowski block as seen in example \ref{example-M3-block}.  
We denote by $\overline{\mathbb{M}}^{3}_{(0,\infty)}$ the canonical past extension of $\mathbb{M}^{3}_{(0,\infty)}$, that is $\overline{\mathbb{M}}^{3}_{(0,\infty)}= \left\{ \left(t,y,z\right)\in \mathbb{R}^3 : t\geq 0 \right\}$ with the standard differentiable structure (see example \ref{example-M3-block}).
A Cauchy surface $C=\{  t=1 \}\subset \mathbb{M}$ provides a coordinate chart for $\mathcal{N}$ by fixing, for every $\gamma\in \mathcal{N}$, the point $\gamma\cap C\simeq\left(x_0,y_0\right)$ and an angle $\theta_0$ such that $\gamma\left(0\right)=\left(1,x_0,y_0\right)\in C$ and $\gamma'\left(0\right)=\left(1,\cos \theta_0,\sin \theta_0\right)\in \mathbb{N}^{+}_{\gamma\left(0\right)}$.
So, a parametrization of $\gamma$ as a null geodesic can be written by $\gamma\left(s\right)=\left(s+1,x_0+s\cos \theta_0, y_0+s\sin \theta_0 \right)$ for $s\in\left(-1,\infty\right)$. 
Then we can identify $\gamma\simeq\left(x_0,y_0,\theta_0\right)$.

The parameter $s$ is diffeomorphic to a projective parameter by $s=h(\mathbf{t})$ where $h:\left[-1,1\right)\rightarrow \left[-1,\infty\right)$ is a function such that $h'\left(\mathbf{t}\right)>0$ for all $\mathbf{t}\in \left[-1,1\right)$ and $\gamma\circ h:\left(-1,1\right)\rightarrow M$ is projective. 
Indeed, as argued in example \ref{example-M3-block}, we have $h\left(\mathbf{t}\right)=\frac{-2\mathbf{t}}{\mathbf{t}-1}$. 
Then the map  
\[
\Psi\left(\gamma, s\right)=\Psi\left(x_0,y_0,\theta_0,s\right) = \left(s+1,x_0+s\cos \theta_0, y_0+s\sin \theta_0 \right)
\]
with $s\in\left[-1,\infty\right)$ verifies the property \ref{property-p2} as $\rho \circ \varepsilon$ for the canonical past extension.

Observe that if we consider the spacetime $M=\left\{ \left(u,v,w\right)\in \mathbb{R}^3 : u>0 \right\}$ with the metric $\mathbf{\overline{g}}=-u^2 du\otimes du + dv \otimes dv + dw\otimes dw$, then $\phi: \mathbb{M}^{3}_{(0,\infty)}\rightarrow M$ defined by $\phi\left(t,x,y\right)=\left(\sqrt{2t},x,y\right)$ is an isometry. 
Then, the canonical past extension of $M$ must be $\overline{M}= \left\{ \left(u,v,w\right)\in \mathbb{R}^3 : u\geq 0 \right\}$ such that its differentiable structure verifies that the extension $\overline{\phi}\left(t,x,y\right)=\left(\sqrt{2t},x,y\right)$ of $\phi$ to $\overline{\mathbb{M}}^{3}_{(0,\infty)}$ is a diffeomorphism. 
We will call $\overline{M}$ to such extension and $\overline{M}_s$ to the same topological manifold equipped with the standard differentiable structure. 
Trivially, the identity map $\overline{\mathrm{id}}:\overline{M}\rightarrow \overline{M}_s$ is not a diffeomorphism and therefore $\overline{M}_s$ can not be the canonical extension of $M$. 

Observe that $\overline{\gamma}\left(s\right)=\phi\left(\gamma\left(s\right)\right)=\left(x_0+s\cos \theta_0, y_0+s\sin \theta_0 ,\sqrt{2\left(s+1\right)} \right)$ defines an inextensible null geodesic in $M$ for $s\in\left(-1,\infty\right)$.
If we change the parameter by $\tau^2=s+1$, we obtain a regular parameter $\tau\in\left(0,\infty\right)$ which is diffeomorphic to the canonical projective parameter for $\mathbf{t}\in\left(-1,1\right)$.
But the resulting map $\overline{\Psi}:\mathcal{N}\times\left[0,\infty\right)\rightarrow \overline{M}_s$ given in coordinates by
\[
\overline{\Psi}\left(x_0,y_0,\theta_0,\tau\right) = \left(x_0+\left(\tau^2-1\right)\cos \theta_0, y_0+\left(\tau^2-1\right)\sin \theta_0 ,\sqrt{2}\tau \right)
\]
defines regular parametrizations of light rays but not diffeomorphic to the projective parameter for $\tau\in\left[0,\infty\right)$ because it is not smooth at $\tau=0$.
\end{example}

The following definition is motivated by the example \ref{example-no-admiss}.

\begin{definition}\label{def-parametrizations}
A regular, future--directed and inextensible parametrization $\gamma:\left(a,b\right)\rightarrow M$ of a light ray $\gamma\in \mathcal{N}$ is said to be
\begin{enumerate}
\item \emph{projective} if $\widetilde{\gamma}(s)=\sigma\left( \left[ \gamma'(s) \right] \right)\in \mathbb{P}\left(\mathcal{H}_{\gamma}\right)$ defines a projectivity in the fibre $\mathbb{P}\left(\mathcal{H}_{\gamma}\right)$, and
\item  \emph{admissible} if there exists a diffeomorphism $h:\left(c,d\right]\rightarrow \left(a,b\right]$ such that $h'\left(t\right)>0$ for all $t\in \left(c,d\right]$ and $\gamma\circ h:\left(c,d\right)\rightarrow M$ is a projective parametrization.
\end{enumerate}
\end{definition}

It can be trivially observed that any projective parametrization of $\gamma\in \mathcal{N}$ is admissible, and any admissible parameter is diffeomorphic (in the sense of admissibility of definition \ref{def-parametrizations}) to the canonical projective parameter $\mathbf{t}\in\left(-1,1\right]$.

As seen in example \ref{example-no-admiss}, we have to notice that every parametrization of $\gamma\in \mathcal{N}$ can be reparametrized diffeomorphically by the canonical projective parameter $\mathbf{t}$, but it is not an admissible parametrization.

\subsection{Characterization of \texorpdfstring{$L$--boundary}{L--boundary}}\label{sec:Lextensions-Char}

Now, we would like to offer a characterization of the canonical extension based on the properties \ref{property-p1} and \ref{property-p2}  enunciated in the previous section \ref{sec:Lextensions-properties}.

The proposed characterization is given in the following definition.

\begin{definition}\label{def-L-extension}
A \emph{future $L$--extension} of a conformal manifold $\left(M,\mathcal{C}\right)$ is defined as a Hausdorff smooth manifold $\overline{M}=M \cup \partial^{+} M$ where $\partial^{+} M = \overline{M}-M$ is a closed hypersurface of $\overline{M}$ named \emph{future $L$--boundary} satisfying the following properties:
\begin{enumerate}
\item \label{L-ext-cond-0} If $\gamma:\left(a,b\right)\rightarrow M$ is a continuous parametrization of $\gamma\in \mathcal{N}$, then $\lim_{s\mapsto b^{-}}\gamma\left(s\right)=\infty^{+}_{\gamma}\in \partial^{+}M$. 
\item \label{L-ext-cond-2} The map $\infty^{+}:\mathcal{N}\rightarrow \partial^{+} M$ defined by $\infty^{+}\left(\gamma\right)=\infty^{+}_{\gamma}$ is a surjective submersion.
\item \label{L-ext-cond-1} For every $\gamma_0\in \mathcal{N}$ there exists a neighbourhood $\mathcal{U}\subset \mathcal{N}$ and a differentiable map $\Psi_{\mathcal{U}}:\mathcal{U}\times\left(a,b\right]\rightarrow\overline{M}$, where $\gamma\left(s\right)=\Psi_{\mathcal{U}}\left(\gamma, s\right)$ is an admissible parametrization of $\gamma\in \mathcal{U}$ for $s\in\left(a,b\right)$ and such that $\frac{\partial \Psi_{\mathcal{U}}}{\partial s}\left(\gamma,b\right)\notin T_{\infty^{+}\left(\gamma\right)}\partial^{+}M$.
\end{enumerate}
A \emph{past $L$--extension} $\overline{M}=M \cup \partial^{-} M$ can be defined analogously. If there exists any future or past $L$--extension of $\left(M,\mathcal{C}\right)$, then $\left(M,\mathcal{C}\right)$ is said to be \emph{future} or \emph{past $L$-extensible}.
\end{definition}

The interpretation of the condition \ref{L-ext-cond-1} of definition \ref{def-L-extension} is that any light ray $\gamma=\gamma(t)$ parametrized inextensible to the future with a projective parameter $t\in (a,b)$ is smoothly extended to $t\in (a,b]$ with $\gamma(b)=\infty^{+}(\gamma)$ and $\gamma'(b)\neq 0$ is transversal to the future $L$--boundary.

From the definition \ref{def-L-extension}, since the map $\infty^{+}:\mathcal{N}\rightarrow \partial^{+}M$ is a surjective submersion then the inverse images
\[
S\left(p\right) = \left(\infty^{+}\right)^{-1}\left(p\right)= \{ \gamma\in \mathcal{N} : p = \infty^{+}\left(\gamma\right) \} \subset \mathcal{N}  
\]
define a regular distribution $\boxplus:\mathcal{N}\rightarrow \mathbb{P}\left(T\mathcal{N}\right)$ given by $\boxplus\left(\gamma\right)= T_{\gamma}S\left(\infty^{+}\left(\gamma\right)\right)$, and the map defined by
\[
\begin{tabular}{rccl}
$S:$ & $ \partial^{+} M$ & $\rightarrow$ & $\mathcal{N}/\boxplus$  \\
 & $p$ & $\mapsto$ & $S\left(p\right)$   
\end{tabular}
\]
is a diffeomorphism.

\begin{remark}\label{remark-change-parameter}
For any map $\Psi_{\mathcal{U}}:\mathcal{U}\times\left(a,b\right]\rightarrow\overline{M}$ according to the definition of $L$--extensions, the properties of the admissible parameter $s\in\left(a,b\right]$ ensure the existence of a change of parameter $h:\mathcal{U}\times\left(-1,1\right]\rightarrow \left(a,b\right]$, globally in $\mathcal{U}\subset \mathcal{N}$ such that the map $\overline{\Psi}_{\mathcal{U}}\left(\gamma,\mathbf{t}\right)=\Psi_{\mathcal{U}}\left(\gamma,h\left(\gamma,\mathbf{t}\right)\right)$ satisfies the condition \ref{L-ext-cond-1} of definition \ref{def-L-extension} and where $\mathbf{t}\in\left(-1,1\right]$ is the canonical projective parameter.

Moreover, if $\{\mathcal{U}_{\alpha}\}_{\alpha\in I}$ is an open covering of $\mathcal{N}_{U}$ such that $\mathcal{U}_{\alpha}\subset \mathcal{N}_{U}$ for all $\alpha\in I$, since 
\[
\left.\overline{\Psi}_{\mathcal{U}_{\alpha}}\right|_{\mathcal{U}_{\alpha}\cap\mathcal{U}_{\beta}\times\left(-1,1\right]} = \left.\overline{\Psi}_{\mathcal{U}_{\beta}}\right|_{\mathcal{U}_{\alpha}\cap\mathcal{U}_{\beta}\times\left(-1,1\right]}
\]
then trivially, it is possible to define $\overline{\Psi}_{\mathcal{N}_{U}}:\mathcal{N}_{U}\times\left(-1,1\right]\rightarrow\overline{M}$ extending all $\overline{\Psi}_{\mathcal{U}_{\alpha}}$.
\end{remark}

By properties \ref{property-p1} and \ref{property-p2}, the canonical extension is a $L$-extension such that $\Psi_{\mathcal{N}_{U}}=\rho\circ \varepsilon$. This result is summarized in the following corollary.

\begin{corollary}\label{Corol-canonical-extension}
The future (resp. past) canonical extension of a proper $L$--spacetime $\left(M,\mathcal{C}\right)$, built in section \ref{sec:LBoundary}, is a future (resp. past) $L$--extension.
\end{corollary}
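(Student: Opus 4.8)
The plan is to verify that the canonical extension $\overline{M} = M\cup\partial^{+} M$ produced by corollary \ref{corollary-main-theorem} meets each of the three conditions of definition \ref{def-L-extension}, essentially by transcribing the properties \ref{property-p1} and \ref{property-p2} already established in section \ref{sec:Lextensions-properties}. First I would record the ambient hypotheses: since $\left(M,\mathcal{C}\right)$ is a proper $L$--spacetime, the total distribution $\overline{\mathcal{D}^{\sim}}$ is smooth and regular with Hausdorff space of leaves, so by corollaries \ref{corollary-main-theorem} and \ref{Corollary-ext-2} the quotient $\overline{\widetilde{\mathcal{N}}}/\overline{\mathcal{D}^{\sim}}$ is a Hausdorff differentiable manifold with boundary $\partial^{+}\widetilde{\mathcal{N}}/\partial^{+}\mathcal{D}^{\sim}$, and $\widetilde{S}:\overline{M}\to\overline{\widetilde{\mathcal{N}}}/\overline{\mathcal{D}^{\sim}}$ is a diffeomorphism. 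Transporting the manifold--with--boundary structure through $\widetilde{S}$ shows that $\overline{M}$ is a Hausdorff smooth manifold in which $\partial^{+} M = \partial^{+}\widetilde{\mathcal{N}}/\partial^{+}\mathcal{D}^{\sim}$ is a closed hypersurface, which is precisely the ambient requirement of an $L$--extension.

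Next I would identify $\Psi_{\mathcal{N}_{U}} = \rho\circ\varepsilon$ and set $(a,b) = (-1,1)$. Condition \ref{L-ext-cond-2} is then verbatim property \ref{property-p1}: the map $\infty^{+}:\mathcal{N}\to\partial^{+} M$, given by $\infty^{+}(\gamma) = \lim_{\mathbf{t}\mapsto +1}\overline{\gamma}(\mathbf{t})$, is a surjective submersion. For condition \ref{L-ext-cond-1} I would invoke property \ref{property-p2}: for any basic neighbourhood $U\subset M$ with $\gamma_0\in\mathcal{N}_{U}$, the map $\rho\circ\varepsilon:\mathcal{N}_{U}\times[-1,1]\to\overline{M}$ restricts to the projective parametrization $\mathbf{t}\mapsto\rho\circ\varepsilon(\gamma,\mathbf{t})$ of each $\gamma\in\mathcal{N}_{U}$, which is admissible by the observation following definition \ref{def-parametrizations}, and satisfies $\frac{\partial \rho\circ\varepsilon}{\partial \mathbf{t}}(\gamma,+1)\notin T_{\infty^{+}(\gamma)}\partial^{+} M$. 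Taking $\mathcal{U} = \mathcal{N}_{U}$ yields the required differentiable map with the transversality property.

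The one point deserving care, and the only genuine obstacle, is condition \ref{L-ext-cond-0}, which is phrased for an \emph{arbitrary} continuous parametrization $\gamma:(a,b)\to M$ rather than for the distinguished projective parameter. I would argue that the boundary point $\infty^{+}_{\gamma}$ depends only on the image light ray and not on its parametrization: the extended curve $\overline{\gamma}(\mathbf{t}) = \rho(\widetilde{\gamma}(\mathbf{t}))$ is continuous on $(-1,1]$ with $\overline{\gamma}(1) = \infty^{+}(\gamma)$, and any other continuous, future--directed parametrization traverses the same image with $s\mapsto b^{-}$ corresponding to $\mathbf{t}\mapsto +1$; hence the two limits agree and $\lim_{s\mapsto b^{-}}\gamma(s) = \infty^{+}_{\gamma}\in\partial^{+} M$. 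Assembling the three verified conditions with the ambient structure of the first paragraph gives that $\overline{M}$ is a future $L$--extension; the past case is identical after reversing the time--orientation and replacing $\oplus$, $\partial^{+}\widetilde{\mathcal{N}}$, $\partial^{+}\mathcal{D}^{\sim}$ by $\ominus$, $\partial^{-}\widetilde{\mathcal{N}}$, $\partial^{-}\mathcal{D}^{\sim}$.
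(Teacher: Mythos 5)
Your proposal is correct and follows essentially the same route as the paper, which simply observes that properties \ref{property-p1} and \ref{property-p2} give conditions \ref{L-ext-cond-2} and \ref{L-ext-cond-1} of definition \ref{def-L-extension} with $\Psi_{\mathcal{N}_U}=\rho\circ\varepsilon$. Your extra paragraph checking condition \ref{L-ext-cond-0} for an arbitrary continuous (future--directed) parametrization is a point the paper leaves implicit, and your argument for it is sound.
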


For any two future $L$--extensions, the corresponding skies at the endpoint of any light ray coincide, that is $S\left(\infty_{1}^{+}\left(\gamma\right)\right) = S\left(\infty_{2}^{+}\left(\gamma\right)\right)$ for all $\gamma\in \mathcal{N}$ where $\infty^{+}_i$ with $i=1,2$ are the submersions of the $L$--extensions as in definition \ref{def-L-extension}.
The details of the proof are in \cite[Lem. 8.2]{Ba18}.

Now, let us see that definition \ref{def-L-extension} gives a characterization of the canonical extension.

\begin{theorem}\label{theorem-diff-struct}
Let $M$ be a proper $L$--spacetime, $\overline{M}_1 = M \cup \partial^{+} M_1$ the canonical future $L$--extension and $\overline{M}_{2} = M \cup \partial^{+} M_2$ any other future $L$--extension of $\left(M,\mathcal{C}\right)$, then the identity map $\mathrm{id}:M\rightarrow M$ can be extended as a diffeomorphism $\overline{\mathrm{id}}:\overline{M}_1 \rightarrow \overline{M}_2$.
\end{theorem}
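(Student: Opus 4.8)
The plan is to construct $\overline{\mathrm{id}}$ as the identity on $M$ together with a prescribed diffeomorphism between the two boundaries coming from the intrinsic identification of boundary points with skies, and then to prove smoothness across the boundary by factoring everything through the light-ray evaluation maps, which are submersions. First I would record the boundary bijection. Each future $L$--extension furnishes a surjective submersion $\infty_i^{+}:\mathcal{N}\rightarrow\partial^{+}M_i$ ($i=1,2$), the regular distribution $\boxplus_i(\gamma)=T_{\gamma}S\left(\infty_i^{+}(\gamma)\right)$, and the diffeomorphism $S_i:\partial^{+}M_i\rightarrow\mathcal{N}/\boxplus_i$ sending $p$ to its sky $(\infty_i^{+})^{-1}(p)$. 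Because the skies at the endpoints of light rays coincide, $S\left(\infty_1^{+}(\gamma)\right)=S\left(\infty_2^{+}(\gamma)\right)$ for all $\gamma\in\mathcal{N}$ (cf.\ \cite[Lem.\ 8.2]{Ba18}), the two distributions agree, $\boxplus_1=\boxplus_2=:\boxplus$, so $\mathcal{N}/\boxplus_1=\mathcal{N}/\boxplus_2$ and $S_2^{-1}\circ S_1:\partial^{+}M_1\rightarrow\partial^{+}M_2$ is a diffeomorphism. I define $\overline{\mathrm{id}}$ to be the identity on $M$ and $S_2^{-1}\circ S_1$ on $\partial^{+}M_1$; this is a bijection satisfying $\overline{\mathrm{id}}\left(\infty_1^{+}(\gamma)\right)=\infty_2^{+}(\gamma)$ by construction.

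The next step brings in the parametrizations. Using condition \ref{L-ext-cond-1} of definition \ref{def-L-extension} and remark \ref{remark-change-parameter}, for a fixed $\gamma_0$ and a neighbourhood $\mathcal{N}_U\ni\gamma_0$ I reparametrize both extensions by the \emph{same} canonical projective parameter, obtaining smooth maps $\overline{\Psi}^{(i)}:\mathcal{N}_U\times(-1,1]\rightarrow\overline{M}_i$ with $\overline{\Psi}^{(i)}(\gamma,\mathbf{t})=\gamma(\mathbf{t})$. Since the projective parameter is fixed once a local synchronization is chosen and the light rays of $M$ are identical in both extensions, on the interior $\mathbf{t}\in(-1,1)$ one has $\overline{\Psi}^{(1)}(\gamma,\mathbf{t})=\overline{\Psi}^{(2)}(\gamma,\mathbf{t})$ as points of $M$; at $\mathbf{t}=1$ the two values are $\infty_1^{+}(\gamma)$ and $\infty_2^{+}(\gamma)$, related by $\overline{\mathrm{id}}$. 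Hence
\[
\overline{\mathrm{id}}\circ\overline{\Psi}^{(1)}=\overline{\Psi}^{(2)}\qquad\text{on }\mathcal{N}_U\times(-1,1].
\]

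I would then show that $\overline{\Psi}^{(i)}$ is a surjective submersion onto an open neighbourhood $W_i\subset\overline{M}_i$ of the boundary point. On the interior this is transparent: the assignment $(\gamma,\mathbf{t})\mapsto[\gamma'(\mathbf{t})]$ is precisely $\sigma^{-1}\circ\varepsilon$, a diffeomorphism onto $\mathbb{PN}(U)$, so $\overline{\Psi}^{(i)}=\pi^{\mathbb{PN}}_{M}\circ\sigma^{-1}\circ\varepsilon$ is a submersion. At $\mathbf{t}=1$ the transversality clause $\frac{\partial\overline{\Psi}^{(i)}}{\partial\mathbf{t}}(\gamma,1)\notin T_{\infty_i^{+}(\gamma)}\partial^{+}M_i$ supplies the direction normal to the boundary, upgrading the interior submersion to a submersion onto a neighbourhood of $\partial^{+}M_i$ in the manifold--with--boundary $\overline{M}_i$. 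The displayed identity then exhibits the smooth map $\overline{\Psi}^{(2)}$ as $\overline{\mathrm{id}}$ precomposed with the surjective submersion $\overline{\Psi}^{(1)}$, on whose fibres $\overline{\Psi}^{(2)}$ is constant; the standard factorization--through--a--submersion theorem yields smoothness of $\overline{\mathrm{id}}$ on $W_1$, and with the trivial smoothness on $M$ this gives smoothness on all of $\overline{M}_1$. Interchanging the two extensions shows $\overline{\mathrm{id}}^{-1}$ is smooth, so $\overline{\mathrm{id}}$ is a diffeomorphism; by corollary \ref{Corol-canonical-extension} the canonical extension is itself an $L$--extension, which covers the stated case.

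I expect the main obstacle to be the submersion claim for $\overline{\Psi}^{(i)}$ exactly at $\mathbf{t}=1$: one must verify that the transversality condition together with the limiting behaviour of the interior fibration genuinely produces a submersion onto an open subset of the manifold--with--boundary, and that the factorization lemma is legitimately applied in a boundary chart. I would handle this by working in a boundary--adapted chart in which $\partial^{+}M_i=\{x^{m}=0\}$, writing $\overline{\Psi}^{(i)}$ in these coordinates, and checking that its Jacobian retains full rank up to and including $\mathbf{t}=1$ from the interior rank plus the transverse $\mathbf{t}$--derivative; the remaining smoothness bookkeeping at the boundary is then routine.
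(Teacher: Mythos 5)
Your proposal is correct and follows essentially the same route as the paper's proof: the boundary bijection induced by the coincidence of skies at infinity, promoted to a diffeomorphism by factoring through the submersions $\infty^{+}_{i}$, and then smoothness of $\overline{\mathrm{id}}$ across the boundary obtained by exhibiting it as the factorization of one parametrization map $\Psi$ through the other, with the transversality clause supplying surjectivity of the differential at $\mathbf{t}=1$. The only cosmetic difference is that you phrase the boundary identification via the quotient $\mathcal{N}/\boxplus$ rather than directly through the commutative triangle of submersions, which amounts to the same appeal to the factorization-through-a-submersion lemma.
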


\begin{proof}
Since $S\left(\infty_{1}^{+}\left(\gamma\right)\right) = S\left(\infty_{2}^{+}\left(\gamma\right)\right)$ for all $\gamma\in \mathcal{N}$, then the map $\phi:\partial^{+}M_1 \rightarrow \partial^{+}M_2$ given by $\phi\left(\infty^{+}_1\left(\gamma\right)\right)=\infty^{+}_2\left(\gamma\right)$ is well--defined and a bijection. 
Thus, the diagram
\begin{equation}\label{diagram-submersions}
\begin{tikzpicture}[every node/.style={midway}]
\matrix[column sep={3em,between origins},
        row sep={1em}] at (0,0)
{ ; &  \node(N)   { $\mathcal{N}$}  ; & ; \\
  \node(M1)   { $\partial^{+}M_1$} ; &   ; & \node(M2)   { $\partial^{+}M_2$} ;       \\};
\draw[->] (N) -- (M1) node[anchor=south east]  {$\infty^{+}_{1}$};
\draw[->] (N) -- (M2) node[anchor=south west]  {$\infty^{+}_{2}$};
\draw[->] (M1)   -- (M2) node[anchor=north] {$\phi$};
\end{tikzpicture}
\end{equation}
follows. 
Since $\infty^{+}_{1}$ and $\infty^{+}_{2}$ are smooth submersions then, by \cite[Prop. 6.1.2]{BC}, $\phi$ is a diffeomorphism.

Now, if $\Psi_{\mathcal{U}}$ is the map in the definition of $L$--extension corresponding to $\overline{M}_2$, since $\left.\Psi_{\mathcal{U}}\right|_{\mathcal{U}\times \left(-1,1\right)}=\left.\rho\circ\varepsilon\right|_{\mathcal{U}\times \left(-1,1\right)}$ and $\rho\circ\varepsilon$ is a submersion (see diagram (\ref{diagram-N-tilde}) and equation (\ref{eq-varepsilon})), then $\left.\Psi_{\mathcal{U}}\right|_{\mathcal{U}\times \left(-1,1\right)}$ is a submersion. 

On the other hand, since $\left(d\Psi_{\mathcal{U}}\right)_{\left(\gamma,1\right)}\left(  \frac{\partial}{\partial \mathbf{t}}  \right)_{\left(\gamma,1\right)}=\overline{\gamma}'\left(1\right) \neq 0$ 
with $\overline{\gamma}'\left(1\right)\notin T_{\overline{\gamma}\left(1\right)} \partial^{+} M_2$ and moreover $\phi$ is a diffeomorphism and $\left.\Psi_{\mathcal{U}}\right|_{\mathcal{U}\times \{1\}}=\left.\phi\right|_{\infty^{+}_{1}\left(\mathcal{U}\right)}$, then we have that $\left(d\Psi_{\mathcal{U}}\right)_{\left(\gamma,1\right)}$ is surjective, therefore $\Psi_{\mathcal{U}}$ is a submersion.

Denoting $\overline{V}_1= \rho\circ\varepsilon\left(\mathcal{U}\times \left(-1,1\right]\right)\subset \overline{M}_1$, $\overline{V}_2=\Psi_{\mathcal{U}}\left(\mathcal{U}\times \left(-1,1\right]\right)\subset \overline{M}_2$, $V_1= \overline{V}_1 \cap M$ and $V_2=\overline{V}_2 \cap M$
then, $V=V_1=V_2=\{\gamma\left(\mathbf{t}\right)\in M: \gamma\in \mathcal{U} \}$ and we have the following diagram 
\begin{equation}\label{diagram-submersions-2}
\begin{tikzpicture}[every node/.style={midway}]
\matrix[column sep={4em,between origins},
        row sep={2em}] at (0,0)
{ ; &  \node(N)   { $\mathcal{U}\times \left(-1,1\right]$}  ; & ; \\
  \node(M1)   { $\overline{M}_1\supset\overline{V}_1$} ; &   ; & \node(M2)   { $\overline{V}_2\subset\overline{M}_2$} ;       \\};
\draw[->] (N) -- (M1) node[anchor=south east]  {$\rho\circ \varepsilon$};
\draw[->] (N) -- (M2) node[anchor=south west]  {$\Psi_{\mathcal{U}}$};
\draw[->] (M1)   -- (M2) node[anchor=north] {$\overline{\mathrm{id}}$};
\end{tikzpicture}
\end{equation}
defining $\overline{\mathrm{id}}$ as a bijection such that $\left.\overline{\mathrm{id}}\right|_{V}=\mathrm{id}:V\rightarrow V$ is the identity map. 
By \cite[Prop. 6.1.2]{BC}, $\overline{\mathrm{id}}$ is a diffeomorphism extending the identity map in $V\subset M$. 
Finally, taking a covering of $\mathcal{N}$, we can define globally $\overline{\mathrm{id}}=\overline{M}_1\rightarrow \overline{M}_2$ as a diffeomorphism. 
\qed
\end{proof}

\begin{remark}
Notice that if $\varphi: \left(M_1,\mathcal{C}_1\right) \rightarrow \left(M_2,\mathcal{C}_2\right)$ is a conformal diffeomorphism then, by the Reconstruction theorem \ref{teo-reconstruction}, there exists a diffeomorphisms $\phi:\mathcal{N}_1 \rightarrow \mathcal{N}_2$ such that for any sky $X\in \Sigma_1$ of $\mathcal{N}_1$, then $\phi\left(X\right)\in \Sigma_2$ is a sky of $\mathcal{N}_2$. 
Then, in virtue of Theorem \ref{theorem-diff-struct} and by construction of the canonical extension, whenever one of the conformal manifolds is a proper $L$--spacetime for any metric in $\mathcal{C}$, then the other is also a proper $L$--spacetime for any metric and both $L$--extensions are diffeomorphic by the extension of $\varphi$. See \cite[Cor. 8.2]{Ba18}.
\end{remark}

It is possible to show a converse result of theorem \ref{theorem-diff-struct} to conclude the characterization of the canonical extension as the unique $L$--extension (up to conformal diffeomorphism). The proof of the following proposition can be found in \cite[Prop. 8.1]{Ba18}.

\begin{proposition}
Let $M$ be a $3$--dimensional, strongly causal, light non--conjugate, sky-separating, conformal Lorentz manifold.
If $M$ admits a future $L$--extension $\overline{M}$, then the canonical field of directions $\oplus:\mathcal{N}\rightarrow \mathbb{P}\left(\mathcal{H}\right)$ defines a regular and smooth distribution. 
Moreover, the distribution $\boxplus$ defined by the L--extension verifies $\boxplus=\oplus$.
\end{proposition}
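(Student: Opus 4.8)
The plan is to show that admitting a future $L$--extension forces the limit distribution $\oplus$ to exist, to be smooth and regular, and to coincide with the distribution $\boxplus$ produced by the extension; since $\boxplus$ is already regular — its leaves $S(p)=(\infty^+)^{-1}(p)$ are the fibres of the surjective submersion $\infty^+:\mathcal N\to\partial^+M$, whose leaf space $\mathcal N/\boxplus\cong\partial^+M$ is a manifold — the identity $\oplus=\boxplus$ delivers all three conclusions at once. As everything is local on $\mathcal N$, I would fix $\gamma_0\in\mathcal N$, take the neighbourhood $\mathcal U$ and map $\Psi_{\mathcal U}$ of Definition \ref{def-L-extension}, and, via Remark \ref{remark-change-parameter}, replace it by a map $\overline\Psi:\mathcal U\times(-1,1]\to\overline M$ smooth up to and including $\mathbf t=1$, with $\overline\Psi(\gamma,\mathbf t)=\gamma(\mathbf t)$ the light ray at the canonical projective parameter, $\overline\Psi(\gamma,1)=\infty^+(\gamma)$, and $\partial_{\mathbf t}\overline\Psi(\gamma,1)\notin T_{\infty^+(\gamma)}\partial^+M$. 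Note that $\boxplus(\gamma)=T_\gamma S(\infty^+(\gamma))=\ker d(\infty^+)_\gamma=\ker\big(\mathbf v\mapsto d\overline\Psi_{(\gamma,1)}(\mathbf v,0)\big)$.

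First I would characterise the interior skies. For $\mathbf t<1$ the event $x=\gamma(\mathbf t)$ lies in $M$ and, by eq. (\ref{def-gammatilde}), $\widetilde\gamma(\mathbf t)=T_\gamma S(\gamma(\mathbf t))\subset\mathcal H_\gamma$. A vector $\mathbf v\in T_\gamma S(x)$ is represented by a variation $r\mapsto\gamma_r$ of light rays all through $x$; writing $\overline\Psi(\gamma_r,\mathbf t(r))=x$ and differentiating at $r=0$ gives $d\overline\Psi_{(\gamma,\mathbf t)}(\mathbf v,0)=-\mathbf t'(0)\,\partial_{\mathbf t}\overline\Psi(\gamma,\mathbf t)$. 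Hence $T_\gamma S(\gamma(\mathbf t))$ is contained in the line
\[
K(\gamma,\mathbf t)=\{\mathbf v\in T_\gamma\mathcal N:\ d\overline\Psi_{(\gamma,\mathbf t)}(\mathbf v,0)\in\mathrm{span}(\partial_{\mathbf t}\overline\Psi(\gamma,\mathbf t))\}.
\]
The interior restriction of $\overline\Psi$ is the incidence map $(\gamma,\mathbf t)\mapsto\gamma(\mathbf t)$, a submersion onto $M$ because it factors as a local diffeomorphism onto $\mathbb{PN}$ followed by the projection $\pi^{\mathbb{PN}}_M$ of the double fibration. Consequently $\mathbf v\mapsto d\overline\Psi_{(\gamma,\mathbf t)}(\mathbf v,0)\ \mathrm{mod}\ \mathrm{span}(\partial_{\mathbf t}\overline\Psi)$ has rank $2$, so $K(\gamma,\mathbf t)$ is one--dimensional; as $T_\gamma S(\gamma(\mathbf t))\cong T_\gamma\mathbb S^1$ is also one--dimensional, the two coincide and $\widetilde\gamma(\mathbf t)=K(\gamma,\mathbf t)$.

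The key step is to pass to $\mathbf t=1$. There $d\overline\Psi_{(\gamma,1)}(\mathbf v,0)=d(\infty^+)_\gamma(\mathbf v)\in T_{\infty^+(\gamma)}\partial^+M$, whereas transversality (condition \ref{L-ext-cond-1}) gives $\partial_{\mathbf t}\overline\Psi(\gamma,1)\notin T_{\infty^+(\gamma)}\partial^+M$; thus $T_{\infty^+(\gamma)}\overline M=T_{\infty^+(\gamma)}\partial^+M\oplus\mathrm{span}(\partial_{\mathbf t}\overline\Psi(\gamma,1))$, and the membership defining $K(\gamma,1)$ collapses to $d\overline\Psi_{(\gamma,1)}(\mathbf v,0)=0$, i.e. $K(\gamma,1)=\ker d(\infty^+)_\gamma=\boxplus(\gamma)$, again one--dimensional since $\infty^+$ is a submersion onto the $2$--dimensional $\partial^+M$. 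Because the projection $T_{\infty^+(\gamma)}\overline M\to T_{\infty^+(\gamma)}\overline M/\mathrm{span}(\partial_{\mathbf t}\overline\Psi)$ restricts to an isomorphism on $T_{\infty^+(\gamma)}\partial^+M$, the defining bundle map has constant rank $2$ on all of $\mathcal U\times(-1,1]$, so its kernel $K$ is a smooth line field there. By continuity of $K$ up to $\mathbf t=1$ the limit $\oplus_\gamma=\lim_{\mathbf t\to1^-}\widetilde\gamma(\mathbf t)=\lim_{\mathbf t\to1^-}K(\gamma,\mathbf t)=K(\gamma,1)=\boxplus(\gamma)$ exists; since $K(\cdot,1)$ is smooth in $\gamma$, $\oplus=\boxplus$ is a smooth distribution, and it is regular because $\boxplus$ is.

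The main obstacle is precisely the behaviour of $K$ at $\mathbf t=1$: a priori the limiting line of the interior skies need not be the vertical line $\ker d(\infty^+)_\gamma$, and it is only the transversality hypothesis \ref{L-ext-cond-1} — forcing $\mathrm{span}(\partial_{\mathbf t}\overline\Psi(\gamma,1))$ to be complementary to $T_{\infty^+(\gamma)}\partial^+M$ — that simultaneously keeps the rank of the defining bundle map constant across the boundary (hence $K$, and so $\oplus$, smooth) and identifies the limit with $\boxplus(\gamma)$. Care must also be taken that $\widetilde\gamma$ and $K$ are computed with the same projective parameter and that $K(\gamma,1)\subset\mathcal H_\gamma$, which follows since each $\widetilde\gamma(\mathbf t)\subset\mathcal H_\gamma$ and $\mathcal H_\gamma$ is closed.
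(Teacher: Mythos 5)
Your argument is correct. Note that the review itself does not print a proof of this proposition — it defers entirely to the reference \cite[Prop. 8.1]{Ba18} — so the comparison is with that source rather than with anything in the text; your write-up is in any case a self-contained and efficient route. The useful packaging is your observation that, for the reparametrized map $\overline{\Psi}:\mathcal{U}\times(-1,1]\rightarrow\overline{M}$, the assignment $\mathbf{v}\mapsto d\overline{\Psi}_{(\gamma,\mathbf{t})}(\mathbf{v},0)$ taken modulo $\mathrm{span}(\partial_{\mathbf{t}}\overline{\Psi})$ is a bundle morphism of constant rank $2$ on all of $\mathcal{U}\times(-1,1]$: rank $2$ in the interior because the evaluation map $(\gamma,\mathbf{t})\mapsto\gamma(\mathbf{t})$ is a submersion onto the $3$--manifold $M$, and rank $2$ at $\mathbf{t}=1$ because $\infty^{+}$ submerses onto the $2$--dimensional $\partial^{+}M$ while the transversality condition of Definition \ref{def-L-extension} makes the quotient map injective on $T_{\infty^{+}(\gamma)}\partial^{+}M$. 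The kernel line field is therefore smooth up to the boundary, equals $T_{\gamma}S(\gamma(\mathbf{t}))$ for $\mathbf{t}<1$ and $\ker d(\infty^{+})_{\gamma}=\boxplus(\gamma)$ at $\mathbf{t}=1$, which delivers existence of the limit, smoothness of $\oplus$ and the identity $\oplus=\boxplus$ simultaneously, with regularity inherited from $\boxplus$ exactly as the text asserts in the paragraph following Definition \ref{def-L-extension}. Two points deserve one explicit line each: the inclusion $T_{\gamma}S(\gamma(\mathbf{t}))\subset K(\gamma,\mathbf{t})$ uses that every sky tangent vector is realized by a variation of light rays through the fixed point (equation (\ref{eq-sky-tangent})) together with an implicit-function argument producing the smooth reparametrization $\mathbf{t}(r)$, both available here; and the regularity of $\boxplus$ as a foliation tacitly assumes the fibres of $\infty^{+}$ are connected, an assumption the review itself makes when it declares $\boxplus$ regular, so you are on the same footing as the authors but should say so.
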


\section{Conclusion}\label{sec:Conclusion}
From the initial seed, R. Penrose's twistor theory, together with the original work of R. Low, the geometry of the space of light rays $\mathcal{N}$ of a Lorentz conformal manifold $M$ appears as a complementary model of physical universe in addition to the spacetime model. It contains all the information of $M$ but it should be treated in a different way. In this review we have introduced the geometric structures of $\mathcal{N}$, characterized the causal structure of $M$ and built the $L$--boundary for $\dim M=3$. 

But many things still need to be done. Since $\mathcal{N}$ is invariant by conformal diffeomorphisms of $M$, so conformal invariants in $M$ can be well defined in $\mathcal{N}$ and they can be determined, at least theoretically, by light rays. 
So, we can wonder what new additional geometrical objects in $\mathcal{N}$ define conformal invariants of $M$.

The question about if a new condition in the definition of $L$--spacetime should be added to ensure that the distribution $\overline{D^{\sim}}$ is smooth, is still open. 
Moreover, the particular geometry of $\mathscr{L}\left(\mathcal{H}\right)=\mathbb{P}\left(\mathcal{H}\right)$ when $\dim M=3$ allows to build the $L$--boundary explicitly. The construction for $\dim M>3$ is not free of problems, because the geometry of $\mathscr{L}\left(\mathcal{H}\right)$ is richer that the one in $\mathbb{P}\left(\mathcal{H}\right)$. It is necessary to study the additional problems arising in the general case, for example, $\widetilde{\mathcal{N}}$ is not an open submanifold of $\mathscr{L}\left(\mathcal{H}\right)$ and, under the same hypotheses, the map $\sigma$ of equation (\ref{difeo-sigma}) is not an embedding but an injective immersion. This pushes us to build the $L$--extension in a more local way. Furthermore, the projective parameter does not arise so naturally as it does in $\mathbb{P}\left(\mathcal{H}\right)$. 
The study of the conditions for an $L$--spacetime with non-compact orbits of $\partial^{+}\mathcal{D}^{\sim}$ making of $\overline{D^{\sim}}$ a regular distribution still remains unresolved. 
In any case, we believe that it is worth exploring this form of construction of the $L$--boundary for general dimension $m\geq 3$.

\section*{Acknowledgements}
\addcontentsline{toc}{section}{Acknowledgements}
The authors wish to thank two anonymous referees who provided useful comments and suggestions to improve the quality of this paper and A.~Bautista would like to thank the organizing committee of the meeting \emph{Singularity theorems, causality, and all that. A tribute to Roger Penrose} for its kind invitation.


\phantomsection
\addcontentsline{toc}{section}{References}

\end{document}